\begin{document}
\DeclareGraphicsExtensions{.pdf}
\frontmatter

\title{Parameter Estimation, Model Reduction and Quantum Filtering}

\author{Bradley A. Chase}

\degreesubject{Ph.D., Physics}

\degree{Doctor of Philosophy \\ Physics}

\documenttype{Dissertation}

\previousdegrees{B.S., Physics, Rice University, 2006 \\
                 B.A., Computer Science, Rice University, 2006}

\date{December, \thisyear}

\maketitle

\makecopyright
%
\begin{acknowledgments}
    \renewcommand{\baselinestretch}{1.45}\selectfont
   As difficult as distilling three years of research into a cohesive dissertation might seem, it is likewise impossible to adequately express my gratitude and appreciation for the many teachers, mentors, friends and family who have played a part in my personal and academic journey.  I will highlight a few.  
   
   First and foremost, I thank my advisor JM Geremia.  JM has a remarkable combination of stamina, breadth of knowledge and creativity that is extraordinarily catalyzing and inspiring.  I am truly thankful for his introducing me to the topics of quantum filtering and control, especially in the context of experimental work ongoing in our lab.  I am most grateful for his understanding that research is a marathon, but one which is worth running as hard as you can.  I also thank Andrew Landahl for his guidance and also for his limitless enthusiasm and optimism for any research problem, big or small, easy or hard.  I also owe a great deal of thanks to Ramon van Handel.  Although we have never met, he laid the tracks which my research followed and this thesis certainly reflects that influence.  I firmly believe the clarity and insight which he has provided the fields of quantum filtering and control has and will continue to profoundly shape the form and progress of research within those disciplines.  I also thank Carl Caves and Ivan Deutsch for bringing together a strong group of quantum information researchers, fostering our growth and instilling a strong sense of community during our weekly group meetings.  I particularly appreciate Carl's ability to explain why the needle is important for understanding the haystack and Ivan's ability to explain why the haystack is important for understanding the needle.
   
   Within our research group, I heartily thank Rob Cook for allowing me to pester him with questions, for carefully questioning my assumptions and for his Bob Dylan voice; Ben Baragiola for his blanket skepticism, for Digging, and for his vocal percussion skills on coffee days; Heather Partner for willing to go to the board, for notifying me of product recalls and for her mastery of binding; Tom Jones for his love of the game, for convincing me that Mathematica is very useful for some things and for owning headphones; Brigette Black for her taste in television comedies and for her wordsmithery; Thomas Loyd for his tablet powers and for his exquisite pronunciation of Stieltjes; and everyone for Otter Pops.  I thank the residents of Room 30---Anil Shaji, Collin Trail, Steve Flammia, Animesh Datta, Seth Merkel, Carlos Riofrio, Jonas Anderson, Brian Mischuck, Iris Reichenbach, Sergio Boixo, Alexandre Tacla, Aaron Denney, Vaibhav Madhok, Leigh Norris, and Pat Rice---for discussing research ideas or simply listening to me prattle on about quantum stochastic differential equations.  I also sincerely appreciate all of my colleagues' thoroughness and openness in presenting their own research during the information group meetings.  I have learned a great deal from you all.
   
   In their roles as teachers and mentors, I thank Sudhakar Prasad, Ivan Deutsch, Carl Caves,  Krzysztof W\'{o}dkiewicz, and David Dunlap from UNM; Han Pu, Randy Hulet, Paul Stevenson and B. Paul Padley from Rice University; and Bijli Myers and Chap Percival from Pine View School.  
   
   I thank Larry Herskowitz for his pull-up jumper, Anthony Salvagno for his ability to finish at the rim and both of them for enjoying the pounder.
   
   I thank my sisters for leading by example, particularly in their academic and professional excellence.  I thank my parents for instilling in me a strong work ethic and an appreciation of scholarship.
   
   To Denise, thank you for your strength and support, for your encouragement and optimism, and for being the light at the end of the tunnel.  
\end{acknowledgments}

\maketitleabstract
\begin{abstract}
   This thesis explores the topics of parameter estimation and model reduction in the context of quantum filtering.  The last is a mathematically rigorous formulation of continuous quantum measurement, in which a stream of auxiliary quantum systems is used to infer the state of a target quantum system.  Fundamental quantum uncertainties appear as noise which corrupts the probe observations and therefore must be filtered in order to extract information about the target system.  This is analogous to the classical filtering problem in which techniques of inference are used to process noisy observations of a system in order to estimate its state.  Given the clear similarities between the two filtering problems, I devote the beginning of this thesis to a review of classical and quantum probability theory, stochastic calculus and filtering.  This allows for a mathematically rigorous and technically adroit presentation of the quantum filtering problem and solution.  

   Given this foundation, I next consider the related problem of quantum parameter estimation, in which one seeks to infer the strength of a parameter that drives the evolution of a probe quantum system.  By embedding this problem in the state estimation problem solved by the quantum filter, I present the optimal Bayesian estimator for a parameter when given continuous measurements of the probe system to which it couples.  For cases when the probe takes on a finite number of values, I review a set of sufficient conditions for asymptotic convergence of the estimator.  For a continuous-valued parameter, I present a computational method called quantum particle filtering for practical estimation of the parameter.  Using these methods, I then study the particular problem of atomic magnetometry and review an experimental method for potentially reducing the uncertainty in the estimate of the magnetic field beyond the standard quantum limit.  The technique involves double-passing a probe laser field through the atomic system, giving rise to effective non-linearities which enhance the effect of Larmor precession allowing for improved magnetic field estimation.
   
   I then turn to the topic of model reduction, which is the search for a reduced computational model of a dynamical system.  This is a particularly important task for quantum mechanical systems, whose state grows exponentially in the number of subsystems.  In the quantum filtering setting, I study the use of model reduction in developing a feedback controller for continuous-time quantum error correction.  By studying the propagation of errors in a noisy quantum memory, I present a computation model which scales polynomially, rather than exponentially, in the number of physical qubits of the system.  Although inexact, a feedback controller using this model performs almost indistinguishably from one using the full model.  I finally review an exact but polynomial model of collective qubit systems undergoing arbitrary symmetric dynamics which allows for the efficient simulation of spontaneous-emission and related open quantum system phenomenon.  
\clearpage 
\end{abstract}
\tableofcontents
\listoffigures


\mainmatter

\chapter{Introduction}
A striking feature of quantum mechanics is its inherent uncertainty.  Even when given a complete description of a system, quantum mechanics generally prescribes probabilities for measurement outcomes when a corresponding classical theory prescribes certainties.  Given that quantum mechanics is a fundamental theory, one might suspect that quantum uncertainty significantly restricts our ability to accomplish physical tasks.  Yet rather remarkably, quantum information theory shows that this is not always the case.  In fact, there are many tasks for which a quantum system significantly outperforms its classical counterpart, most notably quantum algorithms for factoring \citep{Shor:1994a} and searching \citep{Grover:1996a}, quantum protocols for communication \citep{Bennett:1993a,Bennett:1984a} and quantum techniques for precision measurement \citep{Xiao:1987a}.

Yet the need to cope with uncertainty is not unique to quantum systems.  Indeed noise is nearly ubiquitous in any real world situation, when it is impractical or impossible to exactly describe the physics of the environment surrounding the system of interest or even the details of the system itself.  Such uncertainty gives rise to a stochastic, rather than deterministic, description of a system and of the corresponding measurement process.  Again, it is perhaps startling that in the face of uncertainty, one can still perform tasks remarkably well, although we experience such performances whenever we fly on an airplane, turn on a computer or purchase the correct birthday gift for a loved one.

Over the past century, the fields of stochastic control and estimation theory have made great strides in formalizing techniques for overcoming the presence of noise.  One such technique is \emph{filtering}, which is a method for estimating the state of a stochastic system by appropriately processing noisy measurements of that system \citep{Liptser:1977}.  Another technique is \emph{feedback}, in which one seeks to control a stochastic system to achieve a particular goal \citep{Zhou:1996a}.  Not surprisingly, the two are intimately related, in that deciding a feedback policy often first requires filtering the noisy measurements to determine exactly what the system is doing under all that noise.  In building a mathematical apparatus for handling noise, these theories are broadly applicable across a variety of engineering and scientific disciplines.  As our technological capability to manipulate and measure distinctly quantum systems matures along with the host of quantum information processing tasks we seek to perform, it is clear that control and estimation techniques will play an important role in the quantum engineering realm.  

Certainly, the goals of quantum control and estimation are no different than those for classical systems; primarily, the capability to build robust and stable systems which accomplish a desired task.  However, the engineering difficulties are more fundamental in the quantum case---one must isolate a quantum system from its environment in order to preserve quantum coherence and manipulate intrinsic quantum uncertainties, yet the isolation cannot be so severe as to preclude useful external interactions required for controlling and measuring the system.  Dealing with these inimical demands at an abstract level is well appreciated in quantum information theory, particularly in the areas of quantum error-correction \citep{Gottesman:1997a} and quantum fault-tolerance \citep{Aharonov:1996a}.  Less abstractly, a plethora of robust methods have been developed for spin control and nuclear magnetic resonance applications \citep{Vandersypen:2004a}, where both fundamental quantum uncertainty and technical noise play important roles.  Some of the methods of classical control and estimation theory appear implicitly in both of these quantum engineering approaches, but given the success and relative maturity of the classical methods, it seems prudent to make the analogy more explicit, mining the vast library of known classical techniques which can then be suitably modified to reflect the constraints imposed by the laws of quantum mechanics.  Even more simply, putting quantum control and estimation theories in the language of their classical progenitors provides an elegant and technically convenient way to decompose and study a quantum engineering problem.

Such a reformulation is exceptionally useful in the domain of quantum optics, where statistical properties of laser light map rather directly onto the classical stochastic formalism.  Belavkin was one of the first to flesh out this mapping, developing a rigorous theory of quantum filtering and control \citep{Belavkin:1979a,Belavkin:1987a,Belavkin:1999a} in terms of the axiomatic probability theory and optimal control formalism used when dealing with classical stochastic systems, suitably adapted to the quantum domain by \citet{Hudson:1984a}.  As experimental prowess and potential applications grew, Belavkin's filtering techniques were independently discovered in a more heuristic approach  called quantum trajectory theory \citep{Carmichael:1991a}.  Initially, quantum trajectories were seen as a computational tool for simulating the dynamics of open quantum systems, averaging over many stochastic quantum jump evolutions to simulate quantum master equation dynamics.  This soon evolved into a theory of continuous quantum measurement and feedback \citep{Wiseman:1994a}, which in conjunction with a renaissance of the earlier filtering work \citep{Bouten:2007b} and a closing gap between theoretical possibilities and experimental realities \citep{Mabuchi:1999a}, suggest quantum control theory and quantum optics are a useful pair for exploring quantum control applications, including precision metrology \citep{Armen:2002a,Geremia:2003a} and quantum error correction \citep{Ahn:2002a}.

It is within this propitious environment that my own research in quantum filtering and control has developed, largely along two main threads\footnote{A third research thread that doesn't fit within the quantum filtering and control umbrella is work I did with Andrew Landahl on the computational universality of quantum walks in one spatial dimension \citep{Chase:2008b}.}---quantum parameter estimation and quantum model reduction.  The former is essentially a filtering problem, in which one seeks to estimate a parameter that modulates the dynamical evolution of a probe quantum system.  With knowledge of the dynamics, suitable measurements of the probe system can be used to determine the parameter of interest.  However, the inherent quantum fluctuations in the probe measurement appear as noise which corrupts the probe signal, requiring statistical inference or filtering to best estimate the parameter.  In Chapter \ref{chapter:quantum_parameter_estimation}, I review my work on developing a general filter for quantum parameter estimation via continuous quantum measurement \citep{Chase:2009a}.  By embedding the parameter estimation problem in the state estimation problem of quantum filtering, I develop the optimal Bayesian parameter filter and discuss conditions for its convergence to the true parameter value.  I also discuss an approximate computational method called quantum particle filtering suitable for practical quantum parameter estimation.  In Chapter \ref{chapter:magnetometry}, I review the application of these techniques for a proposed experimental demonstration of precision magnetometry \citep{Chase:2009b,Chase:2009c,Chase:2009d}.  By double-passing an optical field through an atomic system, one hopes to create effective nonlinear interactions which offer improved sensitivity to the strength of an external magnetic field.  Determining the magnetic field strength from measurements of the scattered optical field is precisely the filtering problem discussed above.  Although a careful derivation of approximate quantum Kalman filters using the method of projection filtering shows no improvement, numerical simulations of the exact dynamics and quantum particle filters suggest an improvement does exist.  By studying this example, I hope to demonstrate that the quantum filtering formalism provides an elegant framework for studying parameter estimation problems.  

The second topic of model reduction deals with developing a computationally reduced description of a quantum dynamical system, whose most general description grows exponentially with the number of subsystems involved.  In practice, one is oftentimes only interested in the dynamics of a restricted set of observables or a restricted set of initial states, both of which may not require calculating the exact dynamics.  I review such a case in Chapter \ref{chapter:error_correction}, presenting a reduced model of error propagation in a continuously measured quantum memory subject to noise \citep{Chase:2008a}.  The model is then used by a classical feedback controller to perform continuous error correction, with almost no change in performance relative to an exact model.  The reduced description scales only polynomially in the number of physical qubits, an improvement over the exponential scaling of the exact model.  Similar reductions will generally be useful for feedback controllers and filters which are usually processed on a classical computer.  In Chapter \ref{chapter:collective}, I present similar but unrelated model reduction research, describing a polynomial but exact model of collective dynamical processes of ensembles of qubits \citep{Chase:2008c}.  This allows for efficient numerical simulation of a broad range of collective qubit systems, particularly those involving spontaneous emission.  

But before delving into my own research, I begin in Chapters \ref{chapter:classical} and \ref{chapter:quantum} by reviewing some essential elements of classical and quantum probability theory, stochastic calculus and filtering.  The goal is to provide a ``user's guide'' to the existing body of mathematical physics literature, occasionally delving into the mathematical details, but focusing more on the tools needed for quantum control and filtering problems.  There are several reasons for such an exposition.  Firstly, it has been my experience that these methods are underappreciated in the quantum optics community, perhaps out of apprehension towards the mathematical rigor and language involved when more familiar quantum information approaches seem to suffice.  However, climbing the seemingly steep initial learning curve quickly provides rewards in the form of an elegant and oftentimes superior approach for studying quantum continuous measurement and control problems.  Secondly, there are technical reasons for preferring the rigorous results, especially due to mathematical issues inherent with continuous stochastic processes which include singular white noise terms.  I believe these issues can be appreciated without fully detailing the mathematical technicalities involved, which I certainly do not claim to master.  Lastly and perhaps most importantly, I earnestly believe that taking the rigorous approach is the key to opening the vast library of existing classical control and estimation tools which will allow for significant and rapid progress in the field of quantum engineering.

\chapter{Classical Probability and Filtering}
\label{chapter:classical}

Many scientists are familiar with the basic elements of probability theory---distributions, expectations, random variables---and are quite comfortable performing calculations using these elements.  Given a fair, six-sided die whose faces are labeled 1-6, we are comfortable stating that the probability of rolling any particular face is $\frac{1}{6}$.  The probability of rolling a face with an even number is also easily calculated as
\begin{equation}
    P(\text{roll is even}) = \sum_{i \in \{2,4,6\}} P(\text{roll face } i) = 3\times \frac{1}{6} = \frac{1}{2} .
\end{equation}
This suggests the general rule ``The probability of obtaining some set of mutually exclusive outcomes is the sum of the probabilities of each of the outcomes'', or mathematically, for a collection of $n$ disjoint sets $A_1,\ldots,A_n$, the rule is
\begin{equation} \label{chap:probability:eq:additivity_basic}
    P(A_1 \cup \ldots \cup A_n) = \sum_i P(A_i) .
\end{equation}

We are also familiar with the uniform distribution on the interval $[0,1]$.  Given a random variable $X$ with such a distribution, we know that $P(0 \leq X \leq 1) = 1$, i.e. that the random variable will take on some value on that interval.  More generally, for $0 \leq a,b \leq 1$, we have 
\begin{equation}
    P(a \leq X \leq b) = b - a ,
\end{equation}
which also correctly calculates the probability of a point, i.e.~$P(X=a) = P(a \leq X \leq a) = 0$.  Given our understanding of intervals, we then have
\begin{equation}
    P( \frac{1}{4} \leq X \leq \frac{3}{4}) = P( \frac{1}{4} \leq X \leq \frac{1}{2})
                             + P( \frac{1}{2} < X \leq \frac{3}{4}) = \frac{1}{2}
\end{equation}
which is reminiscent of our general rule in Eq.~\eqref{chap:probability:eq:additivity_basic}.  We might expect this rule to extend to an uncountably infinite number of disjoint sets, which for the entire uniform distribution implies
\begin{equation}
	\begin{split}
	    P(0 \leq X \leq 1) &= \sum_{x \in [0,1]} P(X = x)\\
	    1 &= 0.
	\end{split}
\end{equation}

Clearly, something just went wrong with trying to extend our rule to an uncountably infinite number of disjoint sets.  This turns out to be the case in many situations, where simply applying the intuitive discrete probability rules in the continuous case gives ridiculous answers.  Oftentimes, its not even clear how to formulate questions using our intuitive rules, such as for the uniformly distributed variable $X$ above, what is the probability that it takes on a rational value?  

Given our ultimate interest in describing continuous random variables, especially uncountably infinite collections of random variables indexed by the continuous label time, it is important that we use a probability theory that deals with these complications carefully.  Indeed, the filtering problem is to consider the system$(X_t)$/ observations $(Y_t)$ pair
\begin{align}
    \frac{dX_t}{dt} &= f(t,X_t) + g(t,X_t) \times \text{``noise''}\\
    \frac{dY_t}{dt} &= f(t,X_t) + g(t,X_t) \times \text{``noise''}\\    
\end{align} 
and perform inference about the state of the system based on the measurements.  Since the ``noise'' terms are stochastic, both the system and observations are precisely the uncountable collections of random variables we need to consider.  In fact, much care will be taken to define the ``noise'' terms in a mathematically sensible manner so that the filtering problem can be posed in a sensible fashion.  

All of these details require a carefully laid mathematical foundation in terms of axiomatic probability theory, formalized by \citet{Kolmogorov:1956a}, which unifies features of discrete and continuous probability in terms of measure theory.  It also allows us to generalize probabilities to other spaces, including functional spaces which will be needed to describe stochastic processes.  The first section in this chapter will overview some of the important properties of axiomatic probability theory, followed by a section on stochastic processes and white noise and a closing section devoted to solving the filtering problem.  The presentation of topics in this section primarily follows \citep{vanHandel:2007a}, with added insight from \citep{Oksendal:1946a,Williams:1991a,Geremia:2008a}.
\section{Classical Probability Theory}
The basic ingredient of probability theory is the \emph{sample space} $\Omega$ which describes the set of all possible \emph{outcomes} in the probabilistic system under consideration.  In the die example above, this would simply be $\Omega = \{1,2,3,4,5,6\}$, where the individual outcomes $\omega \in \Omega$ label the different faces of the die.  While we could ask questions about individual outcomes, we are really more interested in related objects called \emph{events}, which are the yes or no questions one could ask about the system.  Such events are represented by subsets $A \subset \Omega$ where the elements $\omega \in A$ are those corresponding to a yes answer of the related question.  For our example, the question ``Did I roll an even number?'' is represented by the subset $\{2,4,6\} \subset \Omega$ and the basic question ``Did I roll a 2?'' is represented by the subset $\{2\} \subset \Omega$.  The collection of such subsets, corresponding to the collection of relevant yes/no questions, is itself put into a set $\mathcal{F}$ which is called the $\sigma$-algebra over $\Omega$.
\begin{definition}\label{def:sigma_algebra}
A $\sigma$-\emph{algebra} $\mathcal{F}$ over $\Omega$ is a collection of subsets of $\Omega$ which satisfies
\begin{enumerate}
    \item $\Omega \in \mathcal{F}$
    \item If the set $A \in \mathcal{F}$, then the complement $A^c \in \mathcal{F}$
    \item Countable $\bigcup_n A_n \in \mathcal{F}$ if each $A_n \in \mathcal{F}$
\end{enumerate}
\end{definition}
The first two requirements are not terribly surprising.  Certainly the question, ``Did anything happen?''~must be valid.  Indeed, the most trivial $\sigma$-algebra valid for any $\Omega$ is $\mathcal{F}=\{\varnothing,\Omega\}$.  Similarly, if a particular binary question is acceptable, $A \in \mathcal{F}$, we should tautologically be able to ask whether ``not'' of that question occurred, e.g. ``Did I not roll a 2?''.  This implies the complement $A^c \in \mathcal{F}$.  The remaining and more technical requirement relates to our general rule from the beginning of the chapter.  The intuitive idea is that for two events $A,B \in \mathcal{F}$, we should be able to combine them to make the question ``Did $A$ or $B$ happen?'' ($A \cup B$) or the question ``Did $A$ and $B$ happen?'' ($A \cap B$).  The restriction to countable ``or'' compositions\footnote{Note that the composition of ``and'' questions comes from having the complement of sets in the $\sigma$-algebra.} prevents the pathological case we had above for elements on the real line and by taking it as an axiom, we can entirely avoid it.

The pair $\{\Omega,\mathcal{F}\}$ is a mathematical object called a \emph{measurable space} and elements in $\mathcal{F}$ are called \emph{measurable sets}.  Such an object is defined precisely to sidestep the issues with uncountably infinite compositions.  From the name, we anticipate that a measurable space is something we can define a measure on, which is just a convenient way to talk about sizes of collections of elements in $\Omega$.  For a probability theory, we will want a specific measure $\mathbb{P}$ which assigns probabilities to events in sensible way.  But the trick is to define the measure on sets in $\mathcal{F}$ and not directly on elements in $\Omega$, thereby only defining the measure on sets which are sensible without having to worry how those sets are composed from elements in $\Omega$.  In other words, we need not worry about decomposing an event which should have a non-zero probability, e.g an interval, in terms of the its uncountably infinite constituents, which have zero probability, e.g. points.  This is encapsulated in the following definition.

\begin{definition}\label{def:probability_measure}
A \emph{probability measure} is a map $\mathbb{P} : \mathcal{F} \mapsto [0,1]$ which satisfies
\begin{enumerate}
    \item For a countable collection $\{A_n : A_n \in \mathcal{F}, A_n \cap A_m = \varnothing \text{ for } n\neq m \}$, $\mathbb{P}(\bigcup_n A_n) = \sum_n \mathbb{P}(A_n)$
    \item $\mathbb{P}(\varnothing) = 0$, $\mathbb{P}(\Omega) = 1$
\end{enumerate}
\end{definition}  
The first part of the definition is precisely our general rule, but \emph{restricted} to countable collections.  The second part is just to set the baseline meanings which we expect for any probability theory; that the probability of nothing happening is zero and the probability of anything happening is one.  

The tuple $(\Omega,\mathcal{F},\mathbb{P})$ is called a \emph{probability space} and formalizes the intuitive rules we desire such that they apply for both discrete and continuous spaces.  In essence, the measure $\mathbb{P}$ is the workhorse, in that in encapsulates every probabilistic statement we make regarding the theory.  As such, $\mathbb{P}$ is often referred to as the \emph{state} of a random system and the probabilities it assigns to events are based on a physical model, counting, betting odds or whatever perspective lets you sleep at night.  

As a final introductory note, one might read that events $A$ for which $\mathbb{P}(A) = 1$ are said to occur ``almost surely'', abbreviated a.s.  This statement reflects the fact that sets of measure zero may contribute to an event, even though they individually have zero probability.

\subsection{Generated $\sigma$-algebras and the Borel $\sigma$-algebra}
For discrete spaces, the power set of $\Omega$ is an obvious choice for the $\sigma$-algebra, but it turns out (again) to be more complicated for continuous spaces\footnote{For technical reasons beyond me, it turns out one can actually have too many sets in $\mathcal{F}$ on which one can define a consistent $\mathbb{P}$.  \citet{Banach:1929a} actually showed that no probability measure exists on the power set of $\mathbb{R}$ such that the probability of any single point is zero.}.  For such spaces (and for later purposes), it is convenient to have a method for \emph{generating} a valid $\mathcal{F}$ from a collection of events we know we are interested in.  Consider a potentially uncountable collection of subsets $\mathcal{F}_0 = \{A_i \in \Omega\}$ which is not necessarily a $\sigma$-algebra.  In order to generate a $\sigma$-algebra from $\mathcal{F}_0$, we consider all $\sigma$-algebras which have $\mathcal{F}_0$ as a sub-collection.  Taking the intersection of these $\sigma$-algebras also results in a $\sigma$-algebra and is the smallest $\sigma$-algebra which contains all elements in $\mathcal{F}_0$.  The result of this operation, written $\mathcal{F} = \sigma\{A_i\}$ is called the $\sigma$-algebra \emph{generated} by $\mathcal{F}_0$.  

\begin{example}[Example 1.1.8 in \citet{vanHandel:2007a}]
	As a concrete example, consider the six-sided die for which we generate a $\sigma$-algebra from the questions ``Did we throw a one?'' and ``Did we throw a four?''
	\begin{equation}
	    \sigma\{\{1\},\{4\}\} = \{\varnothing,\{1\},\{4\},\{1\}^c,\{4\}^c, \{1,4\},\{1,4\}^c,\Omega\} .
	\end{equation}
	We see that a consistent $\sigma$-algebra implies that answering the two basic questions also allows us to answer questions such as ``Did we throw a one or a four?'' and ``Did we not throw a one?''.  Really, the generated $\sigma$-algebra reflects all the yes/no questions we can logically answer from observing its input set of events, which here is knowledge of rolling a one or a four.
\end{example}

An important $\sigma$-algebra for continuous spaces is the Borel $\sigma$-algebra (on the reals), defined as
\begin{definition}\label{def:borel}
    The \emph{Borel $\sigma$-algebra} (on $\mathbb{R}$), written $\mathcal{B}$, is the $\sigma$-algebra generated from the set of all open intervals on $\mathbb{R}$. Note that this is a generated set, since the complement of open intervals is a closed interval, which is clearly not contained within the set of open intervals.
\end{definition}
\subsection{Random Variables}
Although a probability space is all we need to start discussing a probabilistic system, we are ultimately interested in more glamorous inquiries than simple yes/no questions.  As physicists, we are particularly interested in describing observations or measurements we might make of the system, at which point we need to relate the labels on the measuring device to properties of the system.  In order to make this mapping precise, we first introduce the following definitions.

\begin{definition}\label{def:measurable_function}
    Let $(\Omega,\mathcal{F})$ and $(S,\mathcal{S})$ be measurable spaces.  The function $X(\omega):\Omega \mapsto S$ is an \emph{($\mathcal{F}$-)measurable function} if $X^{-1}(S) \equiv \{ \omega \in \Omega : X(\omega) \in S\} \in \mathcal{F}$ for every $S \in \mathcal{S}$. 
\end{definition}
\begin{definition}\label{def:random_variable}
An \emph{(S-valued) random variable} is an ($\mathcal{F}$)-measurable function $X(\omega):\Omega \mapsto S$ from the probability space $(\Omega,\mathcal{F},\mathbb{P})$ to the measurable space $(S,\mathcal{S})$. We will often consider real-valued random variables, which map elements in the sample space to $(\mathbb{R},\mathcal{B})$ and which we will call simply \emph{random variables}.
\end{definition}

The notion of measurability is what really allows us to define probabilities on random variables.  In fact, if the random variable is $\mathcal{F}$-measurable, that means that all yes/no questions needed to determine its value are contained within $\mathcal{F}$, so that we need only invert the map $X$ to determine the associated probability.  That is, the probability for a random variable $X$ to take on some value $A \in \mathcal{B}$ is written
\begin{equation}
    \mathbb{P}(X \in A) = \mathbb{P}(X^{-1}(A)) = \mathbb{P}(\{\omega \in \Omega : X(\omega) \in A \}),
\end{equation} 
where the first two forms are shorthand for the explicit form on the right.  But it is conceivable that $\mathcal{F}$ contains more yes/no questions than are actually needed for a particular random variable $X$.  As such, we can consider the $\sigma$-algebra generated by a random variable
\begin{equation}
    \mathcal{F}_X = \sigma\{X\} = \{ X^{-1}(A) : A \in \mathcal{B} \}
\end{equation}
This is actually a convenient way to generate a $\sigma$-algebra for $\Omega$ when we have a collection of random variables we are interested in; simply take the smallest $\sigma$-algebra which contains all those generated by each random variable in the set. 

Abstractly, $\mathcal{F}_X$ encodes the information that we learn by measuring $X$.  Such a notion will be important when we consider conditioning and inference, when it will be useful to relate $\sigma$-algebras generated from different random variables.
\begin{definition}\label{defn:random_var_measurable}
    For two random variables $X$ and $Y$ defined on the probability space $(\Omega, \mathcal{F}, \mathbb{P})$, we say that $Y$ is \emph{$\mathcal{F}_X$ measurable} (or simply \emph{$X$-measurable}) if $\mathcal{F}_Y \subseteq \mathcal{F}_X$ or equivalently, there exists a measurable function $\phi : \mathbb{R} \mapsto \mathbb{R}$ such that $Y = \phi(X)$. 
\end{definition}

\begin{example}
Consider the probability space for throwing two coins, given by $\Omega = \{HH,TT,HT,TH\}$ with $\mathcal{F}$ and $\mathbb{P}$ defined but unimportant for this example.  Further, define a boolean random variable $X$ by 
\begin{equation}
    X(HH) = X(TT) = 1 \qquad X(HT) = X(TH) = 0,
\end{equation}
which is the parity of the two tosses. It is straightforward to see that $\mathcal{F}_X = \{\varnothing, \{HH,TT\},\{HT,TH\},\Omega\}$.  Also consider the random variable $Y$ defined by
\begin{equation}
    Y(HH) = Y(HT) = 1 \qquad Y(TH) = Y(TT) = 0,
\end{equation}  
which corresponds to the outcome of the first toss and which has a generated $\sigma$-algebra $\mathcal{F}_Y = \{\varnothing,\{HH,HT\} ,\{TH,TT\},\Omega\}$.  We immediately see that $Y$ is not $X$ measurable as well as the opposite, though measurability need not be symmetric.  This is completely sensible, as learning the outcome of the first toss is not enough information to determine the parity of the two tosses together.  
\end{example}

\noindent Related to measurability is the notion of \emph{independence}:
\begin{definition}\label{def:independence}
Two random variables $X,Y$ defined on a probability space $(\Omega,\mathcal{F},\mathbb{P})$ are \emph{independent} if
$    \mathbb{P}(A \cap B) = \mathbb{P}(A)\mathbb{P}(B) \text{ for all } 
             A \in \mathcal{F}_X, B \in \mathcal{F}_Y$.
\end{definition}
In contrast to measurability, in which one variable can be determined exactly by knowing the value of the other, independent variables share no information.  That is, knowing the value of $X$ tells you \emph{absoutely nothing} about the value of $Y$.  Note that independence is a property of the probability measure $\mathbb{P}$, whereas measurability only depends on the structure of the $\sigma$-algebras generated by the random variables.  Additionally, just because a random variable is not measurable with respect to another, the two are not necessarily independent.  This is generally the case we will be interested in for filtering, when we learn partial information about related random variables when given the value of a particular one.
 
Note that every random variable induces a probability measure on the reals given by
\begin{equation}
    \mu_X(B) = \mathbb{P}(X^{-1}(B)), \quad B \in \mathcal{B}
\end{equation}
We call $\mu_X$ the distribution of the random variable $X$.  A particularly important and familiar random variable is a \emph{Gaussian random variable} $X : \Omega \mapsto \mathbb{R}$ with mean $\mu$ and variance $\sigma^2$ has the distribution
    \begin{equation}
        \mu_X( B ) = \int_B\frac{1}{\sigma\sqrt{2\pi}}\exp(-\frac{(x-\mu)^2}{2\sigma^2}) dx .
    \end{equation} 

\begin{definition}\label{def:indicator_function}
	A very useful random variable is the \emph{indicator function} $\chi_A : \Omega \mapsto [0,1]$, defined for $A \in \mathcal{F}$ to be
	\begin{equation}
	    \chi_A(\omega) = \begin{cases}
	        1, & \omega \in A\\
	        0, & \omega \not \in A
	    \end{cases} .
	\end{equation}
\end{definition}
\noindent We can use indicator functions to rewrite a general random variable $X$ over the sets $S_i$ on which it is constant, which provide a partition of $\Omega$, e.g. $X(\omega \in S_i) = x_i$ where $\bigcup_i S_i = \Omega$.  We then have
\begin{equation} \label{probability:eq:indicator_partition}
    X(\omega) = \sum_i x_i \chi_{S_i}(\omega) .
\end{equation}
As we shall see in the following section, indicator functions are useful as they allow us to work exclusively with expectations, rather than directly with the probability measure, since for some $A \in \mathcal{F}$, $\mathbb{P}(A) = \mathbb{E}(\chi_A)$.  This will allow us to gloss over conditional probability and focus instead on conditional expectations, which are more relevant for filtering.
\subsection{Expectation}
The notion of expectation is another topic most are familiar with from previous work with probability.  Conceptually, it corresponds to the average value of a random variable one would expect in the limit of repeating many trials of the underlying probability experiment.  For \emph{simple random variables} $X$ which take on a finite number of values $x_1,x_2,\ldots, x_n$, the expectation above reduces to the familiar form
\begin{equation}
    \E{X} = \sum_{i=1} x_i \mathbb{P}(X = x_i)
\end{equation}
where the expectation is well-defined so long as the possible values are finite.  For a continuous-valued random variable $X$, we define it be a nondecreasing sequence $X_n$ which converges to $X$ and set $\E{X} = \lim_{n\to\infty}\E{X_n}$.  One can show \citep{Williams:1991a} that such a procedure uniquely converges to the following definition.
\begin{definition}\label{def:expectation}
    Let $(\Omega,\mathcal{F},\mathbb{P})$ be a probability space with a random variable $X$.  The expectation of $X$ with respect to the measure $\mathbb{P}$ is 
    \begin{equation}
        \E{X} = \int_\Omega X(\omega) \mathbb{P}(d\omega)
    \end{equation}
where the integral is interpreted in the Lesbesgue sense.
\end{definition}
The fact that we extend to the continuous case via the integral above should come as no surprise as that is how we extend sums to the familiar Riemann integral in calculus.  But given that probability theories are defined on more general measurable spaces, we use a more general integral---the Lebesgue integral, which allows us to integrate measurable functions (via $\mathbb{P}$), unlike the Riemann integral which only allows us to integrate continuous functions.

When studying stochastic processes, we will find it very useful to refer to the following classification of random variables in terms of their expectation.
\begin{definition}\label{def:integrable}
    For a random variable $X$ and $p \geq 1$, define $\norm{X}_p = (\mathbb{E}(\abs{X}^p))^{1/p}$.  A random variable is \emph{$p$-integrable} if $\norm{X}_p < \infty$.  For $p = 2$, such a random variable is \emph{square-integrable}.  A random variable satisfying $\abs{X} \leq K$ for some $K \in \mathbb{R}$ is called \emph{bounded} and $\norm{X}_\infty$ is the smallest $K$ which bounds $X$.
\end{definition}

Using this definition, we can introduce the spaces $\mathcal{L}^p(\Omega,\mathcal{F},\mathbb{P}) = \{X : \norm{X}_p < \infty \}$ which are common spaces in functional analysis.  Of particular use is the space $\mathcal{L}^2$ which for $\Omega = \mathbb{R}$ is almost\footnote{$\norm{\cdot}_2$ is not quite a norm because $\norm{X}_2 = 0$ only implies that $X = 0$ under the measure $\mathbb{P}$, not that the function is identically $X(\omega) = 0$ for all $\omega$.} the familiar space of square-integrable functions.  As such, we will often make use of the implied inner product
\begin{equation} \label{probability:eq:inner_product}
    \langle X, Y \rangle = \E{XY} = \int_\Omega X(\omega)Y(\omega)\mathbb{P}(d\omega)
\end{equation}
which will allow for an intuitively pleasing interpretation of the conditional expectation as an orthogonal projection.
\subsection{Conditioning}
Given all the above groundwork, we are now ready to tackle the important task of \emph{conditioning}.  As hinted at above, we will focus on conditional expectation, since the rules of conditional probability are readily recovered using indicator functions.  To get a feel for things, and to appreciate the need for the more technical machinery to come, we begin with a straightforward definition for discrete spaces.
\begin{definition}\label{def:discrete_conditional_expectation}
For a probability space $(\Omega,\mathcal{F},\mathbb{P})$, consider the discrete random variables $X$ and $Y$. Suppose $Y$ yields a finite partition of $\Omega$ (as in Eq.~\eqref{probability:eq:indicator_partition}) in terms of sets $A_k$ for $k = 1, \ldots, n$.  Then the conditional expectation of $X$ given $Y$ is
\begin{equation} \label{probability:eq:discrete_conditional}
    \E{X | Y} = \sum_{k = 1}^n \frac{\E{X\chi_{A_k}}}{\mathbb{P}(A_k)} \chi_{A_k}
\end{equation}
where $\frac{\E{X\chi_{A_k}}}{\mathbb{P}(A_k)}$ is arbitrary if $\mathbb{P}(A_k) = 0$. 
\end{definition}

How do we interpret this definition?  Firstly, we see that the conditional expectation is simply \emph{another (discrete) random variable}, expanded in terms of the indicator functions $\chi_{A_k}$ or written in ``the basis'' of $Y$.  Also, note that the actual values $Y$ takes on are irrelevant; we are only interested in them so far as they allow us to identify the different sets $A_k$.  The term $\frac{\E{X\chi_{A_k}}}{\mathbb{P}(A_k)}$ averages $X$ only over the events which correspond to $A_k$, dividing by $\mathbb{P}(A_k)$ to renormalize for this subset.  Note that there is an arbitrariness when $\mathbb{P}(A_k) = 0$, since that event does not happen (a.s.).  The averaging is done for each partitioning set $A_k$, so that once handed a particular value $y$ of $Y$, the conditional expectation returns the value of $X$ averaged over the appropriate partition for $y$.  As we will soon make precise, $\E{X | Y}$ can be interpreted as the random variable which returns an estimate of $X$ when given the value of $Y$. 

For the usual reasons, this simple definition needs work to be extended to the continuous case.  Suppose $Y$ were actually a real-valued random variable.  It may not generate a finite partition of the continuous sample space $\Omega$, which may have uncountably infinite elements.  More importantly, since $\mathbb{P}(Y = y) = 0$ for any point $y$ on the real line, the arbitrary case above actually turns into a nightmare; if we were to take the partitions $A_k$ to be points, then the entire conditional expectation is arbitrary!  A healthy dose of measure theory shows that one can define the conditional expectation in terms of a sequence of approximating discrete versions (which proves existence and uniqueness), but the technicalities are not particularly enlightening for us (see \citet{vanHandel:2007a}).  But it is important to know what definition ultimately works, so we will instead simply use the following (Kolmogorov) axiomatic definition.

\begin{definition}\label{def:conditional_expectation_kolmogorov}
Let $X$ be a random variable on $(\Omega,\mathcal{F},\mathbb{P})$ and $\mathcal{G}$ be any $\sigma$-algebra on the sample space $\Omega$.  The \emph{conditional expectation} $\E{X | \mathcal{G}}$ is the unique $\mathcal{G}$-measurable random variable which satisfies $\E{\chi_AX} = \E{\chi_A \E{X | \mathcal{G}}}$ for all $A \in \mathcal{G}$.  
\end{definition}
\noindent Rather than conditioning directly on a particular $\sigma$-algebra, we often instead condition on one generated by another random variable, as was done in the discrete case above.  As a short-hand, we write $\E{X | Y}$ to indicate the more precise form of $\E{X | \mathcal{F}_Y}$, where $\mathcal{F}_Y$ is the $\sigma$-algebra generated by the random variable $Y$.  

From the perspective of statistical inference, the following theorem shows that we can interpret the conditional expectation $\E{X|\mathcal{G}}$ as the best estimate of $X$, in a least-squares sense, given the information in $\mathcal{G}$.
\begin{theorem}[Proposition 2.3.3 in \citep{vanHandel:2007a}]\label{thm:conditional_least_squares}
Given $X$ and $\mathcal{G}$ as in Def.~\ref{def:conditional_expectation_kolmogorov}, $\E{X|\mathcal{G}}$ is the unique $\mathcal{G}$-measurable random variable that satisfies 
\begin{equation}
	\E{(X - \mathbb{E}[X | \mathcal{G}})^2] = \min_{Y\in \mathcal{L}^2(\mathcal{G})}\E{(X-Y)^2} ,
\end{equation}
where $\mathcal{L}^2(\mathcal{G}) = \{ Y \in \mathcal{L}^2 : \mathcal{F}_y \in \mathcal{G}\}$.  We therefore call $\E{X | \mathcal{G}}$ the least-mean-square estimate of $X$ given $\mathcal{G}$.
\end{theorem}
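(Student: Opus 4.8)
The plan is to recognize Theorem~\ref{thm:conditional_least_squares} as the assertion that $\E{X|\mathcal{G}}$ is the orthogonal projection of $X$ onto the closed subspace $\mathcal{L}^2(\mathcal{G}) \subset \mathcal{L}^2$, and to prove it by a Pythagorean decomposition built on the inner product of Eq.~\eqref{probability:eq:inner_product}. Throughout I assume $X \in \mathcal{L}^2$, as is needed for the quantities $\E{(X-Y)^2}$ to be finite, and I take for granted that $\E{X|\mathcal{G}} \in \mathcal{L}^2(\mathcal{G})$.

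First I would upgrade the defining property of Def.~\ref{def:conditional_expectation_kolmogorov} from indicator functions to arbitrary $\mathcal{G}$-measurable square-integrable random variables. The definition supplies $\E{\chi_A X} = \E{\chi_A \E{X|\mathcal{G}}}$ for every $A \in \mathcal{G}$. By linearity this extends at once to every simple $\mathcal{G}$-measurable $Z = \sum_i z_i \chi_{A_i}$ via the partition representation of Eq.~\eqref{probability:eq:indicator_partition}, and then a monotone/dominated convergence argument promotes it to every $Z \in \mathcal{L}^2(\mathcal{G})$, giving
\begin{equation}
\E{Z X} = \E{Z\,\E{X|\mathcal{G}}}, \qquad Z \in \mathcal{L}^2(\mathcal{G}) .
\end{equation}
Cauchy--Schwarz guarantees $ZX$ is integrable whenever $X,Z \in \mathcal{L}^2$, so both sides are finite. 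Rearranged, this reads
\begin{equation}
\langle Z, X - \E{X|\mathcal{G}} \rangle = 0, \qquad Z \in \mathcal{L}^2(\mathcal{G}) ,
\end{equation}
i.e. the residual $X - \E{X|\mathcal{G}}$ is orthogonal to the entire subspace $\mathcal{L}^2(\mathcal{G})$.

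With orthogonality in hand the minimization is immediate. For any competitor $Y \in \mathcal{L}^2(\mathcal{G})$ I would split $X - Y = (X - \E{X|\mathcal{G}}) + (\E{X|\mathcal{G}} - Y)$ and note that $\E{X|\mathcal{G}} - Y$ is itself $\mathcal{G}$-measurable, hence orthogonal to the residual. The cross term therefore vanishes and Pythagoras gives
\begin{equation}
\E{(X - Y)^2} = \E{(X - \E{X|\mathcal{G}})^2} + \E{(\E{X|\mathcal{G}} - Y)^2} .
\end{equation}
Since the last term is nonnegative, the right-hand side is minimized precisely by $Y = \E{X|\mathcal{G}}$, establishing the claimed minimum; uniqueness (up to a.s. equality) follows because attaining the minimum forces $\E{(\E{X|\mathcal{G}} - Y)^2} = 0$, so that $Y = \E{X|\mathcal{G}}$ almost surely.

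The main obstacle is the first step---promoting the indicator identity to all of $\mathcal{L}^2(\mathcal{G})$. Everything afterward is bookkeeping with the inner product, but that extension is where the measure-theoretic care resides: one must verify integrability of the products and justify passing the identity through the approximating sequence of simple functions. I would lean on the standard machinery (linearity for simple functions, monotone convergence for nonnegative $Z$, then splitting a general $Z$ into positive and negative parts) rather than reprove it, since the same approximation scheme underlies the existence of $\E{X|\mathcal{G}}$ invoked in Def.~\ref{def:conditional_expectation_kolmogorov}.
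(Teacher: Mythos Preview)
Your proposal is correct and follows essentially the same approach as the paper: a Pythagorean decomposition $X - Y = (X - \E{X|\mathcal{G}}) + (\E{X|\mathcal{G}} - Y)$ together with the vanishing of the cross term by the Kolmogorov defining property, followed by the same uniqueness argument. If anything, you are more explicit than the paper about extending the indicator identity $\E{\chi_A X} = \E{\chi_A \E{X|\mathcal{G}}}$ to arbitrary $Z \in \mathcal{L}^2(\mathcal{G})$; the paper simply invokes $\E{\Delta X} = \E{\Delta\,\E{X|\mathcal{G}}}$ for the $\mathcal{G}$-measurable $\Delta$ without spelling out that extension.
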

\noindent We can actually interpret this statement as the orthogonal projection of $X$ onto the linear subspace $\mathcal{L}^2(\mathcal{H}) \subset \mathcal{L}^2$ with respect to the inner product in Eq.~\eqref{probability:eq:inner_product}.
\begin{proof}
For all $Y \in \mathcal{L}^2(\mathcal{G})$, we can write
\begin{equation}
    \E{(X - Y)^2} = \E{(X - \mathbb{E}[X | \mathcal{G}}
                     + \E{X | \mathcal{G} }  - Y)^2]
\end{equation}
where $\Delta = \E{X | \mathcal{G} }  - Y$ is $\mathcal{G}$-measurable, by definition of the conditional expectation and $Y$.  Rewriting, then
\begin{align}
    \E{(X - Y)^2} &= \E{(X -\mathbb{E}[X | \mathcal{G}} + \Delta)^2]\\
                           &= \E{(X -\mathbb{E}[X | \mathcal{G}})^2]
                                     + 2\E{\Delta(X -\mathbb{E}[X | \mathcal{G}})]
                                     + \E{\Delta^2}
\end{align}
But by the Kolmogorov definition of conditional expectation (Def.~\ref{def:conditional_expectation_kolmogorov}), we have
\begin{equation}
    \E{\Delta\mathbb{E}[X | \mathcal{G}}] = \E{\Delta X}
\end{equation}
so that the middle term above is identically zero, leaving
\begin{equation}
    \E{(X-Y)^2} = \E{(X -\mathbb{E}[X | \mathcal{G}})^2] +  \E{\Delta^2}
\end{equation}
Since $\E{\Delta^2} \geq 0$, the equation is minimized when $\Delta = 0$, which is precisely the least-squares property.  This coincides with the geometric intepretation, since if $\Delta \in \mathcal{L}^2(\mathcal{G})$ and $\E{X | \mathcal{G}}$ is orthogonal projection of $X$ onto $\mathcal{L}^2(\mathcal{G})$, then $ X - \E{X | \mathcal{G}} \perp \mathcal{L}^2{\mathcal{G}}$ and therefore $\langle  X - \E{X | \mathcal{G}} , \Delta \rangle = \E{( X - \mathbb{E}[X | \mathcal{G}} )\Delta] = 0$.

If the conditional expectation were not unique, then there would exist some other $\mathcal{G}$-measurable random variable $Y'$ that also minimizes $\E{(X-Y)^2}$ over all $Y$.  As demonstrated above, this would mean $\E{(X-Y')^2} = \E{(X - \mathbb{E}[X | \mathcal{G}})^2]$.  But we could equally well write
\begin{equation}
    \E{(X-Y')^2} = \E{(X - \mathbb{E}[X | \mathcal{G}})^2] 
                 + \E{(\mathbb{E}[X | \mathcal{G}} - Y')^2]
\end{equation}
where the cross term again disappears due to orthogonality.  If $Y'$ is truly a minimum, we must have $\E{(\mathbb{E}[X | \mathcal{G}} - Y')^2] = 0$ or really $Y' = \E{X | \mathcal{G}}$ (a.s).
\end{proof}

\subsection{Radon-Nikodym Theorem}
Another definition of conditional expectation is in terms of the Radon-Nikodym theorem.  Although the Kolmogorov definition is perfectly adequate for our purposes, studying this alternate definition will introduce concepts that are essential in developing the filtering equations and will be revisited when studying the stability of the quantum parameter estimation filter in Chapter \ref{chapter:quantum_parameter_estimation}.
\begin{definition}\label{def:absolutely_continuous}
Let $(\Omega,\mathcal{F},\mathbb{P})$ be a probability space.  A probability measure $\mathbb{Q}$ is \emph{absolutely continuous} with respect to $\mathbb{P}$, written $\mathbb{Q} \ll \mathbb{P}$ if $\mathbb{Q}(A) = 0$ for all events $A \in \mathcal{F}$ where $\mathbb{P}(A) = 0$.
\end{definition}
\noindent Absolute continuity is an important concept when we are interested in changing probability measures, which is essentially a change of variables technique to allow for easier calculations (much like the change of variables technique in calculus).  The above definition tells us \emph{when} such a change of variables is even possible.  

The basic technique of transformation is as follows.  Let $f(\omega)$ be a nonnegative random variable on $(\Omega,\mathcal{F},\mathbb{P})$ satisfying $\mathbbm{E}[f] = 1$.  For any $A \in \mathcal{F}$, we define the new measure $\mathbb{Q}$ as
\begin{equation}
    \mathbb{Q}(A) = \mathbb{E}_{\mathbb{P}}[\chi_A f] = \int_A f(\omega) \mathbb{P}(d\omega)
\end{equation}
where $\mathbb{Q}$ satisfies the requirements of a probability measure, i.e. $\mathbb{Q}(\varnothing) = 0$, $\mathbb{Q}(\Omega) = 1$ since $\mathbbm{E}[f] = 1$, and the countable disjoint sets decomposition follows directly from the definition of conditional expectation and the measure $\mathbb{P}$.  We can then relate the expectations under either measure for some other random variable $g(\omega)$ as
\begin{equation}
    \mathbb{E}_\mathbb{Q}[g] = \int_\Omega g(\omega) \mathbb{Q}(d\omega)
                 = \int_\Omega g(\omega)f(\omega)\mathbb{P}(d\omega)
                 = \mathbb{E}_\mathbb{P}[gf] .
\end{equation}  
The function $f$ above is called the \emph{density} of the measure $\mathbb{Q}$ with respect to the measure $\mathbb{P}$ and is written $d\mathbb{Q}/d\mathbb{P}$. 

If we think for a little, we immediately see that independent of a choice of $f$, events which have probability measure zero under $\mathbb{P}$ must also have probability measure zero under $\mathbb{Q}$---there is no $f$ such that $f  \mathbb{P}(d\omega)$ can be non-zero if $\mathbb{P}(d\omega) = 0$.  This observation is formalized in the following theorem, for which we omit the proof.
\begin{theorem}[\textbf{Radon-Nikodym}]\label{thm:radon_nikodym}
Consider the measures $\mathbb{P},\mathbb{Q}$ on the measurable space $(\Omega,\mathcal{F})$ such that $\mathbb{Q} \ll \mathbb{P}$, then there exists a unique $\mathcal{F}$-measurable random variable $f$ with $\mathbb{E}_{\mathbb{P}}[f] = 1$ such that $\mathbb{E}_{\mathbb{P}}[\chi_A f] = \mathbb{Q}[A]$ for all $A \in \mathcal{F}$.  We therefore call $f$ the \emph{density or Radon-Nikodym derivative}, $d\mathbb{Q}/d\mathbb{P}$.
\end{theorem}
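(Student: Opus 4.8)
The plan is to give von Neumann's elegant Hilbert-space proof, which dovetails with the orthogonal-projection picture of conditional expectation just developed and relies only on the Riesz representation theorem for $\mathcal{L}^2$. First I would form the finite measure $\nu = \mathbb{P} + \mathbb{Q}$ and work in $\mathcal{L}^2(\Omega,\mathcal{F},\nu)$. On this Hilbert space the map $g \mapsto \int_\Omega g\, \mathbb{Q}(d\omega)$ is a linear functional, and a single application of Cauchy--Schwarz shows it is bounded, since
\begin{equation}
    \left| \int_\Omega g\, \mathbb{Q}(d\omega) \right| \leq \int_\Omega \abs{g}\, \nu(d\omega) \leq \sqrt{\nu(\Omega)}\, \norm{g}_2 ,
\end{equation}
with the norm taken with respect to $\nu$. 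By Riesz representation there is a unique $h \in \mathcal{L}^2(\nu)$ with $\int_\Omega g\, \mathbb{Q}(d\omega) = \int_\Omega g h\, \nu(d\omega)$ for every $g$, and since $\nu = \mathbb{P} + \mathbb{Q}$ this rearranges to the key identity $\int_\Omega g(1-h)\, \mathbb{Q}(d\omega) = \int_\Omega g h\, \mathbb{P}(d\omega)$.

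Second, I would pin down the range of $h$ by feeding indicator functions into these two relations. Taking $g = \chi_{\{h<0\}}$ in the Riesz identity makes its left side nonnegative and its right side nonpositive, forcing $\nu(\{h<0\}) = 0$; taking $g = \chi_{\{h>1\}}$ in the rearranged identity forces $\nu(\{h>1\}) = 0$. Hence $0 \leq h \leq 1$ holds $\nu$-almost everywhere. I would then define the candidate density $f = h/(1-h)$ and recover the claim by substituting $g = \chi_A(1 + h + \cdots + h^{n-1})$ into the rearranged identity, which telescopes to $\int_A (1 - h^n)\, \mathbb{Q}(d\omega) = \int_A (h + \cdots + h^n)\, \mathbb{P}(d\omega)$. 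Letting $n \to \infty$ and invoking monotone convergence sends the right side to $\int_A f\, \mathbb{P}(d\omega)$ and the left side to $\mathbb{Q}(A \setminus \{h=1\})$, so that $\mathbb{Q}(A) = \int_A f\, \mathbb{P}(d\omega)$ once the set $\{h=1\}$ is shown to be $\mathbb{Q}$-null. The normalization $\mathbb{E}_\mathbb{P}[f] = \mathbb{Q}(\Omega) = 1$ then follows by taking $A = \Omega$.

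Third, uniqueness is routine: if $f$ and $f'$ both satisfy $\mathbb{E}_\mathbb{P}[\chi_A f] = \mathbb{Q}(A) = \mathbb{E}_\mathbb{P}[\chi_A f']$ for all $A \in \mathcal{F}$, then $\int_A (f - f')\, \mathbb{P}(d\omega) = 0$ for every measurable set; choosing $A = \{f > f'\}$ and $A = \{f < f'\}$ forces $f = f'$ $\mathbb{P}$-almost everywhere, which is exactly the sense of uniqueness asserted in the statement.

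The main obstacle is the treatment of the set $E = \{h = 1\}$, and this is precisely where the hypothesis $\mathbb{Q} \ll \mathbb{P}$ earns its keep. Putting $g = \chi_E$ into the Riesz identity gives $\mathbb{Q}(E) = \int_E h\, \nu(d\omega) = \nu(E) = \mathbb{P}(E) + \mathbb{Q}(E)$, hence $\mathbb{P}(E) = 0$; without absolute continuity one could have $\mathbb{Q}(E) > 0$, producing a singular part on which no $\mathbb{P}$-density can exist and on which $f = h/(1-h)$ blows up. Absolute continuity converts $\mathbb{P}(E) = 0$ into $\mathbb{Q}(E) = 0$, killing that singular contribution and making the limiting argument of the second step go through cleanly. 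Everything else is bookkeeping with the monotone and dominated convergence theorems applied to the nonnegative partial sums.
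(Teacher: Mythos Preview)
Your argument is correct and complete: it is von~Neumann's Hilbert-space proof, and you have handled the one genuinely delicate point---the set $\{h=1\}$ and its dependence on the absolute-continuity hypothesis---cleanly. The paper, however, explicitly omits the proof of this theorem (the sentence immediately following the statement reads ``for which we omit the proof''), so there is no argument in the text to compare against. Your choice of approach is a good fit for the surrounding material: the paper has just framed conditional expectation as an $\mathcal{L}^2$ orthogonal projection, and von~Neumann's proof runs on exactly that machinery (Riesz representation in $\mathcal{L}^2(\nu)$), so it reinforces rather than detours from the exposition. Two small points of polish: you might say explicitly that $f$ is defined to be $0$ (or anything finite) on the $\mathbb{P}$-null set $\{h=1\}$ so that $f$ is an honest $\mathcal{F}$-measurable random variable, and you might note that the test function $g = \chi_A(1 + h + \cdots + h^{n-1})$ is bounded by $n$ and hence lies in $\mathcal{L}^2(\nu)$, so the Riesz identity legitimately applies to it.
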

\noindent Although the theorem simply formalizes our intuition, the important part is that if it exists, the Radon-Nikodym derivative is unique.  In particular, if we were to follow the technical route, and define the conditional expectation as a sequence of finite approximations, we would find that it converges to
\begin{equation}
    \E{X | \mathcal{F}} = \frac{d\mathbb{Q}\rvert_\mathcal{F}}{d\mathbb{P}\rvert_\mathcal{F}}
    , \qquad \mathbb{Q}(A) = \mathbb{P}(\chi_A X)
\end{equation}
where $\mathbb{Q}\rvert_{\mathcal{F}}$ indicates the measure is restricted to the $\sigma$-algebra $\mathcal{F}$.  Since Theorem \ref{thm:radon_nikodym} shows that this derivative is unique, so too is the conditional expectation and we need not worry about the ambiguities leftover from extending the discrete definition.
\subsection{Summary}
Before moving on to stochastic processes, let's highlight what we have learned so far.  Foremost is that dealing with continuous probability spaces is not a trivial extension of the intuitive rules most are familiar with.  Fortunately, by defining probability spaces, random variables and expectations using measure theory, we can overcome most of the technical issues.  As such, the basic definition of a \emph{probability space} is in terms of the measurable space $(\Omega,\mathcal{F})$ and the measure $\mathbb{P}$.  The \emph{$\sigma$-algebra} $\mathcal{F}$ is used to encode the yes/no questions one could ask about the outcomes $\omega$ in the sample space $\Omega$.  \emph{Random variables} are one step up from the $\sigma$-algebra, and provide a mapping of outcomes in $\Omega$ to some other measurable space, on which a measure is induced via $\mathbb{P}$.  Essentially, random variables allow us to work with quantities of interest which are not simple yes/no questions regarding the original sample space $\Omega$.  \emph{Measurability} tells us when one random variable's value is determined entirely by the value of another; \emph{independence} tells us when random variables values are completely unrelated.

From there, we introduced the concept of \emph{expectation}, which is the average value of a random variable expected after repeated sampling from the given probability model.  Expectation induces an ``almost''-inner product on the space of random variables.  This picture provides a nice interpretation of \emph{conditional expectation}, in which we create a new random variable $\E{X | Y}$ which returns the average of $X$ when given the value of $Y$. This is equivalent to a least-squares projection, in terms of the aforementioned inner product, of $X$ onto the space of $\mathcal{F}_Y$-measurable random variables.  These basic ingredients will be important as we move on to consider more complex probability concepts.

\section{Classical Stochastic Processes}
As we steadily move towards discussion of dynamic stochastic systems and the processing of stochastic signals, we will use the following definitions to imbue our previous probability constructions with a notion of \emph{time}.
\begin{definition}\label{def:stochastic_process}
    A \emph{stochastic process} is a map
    \begin{equation}
        X_t(\omega) : \mathbb{R}_{+} \times \omega \mapsto \mathbb{R}
    \end{equation}
    where the argument $t$ is interpreted as time.
\end{definition}
\noindent We see this is nothing more than a family of random variables labeled by the increasing and positive index $t$.  For a given $\omega_i \in \Omega$, the function $X_t(\omega_i)$ traces out a \emph{trajectory} in time.  As time passes, so does our ability to answer yes-no questions about other events in $\mathcal{F}$ and we should be able to partition the $\sigma$-algebra into questions which may or not be answerable given the information we have now.  For the die example, in which the stochastic process is repeated rolls, we can answer the question ``Was each roll a one up to time $t_1$?'' at time $t_1$ and certainly no sooner.  Additionally, once we are able to answer this question, we should be able to do so for eternity.  That is, there is no way we can ``unlearn'' information about events.  Such a  \emph{filtration} of $\mathcal{F}$ is formalized in the following definition. 
\begin{definition}\label{def:filtration}
    The elementary space $(\Omega,\mathcal{F},\mathbb{P})$ admits a \emph{filtration} in terms of an increasing sequence of $\sigma$-algebras, labeled $\mathcal{F}_t \subset \mathcal{F}$ where $\mathcal{F}_s \subset \mathcal{F}_t$ for $s < t$. 
\end{definition}
\noindent  Note that many filtrations exists on a probability space, though we are often interested in one generated by a particular stochastic process, which may be written
    \begin{equation}
        \mathcal{F}_t^{X} = \sigma\{X_s^{-1}(\mathcal{B}), s \leq t \}
    \end{equation}

Given this extra structure for the $\sigma$-algebra, measurability may also be defined relative to the passage of time.
\begin{definition}\label{def:adapted}
    Consider the probability space $(\Omega,\mathcal{F},\mathbb{P})$ with filtration $\mathcal{F}_t$.  The stochastic process $X_t$ is called $\mathcal{F}_t$-adapted if $X_s$ is $\mathcal{F}_t$-measurable for every $s \leq t$.
\end{definition}
\noindent Adapted processes encompass most of the stochastic processes that we will consider in filtering theory.  Intuitively, these processes are ones that do not look into the future, in that at time $t$, the values of the entire stochastic history up to that time, $\{X_{s \leq t}\}$, are completely determined by the yes-no questions answerable at time $t$, represented by $\mathcal{F}_t$.

An important class of stochastic processes are those whose future values are best estimated by its current value.  Such processes are called \emph{martingales}.
\begin{definition}\label{def:martingale}
    A stochastic process $X_t$ is an $\mathcal{F}_t$-martingale if it is $\mathcal{F}_t$-adapted, has bounded expectation ($\E{\abs{X_t}} < \infty$ for all $t$) and satisfies $\E{X_t | \mathcal{F}_s} = X_s$ for all $s \leq t$. 
\end{definition}
\noindent Martingales are perhaps best appreciated in terms of their etymological roots in gambling theory.  If we let the stochastic process $X_s$ represent our winnings at time $s$, then $\E{X_t | \mathcal{F}_s}$ represents our expected future winnings at time $t$, given our knowledge of events up to time $s \leq t$.  If the game is fair, which is in our best interests, but still worth playing, which is certainly in the best interests of the casino, this expectation should be $X_s$.  That is, on average, we expect to come out even when playing the game.  It turns out that this simple property has far reaching implications and is a powerful tool for proving other properties of stochastic processes.

\subsection{White Noise and the Wiener Process}
For a given stochastic process $X_t$, it will be desirable to formulate an equation of motion which describes its time-evolution, in which a noise term traces out an individual trajectory or realization appropriate for a given probability measure.  As an equation, we thus desire
\begin{equation}\label{probability:eq:noisy_process}
    \frac{dX_t}{dt} = a(t,X_t) + b(t,X_t) \times \text{``noise''}
\end{equation}
where I have intentionally been imprecise in representing the noise term.  As we shall see, the noise term is mathematically difficult to handle in general and even in the particular case when noise is \emph{white}\footnote{Note that the ``color'' of the noise has to do with the correlation properties of a stochastic process; it says nothing about the \emph{distribution} of the noise itself.  For the most part, we will consider Gaussian white noise processes, which are delta-correlated in time with Gaussian distributed increments.}---delta-correlated in time with a flat power spectrum.  White noise is common in engineering and physics due its simple properties and fairly broad applications, from modeling random walks to financial derivative prices.  Before formally developing a sensible equation of motion for processes driven by white noise, lets first consider an example which highlights the difficulties involved in simply formulating white noise as a stochastic process.
\begin{example}[From Introduction in \citet{vanHandel:2007a}]
    Consider a discrete-time, noisy channel, in which at time-step $n$, the message $a_n$ is transmitted, e.g. $x_n = a_n + \xi_n$.  The noise $\xi_n$ can be assumed to be independent and identically distributed (i.i.d.) at different times, as the noisy channel quickly loses traces of its previous state.  Moreover, if $\xi_n$ is really the sum of many independent effects, the central limit theorem suggests that it should be Gaussian distributed.  We therefore can take $\{\xi_n\}$ to be discrete time Gaussian white noise with some mean and variance.
      
    Extending to a continuous-time model, our intuition tells us to replace the discrete label $n$ with the continuous label $t$.  Assuming zero-mean and unit-variance for the noise, we then have $\E{\xi_t} = 0$ and $\E{\xi_s\xi_t} = 0$ if $s \neq t$ and $\E{\xi_t^2} = 1$.  Now suppose we transmit a message $a_0$ as $x_t = a_0 + \xi_t$.  Consider the time averaged process over a small interval $[0,\epsilon]$
    \begin{equation}
        X_\epsilon = \frac{1}{\epsilon}\int_0^{\epsilon} x_t dt
                   = a_0 + \Xi_\epsilon
    \end{equation}
    where $\Xi_\epsilon = \frac{1}{\epsilon}\int_0^{\epsilon}\xi_t dt$.  Clearly $\E{\Xi_\epsilon} = 0$, but more interestingly\footnote{Note that this goes to zero since $t = s$ on a set of measure zero, so the expectation factors and goes to zero.  Of course, one really expects to get a delta function here, but as we soon see, that has a different mathematical meaning than these real-valued random variables.}
    \begin{equation}
        \E{\Xi_\epsilon^2} = \frac{1}{\epsilon^2}\int_0^\epsilon
                \int_0^\epsilon \E{\xi_s\xi_t}ds dt = 0,
    \end{equation}
    Thus to decode the message, we simply time average $x_t$ for an arbitrarily short amount of time.  
    
    This is most likely not the model we envisioned; we expect some effort is needed to recover the corrupted message.  Rather than working directly with $\xi_t$, we could instead focus on the time-averaged process.  Clearly $\Xi_1$ is a zero mean, Gaussian random variable with unit variance.  If we want to retain the independence of noise at different times, this suggests that $\int_0^{1/2}\xi_t dt$ and  $\int_{1/2}^1 \xi_t dt$ are also independent, mean-zero random variables, but now with variance $1/2$.  Generalizing, we can then introduce the \emph{Wiener process}
    \begin{equation}
        W_t = \int_0^t \xi_s ds .
    \end{equation}
We will formalize this slightly in a bit, but the idea is that Gaussian white noise is the time derivative of the Wiener process, $dW_t/dt$.  However, we will find that it is non-differentiable almost everywhere.  Indeed, given that $\E{W_sW_t} = \min{(s,t)}$ due to the independence of different increments, we have
\begin{equation}
    \E{\xi_s\xi_t} = \frac{d}{dt}\frac{d}{ds}\E{W_sW_t}
                           = \frac{d}{dt}\frac{d}{ds}\frac{s+t-\abs{t-s}}{2}
                           = \frac{d}{dt}\frac{1 + \sign(t-s)}{2}
                           = \delta(t-s)
\end{equation}
which is the Dirac delta ``function'', a manifestly non-differentiable object. Moreover, given that $\delta(t)$ is really a distribution and not a true function, we immediately see the difficulties in defining it as a stochastic process.  Working through the details, one would find that our current probability framework does not allow for a stochastic process with the desired properties of white noise.  Yet as we previously mentioned, delta-correlation is the most common definition used for white noise in engineering in physics.  Fortunately, we will be able to use the Wiener process, which does have a rigorous mathematical definition, to formally handle a process like that in Eq.~\ref{probability:eq:noisy_process}. 
\end{example}
  
We loosely want to think of the Wiener process as the $N\to\infty$ limit of the random walk
\begin{equation} \label{classical:eq:wiener_random_walk}
    x_t(N) = \sum_{n=1}^{Nt} \frac{\xi_n}{\sqrt{N}}
\end{equation}  
where $\xi_n$ are the i.i.d random variables with zero mean and unit variance.  The idea is that in the infinite limit, the central limit theorem tells us that any sum of i.i.d random variables is Gaussian distributed.  Unfortunately, that theorem does not apply for the entire stochastic process $\{x_t(N) : t \in \mathbb{R}^+\}$, which has uncountably infinite elements.  But it is good enough for any finite number of elements from this collection, which allows us to define the Wiener process as follows.
\begin{definition}\label{def:weiner_process}
    A \emph{Wiener process} $W_t$ is a stochastic process with continuous trajectories and which for any set of times $t_1 < t_2 < \ldots < t_n, n < \infty$, the increments $W_{t_1}, W_{t_2} - W_{t_1}, \ldots, W_{t_n} - W_{t_{n-1}}$ are independent Gaussian random variables with zero mean and respective variances $t_1, t_2 - t_1, \ldots, t_n - t_{n-1}$.  
\end{definition}
\noindent It turns out that proving the existence of such a process is more involved than worth detailing for our purposes (see \citet[3.2]{vanHandel:2007a}).  One can also show that the Wiener process is unique in the sense that any two processes $W_t,V_t$ which satisfy the above definition give rise to the same probability law, e.g. $\E{f(W_t)} = \E{f(V_t)}$.  It can also be shown that with unit probability, the sample paths of a Wiener process are continuous everywhere but differentiable nowhere.  

Given the above definition, there are some basic properties we can now consider.  First is that the Wiener process $W_t$ introduces a natural filtration $\mathcal{F}_t^W = \sigma\{ W_s : s \leq t\}$.  Relatedly, given an arbitrary filtration $\mathcal{F}_t$, we say $W_t$ is an \emph{$\mathcal{F}_t$-Wiener process} if it is adapted and $W_t - W_s$ is independent of $\mathcal{F}_s$ for any $t > s$.  Two further properties are considered in the following lemmas.
\begin{lemma}\label{lem:wiener_martingale}
    An $\mathcal{F}_t$-Wiener process is a $\mathcal{F}_t$-martingale.
    \begin{proof}
        We want to show that $\E{W_t | \mathcal{F}_s} = W_s$ for $t \geq s$. Clearly $W_s = \E{W_s | \mathcal{F}_s}$ since $W_s$ is $\mathcal{F}_s$-adapted, which allows us to rewrite the condition as $\E{W_t - W_s | \mathcal{F}_s} = 0$.  But we just stated that $W_t - W_s$ is independent of $\mathcal{F}_s$ and therefore has a zero conditional expectation.
    \end{proof}
\end{lemma}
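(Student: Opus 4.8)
The plan is to verify, in turn, the three defining properties of a martingale from Definition~\ref{def:martingale}: adaptedness, integrability, and the conditioning identity $\E{W_t | \mathcal{F}_s} = W_s$ for $s \leq t$. The first two are immediate. Adaptedness holds by hypothesis, since we have \emph{defined} an $\mathcal{F}_t$-Wiener process to be adapted. For integrability, I would note that $W_t$ is Gaussian with mean zero and variance $t$, so a direct Gaussian integral gives $\E{\abs{W_t}} = \sqrt{2t/\pi} < \infty$ for every finite $t$; in particular the bounded-expectation requirement is met.

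The substance of the lemma lies in the conditioning identity. I would begin by writing $W_t = (W_t - W_s) + W_s$ and using linearity of the conditional expectation to obtain $\E{W_t | \mathcal{F}_s} = \E{W_t - W_s | \mathcal{F}_s} + \E{W_s | \mathcal{F}_s}$. Because $W_s$ is $\mathcal{F}_s$-measurable, the second term collapses to $W_s$, so the entire claim reduces to showing that the increment has vanishing conditional expectation, $\E{W_t - W_s | \mathcal{F}_s} = 0$.

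The one step that deserves care---and the main obstacle in an otherwise short argument---is justifying that the increment's conditional expectation equals its unconditional mean; this is precisely where the defining independence property of an $\mathcal{F}_t$-Wiener process, that $W_t - W_s$ is independent of $\mathcal{F}_s$, must be invoked. Rather than treat it as folklore, I would verify directly from the Kolmogorov characterization (Definition~\ref{def:conditional_expectation_kolmogorov}) that the constant $\E{W_t - W_s}$ is a valid version of $\E{W_t - W_s | \mathcal{F}_s}$: a constant is trivially $\mathcal{F}_s$-measurable, and for any $A \in \mathcal{F}_s$ the indicator $\chi_A$ is $\mathcal{F}_s$-measurable, so independence factorizes the expectation as $\E{\chi_A (W_t - W_s)} = \E{\chi_A}\E{W_t - W_s} = \E{\chi_A \E{W_t - W_s}}$. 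Uniqueness of the conditional expectation then forces $\E{W_t - W_s | \mathcal{F}_s} = \E{W_t - W_s}$, and since the increment is mean zero by the definition of the Wiener process this equals $0$. This completes the identity $\E{W_t | \mathcal{F}_s} = W_s$ and hence establishes that $W_t$ is an $\mathcal{F}_t$-martingale.
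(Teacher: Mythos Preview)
Your proof is correct and follows essentially the same route as the paper: decompose $W_t = (W_t - W_s) + W_s$, use that $W_s$ is $\mathcal{F}_s$-measurable, and then invoke independence of the increment from $\mathcal{F}_s$ to conclude its conditional expectation vanishes. You are simply more thorough than the paper---you also check the integrability condition and spell out, via the Kolmogorov definition, why independence forces $\E{W_t - W_s \mid \mathcal{F}_s} = \E{W_t - W_s}$---but the underlying argument is the same.
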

\begin{definition}\label{def:markov}
    An \emph{$\mathcal{F}_t$-Markov process} is an $\mathcal{F}_t$-adapted process $X_t$ such that $\E{f(X_t) | \mathcal{F}_s} = \E{f(X_t) | X_s}$ for all $t \geq s$ and bounded/measurable $f$.
\end{definition}
\begin{lemma}\label{lem:wiener_markov}
    An $\mathcal{F}_t$-Wiener process is an $\mathcal{F}_t$-Markov process.
\end{lemma}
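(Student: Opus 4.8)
The plan is to exploit the defining feature of an $\mathcal{F}_t$-Wiener process established just above: that the increment $W_t - W_s$ is independent of $\mathcal{F}_s$ for $t > s$. Writing $W_t = W_s + (W_t - W_s)$ splits the terminal value into an $\mathcal{F}_s$-measurable part, $W_s$, and a part independent of everything known by time $s$. Since the Markov property (Def.~\ref{def:markov}) asks us to show $\E{f(W_t) | \mathcal{F}_s} = \E{f(W_t) | W_s}$ for every bounded measurable $f$, the strategy is to compute both conditional expectations explicitly and exhibit each as the same function of $W_s$ alone.

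The key tool I would establish first is an independence (or ``freezing'') lemma: if $X$ is $\mathcal{G}$-measurable and $Z$ is independent of $\mathcal{G}$, then $\E{h(X,Z) | \mathcal{G}} = g(X)$ where $g(x) = \E{h(x,Z)}$. I would prove this directly from the Kolmogorov characterization (Def.~\ref{def:conditional_expectation_kolmogorov}): $g(X)$ is manifestly $\mathcal{G}$-measurable, so it suffices to verify $\E{\chi_A h(X,Z)} = \E{\chi_A g(X)}$ for all $A \in \mathcal{G}$. Because $(\chi_A, X)$ is $\mathcal{G}$-measurable and hence independent of $Z$, the joint expectation factorizes through the product of their laws, and Fubini's theorem lets me integrate out $Z$ first, yielding exactly $\E{\chi_A g(X)}$. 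For bounded measurable $h$ all integrals are finite, so no further approximation by simple functions is needed.

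Applying this with $\mathcal{G} = \mathcal{F}_s$, $X = W_s$, $Z = W_t - W_s$, and $h(x,z) = f(x+z)$ gives $\E{f(W_t) | \mathcal{F}_s} = g(W_s)$, where $g(x) = \E{f(x + (W_t - W_s))}$ is the convolution of $f$ with the Gaussian law of the increment. Crucially, $g(W_s)$ depends on $\omega$ only through $W_s$, so it is $\sigma\{W_s\}$-measurable. I would then repeat the identical computation with the smaller $\sigma$-algebra $\sigma\{W_s\} \subseteq \mathcal{F}_s$ in place of $\mathcal{F}_s$: the value $W_s$ is still $\sigma\{W_s\}$-measurable, and $W_t - W_s$, being independent of $\mathcal{F}_s$, is \emph{a fortiori} independent of $\sigma\{W_s\}$, so the freezing lemma again yields $\E{f(W_t) | W_s} = g(W_s)$. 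Since both conditional expectations equal the same $g(W_s)$, the Markov identity holds for all bounded measurable $f$.

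The main obstacle is the independence lemma and its measure-theoretic justification, specifically the factorization of $\E{\chi_A f(W_s + (W_t - W_s))}$ into an iterated integral over the product of the law of $(\chi_A, W_s)$ and the Gaussian law of $W_t - W_s$, together with the Fubini interchange that moves the increment integral inside the expectation. Everything else, namely the two applications of the lemma and the observation that $\sigma\{W_s\} \subseteq \mathcal{F}_s$, is routine bookkeeping once that step is in hand.
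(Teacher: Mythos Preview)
The paper states this lemma without proof, so there is nothing to compare against directly. Your argument is correct and is exactly the standard textbook proof: decompose $W_t = W_s + (W_t - W_s)$, invoke the freezing lemma with the $\mathcal{F}_s$-measurable part $W_s$ and the independent increment $W_t - W_s$, and observe that the resulting expression $g(W_s)$ is already $\sigma\{W_s\}$-measurable, so the same computation conditioned on the smaller $\sigma$-algebra yields the identical answer. Your justification of the freezing lemma via the Kolmogorov characterization and Fubini is the right level of rigor for this dissertation's exposition.
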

\noindent Physicists are very familiar with Markov processes which describe a statistical process with no memory.  In the formal definition, this is manifest in that the expectation of any future function of the process depends only on the value of the process now.  This is the same as saying the future statistical properties of the process are completely determined by its current value.  It is certainly reassuring that Brownian motion, represented by the Wiener process, satisfies this property. 

\subsection{The It\^{o} Integral}
In our steady march towards a mathematical model of dynamic stochastic processes, we are now ready to consider defining stochastic integrals of Gaussian white noise, e.g. $\int_0^t f_s \xi_s ds$.  Of course, given the discontinuity and non-differentiability of $\xi_s$, we instead hope to define an integral over the Wiener process, e.g. $\int_0^t f_s dW_s$, which is at least continuous.  An obvious approach would be in terms of the \emph{Stieltjes integral}, which is an appropriate generalization of the Riemann integral to non-differentiable integrators.  For our purposes, this means we define a sequence of refining partitions $\pi_n$ of the time interval of integration $[0,t]$ so that we may write
\begin{equation}
    \int_0^t f(s)dW_s = \lim_{\pi_n} \sum_{t_i \in \pi_n} f(t_i) (W_{t_{i+1}} - W_{t_i})
\end{equation}
where the $t_i$ make up the partition $\pi$ of $[0,t]$.  It is certainly not clear that this limit converges and does so independently of the choice of partitions $\pi_n$.  This is especially worrisome given the non-differentiability of $W_t$.  Perhaps as anticipated, a rigorous consideration shows that this stochastic integral formulation does not converge uniquely and depends sensitively on the choice of approximating sequence---there are actually examples where the sequence can be chosen so that the integral converges to any desirable function!  

The source of the troubles comes from the fact that the Wiener process has \emph{infinite total variation} over any interval.  Total variation is the total distance your finger would have to travel tracing out the contour of the Wiener process over the given interval.  This is infinite for \emph{any} interval.  As a description of a physical process, this is clearly absurd!  A particle undergoing Brownian motion would surely require an infinite amount of energy to travel an infinite distance.  Of course, a Wiener process is an idealization of a true physical model, but this seemingly undesirable property is an important consequence of the properties of white noise that we do want to model (delta-correlated, martingale, Markov).  Consider that even if the total displacement $\abs{f(t)- f(s)}, t > s$ is small, the function can still oscillate very rapidly within that interval to get a large total variation; the non-differentiable Wiener process therefore oscillates extraordinarily rapidly over any such interval.  Loosely speaking, the whole trouble boils down to the fact that no matter how fine the partition $\pi$, you don't get any better handle on the Wiener increments; the infinite variation means you will \emph{never} get a level of detail independent of the choice of partition.

Fortunately, one can show that even though the total variation of a Wiener process is infinite, the quadratic variation is finite, e.g. for the interval $[0,1]$ and any sequence of partitions 
    \begin{equation}
        \lim_{n\to\infty}\sum_{t_i \in \pi_n}(W_{t_{i+1}} - W_{t_i})^2 \mapsto 1
    \end{equation}
Thus rather than having the stochastic integral converge almost surely (a.s.), we can instead consider convergence in $\mathcal{L}^2$.  More exactly, for some random variable $X$ and a sequence $\{X_n\}$, we say that $X_n \to X$ a.s. if $\mathbb{P}(\{\omega \in \Omega : X_n(\omega)\to X\}) = 1$.  We say that $X_n \to X$ in $\mathcal{L}^2$ if $\norm{X_n - X}_2 \to 0$ as $n\to\infty$.  There are several types of convergence for sequence of random variables which are related in sometimes unintuitive ways.  See \citet{vanHandel:2007a} for more discussion.  

Taking the $\mathcal{L}^2$ approach, consider the simple, square-integrable $\mathcal{F}_t^W$-adapted process $X_t^n$.  The first two properties suggest there are a series of $N < \infty$ non-random jump times $t_i$ (though this could be relaxed) such that $X_{t_i}^n$ is a constant $\mathcal{F}_{t_i}$-measurable random variable in $\mathcal{L}^2$.  That is, since the stochastic process is ``simple'', there are a finite number of times where it jumps to different values.  For simplicity, we assume these times are the same for all $\omega$.  The idea is to leverage the fact that more general $X_t$ processes are limits of simple processes $X_t^{n}$ and if we can define the integral consistently for the latter, the former will inherit the definition.

It is fairly straightforward to define a consistent integral for $X_t^n$:
\begin{equation}
    I(X_{.}^n) = \int_0^T X_t^n dW_t = \sum_{i=0}^N X_{t_i}^n (W_{t_{i+1}} - W_{t_i}) .
\end{equation}
We now want to show that a sequence of such integrals will converge in $\mathcal{L}^2$ to a particular integral for some $X_t$, independent of the approximations $X_t^n$.  To do so, we make use of the following isometry.
\begin{lemma}[\textbf{It\^{o} Isometry}]\label{lem:ito_isometry}
    Let $X_t^n$ be the simple, square-integrable, $\mathcal{F}_t^W$-adapted process discussed above. Then,
    \begin{equation}
	    \E{\left(\int_0^T X_t^n dW_t\right)^2} = \E{\int_0^T (X^n_t)^2 dt} .
    \end{equation}
    \begin{proof}
        \begin{align}
            \E{\left(\int_0^T X_t^n dW_t\right)^2} &= 
                        \sum_{i,j} \E{X^n_{t_i}X^n_{t_j}(W_{t_{i+1}} - W_{t_i})
                                    (W_{t_{j+1}} - W_{t_j})}
        \end{align}
    Now assume $i \neq j$.  From the Wiener process definition \eqref{def:weiner_process} and properties, we know that disjoint increments are independent for disjoint time intervals and moreover, $W_t - W_s$ is independent of $\mathcal{F}_s$ for any $t > s$.  Without a loss of generality, assume $t_i > t_j$.  Then $(W_{t_{i+1}} - W_{t_i})$ is independent of $X^n_{t_i}$, which is $\mathcal{F}_{t_i}$-adapted, and independent of $X^n_{t_j}$, which is $\mathcal{F}_{t_j}$-adapted.  Since it is also over a different interval than $(W_{t_{j+1}} - W_{t_j})$, we can factor its expectation completely and calculate it to be zero by definition.  This leaves terms for which $i = j$, in which case we have
    \begin{multline}
        \E{\left(\int_0^T X_t^n dW_t\right)^2} = 
                    \sum_{i} \E{{(X^n_{t_i})}^2}\E{(W_{t_{i+1}} - W_{t_i})^2}\\
                    = \sum_i \E{{(X^n_{t_i})}^2}(t_{i+1} - t_i)
                    = \E{\int_0^T {(X_t^n)}^2 dt}
    \end{multline}
    Note that the fact that $X_t^n \in \mathcal{L}^2$ is necessary for convergence to the final integral.
    \end{proof}
\end{lemma}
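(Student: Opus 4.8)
The plan is to prove this identity by direct computation, exploiting the piecewise-constant structure of the simple process $X_t^n$ together with the independence and zero-mean properties of the Wiener increments. First I would expand the square of the stochastic integral using its explicit definition as a finite sum, obtaining a double sum
\begin{equation}
    \E{\left(\int_0^T X_t^n\, dW_t\right)^2} = \sum_{i,j} \E{X^n_{t_i} X^n_{t_j} (W_{t_{i+1}} - W_{t_i})(W_{t_{j+1}} - W_{t_j})} .
\end{equation}
The strategy is then to show that every off-diagonal term ($i \neq j$) vanishes, so that only the diagonal ($i = j$) survives, and that the diagonal reproduces exactly the right-hand side.

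The main obstacle, and the crux of the argument, is handling the off-diagonal cross terms. Here I would fix $i \neq j$ and, without loss of generality, assume $t_i > t_j$. The key observation is that the ``future'' increment $(W_{t_{i+1}} - W_{t_i})$ is independent of the $\sigma$-algebra $\mathcal{F}_{t_i}^W$ by the defining property of the $\mathcal{F}_t$-Wiener process. Since $X^n_{t_i}$, $X^n_{t_j}$ and the ``past'' increment $(W_{t_{j+1}} - W_{t_j})$ are all $\mathcal{F}_{t_i}^W$-measurable (using that $X_t^n$ is adapted and that $t_j < t_i$), I can factor the expectation---most cleanly by conditioning on $\mathcal{F}_{t_i}^W$ and pulling the measurable factors outside---to obtain a product in which the lone factor $\E{W_{t_{i+1}} - W_{t_i}} = 0$ forces the whole term to vanish. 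This is the step where the adaptedness hypothesis and the independent-increments structure do all the work, and it is the place where a careless argument (e.g.\ forgetting that $X^n_{t_j}$ and the earlier increment are \emph{also} measurable with respect to $\mathcal{F}_{t_i}^W$) would break down.

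For the surviving diagonal terms, I would again invoke independence: since $(W_{t_{i+1}} - W_{t_i})$ is independent of $\mathcal{F}_{t_i}^W$ and $X^n_{t_i}$ is $\mathcal{F}_{t_i}^W$-measurable, the expectation factors as $\E{(X^n_{t_i})^2}\,\E{(W_{t_{i+1}} - W_{t_i})^2}$, and the second factor equals the increment's variance $t_{i+1} - t_i$ by Definition~\ref{def:weiner_process}. Finally, I would recognize the resulting sum $\sum_i \E{(X^n_{t_i})^2}(t_{i+1} - t_i)$ as precisely the expectation of the (pathwise Riemann) integral of the piecewise-constant integrand $(X_t^n)^2$, yielding $\E{\int_0^T (X_t^n)^2\, dt}$ and completing the proof. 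Throughout, the square-integrability assumption $X_t^n \in \mathcal{L}^2$ is what guarantees that every expectation appearing above is finite, so that the factorizations and the interchange of sum and expectation are legitimate.
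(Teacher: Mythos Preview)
Your proposal is correct and follows essentially the same approach as the paper: expand the squared integral as a double sum, kill the off-diagonal terms by factoring out the mean-zero ``future'' increment via its independence from $\mathcal{F}_{t_i}^W$, and evaluate the diagonal using the variance of the Wiener increment. If anything, your version is slightly more explicit in noting that the earlier increment $(W_{t_{j+1}} - W_{t_j})$ is also $\mathcal{F}_{t_i}^W$-measurable and in framing the factorization as a conditioning step, but the argument is the same.
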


Recall that an isometry is a distance preserving map between two metric spaces. The property under consideration is an isometry if we consider the process $X_t^n$ as a measurable map on $[0,T] \times \Omega$, which admits a natural product measure $\mu_T \times \mathbb{P}$, where $\mu_T$ is the Lebesgue measure on $[0,T]$ which is simply $T$ times the length of the interval.  Using this definition, we see that the It\^{o} Isometry can be written as
\begin{equation}
    \norm{I(X^n_{.})}_{2,\mathbb{P}} = \norm{X_{.}^n}_{2,\mu_T\times\mathbb{P}}
\end{equation}
where the left-hand term is the $\mathcal{L}^2$ norm on $\Omega$ and the right-hand term is the $\mathcal{L}^2$-norm on $[0,T] \times \Omega$.  This isometry preserves the $\mathcal{L}^2$-distance for $\mathcal{F}_t^W$-adapted simple integrands as
\begin{equation}
    \norm{I(X^n_.) - I(Y_.^n)}_{2,\mathbb{P}} \mapsto \norm{X^n_. - Y^n_.}_{2,\mu_T \times \mathbb{P}}
\end{equation}
But the beauty is that one can show\footnote{Essentially, one shows that the approximating sequence is a Cauchy sequence in $\mathcal{L}^2$, after which convergence is easy.} for some $X_. \in \mathcal{L}^2(\mu_T \times \mathbb{P})$ that there exists some sequence of simple integrands such that
\begin{equation}
    \lim_{n\to\infty }\norm{X_.^n - X_.}^2_{2,\mu_T \times \mathbb{P}} 
        = \E{\int_0^T(X_t^n - X_t)^2 dt} \to 0
\end{equation}
then $I(X.)$ can be defined uniquely as the limit in $\mathcal{L}^2{(\mathbb{P})}$ of the simple integrals $I(X.^n)$!  This turns out to be true for \emph{any} $\mathcal{F}_t^W$-adapted process and gives rise to the following definition of the stochastic integral.
\begin{definition}\label{def:ito_integral}
    Consider the $\mathcal{F}_t^W$-adapted stochastic process $X_t$ in $\mathcal{L}^2(\mu_T\times\mathbb{P})$.  The \emph{It\^{o} integral} 
    \begin{equation}
           I(X_{.}) = \int_0^T X_t dW_t
    \end{equation}
    is defined as the unique limit in $\mathcal{L}^2(\mathbb{P})$ of simple integrals $I(X.^n)$.
\end{definition}
\noindent One can show that the It\^{o} integral has continuous sample paths, is an $\mathcal{F}_t^W$-martingale and satisfies\footnote{These properties actually depend on localizing the It\^{o} integral, which amounts to defining it on arbitrarily long time intervals.  Extending to the infinite interval is difficult.}
\begin{equation}
    \E{\int_0^T X_t dW_t} = \int_0^T \E{X_t dW_t} = \int_0^T \E{X_t}\E{dW_t} = 0 ,
\end{equation}
where the fact that $X_t$ is $\mathcal{F}_t^W$-adapted means it is independent of $dW_t$ (it is a \emph{non-anticipative function}) and the expectation may be factored.

In short, the It\^{o} integral is defined uniquely as a converging sequence of approximations in $\mathcal{L}^2$, where we leverage the fact that simple stochastic processes converge uniquely to show that the It\^{o} integral also converges.  The fact that it is limited to $\mathcal{F}_t^W$-adapted random variables is not a significant restriction for us, especially given the resulting useful properties we recover, including having zero expectation and being a martingale.  Indeed, one approach towards the filtering problem is based on the following relationship between arbitrary martingales and It\^{o} integrals.
\begin{lemma}[Martingale Representation]\label{lem:martingale_representation}
    Consider the $\mathcal{F}_t^W$-martingale $M_t \in \mathcal{L}^2(\mathbb{P})$.  Then there exists a unique $\mathcal{F}_t^W$-adapted process $H_t$ such that
    \begin{equation}
        M_t = M_0 + \int_0^t H_s dW_s 
    \end{equation}
\end{lemma}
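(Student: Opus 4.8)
The plan is to reduce the statement to a Hilbert-space density question and settle it using the It\^{o} isometry together with a family of explicitly representable random variables. First I would fix a terminal time $T$ and reduce to representing the single terminal random variable $F = M_T$, which is $\mathcal{F}_T^W$-measurable and lies in $\mathcal{L}^2(\mathbb{P})$. If I can produce an adapted $H \in \mathcal{L}^2(\mu_T\times\mathbb{P})$ with $F = \E{F} + \int_0^T H_s\,dW_s$, then applying $\E{\cdot \mid \mathcal{F}_t^W}$ to both sides and using that the It\^{o} integral is an $\mathcal{F}_t^W$-martingale whose increments after time $t$ have zero conditional expectation recovers $M_t = M_0 + \int_0^t H_s\,dW_s$ for every $t \le T$; since $T$ is arbitrary this gives the claim. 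So the whole problem collapses to showing that every mean-zero element of $\mathcal{L}^2(\Omega,\mathcal{F}_T^W,\mathbb{P})$ is an It\^{o} integral.

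Next I would introduce the subspace
\[
    \mathcal{I} = \Bigl\{\, c + \int_0^T H_s\,dW_s \;:\; c \in \mathbb{R},\ H \in \mathcal{L}^2(\mu_T\times\mathbb{P})\ \text{adapted} \,\Bigr\} \subseteq \mathcal{L}^2(\Omega,\mathcal{F}_T^W,\mathbb{P}).
\]
The It\^{o} isometry (Lemma~\ref{lem:ito_isometry}) is precisely what makes this manageable: it identifies $\norm{\int_0^T H\,dW}_{2,\mathbb{P}}$ with $\norm{H}_{2,\mu_T\times\mathbb{P}}$, so the integration map is an isometric embedding of a complete space and its image is therefore closed. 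Hence $\mathcal{I}$ is a closed subspace, and by the orthogonal-projection picture underlying Theorem~\ref{thm:conditional_least_squares} it suffices to show $\mathcal{I}^\perp = \{0\}$: if no nonzero $G$ is orthogonal to every stochastic integral, then $\mathcal{I}$ is everything.

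The engine for killing the orthogonal complement is the family of stochastic exponentials. For a deterministic step function $f$ on $[0,T]$, set $\mathcal{E}_T(f) = \exp\bigl(\int_0^T f_s\,dW_s - \frac12\int_0^T f_s^2\,ds\bigr)$; this solves $\mathcal{E}_T(f) = 1 + \int_0^T \mathcal{E}_s(f)\,f_s\,dW_s$, so each $\mathcal{E}_T(f)$ already lies in $\mathcal{I}$. (This step uses the It\^{o} differentiation rule applied to $\exp$, the one genuinely external ingredient.) Now suppose $G \in \mathcal{I}^\perp$. Writing $f = \sum_j \lambda_j\,\chi_{(t_{j-1},t_j]}$ turns orthogonality to every $\mathcal{E}_T(f)$ into $\E{G\,\exp(\sum_j \lambda_j (W_{t_j}-W_{t_{j-1}}))} = 0$ for all partitions and all real, indeed complex, $\lambda_j$. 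This says the Fourier/Laplace transform of the finite signed measure $B \mapsto \E{G\,\chi_{\{(W_{t_1},\,W_{t_2}-W_{t_1},\,\dots,\,W_{t_n}-W_{t_{n-1}})\in B\}}}$ vanishes identically, so by uniqueness of that transform $\E{G \mid W_{t_1},\dots,W_{t_n}} = 0$ for every finite collection of times.

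Finally I would pass from finite collections to the full filtration: since $\mathcal{F}_T^W$ is generated by the increments, a monotone-class (or martingale-convergence) argument upgrades $\E{G\mid W_{t_1},\dots,W_{t_n}}=0$ to $\E{G\mid\mathcal{F}_T^W} = G = 0$ a.s., so $\mathcal{I}^\perp=\{0\}$ and existence follows. Uniqueness is then immediate and cheap: if $\int_0^T H\,dW = \int_0^T H'\,dW$, the It\^{o} isometry gives $\E{\int_0^T (H_s-H'_s)^2\,ds}=0$, i.e.\ $H = H'$ in $\mathcal{L}^2(\mu_T\times\mathbb{P})$. The main obstacle is the totality of the stochastic exponentials --- that is the Fourier-uniqueness-plus-generation argument of the last two steps; everything else is either the already-proved isometry or routine Hilbert-space geometry.
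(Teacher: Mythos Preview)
The paper does not prove this lemma; it is stated without proof and used as a black box. Your proposal is the standard density-of-stochastic-exponentials argument (as found, e.g., in \O ksendal or Revuz--Yor) and is correct in outline: reduce to the terminal variable, use the It\^{o} isometry to see that constants plus It\^{o} integrals form a closed subspace of $\mathcal{L}^2(\mathcal{F}_T^W)$, observe that stochastic exponentials lie in that subspace, and kill the orthogonal complement by a Fourier-uniqueness argument on the increments. One small ordering issue relative to the paper: you invoke the It\^{o} rule to identify $\mathcal{E}_T(f)$ as a stochastic integral, but in the paper the It\^{o} rule (Theorem~\ref{thm:ito_rule}) is stated \emph{after} Lemma~\ref{lem:martingale_representation}. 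There is no logical circularity---the It\^{o} rule does not rely on martingale representation---but if you were slotting this proof into the paper as written you would want to either reorder or note explicitly that you are borrowing a forward reference.
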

\noindent This lemma is extremely useful in that if we can show that some stochastic process is a martingale with respect to the Wiener filtration $\mathcal{F}_t^W$, we know it can be expressed as an It\^{o} integral.  As we shall soon find in the following section, this is equivalent to showing that the process admits a stochastic differential equation analogous to the desired trajectory in \eqref{probability:eq:noisy_process}.
\subsection{Stochastic Differential Equations}
We are now finally in a position to consider dynamical processes driven by white noise.  The basic idea is that the time evolution of complicated stochastic processes can be expressed simply in terms of the basic Wiener process, whose statistics and properties are well known to us, and an appropriate deterministic term.  This is often the route taken in statistical physics, where trajectories are written as \emph{Langevin equations}.  Unfortunately, the ordinary differential equation picture we had in mind in Eq.~\ref{probability:eq:noisy_process} is not useful, as there is no way to express Gaussian white noise directly as a sensible mathematical object.  However, our success in defining the It\^{o} integral suggests that we can deal sensibly with the integral of the noise process, written in terms of Wiener increments, which gives a form
\begin{equation}
    X_t = x_0 + \int_0^t a(s,X_s)ds + \int_0^t b(s,X_s)dW_s .
\end{equation}
But given our predilection for differential equations, we often express the above integral as a \emph{stochastic differential equation (SDE)}, written
\begin{equation} 
    dX_t = a(t,X_t)dt + b(t,X_t)dW_t
\end{equation}
where differentials are used to remind us that this is not a true derivative equation, as $dW_t/dt$ is not a well-defined mathematical object.  The SDE form is really no more than a notational convenience for referring to the more accurate integral form.  This convenience is most obvious when considering functions of such stochastic processes, for which the normal chain rule of calculus no longer holds.  This is seen in the following example and theorem.
\begin{example}[Based on {\citep[Chap. 4]{vanHandel:2007a}}]
    During our introductory calculus course, we are quickly inculcated with algebraic rules for evaluating derivatives and integrals of a variety of functional forms.  One familiar rule is for powers and reads
    \begin{equation} \label{probability:eq:normal_power_rule_integral}
        \int_0^T X_t dX_t = \int_{X_0}^{X_T} u du = \left.\frac{u^2}{2}\right\rvert_{X_0}^{X_T} .
    \end{equation}
Does this hold if $X_t = W_t$?  We can check by explicitly calculating the integral.  Given that the It\^{o} integral is defined in terms of a convergent sequence in $\mathcal{L}^2$, we take the approximating simple versions of $W_t$ to be $W_t$ taken at jump times given by dyadic rationals.  We will not show it, but such an approximation does converge to $W_t$ appropriately and is therefore a valid expansion of the stochastic integral.  Writing this out, we have
\begin{align}
    \int_0^T W_t dW_t &= \mathcal{L}^2\lim_{n\to\infty} \sum_{k=0}^{2^n-1}
                W_{k2^{-n}T}(W_{(k+1)2^{-n}T} - W_{k2^{-n}T})\\
                &= \mathcal{L}^2\lim_{n\to\infty} \frac{1}{2}
                    \left[ W_T^2 
                    - \sum_{k=0}^{2^n-1}(W_{(k+1)2^{-n}T}
                    - W_{k2^{-n}T})^2\right]
\end{align}  
where we have simply rearranged terms in the sum.  We note that the second term converges in $\mathcal{L}^2$ to the total quadratic variation, so that
\begin{equation}
    \int_0^T W_t dW_t = \frac{1}{2}\left[W_T^2 - T\right] .
\end{equation}
But this is not the same as the familiar calculus rule in Eq.~\eqref{probability:eq:normal_power_rule_integral}, which indicates (noting $W_0 = 0$), 
\begin{equation}
    \int_0^T W_t dW_t = \frac{1}{2}W_T^2 .
\end{equation}
Clearly, the It\^{o} integral is a more complicated beast.  Fortunately, the following theorem shows that only a slightly modified chain rule is needed. 
\end{example}
\begin{theorem}[\textbf{It\^{o} Rule, one dimension}]\label{thm:ito_rule}
    Consider the stochastic process $X_t$ with stochastic differential equation
    \begin{equation}
        dX_t = a(t,X_t)dt + b(t,X_t)dW_t
    \end{equation}
    Now consider a function $f(t,X_t)$ that is differentiable with respect to its first argument and twice differentiable with respect to its second.  This function then satisfies the stochastic differential equation
    \begin{align}
        df(t,X_t) &= \partialD{f(t,X_t)}{t}dt + \partialD{f(t,X_t)}{X_t} dX_t
                  + \frac{1}{2}\partialD{{^2}f(t,X_t)}{X_t^2}dX_t^2\\
                  &= \left[\partialD{f(t,X_t)}{t} + a(t,X_t)\partialD{f(t,X_t)}{X_t}
                      + \frac{1}{2}\partialD{{^2}f(t,X_t)}{X_t^2} b^2(t,X_t)\right]dt\nonumber\\
                  &   + b(t,X_t)\partialD{f(t,X_t)}{X_t} dW_t
    \end{align}
where higher order differentials were evaluated according to $dtdW_t = dt^2 = 0$ and $dW_t^2 = dt$. 
\end{theorem}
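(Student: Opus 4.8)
The plan is to recognize that the differential statement is merely shorthand for the integral identity
\begin{equation}
f(T,X_T) - f(0,X_0) = \int_0^T \left[\partial_t f + a\,\partial_X f + \frac{1}{2}b^2\,\partial_X^2 f\right]dt + \int_0^T b\,\partial_X f\,dW_t,
\end{equation}
so the real task is to establish this identity, with the second integral understood as an It\^{o} integral in the sense of Definition \ref{def:ito_integral}. First I would fix a refining sequence of partitions $\pi_n$ of $[0,T]$ with points $t_i$ and write the left-hand side as the telescoping sum $\sum_i [f(t_{i+1},X_{t_{i+1}}) - f(t_i,X_{t_i})]$. I would then Taylor-expand each summand, keeping terms to first order in the time increment $\Delta t_i = t_{i+1}-t_i$ and to second order in the state increment $\Delta X_i = X_{t_{i+1}}-X_{t_i}$. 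This expansion is legitimate because $f$ is assumed $C^1$ in its first argument and $C^2$ in its second; the Taylor remainder is controlled by a modulus of continuity of the second derivative together with the almost-sure continuity of the sample paths of $X_t$.

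The crux is the second-order term $\frac{1}{2}\partial_X^2 f\,(\Delta X_i)^2$. Substituting the SDE increment $\Delta X_i \approx a\,\Delta t_i + b\,\Delta W_i$ gives
\begin{equation}
(\Delta X_i)^2 \approx a^2(\Delta t_i)^2 + 2ab\,\Delta t_i\,\Delta W_i + b^2(\Delta W_i)^2.
\end{equation}
The first two terms are of higher order and vanish in the limit, which is precisely the content of the heuristic rules $dt^2 = 0$ and $dt\,dW_t = 0$. The essential rule $dW_t^2 = dt$ is nothing but the statement that the quadratic variation of the Wiener process accumulates at unit rate. I would therefore invoke the quadratic-variation convergence already recorded in this chapter, $\sum_i (\Delta W_i)^2 \to T$, upgraded to the weighted form $\sum_i g(t_i)\,(\Delta W_i)^2 \to \int_0^T g_t\,dt$ for a suitable adapted integrand $g$; applying it with $g = b^2\,\partial_X^2 f$ evaluated along the path converts the second-order sum into $\frac{1}{2}\int_0^T b^2\,\partial_X^2 f\,dt$.

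The remaining first-order contributions are comparatively easy: $\sum_i \partial_t f\,\Delta t_i$ and $\sum_i \partial_X f\,(a\,\Delta t_i)$ converge to ordinary Riemann integrals by continuity, while $\sum_i \partial_X f\,(b\,\Delta W_i)$ converges in $\mathcal{L}^2$ to the It\^{o} integral $\int_0^T b\,\partial_X f\,dW_t$ by the very construction of that integral as an $\mathcal{L}^2$-limit of simple-integrand approximations. Collecting the three groups of terms yields the claimed identity. The main obstacle is the rigorous handling of the second-order term: I must show that replacing $(\Delta W_i)^2$ by $\Delta t_i$ incurs an error vanishing in $\mathcal{L}^2$, which requires estimating the fluctuations $(\Delta W_i)^2 - \Delta t_i$ and using the adaptedness of the integrand together with the independence of disjoint Wiener increments so that cross terms factor and average out, exactly the mechanism already exploited in the proof of the It\^{o} isometry (Lemma \ref{lem:ito_isometry}). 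Some care is also needed to reconcile the different modes of convergence---almost sure for the Riemann pieces and $\mathcal{L}^2$ for the stochastic pieces---so that all limits can be taken compatibly along the refining sequence $\pi_n$.
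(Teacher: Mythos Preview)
Your proposal is correct and follows precisely the approach the paper indicates: the paper does not give a formal proof but remarks that ``the It\^{o} rule is really no more than a Taylor expansion followed by a careful consideration of the $\mathcal{L}^2$-convergence of the resulting terms,'' noting that $dW_t^2 \to dt$ in $\mathcal{L}^2$ is ``effectively a restatement of the It\^{o} Isometry.'' Your telescoping-sum-plus-Taylor argument, with the quadratic-variation replacement justified via the same independence mechanism as in Lemma~\ref{lem:ito_isometry}, is exactly this sketch fleshed out.
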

\begin{lemma}[\textbf{It\^{o} product rule}]\label{lem:ito_product_rule}
    The It\^{o} product rule for stochastic processes $X_t,Y_t$ is
    \begin{equation}
        d(X_tY_t) = dX_tY_t + X_tdY_t + dX_tdY_t .
    \end{equation}
\end{lemma}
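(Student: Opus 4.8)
The plan is to reduce the product rule to the one-dimensional It\^{o} Rule (Theorem \ref{thm:ito_rule}) through a polarization identity, rather than re-deriving everything from scratch via Riemann sums and quadratic variation. Suppose $X_t$ and $Y_t$ each satisfy an SDE of the form treated in Theorem \ref{thm:ito_rule}, say $dX_t = a_X\,dt + b_X\,dW_t$ and $dY_t = a_Y\,dt + b_Y\,dW_t$, where I suppress the arguments of the coefficients. The starting point is the algebraic identity $X_tY_t = \tfrac{1}{2}\bigl[(X_t+Y_t)^2 - X_t^2 - Y_t^2\bigr]$, which expresses the product purely in terms of squares, objects to which the single-variable rule applies directly.

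First I would apply the It\^{o} Rule with $f(x)=x^2$ to each of $X_t$, $Y_t$, and the sum $Z_t = X_t + Y_t$. By linearity $Z_t$ is itself an It\^{o} process with $dZ_t = dX_t + dY_t$, so the theorem applies to it verbatim. Using $f'(x)=2x$, $f''(x)=2$ together with the multiplication table $dW_t^2 = dt$, $dt\,dW_t = dt^2 = 0$, each application gives $d(X_t^2) = 2X_t\,dX_t + dX_t^2$ and similarly for $Y_t^2$ and $Z_t^2$, where the quadratic differentials reduce to $b_X^2\,dt$, $b_Y^2\,dt$, and $(b_X+b_Y)^2\,dt$ respectively.

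Next I would substitute these three expressions into the polarization identity and collect terms. The first-order pieces combine as $\tfrac{1}{2}[2Z_t\,dZ_t - 2X_t\,dX_t - 2Y_t\,dY_t] = Y_t\,dX_t + X_t\,dY_t$, exactly the naive Leibniz terms. The genuinely new contribution comes from the quadratic-variation pieces, which collapse to $\tfrac{1}{2}[(b_X+b_Y)^2 - b_X^2 - b_Y^2]\,dt = b_Xb_Y\,dt$. The final step is to recognize this as the cross term $dX_t\,dY_t$, since multiplying the two SDEs and applying the same multiplication table yields $dX_t\,dY_t = b_Xb_Y\,dW_t^2 = b_Xb_Y\,dt$. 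Assembling the pieces then produces $d(X_tY_t) = Y_t\,dX_t + X_t\,dY_t + dX_t\,dY_t$, as claimed.

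The main subtlety I anticipate is the status of that cross term $dX_t\,dY_t$, which is precisely the feature absent from ordinary calculus and which the polarization route happens to generate automatically. The catch is that this route relies on $X_t$ and $Y_t$ being driven by the \emph{same} scalar Wiener process, so that Theorem \ref{thm:ito_rule} applies without modification; if they were driven by distinct (possibly correlated) noises one would need a multidimensional It\^{o} rule, and the cross term would instead be read off the corresponding multiplication table (with $dW_t^{(i)}dW_t^{(j)} = \delta_{ij}\,dt$ for independent drivers). A fully rigorous treatment should also keep in mind that the differential identity is shorthand for an equality of It\^{o} integrals, so the manipulations above are ultimately statements about the integrated processes, where the quadratic-variation limit underlying $dW_t^2 = dt$ was already justified in the earlier computation of $\int_0^T W_t\,dW_t$.
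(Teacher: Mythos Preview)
The paper states this lemma without proof; it is simply listed between the one-dimensional It\^{o} rule (Theorem~\ref{thm:ito_rule}) and the multidimensional version (Lemma~\ref{lem:multi_dim_ito_rule}) as part of the review material. So there is no argument in the paper to compare against directly.

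Your polarization argument is correct and is in fact the natural way to obtain the product rule from the one-dimensional chain rule alone, which is consistent with the paper's ordering (product rule stated \emph{before} the multidimensional rule). The alternative the paper implicitly makes available once Lemma~\ref{lem:multi_dim_ito_rule} is in hand is to take $g(X_t,Y_t)=X_tY_t$ and read off $d(X_tY_t)=Y_t\,dX_t+X_t\,dY_t+\tfrac{1}{2}(dX_t\,dY_t+dY_t\,dX_t)$ directly from the second-derivative term; this route handles independent or multiple Wiener drivers without modification, whereas your polarization route, as you correctly flag, is cleanest when $X_t$ and $Y_t$ share a single scalar $W_t$. Both arrive at the same place; yours has the virtue of needing only the scalar theorem already proved.
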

\begin{lemma}[\textbf{It\^{o} Rule, multidimensional}]\label{lem:multi_dim_ito_rule}
    Consider the $n$-dimensional stochastic process $X_t : \mathbb{R}^+ \times \Omega \mapsto \mathbb{R}^n$ written
    \begin{equation}
        dX_t = a(t,X_t)dt + \sum_{j=1}^{m} b^j(t,X_t)dW_t^{(j)}
             = a(t,X_t)dt + b(t,X_t)dW_t
    \end{equation}
    where $a(t,X_t), b^j(t,X_t) : \mathbb{R}^+ \times \mathbb{R}^n \mapsto \mathbb{R}^n$ and each $W_t^j$ is an independent Wiener process.  If we collect these into the $m$-dimensional Wiener process $W_t = (W_t^{1}, \ldots, W_t^m)$ and introduce $b(t,X_t) : \mathbb{R}^{+} \times \mathbb{R}^n \mapsto \mathbb{R}^n \times \mathbb{R}^m$, we may use the more compact form on the right.
    
    Further consider the transformed process $Y_t = g(t,X_t) : \mathbb{R}^+ \times \mathbb{R}^n \mapsto \mathbb{R}^p$, where $p$ is not necessarily equal to $n$.  Then $Y_t$ satisfies the SDE
    \begin{equation}
        dY_t^k = \partialD{g^k(t,X_t)}{t}dt + \sum_{i} \partialD{g^k(t,X_t)}{X_t^i} dX_t^i
             + \frac{1}{2}\sum_{ij} \partialD{{^2}g^k(t,X_t)}{X_t^iX_t^j}dX_t^i dX_t^j
    \end{equation}
    where the superscript indicates the $i,j,k$-th entry in the vector and second order differentials are evaluated using $dtdW_t^{j} = dt^2 = 0$ and $dW_t^{i}dW_t^j = \delta_{ij}dt$.  
    
    For the simple case when $p = 1$, we may use the definition of $X_t$ to conveniently write this as
    \begin{subequations} \label{probability:eq:ito_rule_generator_form}
    	\begin{align}
	        dY_t &= \mathscr{L} g(t,X_t) dt +  \grad( g(t,X_t))^T b(t,X_t)dW_t\\
	        \mathscr{L} &= \partialD{}{t} + \grad( g(t,X_t))^T a(t,X_t) 
	                        + \frac{1}{2} \sum_{i,j=1}^n\sum_{k=1}^m b^{ik}(t,X_t)b^{jk}(t,X_t)
	                            \partialD{{^2}g(t,X_t)}{X_t^i \partial X_t^j}
	    \end{align}
    \end{subequations}
    
\end{lemma}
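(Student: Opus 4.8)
The plan is to mirror the proof of the one-dimensional It\^{o} rule (Theorem~\ref{thm:ito_rule}), promoting each scalar step to its vector analogue and isolating the one genuinely new ingredient: the cross-correlations among the $m$ independent noise components. First I would observe that the claimed formula acts independently on each output index $k$, since each $Y_t^k = g^k(t,X_t)$ is a scalar-valued function of the full vector $X_t$; it therefore suffices to prove the formula for a single scalar-valued $g:\mathbb{R}^+\times\mathbb{R}^n\mapsto\mathbb{R}$, after which the $p$-dimensional statement follows componentwise. Fixing such a $g$ and a refining sequence of partitions $\{t_i\}$ of $[0,T]$ with increments $\Delta t_i = t_{i+1}-t_i$, $\Delta X_i^p = X_{t_{i+1}}^p - X_{t_i}^p$, and $\Delta W_i^j = W_{t_{i+1}}^j - W_{t_i}^j$, I would write the telescoping identity $g(T,X_T)-g(0,X_0)=\sum_i[g(t_{i+1},X_{t_{i+1}})-g(t_i,X_{t_i})]$ and Taylor expand each increment to second order in the spatial variables and first order in time,
\begin{equation}
  \Delta g_i = \partial_t g\,\Delta t_i + \sum_p \partial_{x^p} g\,\Delta X_i^p + \tfrac{1}{2}\sum_{p,q}\partial^2_{x^p x^q} g\,\Delta X_i^p\,\Delta X_i^q + R_i,
\end{equation}
all derivatives evaluated at $(t_i,X_{t_i})$ and $R_i$ a higher-order remainder.

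The next step is to substitute the increments $\Delta X_i^p = a^p\,\Delta t_i + \sum_j b^{pj}\,\Delta W_i^j$ and determine which contributions survive the $\mathcal{L}^2$ limit. The first-order pieces are routine: the $\partial_t g\,\Delta t_i$ and drift $a^p\,\partial_{x^p} g\,\Delta t_i$ terms converge to ordinary Riemann time integrals, while the noise pieces $\sum_j b^{pj}\,\partial_{x^p} g\,\Delta W_i^j$ converge to It\^{o} integrals against each $W^j$ by Definition~\ref{def:ito_integral}. The crux is the second-order term $\tfrac{1}{2}\sum_{p,q}\partial^2_{x^p x^q} g\,\Delta X_i^p\,\Delta X_i^q$. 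Expanding the product of the two increments, the $\Delta t_i^2$ and mixed $\Delta t_i\,\Delta W_i^j$ contributions vanish in the limit exactly as in the scalar case (the total and mixed variation of $t$ against $W$ tend to zero), leaving only the pure noise pieces $\sum_{j,j'} b^{pj}b^{qj'}\,\Delta W_i^j\,\Delta W_i^{j'}$.

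The hard part, and the only genuinely multidimensional point, is to show this surviving piece converges to $\sum_j b^{pj}b^{qj}\,dt = (bb^T)^{pq}\,dt$, i.e.\ that $\Delta W_i^j\,\Delta W_i^{j'}$ may be replaced by $\delta_{jj'}\Delta t_i$. I would handle this by the centering decomposition $\Delta W_i^j\,\Delta W_i^{j'} = \delta_{jj'}\,\Delta t_i + (\Delta W_i^j\,\Delta W_i^{j'} - \delta_{jj'}\,\Delta t_i)$ and proving that the fluctuation sum $\sum_i c_i(\Delta W_i^j\,\Delta W_i^{j'} - \delta_{jj'}\,\Delta t_i)$ (with $c_i$ the adapted coefficient $\partial^2_{x^p x^q} g$ at $t_i$) tends to zero in $\mathcal{L}^2$. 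Here two facts from Definition~\ref{def:weiner_process} combine. The centered products have vanishing conditional mean given $\mathcal{F}_{t_i}$, so the distinct-$i$ cross terms drop out and $\mathbb{E}[(\sum_i\cdots)^2]$ reduces to $\sum_i \mathbb{E}[c_i^2(\Delta W_i^j\,\Delta W_i^{j'}-\delta_{jj'}\,\Delta t_i)^2]$; each summand is $O(\Delta t_i^2)$, so the whole sum is controlled by the mesh size and vanishes. For $j\neq j'$ independence of distinct components supplies the $\delta_{jj'}=0$, and for $j=j'$ the finite quadratic variation $\mathbb{E}[(\Delta W_i^j)^2]=\Delta t_i$ supplies the diagonal, precisely reproducing the quadratic-variation computation of the one-dimensional example. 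This is exactly the content of the heuristic multiplication table $dW_t^i\,dW_t^j=\delta_{ij}\,dt$, $dt\,dW_t^j=dt^2=0$.

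Collecting the surviving terms in the integral limit then yields $dY^k=\partial_t g^k\,dt+\sum_p\partial_{x^p} g^k\,dX^p+\tfrac{1}{2}\sum_{p,q}\partial^2_{x^p x^q} g^k\,(bb^T)^{pq}\,dt$, which is the asserted formula once $dX_t^p\,dX_t^q$ is read off from the multiplication table. The generator form for $p=1$ is then pure bookkeeping: I would gather every $dt$ contribution---the explicit $\partial_t$, the drift $\grad(g)^T a$, and the second-order $\tfrac{1}{2}\sum_{p,q}\sum_j b^{pj}b^{qj}\,\partial^2_{x^p x^q} g$---into the operator $\mathscr{L}$, and collect the remaining noise terms into $\grad(g)^T b\,dW_t$, recovering Eq.~\eqref{probability:eq:ito_rule_generator_form}. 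An alternative bootstrapping route, establishing the formula first for monomials by iterating the It\^{o} product rule (Lemma~\ref{lem:ito_product_rule}) and then extending to general $C^2$ functions by approximation, is available but pushes the same quadratic-variation subtlety into the approximation step, so I expect the direct Taylor-expansion argument to be cleaner.
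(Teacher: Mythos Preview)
Your proposal is correct and aligns with the paper's approach, though it is far more detailed than what the paper actually provides. The paper does not give a formal proof of this lemma; it simply states the result and then remarks in the paragraph following the lemma that ``the It\^{o} rule is really no more than a Taylor expansion followed by a careful consideration of the $\mathcal{L}^2$-convergence of the resulting terms,'' noting that products involving $dt$ vanish and that $dW_t^2\to dt$ in $\mathcal{L}^2$ is essentially the It\^{o} isometry. Your Taylor-expansion-plus-$\mathcal{L}^2$-convergence argument is exactly this idea fleshed out, and your treatment of the cross-quadratic-variation term $dW_t^i\,dW_t^j=\delta_{ij}\,dt$ via the centering decomposition is the natural multidimensional extension the paper leaves implicit.
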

The It\^{o} rule is really no more than a Taylor expansion followed by a careful consideration of the $\mathcal{L}^2$-convergence of the resulting terms.  Not surprisingly, all terms which are a product of $dt$ and any other differential tend to zero.  However, one also finds that $dW_t^2$ converges to $dt$ in $\mathcal{L}^2$, which is effectively a restatement of the It\^{o} Isometry in Lemma \ref{lem:ito_isometry}.  At a heuristic level, many people often express $dW_t$ as $\sqrt{dt}\xi_t$, where $\xi_t$ is a mean-zero, Gaussian random variable with unit variance.  Then it is clear that any consistent chain rule which retains terms to first order in $dt$ must also retain the term for $dW_t^2$.

The upside is that we have an integral which retains statistically pleasing properties; mean-zero stochastic term driven by white noise which is also a martingale.  At the same time, we also have an algebraic formalism for transforming SDE representations of more complicated stochastic processes, at the small cost of having to add an extra term to the usual chain rule.  

\subsection{Wong-Zakai Theorem and Stratonovich Integrals}
Even though we have made significant progress, one might still be concerned that the It\^{o} formalism is simply a mathematical construct that has no connection to any real-world stochastic process.  Should we really be so blithe in throwing away the usual chain rule?  Given the arbitrariness of the Stieltjes stochastic integral, what was the justification for choosing the It\^{o} construction?  If the use of white noise is an approximation to begin with, how faithfully does the It\^{o} SDE capture it?  All of these questions are related and are well-appreciated in the study of stochastic processes.  

To make the issue more precise, consider the standard ordinary differential equation driven by a fluctuating, but not white, noise term $\xi_t^n$:
\begin{equation}
    \frac{d}{dt}X_t^n = a(t,X_t^n) + b(t,X_t^n)\xi_t^n
\end{equation}
We assume $\xi_t^n$ is a sensible noise process whose sample paths are piecewise continuous.  We are interested in the case that this approximates a true Gaussian white noise process in the sense that
\begin{equation}
    \lim_{n\to\infty}\sup_{t}\norm{W_t - W_t^n} \to 0 \text{ a.s }
\end{equation}
where $W_t^n = \int_0^t \xi_s^n ds$.  That is, in some limit, the time integral of $\xi_t^n$ uniformly approximates the Wiener process.  As the process becomes more and more singular, the question is how to interpret the resulting stochastic differential equation.  The following theorem, due to \citet{Wong:1965a}, tells us what to do.
\begin{theorem}[\textbf{Wong-Zakai Theorem}]\label{thm:wong_zakai}
    Given the ordinary differential equation of the form
    \begin{equation}
        \frac{d}{dt}X^n_t = a(t,X_t^n) + b(t,X_t^n)\xi_t^n
    \end{equation}
    where $\xi_t^n$ converges uniformly to Gaussian white noise as $n\to\infty$, the solution $X^n_t$ converges as as $n\to\infty$ to
    \begin{equation}
        dX_t = a(t,X_t) dt + b(t,X_t^n) \circ dW_t
    \end{equation}
   where the stochastic term is interpreted in the Stratonovich sense.
\end{theorem}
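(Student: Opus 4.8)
The guiding principle is that the Stratonovich integral is precisely the stochastic integral for which the \emph{ordinary} chain rule holds, in contrast to the It\^o integral of Theorem~\ref{thm:ito_rule}. Since each approximating equation $\frac{d}{dt}X_t^n = a(t,X_t^n) + b(t,X_t^n)\xi_t^n$ is a genuine (pathwise) ODE driven by the smooth signal $\xi_t^n$, it obeys classical calculus; the natural limiting object must therefore also obey classical calculus, which singles out the Stratonovich form. I would make this heuristic rigorous by a pathwise Doss--Sussmann transformation, which converts both the approximating ODE and the candidate limit into \emph{the same} random ordinary differential equation, reducing the whole problem to continuous dependence of ODE solutions on a uniformly converging parameter.

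Concretely (suppressing the explicit $t$-dependence of $a,b$ for clarity, which is carried along unchanged), first introduce the flow $\phi(w,x)$ defined by the autonomous ODE $\partial_w \phi(w,x) = b(\phi(w,x))$ with $\phi(0,x)=x$; this is the integral curve of the vector field $b$, and $\partial_x\phi(0,x)=1$ with $\partial_x\phi>0$ so that $\phi(w,\cdot)$ is a diffeomorphism. I would then posit the representation $X_t = \phi(W_t,Y_t)$ and apply the Stratonovich (i.e.\ ordinary) chain rule to obtain
\[
    dX_t = \partial_w\phi(W_t,Y_t)\circ dW_t + \partial_x\phi(W_t,Y_t)\,dY_t
         = b(X_t)\circ dW_t + \partial_x\phi(W_t,Y_t)\,\dot Y_t\,dt .
\]
The noise term now matches the target SDE automatically, and the drift matches provided $Y_t$ solves the pathwise ODE $\dot Y_t = a(\phi(W_t,Y_t))/\partial_x\phi(W_t,Y_t)$ with $Y_0=x_0$. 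The identical substitution $X_t^n=\phi(W_t^n,Y_t^n)$ applied to the smooth equation, using only classical calculus, yields the \emph{same} ODE driven by $W_t^n$ in place of $W_t$.

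With this reduction in hand, the convergence $X_t^n\to X_t$ follows from the hypothesis $\sup_t\norm{W_t-W_t^n}\to 0$ almost surely together with a Gronwall estimate: the right-hand side of the $Y$-equation is (locally) Lipschitz in $Y$ and continuous in $W$, so standard continuous-dependence arguments give $Y_t^n\to Y_t$ uniformly on compacts, and then $X_t^n=\phi(W_t^n,Y_t^n)\to\phi(W_t,Y_t)=X_t$ by continuity of $\phi$. To connect the result to the It\^o calculus developed earlier, I would finally record the conversion $\int_0^t b(X_s)\circ dW_s = \int_0^t b(X_s)\,dW_s + \tfrac12\int_0^t b'(X_s)b(X_s)\,ds$, exhibiting the Wong--Zakai correction drift $\tfrac12 b' b$ that a naive It\^o reading of the approximating equations would miss.

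The hard part is twofold. Technically, the Gronwall/continuous-dependence step requires genuine regularity and growth control on $a$ and $b$ (roughly $b\in C^2$, $a\in C^1$ with bounds) to guarantee that $\phi$, $\partial_x\phi$ and the quotient defining $\dot Y_t$ are well-behaved and that $\partial_x\phi$ stays bounded away from zero along the relevant trajectories; one must also handle possible finite explosion times by localizing. Conceptually, the delicate point is \emph{why} the extra drift appears: in the direct (non-transformed) approach one would instead show that the Stieltjes integral $\int_0^t b(X_s^n)\,dW_s^n$ converges to the Stratonovich integral, the surplus $\tfrac12\int b' b\,ds$ arising because $b(X_s^n)$ becomes correlated with its own driving increment and the accumulated quadratic variation of the smoothed paths survives the limit. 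This direct route is messier, since $X^n$ and the integral must be controlled simultaneously, which is exactly why I would prefer the pathwise transformation. Finally I would flag that this scalar construction does not extend verbatim to genuinely multidimensional noise with non-commuting coefficients, where the limit can depend on finer features of the approximation.
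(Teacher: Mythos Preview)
The paper does not actually prove this theorem: it is stated without proof and simply attributed to \citet{Wong:1965a}, with the surrounding discussion only motivating why a Stratonovich interpretation is natural and then recording the It\^{o}--Stratonovich conversion formula (Lemma~\ref{lem:ito_to_stratonovich}). There is thus no argument in the paper to compare against.

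Your proposal via the Doss--Sussmann transformation is a correct and well-known route to Wong--Zakai results in the scalar (or commuting-vector-field) setting. The reduction $X_t=\phi(W_t,Y_t)$ is exactly the right idea: it absorbs the noise into a deterministic flow so that the remaining randomness enters the $Y$-equation only through the continuous path $W_t$, and uniform convergence $W_t^n\to W_t$ then pushes through by Gronwall. Your caveats are also the right ones: one needs enough regularity on $b$ (say $C^2$ with bounded derivatives) for $\phi$ and $\partial_x\phi$ to behave, localization to handle possible explosion, and the warning that in the genuinely multidimensional case with non-commuting diffusion vector fields the limit can pick up L\'evy-area terms and need not be the naive Stratonovich SDE. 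If anything, you could tighten the statement of what ``converges uniformly to Gaussian white noise'' means in the hypothesis---the operative assumption is really $\sup_{t\le T}|W_t^n-W_t|\to 0$ a.s.\ on compacts, which is what you use---but that is a matter of restating the theorem rather than a gap in your argument.
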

\begin{definition}\label{def:stratonovich}
    The \emph{Stratonovich integral} 
    \begin{equation}
           \int_0^T X_t \circ dW_t
    \end{equation}
    is defined as the unique limit in $\mathcal{L}^2(\mathbb{P})$ of the simple integrals 
    \begin{equation}
        \int_0^T X_t^n \circ dW_t = \lim_{\pi_n} \sum_{t_i \in \pi_n} 
                        \frac{1}{2}( X_{t_{i+1}}^n + X_{t_i}^n) (W_{t_{i+1}} - W_{t_i}) .
    \end{equation}
    The Stratonovich integral obeys the standard calculus chain rules, but has non-trivial expectation and is not a martingale.
\end{definition}

Gadzooks! Wong and Zakai tell us that any physical process, which naturally obeys the normal rules of calculus, results in a Stratonovich integral in a white noise limit.  This is not a complete surprise, as the Stratonovich integral obeys the normal chain rule and taking a limit of processes which also obey the chain rule shouldn't break that property.  But remember that the formulation of the It\^{o} integral was a \emph{choice} of how to overcome the lack of an unambiguous convergence of stochastic integrals.  The Stratonovich form is just a \emph{different} choice in defining a stochastic integral.  For deterministic integrals, any choice of increments converges to the same Riemann integral, so we didn't have to worry about which formulation is used.  For stochastic integrals, the Wong-Zakai theorem tells us how to \emph{interpret} an SDE which arises from taking a physical limit; after that, we are free to choose which form to use.  If the two forms are not related, then the It\^{o} definition would be useless for studying physical systems driven by approximate white noise.  Fortunately, it turns out the the two formulations are simply related.
\begin{lemma}\label{lem:ito_to_stratonovich}
    The solution of the multi-dimensional It\^{o} SDE
    \begin{equation}
        dX_t = a(t,X_t)dt + b(t,X_t)dW_t 
    \end{equation} 
    is also solution of a corresponding Stratonivich SDE, written
    \begin{equation}
        dX_t = \bar{a}(t,X_t)dt + b(t,X_t)\circ dW_t ,
    \end{equation}
    with
    \begin{equation} \label{probability:eq:ito_to_stratonovich}
        \bar{a}^{j}(t,X_t) = a^{j}(t,X_t) - \frac{1}{2} \sum_{k=1}^n b^{k}(t,X_t) \partialD{b^j(t,X_t)}{X_t^k}
    \end{equation}
    where the superscripts denote the $j$-th or $k$-th entry in the corresponding vector.
\end{lemma}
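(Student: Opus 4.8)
The plan is to reduce the lemma to a single integral identity relating the Stratonovich and It\^{o} integrals of the diffusion coefficient, namely
\begin{equation}
    \int_0^T b^j(t,X_t)\circ dW_t = \int_0^T b^j(t,X_t)\, dW_t + \frac{1}{2}\int_0^T \sum_{k=1}^n b^k(t,X_t)\frac{\partial b^j(t,X_t)}{\partial X_t^k}\, dt .
\end{equation}
Once this is established the lemma follows immediately: substituting the identity into the Stratonovich SDE $dX_t^j = \bar{a}^j\, dt + b^j \circ dW_t$ converts it into an It\^{o} SDE whose drift is $\bar{a}^j + \frac{1}{2}\sum_k b^k\, \partial b^j/\partial X_t^k$, and matching this against the prescribed It\^{o} drift $a^j$ yields exactly the claimed relation \eqref{probability:eq:ito_to_stratonovich}. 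So the entire content is the correction term above.

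To derive the identity I would work directly from the defining Stratonovich sums in Definition \ref{def:stratonovich}. Abbreviating $Y_t^j = b^j(t,X_t)$ and $\Delta W_i = W_{t_{i+1}} - W_{t_i}$, the midpoint weighting splits using $\frac{1}{2}(Y_{t_{i+1}}^j + Y_{t_i}^j) = Y_{t_i}^j + \frac{1}{2}(Y_{t_{i+1}}^j - Y_{t_i}^j)$, so that
\begin{equation}
    \sum_i \tfrac{1}{2}(Y_{t_{i+1}}^j + Y_{t_i}^j)\Delta W_i = \sum_i Y_{t_i}^j \Delta W_i + \tfrac{1}{2}\sum_i (Y_{t_{i+1}}^j - Y_{t_i}^j)\Delta W_i .
\end{equation}
The first sum is exactly the left-endpoint It\^{o} sum, which converges in $\mathcal{L}^2$ to $\int_0^T b^j\, dW_t$ by Definition \ref{def:ito_integral}. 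The whole problem therefore reduces to evaluating the $\mathcal{L}^2$ limit of the correction sum $\frac{1}{2}\sum_i (Y_{t_{i+1}}^j - Y_{t_i}^j)\Delta W_i$.

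For the correction sum I would Taylor expand the increment $Y_{t_{i+1}}^j - Y_{t_i}^j$ of $b^j(t,X_t)$ in both its time and state arguments and substitute the SDE increment $X_{t_{i+1}}^k - X_{t_i}^k \approx a^k \Delta t_i + b^k \Delta W_i$. Multiplying through by $\Delta W_i$, every contribution save one carries an extra vanishing factor: the explicit time-derivative term and the drift term each produce $\Delta t_i \Delta W_i$, and the second-order Taylor remainders are higher order still. The sole surviving term is $\frac{1}{2}\sum_i \sum_k \frac{\partial b^j}{\partial X_t^k} b^k (\Delta W_i)^2$, in which the replacement $(\Delta W_i)^2 \to \Delta t_i$ is legitimate in $\mathcal{L}^2$. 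This is precisely the quadratic-variation fact underlying the It\^{o} isometry (Lemma \ref{lem:ito_isometry}), and it sends the correction sum to $\frac{1}{2}\int_0^T \sum_k b^k\, \partial b^j/\partial X_t^k\, dt$, establishing the identity.

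I expect the main obstacle to be making the heuristic $(\Delta W_i)^2 \to \Delta t_i$ rigorous, i.e.~proving that $\sum_i f(t_i,X_{t_i})(\Delta W_i)^2$ converges in $\mathcal{L}^2$ to $\int_0^T f(t,X_t)\,dt$ for $f = \sum_k b^k\, \partial b^j/\partial X_t^k$, together with the companion estimates showing the $\Delta t_i \Delta W_i$ cross terms and the Taylor remainders are negligible. A cleaner route that sidesteps the partition bookkeeping is to recognize the correction sum as half the quadratic covariation of $Y^j$ with $W$ and to read off its density $\sum_k \frac{\partial b^j}{\partial X_t^k} b^k$ directly from the It\^{o} rule (Lemma \ref{lem:multi_dim_ito_rule}) applied to $Y_t^j = b^j(t,X_t)$, since the covariation of $Y^j$ with $W$ extracts only the $dW_t$ coefficient of $dY_t^j$. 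Either way, the essential analytic content is the convergence of the quadratic-variation sum.
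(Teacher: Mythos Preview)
The paper states this lemma without proof; it is presented as a standard conversion rule and immediately followed by commentary on the It\^{o} drift term. Your proposal supplies the standard derivation---splitting the Stratonovich midpoint sum into the It\^{o} left-endpoint sum plus a correction, then identifying the correction as half the quadratic covariation of $b^j(t,X_t)$ with $W_t$---and it is correct. The alternative route you mention at the end, reading off the covariation density directly from the It\^{o} rule applied to $Y_t^j = b^j(t,X_t)$, is indeed the cleaner way to make the argument rigorous and is how this conversion is typically justified in the references the paper cites (e.g.\ {\O}ksendal).
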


We see then that it is straightforward to convert between the two forms, only needing to account for the \emph{It\^{o} drift term}.  This term accounts for the loss of the non-anticipative property for the Stratonovich Wiener increment.  That is, the stochastic process multiplying the noise increment no longer occurs at an independent time interval, which effectively couples the noise at different times and is why we lose the nice statistical properties.  Nonetheless, after using the Wong-Zakai theorem to derive a Stratonovich SDE from a physical model, we simply convert to the equivalent It\^{o} form to make our calculations easier.  This duality will prove useful in Chapter~\ref{chapter:quantum_parameter_estimation} when we study the techniques of projection filtering, which \emph{require} a valid chain rule consistent with differential manifolds and is one of the few circumstances when the Stratonovich form will be preferred.
\subsection{Summary}
The goal of the second part of this chapter was to introduce time into our theory of probability.  This allowed us to consider stochastic processes, which are random variables that are a function of time.  Our hope of writing a stochastic process driven by white noise was hampered at first, as we learned that white noise has no sensible mathematical representation as a stochastic process.  Fortunately, we were able to work with the integral of white noise in terms of the Wiener process, which in turn allowed us to define more general stochastic processes as It\^{o} integrals against the Wiener process.  This gave rise to stochastic differential equations, which are dynamical equations for the evolution of stochastic trajectories involving both deterministic and stochastic terms.  Due to the subtleties of the It\^{o} integral, we found that SDEs obey a modified chain rule which requires retaining terms to second order in Wiener increments.  We also found that the physical limit of increasingly better approximations of white noise converges to a Stratonovich, rather than an It\^{o}, SDE.  Fortunately, we found that a given stochastic process has an equivalent representation in either form, so that the statistically superior properties of the It\^{o} integral may be used in analysis.
\section{Classical Filtering Theory}
Using the techniques we have developed thus far, we are finally ready to tackle the filtering problem.  We consider an $n$-dimensional, unobserved stochastic process $X_t$, governed by the SDE
\begin{equation} \label{classical_filtering:eq:system}
    dX_t = a(t,X_t)dt + b(t,X_t)dW_t  \qquad \text{ ``system''}
\end{equation}
and a related $m$-dimensional observed stochastic process $Y_t$, governed by the SDE
\begin{equation} \label{classical_filtering:eq:observations}
    dY_t = c(t,X_t)dt + d(t)dV_t \qquad \text{ ``observations/measurements''}
\end{equation}
where $dW_t,dV_t$ are two independent Wiener processes of $k$ and $p$ dimensions, respectively.  Note that we have already imposed a particular structure on the stochastic processes under consideration; they are driven by white noise and admit an SDE description\footnote{Meaning $a,b,c,d$ are bounded, $d^{-1}$ exists and is bounded and $X_t,Y_t$ have a unique $\mathcal{F}_t$-adapted solution; some of these restrictions may be lifted with suitable care.  Note that we could easily extend the SDE formalism to include Poisson noise processes in addition to Gaussian noise processes.}.  Given the broad applicability of Gaussian white noise in physics and related disciplines, limiting ourselves to this class of processes is not a significant restriction, especially given the analytic results we will be able to derive.

Returning to the problem at hand, Eqs.~\eqref{classical_filtering:eq:system} and \eqref{classical_filtering:eq:observations} are known in control theory as the \emph{system-observations pair} and formalize the structure of the inference problem.  That is, the unobserved system $X_t$ undergoes a stochastic time-evolution.  We are interested in some property of the system, but only have access to the observations $Y_t$.  Unfortunately, $Y_t$ is not $\mathcal{F}^{X}_t$-measurable, since it involves the independent noise process $dV_t$ and we therefore do not \emph{know} $X_t$ after measuring $Y_t$.  Fortunately, $Y_t$ carries some information about the system, albeit of a set structure and corrupted by the extra noise.  Using the techniques of inference we have developed, we can still construct an estimate of the system conditioned on the observations.
\begin{definition}\label{def:filtering_problem}
    Given a system-observations pair as above, the \emph{filtering problem} is to calculate the least-squares best-estimate of the current state of the system given the observations record.  Mathematically, we write this as
    \begin{equation}
        \pi_t[X_t] = \E{X_t | \mathcal{F}_t^Y}
    \end{equation} 
    where $\mathcal{F}_t^Y$ is the filtration generated by the observations process up to time $t$.
\end{definition}

Actually, there is a more general class of inference problems one could consider, written
\begin{equation}
    \pi_t[f_s] = \E{f(X_s) | \mathcal{F}_t^Y}
\end{equation}
where one estimates some arbitrary function of the state at an arbitrary time.  If $s = t$ and $f(X) = X$, this is simply the filtering problem already discussed. For $s = 0$ and $f(X) = X$, this is the \emph{smoothing problem}, for which $\pi_t[X_0]$ is an estimate of the \emph{initial state}.  For $s > t$ and $f(X) = X$, this is the \emph{predictor problem}, for which $\pi_t[X_{s>t}]$ is an estimate of a \emph{future state}.  Choosing $s$ to be an intermediary time or $f$ to be a more complicated function correspond to other valid inference problems.

Nonetheless, the most relevant problem for our purposes is the filtering problem.  The rest of this section is devoted to developing a recursive formula for $\pi_t[f(X_t)]$, written in shorthand as $\pi_t[f]$, so that for each differential observation increment $dY_t$, we can readily update the filtered estimate
\begin{equation}
    d\pi_t[f(X_t)] = q(t,X_t)dt + r(t,X_t)dY_t 
\end{equation}
for some functions $q$ and $r$ which we will need to determine.  We will take $f$ to be a square-integrable real-valued function, so that to reconstruct the multi-dimensional $X_t$, we would need a set of estimates $\pi_t[f^i]$, with functions $f^i(X_t) = X_t^i$.  Making $f$ one-dimensional will greatly simplify the notation without losing any essential details.

Our general approach is the reference probability method, which we will also use to develop the quantum filter.  The basic idea is rather simple; if $X_t$ and $Y_t$ were independent, then the conditional expectation of $X_t$ amounts to a simple averaging.  If we can find a measure under which the two processes are independent, then it will be trivial to evaluate the conditional expectation under this measure.  Of course, if $X_t$ and $Y_t$ were actually independent, the filtering problem would be pointless since we would never learn anything about the state from the observations.  So we must also find a way to relate the calculation under the new measure back to the original calculation under the old measure.  The first two parts of this section focus on developing these two relations, finding a measure under which the processes are independent and another for relating conditional expectations under different measures.
\subsection{Girsanov's Theorem}
In many areas of mathematics, a change of variables often simplifies a seemingly difficult problem.  In the domain of probability, a similar approach is to change the underlying probability measure, which may simplify the statistics of a random variable.  We have already considered such a change using the Radon-Nikodym theorem (Thm.~\ref{thm:radon_nikodym}).  Being able to make such a transformation is particularly useful for stochastic processes driven by Gaussian white noise, whose deterministic terms obfuscate many of the nice statistical properties of a pure It\^{o} integral over the Wiener process.  The following theorem shows how to construct a new measure under which such a statistically complicated stochastic process becomes a Wiener process.
\begin{theorem}[\textbf{Girsanov}]\label{thm:Girsanov}
    Let $W_t$ be an $n$-dimensional, $\mathcal{F}_t$-Wiener process on $(\Omega,\mathcal{F},\mathbb{P})$ with filtration $\mathcal{F}_t$.  Also consider the $n$-dimensional stochastic process $X_t$ governed by the SDE
    \begin{equation} \label{Girsanov:eq:girsanov_form}
        dX_t = F_t dt + dW_t \qquad t \in [0,T_f]
    \end{equation}    
    Assuming $F_t$ is It\^{o} integrable, define
    \begin{equation}
        \Lambda = \exp\left[ -\int_0^{T_f} F_s^{T} dW_s - \frac{1}{2}\int_0^{T_f} \normsq{F_s} ds\right] .
    \end{equation}
    
    If $\mathbb{E}_{\mathbb{P}}[\Lambda] = 1$, then $X_t$ is an $\mathcal{F}_t$-Wiener process under $\mathbb{Q}(A) = \mathbb{E}_{\mathbb{P}}(\Lambda \chi_A)$.
    \begin{proof}
        For simplicity, we will proof this result for a one-dimensional process.  For a more general proof, see Theorem 4.5.3 in \citet{vanHandel:2007a} the first half of which is essentially reproduced here.  Recall from Definition \ref{def:weiner_process}, a stochastic process is characterized by continuous sample paths and independent, Gaussian distributed increments with zero mean and variance equal to the interval length.  Given that $X_t$ is written as an SDE, it has continuous sample paths by construction.  In order to show the increment properties, we consider a given interval $X_t - X_s$ with $t > s$.  If under the new measure $X_t - X_s$ has the appropriate distribution independent of any $\mathcal{F}_s$-measurable random variable, we satisfy both requirements.  We verify this using the method of generating or characteristic functions.  That is, for $X_t$ as defined above and $Z$ an arbitrary $\mathcal{F}_s$-measurable random variable, we want
        \begin{equation}
            \mathbb{E}_{\mathbb{Q}}[e^{\alpha (X_t - X_s) + \beta Z }]
                = e^{-\alpha^2\frac{(t-s)}{2}}\mathbb{E}_{\mathbb{Q}}[e^{\beta Z}]
        \end{equation}
where $\alpha,\beta \in \mathbb{R}$ are the generating parameters and $e^{-\alpha^2\frac{(t-s)}{2}}$ is the characteristic function of a mean zero, variance $t-s$, Gaussian random variable. 

    Using the definitions above and introducing the $\mathcal{F}_t$-adapted process
    \begin{equation}
        \Lambda_t = \exp\left[ -\int_0^t F_s dW_s - \frac{1}{2}\int_0^t F_s^2 ds\right] .
    \end{equation}
    we find explicitly that
    \begin{align}
        \mathbb{E}_{\mathbb{Q}}[e^{\alpha (X_t - X_s) + \beta Z }] 
                    &= \mathbb{E}_{\mathbb{P}}[\Lambda_{T_f} e^{\alpha (X_t - X_s) + \beta Z }] \\
                    &= \mathbb{E}_{\mathbb{P}}[\mathbb{E}_{\mathbb{P}}[\Lambda_{T_f} | \mathcal{F}_t]
                        e^{\alpha (X_t - X_s) + \beta Z }] \\
                    &= \mathbb{E}_{\mathbb{P}}[\Lambda_t e^{\alpha (X_t - X_s) + \beta Z }] \\
                    &= \mathbb{E}_{\mathbb{P}}[\Lambda_s
                        e^{\int_{s}^t(\alpha F_r -\frac{1}{2} F_r^2) dr 
                        + \int_s^{t}(\alpha - F_r) dW_r
                        + \beta Z}]\\
                    &= e^{-\alpha^2\frac{(t-s)}{2}}
                    \mathbb{E}_{\mathbb{P}}[\Lambda_s
                        e^{-\frac{1}{2}\int_{s}^t(\alpha - F_r)^2 dr 
                        + \int_s^{t}(\alpha - F_r) dW_r
                        + \beta Z}]
    \end{align}
    where in reaching the last line we have completed the square and pulled out one of the deterministic terms.  The manipulations in the first three lines are simply an application of the definition of conditional expectation (Definition \ref{def:conditional_expectation_kolmogorov}), where all terms save $\Lambda_{T_f}$ are $\mathcal{F}_t$-measurable, so that we may replace $\Lambda_{T_f}$ with $\Lambda_t$ under the overall expectation.  Similarly, since $\Lambda_s e^{\beta Z}$ is $\mathcal{F}_s$-measurable but the remaining exponential terms are not, we again apply conditional expectation to write
    \begin{equation}
        \mathbb{E}_{\mathbb{Q}}[e^{\alpha (X_t - X_s) + \beta Z }]  
                = e^{-\alpha^2\frac{(t-s)}{2}}
                \mathbb{E}_{\mathbb{P}}[\Lambda_s e^{\beta Z}
                    \mathbb{E}_{\mathbb{P}}[e^{-\frac{1}{2}\int_{s}^t(\alpha - F_r)^2 dr 
                    + \int_s^{t}(\alpha - F_r) dW_r} | \mathcal{F}_s]]
    \end{equation}
    
    Focusing on the last conditional expectation term, set $\theta_t = (\alpha - F_t)$ and define
    \begin{equation}
        dR_t = -\frac{1}{2}\theta_t^2 dt + \theta_t dW_t
    \end{equation}
    If we can show that $e^{R_t}$ is a martingale, then the conditional
    expectation under considertation is simply
    \begin{equation}
        \mathbb{E}_{\mathbb{P}}[e^{-\frac{1}{2}\int_{s}^t(\alpha - F_r)^2 dr 
        + \int_s^{t}(\alpha - F_r) dW_r} | \mathcal{F}_s]
        = \mathbb{E}_{\mathbb{P}}[e^{R_t-R_s} | \mathcal{F}_s]
        = e^{R_s - R_s} 
        = 1
    \end{equation}
    Using It\^{o}'s rule, we find
    \begin{align}
        d(e^{R_t}) &= e^{R_t}dR_t + \frac{1}{2}e^{R_t} (dR_t)^2\\
                   &= e^{R_t}\left[ -\frac{1}{2}\theta_t^2 dt + \theta_t dW_t
                    + \frac{1}{2} \theta_t^2 dt\right]\\
                   &= \theta_t e^{R_t} dW_t
    \end{align}
    But since this is precisely an It\^{o} integral driven by Gaussian white noise, we know from the Martinagle Representation Lemma \ref{lem:martingale_representation} that it is indeed a martingale.  Notice also that $\Lambda_t$ is of the same form, since the minus sign on the $dW_t$ coefficient still squares to cancel the deterministic term via the It\^{o} correction.  As such, we can drop the conditional expectation as desired and use the martingale property of $\Lambda_t$ to write
    \begin{align}
        \mathbb{E}_{\mathbb{Q}}[e^{\alpha (X_t - X_s) + \beta Z }]  
            &= e^{-\alpha^2\frac{(t-s)}{2}}
            \mathbb{E}_{\mathbb{P}}[\Lambda_s e^{\beta Z}]\\
            &= e^{-\alpha^2\frac{(t-s)}{2}}
            \mathbb{E}_{\mathbb{P}}[\mathbb{E}_{\mathbb{P}}[\Lambda_{T_f} e^{\beta Z} | \mathcal{F}_s]]\\
            &= e^{-\alpha^2\frac{(t-s)}{2}}\mathbb{E}_{\mathbb{Q}}[e^{\beta Z}]
    \end{align}
    where in reaching the last step we have used the conditional expectation property that $\mathbb{E}[\mathbb{E}[X | \mathcal{F}]] = \mathbb{E}[X]$ to recognize the definition of $\mathbb{E}_{\mathbb{Q}}$ as desired.
    \end{proof}
\end{theorem}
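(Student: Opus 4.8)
The plan is to reduce to the one-dimensional case and verify the two defining properties of a Wiener process from Definition~\ref{def:weiner_process} directly. Continuity of the sample paths of $X_t$ is immediate, since $X_t$ is given in SDE form and inherits continuity from the It\^{o} integral. The remaining content---that the increments are independent, mean-zero and Gaussian with variance equal to the interval length---I would capture in a single computation using characteristic (generating) functions. Concretely, fixing $t > s$ and an arbitrary $\mathcal{F}_s$-measurable random variable $Z$, the goal is to establish
\begin{equation}
    \mathbb{E}_{\mathbb{Q}}[e^{\alpha(X_t - X_s) + \beta Z}] = e^{-\alpha^2(t-s)/2}\,\mathbb{E}_{\mathbb{Q}}[e^{\beta Z}]
\end{equation}
for all $\alpha,\beta \in \mathbb{R}$. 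The factorization exhibits independence of the increment $X_t - X_s$ from $\mathcal{F}_s$, while the prefactor is exactly the characteristic function of a mean-zero, variance-$(t-s)$ Gaussian, which pins down the distribution.

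The core of the argument is a sequence of measure-change and conditioning steps. First I would rewrite $\mathbb{E}_{\mathbb{Q}}$ as $\mathbb{E}_{\mathbb{P}}[\Lambda_{T_f}\,\cdot\,]$ via the Radon--Nikodym density (Theorem~\ref{thm:radon_nikodym}), where $\Lambda_t = \exp[-\int_0^t F_r\,dW_r - \frac{1}{2}\int_0^t F_r^2\,dr]$; the hypothesis $\mathbb{E}_{\mathbb{P}}[\Lambda] = 1$ guarantees $\mathbb{Q}$ is a genuine probability measure. Since every factor other than $\Lambda_{T_f}$ is $\mathcal{F}_t$-measurable, the defining property of conditional expectation (Definition~\ref{def:conditional_expectation_kolmogorov}) lets me replace $\Lambda_{T_f}$ by $\mathbb{E}_{\mathbb{P}}[\Lambda_{T_f}\mid\mathcal{F}_t] = \Lambda_t$, valid once $\Lambda_t$ is known to be a martingale. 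I would then split $\Lambda_t = \Lambda_s\exp[-\int_s^t F_r\,dW_r - \frac{1}{2}\int_s^t F_r^2\,dr]$ and substitute $X_t - X_s = \int_s^t F_r\,dr + \int_s^t dW_r$, collecting all terms into a single exponent $\int_s^t(\alpha F_r - \frac{1}{2}F_r^2)\,dr + \int_s^t(\alpha - F_r)\,dW_r$. Completing the square isolates the deterministic factor $e^{-\alpha^2(t-s)/2}$ and leaves a residual exponential of the form $\exp[-\frac{1}{2}\int_s^t(\alpha - F_r)^2\,dr + \int_s^t(\alpha - F_r)\,dW_r]$.

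The final and most delicate step is to show this residual has conditional expectation one given $\mathcal{F}_s$. Setting $\theta_r = \alpha - F_r$ and $dR_t = -\frac{1}{2}\theta_t^2\,dt + \theta_t\,dW_t$, the It\^{o} rule (Theorem~\ref{thm:ito_rule}) yields $d(e^{R_t}) = \theta_t e^{R_t}\,dW_t$, since the It\^{o} correction $\frac{1}{2}\theta_t^2 e^{R_t}\,dt$ cancels the drift exactly. As the result is a pure stochastic integral against $dW_t$, the Martingale Representation Lemma~\ref{lem:martingale_representation} identifies $e^{R_t}$ as an $\mathcal{F}_t$-martingale, so $\mathbb{E}_{\mathbb{P}}[e^{R_t - R_s}\mid\mathcal{F}_s] = 1$; the identical drift-cancellation shows $\Lambda_t$ is a martingale too, retroactively justifying the earlier conditioning. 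Peeling off the conditional expectation leaves $e^{-\alpha^2(t-s)/2}\,\mathbb{E}_{\mathbb{P}}[\Lambda_s e^{\beta Z}]$, and one more application of the tower property repackages $\mathbb{E}_{\mathbb{P}}[\Lambda_s e^{\beta Z}] = \mathbb{E}_{\mathbb{P}}[\mathbb{E}_{\mathbb{P}}[\Lambda_{T_f}\mid\mathcal{F}_s]e^{\beta Z}] = \mathbb{E}_{\mathbb{Q}}[e^{\beta Z}]$, completing the factorization. The main obstacle I anticipate is precisely the martingale (rather than merely local-martingale) property of $\Lambda_t$: in full generality this needs a Novikov-type integrability bound, but here it is supplied by the assumption $\mathbb{E}_{\mathbb{P}}[\Lambda] = 1$, so the difficulty reduces to invoking that hypothesis at the right moment.
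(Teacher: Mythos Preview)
Your proposal is correct and follows essentially the same approach as the paper's proof: reduce to one dimension, verify the Wiener properties via the joint characteristic function $\mathbb{E}_{\mathbb{Q}}[e^{\alpha(X_t-X_s)+\beta Z}]$, use the martingale property of $\Lambda_t$ to condition down from $\Lambda_{T_f}$, complete the square to extract $e^{-\alpha^2(t-s)/2}$, and show the residual exponential is a martingale via the It\^{o}-rule drift cancellation $d(e^{R_t}) = \theta_t e^{R_t}\,dW_t$. Your closing remark about the martingale-versus-local-martingale subtlety is a welcome addition that the paper glosses over.
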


Girsanov's theorem allows us to find a measure under which stochastic processes like the observations process in Eq.~\eqref{classical_filtering:eq:observations} are Wiener processes.  If we can find a measure such that $Y_t$ is independent of $X_t$ and is equivalent to a Wiener process, we might then be able to evaluate the conditional expectation easily.  The following section addresses that task.
\subsection{Bayes Formula}
Although the Radon-Nikodym theorem (Thm.~\ref{thm:radon_nikodym}) relates expectations under related measures, we have yet to develop a method for relating \emph{conditional} expectations under different probability measures.  The following formula, reminiscent of the familiar Bayes rule for conditional probabilities, provides a means for doing so.
\begin{theorem}[\textbf{Bayes formula}]\label{thm:bayes_formula}
    Let $(\Omega,\mathcal{F},\mathbb{P})$ be a probability space with another measure $\mathbb{Q}$ such that $\mathbb{P} \ll \mathbb{Q}$.  Then for some $\mathcal{G} \subset \mathcal{F}$ and random variable $X$ such that $\mathbb{E}_{\mathbb{P}}[\abs{X}] < \infty$, the following \emph{Bayes formula} relates conditional expectations as follows:
    \begin{equation}
        \mathbb{E}_{\mathbb{P}}[ X | \mathcal{G}]
                = \frac{\mathbb{E}_{\mathbb{Q}}[X \frac{d\mathbb{P}}{d\mathbb{Q}} | \mathcal{G}]}
                        {\mathbb{E}_{\mathbb{Q}}[\frac{d\mathbb{P}}{d\mathbb{Q}} | \mathcal{G}]}
    \end{equation}
    where $\frac{d\mathbb{P}}{d\mathbb{Q}}$ is the Radon-Nikodym derivative. 
    \begin{proof}
        Again, we follow the exposition of Lemma 7.1.3 in \citet{vanHandel:2007a}.  Let $S \in \mathcal{G}$.  Since both sides satisfy the Kolmogorov definition of conditional probability, we can use the arbitrary $\mathcal{G}$-measurable random variable $I_S$ to show that both sides satisfy the conditional expectation property.  Starting from the numerator on the right, we have
        \begin{equation}
            \mathbb{E}_{\mathbb{Q}}[I_S \mathbb{E}_{\mathbb{Q}}[X\frac{d\mathbb{P}}{d\mathbb{Q}}|\mathcal{G}]]
                = \mathbb{E}_{\mathbb{Q}}[I_S X\frac{d\mathbb{P}}{d\mathbb{Q}}]
                = \mathbb{E}_{\mathbb{P}}[I_S X]
                = \mathbb{E}_{\mathbb{P}}[I_S X]
        \end{equation}
        where we have used the properties of conditional expectation and the Radon-Nikodym relation.  Using the conditional expectation property again and running the above in reverse, we find
        \begin{equation}
            \mathbb{E}_{\mathbb{P}}[I_S X] 
                = \mathbb{E}_{\mathbb{P}}[I_S \mathbb{E}_{\mathbb{P}}[X|\mathcal{G}]]
                = \mathbb{E}_{\mathbb{Q}}[I_S \frac{d\mathbb{P}}{d\mathbb{Q}}
                        \mathbb{E}_{\mathbb{P}}[X|\mathcal{G}]]
                = \mathbb{E}_{\mathbb{Q}}[I_S 
                        \mathbb{E}_{\mathbb{Q}}[\frac{d\mathbb{P}}{d\mathbb{Q}} | \mathcal{G}]
                        \mathbb{E}_{\mathbb{P}}[X|\mathcal{G}]] .
        \end{equation}
        But since this is true for for any $S$, it must hold without the outer expectations and $I_S$, so that
        \begin{equation}
           \mathbb{E}_{\mathbb{Q}}[X\frac{d\mathbb{P}}{d\mathbb{Q}}|\mathcal{G}]
           =  \mathbb{E}_{\mathbb{Q}}[\frac{d\mathbb{P}}{d\mathbb{Q}} | \mathcal{G}]
           \mathbb{E}_{\mathbb{P}}[X|\mathcal{G}]
        \end{equation}
    If we divide by $\mathbb{E}_{\mathbb{Q}}[\frac{d\mathbb{P}}{d\mathbb{Q}} | \mathcal{G}]$ we recover the Bayes formula.  
    \end{proof}
\end{theorem}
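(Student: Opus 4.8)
The plan is to verify the claimed identity directly against the Kolmogorov characterization of conditional expectation (Definition \ref{def:conditional_expectation_kolmogorov}). Writing $Z$ for the right-hand side, I would first note that $Z$ is manifestly $\mathcal{G}$-measurable, being a ratio of two $\mathcal{G}$-measurable random variables. It then suffices to show that $\mathbb{E}_{\mathbb{P}}[\chi_A X] = \mathbb{E}_{\mathbb{P}}[\chi_A Z]$ for every $A \in \mathcal{G}$, since $\mathbb{E}_{\mathbb{P}}[X \mid \mathcal{G}]$ is the \emph{unique} $\mathcal{G}$-measurable random variable satisfying this testing property. The single tool that collapses both sides to a common expression is the Radon--Nikodym change-of-measure relation $\mathbb{E}_{\mathbb{P}}[g] = \mathbb{E}_{\mathbb{Q}}[g\,\tfrac{d\mathbb{P}}{d\mathbb{Q}}]$ (Theorem \ref{thm:radon_nikodym}), used together with the pull-out property of conditional expectation under $\mathbb{Q}$.

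For the left-hand target I would compute, for $A \in \mathcal{G}$,
\begin{equation}
  \mathbb{E}_{\mathbb{P}}[\chi_A X] = \mathbb{E}_{\mathbb{Q}}\Bigl[\chi_A X \tfrac{d\mathbb{P}}{d\mathbb{Q}}\Bigr] = \mathbb{E}_{\mathbb{Q}}\Bigl[\chi_A\, \mathbb{E}_{\mathbb{Q}}\bigl[X \tfrac{d\mathbb{P}}{d\mathbb{Q}} \bigm| \mathcal{G}\bigr]\Bigr],
\end{equation}
the last step using the tower property and the $\mathcal{G}$-measurability of $\chi_A$. For the other side, again converting to $\mathbb{Q}$ and pulling the $\mathcal{G}$-measurable factor $\chi_A Z$ through the conditioning, I would obtain
\begin{equation}
  \mathbb{E}_{\mathbb{P}}[\chi_A Z] = \mathbb{E}_{\mathbb{Q}}\Bigl[\chi_A Z\, \tfrac{d\mathbb{P}}{d\mathbb{Q}}\Bigr] = \mathbb{E}_{\mathbb{Q}}\Bigl[\chi_A Z\, \mathbb{E}_{\mathbb{Q}}\bigl[\tfrac{d\mathbb{P}}{d\mathbb{Q}} \bigm| \mathcal{G}\bigr]\Bigr].
\end{equation}
By the definition of $Z$ as the quotient, the product $Z\, \mathbb{E}_{\mathbb{Q}}[\tfrac{d\mathbb{P}}{d\mathbb{Q}} \mid \mathcal{G}]$ is exactly $\mathbb{E}_{\mathbb{Q}}[X \tfrac{d\mathbb{P}}{d\mathbb{Q}} \mid \mathcal{G}]$, so the two displays coincide and the testing identity holds for all $A \in \mathcal{G}$. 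An equivalent but slightly slicker route is to clear the denominator first: establish $\mathbb{E}_{\mathbb{Q}}[X \tfrac{d\mathbb{P}}{d\mathbb{Q}} \mid \mathcal{G}] = \mathbb{E}_{\mathbb{Q}}[\tfrac{d\mathbb{P}}{d\mathbb{Q}} \mid \mathcal{G}]\, \mathbb{E}_{\mathbb{P}}[X \mid \mathcal{G}]$ by testing both sides against $\chi_S$ for arbitrary $S \in \mathcal{G}$, and divide only at the very end.

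The main obstacle I anticipate is not the algebra but the well-definedness of the quotient: I must ensure the denominator $\mathbb{E}_{\mathbb{Q}}[\tfrac{d\mathbb{P}}{d\mathbb{Q}} \mid \mathcal{G}]$ is nonzero $\mathbb{P}$-almost surely, or else $Z$ is undefined precisely where it is needed. This is where the hypothesis $\mathbb{P} \ll \mathbb{Q}$ does its work: the event $N$ on which the denominator vanishes lies in $\mathcal{G}$, so computing $\mathbb{P}(N) = \mathbb{E}_{\mathbb{Q}}[\chi_N \tfrac{d\mathbb{P}}{d\mathbb{Q}}] = \mathbb{E}_{\mathbb{Q}}[\chi_N\, \mathbb{E}_{\mathbb{Q}}[\tfrac{d\mathbb{P}}{d\mathbb{Q}} \mid \mathcal{G}]] = 0$ shows the bad event has $\mathbb{P}$-measure zero, so the formula holds $\mathbb{P}$-a.s. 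I would also lean on the integrability hypothesis $\mathbb{E}_{\mathbb{P}}[\abs{X}] < \infty$ to guarantee that every conditional expectation appearing is finite and that the change-of-measure and pull-out steps are legitimate rather than merely formal.
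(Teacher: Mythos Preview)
Your proposal is correct and follows essentially the same route as the paper: both verify the identity via the Kolmogorov testing property against indicators $\chi_S$ for $S \in \mathcal{G}$, using the Radon--Nikodym change of measure together with the pull-out/tower property under $\mathbb{Q}$; your ``slicker route'' of clearing the denominator first is in fact exactly the paper's presentation. Your added check that the denominator is nonzero $\mathbb{P}$-a.s.\ is a welcome point of rigor that the paper's proof simply omits.
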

With this result and the Girsanov theorem, we are now ready to solve the filtering problem.
\subsection{Non-Linear Filtering Equations}
With the Girsanov theorem and Bayes formula in hand, we can now proceed to find a formula for $\pi_t[f] = \mathbb{E}_{\mathbb{P}}[f(X_t) | \mathcal{F}_t^Y]$.  Our first step is to find a new measure $\mathbb{Q}$ under which $X_t$ and $\mathcal{F}_t^Y$ are independent.  Since $X_0$ is already independent of $W_t,V_t$, our task is really to show that $dW_t,d\bar{Y}_t$ are two independent $\mathcal{F}_t^Y$-Wiener processes under $\mathbb{Q}$, where we have set
\begin{equation}
    d\bar{Y}_t = d^{-1}(t)c(t,X_t)dt + V_t = d^{-1}(t)dY_t
\end{equation}
Noting that this is precisely the Girsanov form in Eq.~\eqref{Girsanov:eq:girsanov_form}, introduce 
\begin{equation}
    \Lambda_t = \exp\left[ -\int_0^t [d^{-1}(s)c(s,X_s)]^T d\bar{Y}_t 
                    - \frac{1}{2}\int_0^t \normsq{d^{-1}(s)c(s,X_s)}ds\right] 
\end{equation}
so that the new measure $\mathbb{Q}_{T_f}$ is defined by the density $d\mathbb{P}/d\mathbb{Q}_{T_f} = \Lambda_{T_f}$.  From the Girsanov theorem, we know that $\bar{Y}_t$ is a Wiener process independent of $W_t$ and $X_0$, since for the Girsanov form in Eq.~\eqref{Girsanov:eq:girsanov_form}, the process is independent of the stochastic coefficient $F_t$ under the new measure.  Thus, under $\mathbb{Q}$, $X_t$ and $\bar{Y}_t$ are independent and we use Bayes formula to rewrite the conditional expectation as
\begin{equation} \label{filtering:eq:kallianpur_striebel}
    \pi_t[f] = \frac{\mathbb{E}_{\mathbb{Q}_t}[f(X_t)\Lambda_t | \mathcal{F}_t^{Y}]}
                         {\mathbb{E}_{\mathbb{Q}_t}[\Lambda_t | \mathcal{F}_t^{Y}]}
                  = \frac{\sigma_t(f)}{\sigma_t(1)}
\end{equation}
where we have introduced the unnormalized estimate $\sigma_t$ in the obvious way.  Eq. \eqref{filtering:eq:kallianpur_striebel} is known as the \emph{Kallianpur-Striebel formula}. 

We now focus an deriving an SDE for the unnormalized form.  We begin by using the It\^{o} rule to calculate
\begin{align}
    d\Lambda_t &= \Lambda_t [d^{-1}(s)c(s,X_s)]^T d\bar{Y}_t
\end{align}
and using the multi-dimensional It\^{o} rule in Eq.~\eqref{probability:eq:ito_rule_generator_form}
\begin{align}
    df(X_t) &= \mathscr{L}_t f(X_t) dt + [\grad f(X_t)]^Tb(t,X_t)dW_t .
\end{align}
From the It\^{o} product rule in Lemma \ref{lem:ito_product_rule}, we find 
\begin{multline}
    f(X_t)\Lambda_t = f(X_0) +
                    \int_0^t \Lambda_s\mathscr{L}_s f(X_s) ds 
                  + \int_0^t\Lambda_s[\grad f(X_s)]^Tb(s,X_s)dW_s\\
                 +\int_0^t  f(X_s)  \Lambda_s [d^{-1}(s)c(s,X_s)]^T d\bar{Y}_s 
\end{multline}
where I have used the integral, rather than the SDE form and noted $\Lambda_0 = 1$.  In order to recover the $\sigma_t(f)$ form, we need to calculate $\mathbb{E}_{\mathbb{Q}}[\cdot | \mathcal{F}_t^Y]$ of both sides of the above equation.  Given that the integrals are essentially sums, the expectations may be brought inside and applied directly to the integrands.  But by construction, $dW_s$ is independent of $\mathcal{F}_t^Y$ under the measure $\mathbb{Q}$; after all, that is why we picked $\mathbb{Q}$.  As such, the conditional part is dropped, leaving $\mathbb{E}_{\mathbb{Q}}[\Lambda_s[\grad f(X_s)]^Tb(s,X_s) dW_s] = 0$, since $dW_s$ is an standard Wiener process under $\mathbb{Q}$.  Additionally, by properties of conditional expectation, $\mathcal{F}_t^Y \mapsto \mathcal{F}_s^Y$ under the integral, since for the adapted processes under consideration, $\mathcal{F}_t^Y$ provides no extra information for conditioning than what is already in $\mathcal{F}_s^Y$.  Lastly, since $d\bar{Y}_s$ is $\mathcal{F}_s^Y$-measurable under $\mathbb{Q}$, it may also be pulled out of the conditional expectation.  This leaves
\begin{multline}
    \mathbb{E}_{\mathbb{Q}}[f(X_t)\Lambda_t] = \mathbb{E}_{\mathbb{Q}}[f(X_0) | \mathcal{F}_t^Y] +
                    \int_0^t \mathbb{E}_{\mathbb{Q}}[\Lambda_s\mathscr{L}_s f(X_s)
                            | \mathcal{F}_s^Y] ds \\
                 +\int_0^t  \mathbb{E}_{\mathbb{Q}} [f(X_s)  \Lambda_s [d^{-1}(s)c(s,X_s)]^T
                            | \mathcal{F}_s^Y ] d\bar{Y}_s ,
\end{multline}
from which we identify the \emph{Zakai equation}
\begin{equation}
    d\sigma_t(f) = \sigma_t(\mathscr{L}_t f) dt
                 + \sigma_t(d^{-1}(s)c(s,X_s) f)^T d\bar{Y}_s .
\end{equation}

In order to recover the SDE for the full filter, we note that 
\begin{equation}
	d\sigma_t(1) = \sigma_t(d^{-1}(s)c(s,X_s))^T d\bar{Y}_s 
\end{equation}
and use the It\^{o} rule to calculate
\begin{align} \label{classical:eq:unnormalized_filter_ratio_sde}
    d\left[\frac{\sigma_t(f)}{\sigma_t(1)}\right] &= \frac{d\sigma_t(f)}{\sigma_t(1)}
                                       - \frac{\sigma_t(f)d\sigma_t(1)}{\sigma_t(1)^2}
                                       - \frac{1}{2} \frac{d\sigma_t(1)d\sigma_t(f)}{\sigma_t(1)^2}
                             + \frac{\sigma_t(f)d\sigma_t(1)d\sigma_t(f)}{\sigma_t(1)^3}
\end{align}
Plugging in for these terms, noting that $\sigma_t(f)/\sigma_t(1) = \pi_t[f]$ and rearranging the result leads one to the \emph{Kushner-Stratonovich} equation given in the following theorem. 
\begin{theorem}[\textbf{Kushner-Stratonovich}]\label{classical:thm:kushner_stratonovich}
    The solution to the filtering problem satisfies the SDE
	\begin{multline} \label{filtering:eq:kushner_stratonovich}
	    d\pi_t[f] = \pi_t[\mathscr{L}_tf] dt \\
	              + \left(\pi_t[d(t)^{-1}c(t,X_t) f]
	                   -\pi_t[f]\pi_t[d(t)^{-1}c(t,X_t)]\right)^T
	                \left(d\bar{Y}_t - \pi_t[d(t)^{-1}c(t,X_t)]\right) 
	\end{multline}
	with $\pi_0(f) = \mathbb{E}_{\mathbb{P}}[f(X_0)]$. 
\end{theorem}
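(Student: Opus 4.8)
The plan is to treat the statement as the Itô calculus of the quotient $\pi_t[f]=\sigma_t(f)/\sigma_t(1)$, for which the two ingredients are already in hand: the Zakai equation $d\sigma_t(f)=\sigma_t(\mathscr{L}_t f)\,dt+\sigma_t(h f)^{T}d\bar{Y}_t$ and the unnormalized normalization dynamics $d\sigma_t(1)=\sigma_t(h)^{T}d\bar{Y}_t$, where I abbreviate $h_t=d(t)^{-1}c(t,X_t)$. Starting from the Itô expansion of the quotient recorded in Eq.~\eqref{classical:eq:unnormalized_filter_ratio_sde}, I would substitute these two expressions and then sort the resulting contributions by differential type. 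The only stochastic driver entering at the $\sigma$-level is $d\bar{Y}_t$, which under $\mathbb{Q}$ is a standard $\mathcal{F}_t^Y$-Wiener process, and this is precisely the fact I exploit to evaluate the second-order Itô corrections.

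First I would compute the two quadratic-variation products. Since the quadratic variation of $\bar{Y}$ is unaffected by the Girsanov drift shift, $d\bar{Y}_t\,d\bar{Y}_t^{T}=I\,dt$ holds under $\mathbb{Q}$, while every product containing a $dt$ factor vanishes; hence $d\sigma_t(f)\,d\sigma_t(1)=\sigma_t(h f)^{T}\sigma_t(h)\,dt$ and $(d\sigma_t(1))^{2}=\sigma_t(h)^{T}\sigma_t(h)\,dt$. Dividing through by the appropriate powers of $\sigma_t(1)$ and using $\sigma_t(\cdot)/\sigma_t(1)=\pi_t[\cdot]$ converts every unnormalized bracket into its normalized counterpart, so the four terms of the expansion become $\pi_t[\mathscr{L}_t f]\,dt+\pi_t[h f]^{T}d\bar{Y}_t$, then $-\pi_t[f]\,\pi_t[h]^{T}d\bar{Y}_t$, and finally the two drift corrections $-\pi_t[h f]^{T}\pi_t[h]\,dt$ and $+\pi_t[f]\,\pi_t[h]^{T}\pi_t[h]\,dt$.

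The decisive algebraic step is to regroup these pieces so the common vector $\pi_t[h f]-\pi_t[f]\,\pi_t[h]$ (a conditional covariance, the filter gain) factors out of both the $d\bar{Y}_t$ terms and the pair of $dt$ corrections. The two corrections then collapse to $-(\pi_t[h f]-\pi_t[f]\,\pi_t[h])^{T}\pi_t[h]\,dt$, and combining this with the stochastic part produces the single innovations differential $d\bar{Y}_t-\pi_t[h]\,dt$, yielding exactly Eq.~\eqref{filtering:eq:kushner_stratonovich} once $h_t=d(t)^{-1}c(t,X_t)$ is restored. I would close by checking the initial condition: because $\mathcal{F}_0^Y$ is trivial and $\Lambda_0=1$, one has $\sigma_0(f)=\mathbb{E}_{\mathbb{Q}}[f(X_0)\Lambda_0\mid\mathcal{F}_0^Y]=\mathbb{E}_{\mathbb{P}}[f(X_0)]$ and $\sigma_0(1)=1$, so $\pi_0[f]=\mathbb{E}_{\mathbb{P}}[f(X_0)]$.

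The step demanding the most care is the bookkeeping of the second-order Itô terms: one must be sure that only the $d\bar{Y}_t\,d\bar{Y}_t^{T}$ pairing survives, that it evaluates to $I\,dt$ under $\mathbb{Q}$, and that the two $dt$ corrections carry the signs and the powers of $\sigma_t(1)$ needed for the factorization to close. The factorization itself is the conceptual payoff---recognizing that the gain $\pi_t[h f]-\pi_t[f]\,\pi_t[h]$ multiplies the innovations $d\bar{Y}_t-\pi_t[h]\,dt$---but it is essentially forced once the corrections are assembled correctly.
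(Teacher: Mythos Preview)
Your proposal is correct and follows essentially the same route as the paper: both derive the Kushner--Stratonovich equation by applying the It\^{o} quotient rule~\eqref{classical:eq:unnormalized_filter_ratio_sde} to $\pi_t[f]=\sigma_t(f)/\sigma_t(1)$, substituting the Zakai dynamics for $\sigma_t(f)$ and $\sigma_t(1)$, and then regrouping so that the gain $\pi_t[hf]-\pi_t[f]\pi_t[h]$ factors against the innovations $d\bar{Y}_t-\pi_t[h]\,dt$. Your write-up is in fact more explicit about the second-order bookkeeping and the initial condition than the paper, which simply says ``plugging in for these terms \ldots and rearranging the result'' leads to the stated SDE.
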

\noindent This is precisely a recursive equation of the form we desired, in which the estimate of $f(X_t)$ is updated in place with each measurement increment $d\bar{Y}_t = d(t)^{-1}Y_t$.  

Before exploring the details of this equation, let us first reflect on the path we have taken in deriving it.  For a seemingly simple form, what was really the point of changing measures and constructing the $d\bar{Y}_t$ process?  As was stated as motivation, by constructing the measure $\mathbb{Q}$ under which $X_t$ and $\mathcal{F}_t^Y$ were independent, the conditional expectation with respect to that measure becomes relatively trivial.  Indeed, that is what we found in calculating the Zakai equation for $\sigma_t(f)$.  Due to the nature of $\mathbb{Q}$, we were able to completely drop terms involving $dW_t$.  By the definition of conditional expectation, $\mathbb{E}[f(X_t) | \mathcal{F}_t^Y]$ is precisely an orthogonal projection onto the space $\mathcal{F}_t^Y$; since $dW_t$ is independent of $\mathcal{F}_t^Y$, it is dropped in the orthogonal projection.  But a more important feature of working under the new measure was that the process $d\bar{Y}_t$ could be pulled out of the conditional expectation since it is manifestly $\mathcal{F}_t^Y$-measurable under $\mathbb{Q}$.  As a result, the integral over $d\bar{Y}_t$ is essentially just the averaging we sought from the beginning and is the essential property that allows us to express the filter as a SDE over the process $d\bar{Y}_t$.  The rest of the work was merely applying Bayes formula to relate the Zakai equation for $\sigma_t(f)$ back to $\pi_t[f]$.

It is worth recognizing the following important process in the Kushner-Straton\-ovich equation.
\begin{definition}\label{def:innovations_process}
    The \emph{innovations process}, written
    \begin{equation}
        \bar{V}_t =  \bar{Y}_t - \int_0^s\pi_s[d(s)^{-1}c(s,X_s)]ds
    \end{equation}
    is an $\mathcal{F}_t^Y$-Wiener process and satisfies the SDE
    \begin{equation}
        d\bar{V}_t =  (d(t)^{-1}c(t,X_t)] - \pi_t[d(t)^{-1}c(t,X_t)])dt + dV_t .
    \end{equation}
    The proof that it is a Wiener process is essentially identical to the generating function approach used to proof Girsanov's theorem and is found in Proposition 7.2.9 in \citet{vanHandel:2007a}.  Another approach is to show $d\bar{V}_t$ is a martingale that satisfies the It\^{o} product $d\bar{V}_t^2 = dt$, which by L\'{e}vy's theorem\footnote{Essentially L\'{e}vy's theorem tells us that if a given process $M_t$ and the related one $M_t^2 - t$ are martingales, then $M_t$ is a Wiener process.  See \citep{Williams:1991a} for more discussion.} means it is a Wiener process.
\end{definition}
Structurally, the form of the innovations process gives considerable insight into its properties.  If we were to \emph{know} $X_t$, the innovations process would be identically the Wiener process $dV_t$, which is the noise corrupting the measurement that serves no purpose save to make our lives more difficult.   Looking at the SDE form for $d\bar{V}_t$, we also see that it contains  $dV_t$ in addition to the difference of the estimate and true process value.  But by definition, that piece satisfies
\begin{equation} \label{filtering:eq:innovations:expectation}
    \mathbb{E}_{\mathbb{P}}[(d(t)^{-1}c(t,X_t)] - \pi_t[d(t)^{-1}c(t,X_t)]) | \mathcal{F}_s^Y] = 0 
    \qquad t \geq s
\end{equation}
so that the difference must be orthogonal to $\mathcal{F}_t^Y$\footnote{It might seem weird that all the pieces used to construct $d\bar{V}_t$ come from $Y_t$, yet this difference term is nonetheless not $\mathcal{F}_t^Y$-measurable.  But note that we don't have access to this piece by itself, we get $V_t$ along with.  The innovations process smartly pulls out the information coming solely from $X_t$, as best as it can in the presence of $V_t$.}.  This is what gives the innovations process its name, in that the difference $(d(t)^{-1}c(t,X_t)] - \pi_t[d(t)^{-1}c(t,X_t)])$ contains only the ``new'' or ``innovative'' information that would cause us to update our estimate.  In a more heuristic view, the innovations process tries to make the measurements look as much as possible like the corrupting process $V_t$, so that the filter averages that white noise away to zero.  Anything that makes $\bar{V}_t$ look different than $V_t$ is then useful information about the process of interest.  The added benefit that $\bar{V}_t$ is still a Wiener process, thanks in part to the property in Eq.~\eqref{filtering:eq:innovations:expectation}, means we can leverage all of the It\^{o} properties we like when studying the filter.

Of course, the lingering important question is whether one can use the filter in practice.  Looking at Eq.~\eqref{filtering:eq:kushner_stratonovich}, we see that calculating $\pi_t[f]$ requires calculation of terms such as $\pi_t[\mathscr{L}_tf]$ and $\pi_t[d(t)^{-1}c(t,X_t) f]$.  Plugging those terms back into the Kushner-Stratonovich equation will undoubtedly require calculation of iterated forms such as $\pi_t[\mathscr{L}_t^2f]$ and beyond, until a closed set of equations is reached.  In general, we would expect to need an infinite number of equations to close the loop for the real-valued process $X_t$.  Another perspective, which will prove useful for the quantum filter, is to work with an adjoint form of the filter, in which we introduce a random density $p_t(X)$ which satisfies
\begin{equation}
    \pi_t[f] = \mathbb{E}_{\mathbb{P}}[f(X_t) | \mathcal{F}_t^Y] = \int f(x)p_t(x) dx .
\end{equation}
Integrating the Kushner-Stratonovich equation by parts gives the nonlinear, stochastic partial integro-differential equation
\begin{equation} \label{filtering:eq:kusher_stratonovich_density_form}
    dp_t(x) = \mathscr{L}_t^{*} p_t(x) dt
            + p_t(x)\left[ d^{-1}(t)(c(t,x) - \pi_t[c(t,x)])\right]^T d\bar{V}_t 
\end{equation}
where
\begin{equation}
    \mathscr{L}_t^{*}p(x) = - \sum_{i=1}^n \partialD{}{x^i}\left(a^{i}(t,x) p(x)\right)
                          + \frac{1}{2} \sum_{i,j = 1}^n
                            \sum_{k = 1}^m \partialD{{^2}}{x^i\partial x^j}\left(
                                b^{ik}(t,x) b^{jk}(t,x)p(x)\right)
\end{equation}

This form is generally not any more useful the the Kushner-Stratonovich equation, but is a duality similar to the Schr\"{o}dinger and Heisenberg pictures in quantum mechanics.  A similar PDE can be developed for the Zakai equation ($\sigma_t(f)$), which is at least a linear equation that admits more straightforward numerical approximations. Of course, there is one well-known continuous distribution which requires only a few characteristic parameters---the Gaussian distribution.  In the following section, we consider systems whose conditional state is well-described by a Gaussian distribution and therefore admits a simple and tractable filter with wide applicability. 
\subsection{Kalman-Bucy Filter}
Perhaps the simplest systems-observation pair we can consider is one governed by the pair of linear stochastic differential equations
\begin{align}
    dX_t = A_t X_t dt + B_t dW_t\\
    dY_t = C_t X_t dt + D_t dV_t
\end{align}
where $X_t, Y_t$ are $n,m$-dimensional, real-valued stochastic processes, $W_t,V_t$ are independent, $k,p$-dimensional Wiener process and $A_t,B_t,C_t,D_t$ are real-valued, non-random matrices of dimension $ n \times n$, $k \times n$, $m \times m$ and $p \times m$ respectively.  In physics and engineering, many problems are well-described or well-approximated by a linear description and are often appealing due to their relative analytical simplicity.  As we will find in the following theorem, the filter for these simple systems is also simple, making linear stochastic models attractive for practical filtering and control applications. 
\begin{theorem}[\textbf{Kalman-Bucy Filter}]\label{thm:kalman_bucy}
    The solution to the linear stochastic filtering problem, written $\pi_t[X] = \mathbb{E}_{\mathbb{P}}[X_t | \mathcal{F}_t^Y]$, with $\pi_0[X]$ Gaussian distributed, satisfies the SDE
    \begin{equation} \label{filter:eq:kalman_bucy:state}
        d\pi_t[X] = A_t \pi_t[X] dt + P_t(D_t^{-1}C_t)^Td\bar{V}_t
    \end{equation}
    with innovations process $d\bar{V}_t = D_t^{-1}(dY_t - C_t \pi_t[X] dt)$ and deterministic covariance matrix $P_t = \E{(X_t - \pi_t[X])(X_t - \pi_t[X])^T}$ satisfying the Riccati equation
    \begin{equation} \label{filter:eq:kalman_bucy:variance}
        \frac{dP_t}{dt} = A_t P_t + P_t A_t^{T} - P_t C_t^{T}(D_tD_t^{T})^{-1}C_tP_t
                            + B_tB_t^T .
    \end{equation}
    \begin{proof}[Proof by citation and vigorous handwaving]
        For an excellent and detailed derivation of the Kalman-Bucy filter, consult \citet[Chap. 6]{Oksendal:1946a} or the original papers \citep{Kalman:1960a,Kalman:1961a}.  Another approach is to simply use the linear forms of $X_t$ and $Y_t$ in our results from the previous section, although there are technical reasons we should hesitate, primarily that the change of measure $\Lambda_t$ is generally not square-integrable.  Nonetheless, such subtleties can be handled and we would end up with the right answer.  The details of the procedure are not enlightening, so I only review the strategy, which is to consider the density form of the Zakai equation, analogous to \eqref{filtering:eq:kusher_stratonovich_density_form} and written
        \begin{equation}
        \sigma_t(f) = \int f(x)q_t(x)dx \qquad
         dq_t(x) = \mathscr{L}_t^{*}q_t(x)dt + q_t(x)(d(t)^{-1}c(t,x))^T d\bar{Y}_t    
        \end{equation}
        Plugging in the definitions for the linear system, we have
        \begin{align} \label{filtering:eq:zakai_density_form}
            dq_t(x) &=  \left[\frac{1}{2} \sum_{i,j=1}^n 
                            (B_tB_t^T)^{ij} \partialD{{^2}q_t(x)}{x^i\partial x^j}
                    - \sum_{i=1}^n\partialD{}{x^i} (A_t x)^i q_t(x) \right] dt\nonumber\\
                   &+  q_t(x)(D_t^{-1}C_t x)^T d\bar{V}_t                                      
        \end{align}
        We would then want to check that a density of the form
        \begin{equation}
            q_t(x) = N_t \exp\left(-\frac{1}{2}(x - \pi_t[X])^TP_t^{-1}(x - \pi_t[X])\right) ,
        \end{equation}
        where $N_t$ is a non-random normalization function, is a solution to Eq.~\eqref{filtering:eq:zakai_density_form}.  The check involves several applications of the It\^{o} rules followed by a comparison of terms.  The interested reader should feel free to check this for themself; the rest of us will have to take my word for it.
    \end{proof}
\end{theorem}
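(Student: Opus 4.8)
The plan is to specialize the general Kushner--Stratonovich equation (Theorem~\ref{classical:thm:kushner_stratonovich}) to the linear coefficients $a(t,x)=A_tx$, $b(t,x)=B_t$, $c(t,x)=C_tx$, $d(t)=D_t$, and to extract the mean and covariance dynamics by a judicious choice of the test function $f$. First I would record the action of the backward generator on low-degree polynomials: for the linear drift $A_tx$ and state-independent diffusion $B_t$ one finds $\mathscr{L}_t x^i = (A_t x)^i$ and $\mathscr{L}_t (x^i x^j) = (A_t x)^i x^j + (A_t x)^j x^i + (B_t B_t^T)^{ij}$, since the Hessian of a linear function vanishes and that of a quadratic is constant. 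Taking $f(x)=x^i$ in \eqref{filtering:eq:kushner_stratonovich} then gives drift $\pi_t[\mathscr{L}_t f] = (A_t\pi_t[X])^i$, while the innovations gain $\pi_t[(D_t^{-1}C_t x)\,x^i] - \pi_t[x^i]\,\pi_t[D_t^{-1}C_t x]$ is exactly $D_t^{-1}C_t$ applied to the $i$-th column of the conditional covariance. In matrix form this is precisely \eqref{filter:eq:kalman_bucy:state}, with $d\bar V_t = D_t^{-1}(dY_t - C_t\pi_t[X]\,dt)$ recognised as the $\mathcal{F}_t^Y$-Wiener process of Definition~\ref{def:innovations_process} --- \emph{provided} the conditional covariance appearing here is the deterministic matrix $P_t$ of the statement.

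Second I would pin down $P_t$ by choosing $f(x)=x^i x^j$. Feeding the generator computation above into \eqref{filtering:eq:kushner_stratonovich} produces an SDE for the second conditional moment $\pi_t[x^i x^j]$; subtracting the It\^{o} product expansion of $d(\pi_t[x^i]\pi_t[x^j])$ --- whose correction term $d\pi_t[x^i]\,d\pi_t[x^j]$ contributes a $dt$ piece through $d\bar V_t\,d\bar V_t = dt$ --- yields an equation for the conditional covariance $P_t^{ij}$. Its deterministic part collects the $A_tP_t + P_tA_t^T$ contribution from the generator, the $B_tB_t^T$ contribution from the diffusion, and the $-P_t C_t^T(D_tD_t^T)^{-1}C_tP_t$ contribution from squaring the innovations gain, which is exactly the Riccati equation \eqref{filter:eq:kalman_bucy:variance}. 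Because $P_t$ then solves a deterministic ODE with non-random coefficients, it is non-random, and the conditional covariance coincides with its expectation $\E{(X_t-\pi_t[X])(X_t-\pi_t[X])^T}$ as written.

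The main obstacle --- and the real content of the theorem --- is the \emph{moment closure}. The gain term in the second-moment SDE involves the third conditional moment $\pi_t[(D_t^{-1}C_t x)\,x^i x^j]$, and for the covariance equation to close (equivalently, for the stochastic $d\bar V_t$ terms in $dP_t^{ij}$ to cancel so that $P_t$ is deterministic) the third \emph{central} moments must vanish. This is tantamount to the conditional law staying Gaussian, which does not follow from the low-order identities alone. The cleanest way to close the gap is the route sketched after the statement: pass to the density (Zakai) form, posit the Gaussian ansatz $q_t(x) = N_t\exp\!\big(-\frac{1}{2}(x-\pi_t[X])^T P_t^{-1}(x-\pi_t[X])\big)$, substitute into the Zakai density PDE \eqref{filtering:eq:zakai_density_form}, and match coefficients of the constant, linear, and quadratic powers of $x$ in the exponent. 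Matching the linear terms reproduces \eqref{filter:eq:kalman_bucy:state} and the quadratic terms reproduce \eqref{filter:eq:kalman_bucy:variance}; uniqueness of the solution to the linear Zakai equation then guarantees that this Gaussian is \emph{the} filter, retroactively justifying the closure used in the moment derivation. The technical caveat flagged in the excerpt --- that the change-of-measure density $\Lambda_t$ need not be square-integrable in the linear case --- would be handled by a localisation argument, but does not affect the final form.
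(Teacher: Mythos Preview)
Your proposal is correct and, if anything, more complete than the paper's own proof, which is self-described as ``citation and vigorous handwaving.'' The paper's argument consists solely of your second step: write down the Zakai density equation \eqref{filtering:eq:zakai_density_form}, posit the Gaussian ansatz, and invite the reader to verify by It\^{o} calculus that it solves the equation (without actually doing so). Your additional first step --- specialising Kushner--Stratonovich to polynomial test functions $f=x^i$ and $f=x^ix^j$ to read off the mean and covariance dynamics directly --- is not in the paper, but it is a standard and illuminating route that makes explicit \emph{why} the Gaussian ansatz is forced: the moment hierarchy only closes if third central moments vanish. The paper simply asserts the ansatz; you motivate it. Both routes flag the same technical caveat about $\Lambda_t$ failing to be square-integrable, and both defer its resolution. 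Your treatment subsumes the paper's.
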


Unlike the non-linear filter, which estimates some function of the state, $\pi_t[f(X_t)]$, the Kalman-Bucy filter estimates the potentially multi-dimensional state itself, $\pi_t[X_t]$.  The form in Eq.~\eqref{filter:eq:kalman_bucy:state} has two important pieces.  A deterministic term propagates the state according to the dynamics induced by the linear map $A_t$.  Since this is a non-random term for the true state dynamics, we should not be surprised that the filter's estimate is simply the same dynamics applied to the estimated state.  The second term, which is proportional to the innovations process $d\bar{V}_t$, is responsible for conditioning and depends on the \emph{deterministic} covariance matrix $P_t$\footnote{The matrix $P_t(D_t^{-1}C_t)^T$ which multiplies $d\bar{V}_t$ is called the Kalman gain matrix by control theorists.}.  Remarkably, just from the structure of the linear system-observation pair, the appropriate weighting of the input signal is completely determined.  In another sense, our uncertainty in the estimate, given by the entries in $P_t$, is also completely determined by the structure of the linear system-observation pair---nothing in the observation causes us to change our certainty in the estimate.  This is a direct consequence of the Gaussianity of the stochastic processes and the linearity of the system.  Due to the nice transformation properties of Gaussians, we may trace the effect of the noise and initial state uncertainty through the dynamics and therefore know precisely how our uncertainty in $\pi_t[X]$ changes, weighting any updates due to the innovations process by that uncertainty.  Perhaps reassuringly, when the uncertainties in $P_t$ are large, we weight $d\bar{V}_t$ more heavily and when we are relatively sure of the estimate, the entries in $P_t$ are smaller and we weight the innovations less.  As an added practical benefit, the time evolution of the covariance matrix $P_t$ needs to be solved only once, using methods in Appendix \ref{appendix:riccati}, and the solution may be reused for each application of the filter.  The Kalman-Bucy filter is therefore a very practical tool for estimating the state of an $n$-dimensional linear system, requiring stochastic integration of the $n$-dimensional estimate $\pi_t[X]$ and standard integration of the distinct $\frac{n(n+1)}{2}$ elements in the symmetric covariance matrix $P_t$.

\begin{example}[Parameter estimation]\label{filtering:example:parameter_estimation}
    As an example use of the Kalman filter, consider the task of estimating the forcing parameter of a particle undergoing Brownian motion.  The general techniques used will serve as a useful basis for the research presented in Chapter \ref{chapter:quantum_parameter_estimation}.  We begin by letting $x_t$ represent the position of the particle and introduce the SDE
    \begin{equation} \label{filtering:example:parameter_estimation:system}
        dx_t = \xi dt + dW_t,
    \end{equation}
where $\xi$ is the forcing term we need to estimate.  Continuous measurements of the particle are given by the SDE
\begin{equation}
        dy_t =  x_t + dV_t .
\end{equation}
While we could go through the effort to calculate $\E{\xi | \mathcal{F}_t^Y}$ from first principles, a more clever approach is to leverage the fact that $\xi$ is a linear parameter in the dynamics and is thus amenable to the Kalman filter approach.  That is, we define the augmented system $X_t = [ x_t, \xi ]^T$, which gives rise to the linear systems-observations pair
\begin{align}
    dX_t &= AX_t + B dW_t\\
    dY_t &= CX_t + D dV_t
\end{align}
where
\begin{equation}
    A = \begin{pmatrix}
        0 & 1\\
        0  & 0
    \end{pmatrix}\qquad
    B = \begin{pmatrix}
        1 \\ 0
    \end{pmatrix}\qquad
    C = \begin{pmatrix}
        1 & 0
    \end{pmatrix}\qquad
    D = 1 .
\end{equation}
\begin{figure}[b]
    \centering
        \includegraphics[scale=0.75]{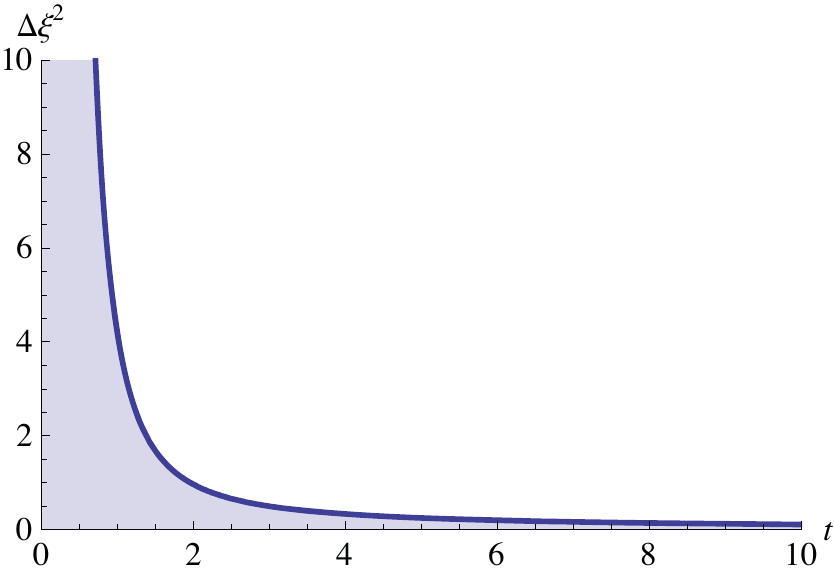}
    \caption[Plot of uncertainty in $\xi$ parameter for the Kalman parameter estimation in Example \ref{filtering:example:parameter_estimation}]{Plot of uncertainty in $\xi$ parameter for the Kalman parameter estimation in Example \ref{filtering:example:parameter_estimation} with $\Delta\xi_0^2 = 10^5$.}
    \label{fig:filtering:example:parameter_estimation:parameter_uncertainty}
\end{figure}
The covariance matrix
\begin{equation}
    P_t = \begin{pmatrix}
        \Delta x_t^2 & \Delta x_t \xi\\
        \Delta x_t \xi & \Delta \xi^2
    \end{pmatrix}
\end{equation}
admits an analytic solution using the techniques in Appendix \ref{appendix:riccati}.  Setting the initial $P_0 = \left(\begin{smallmatrix}
    0 & 0\\
    0 & \Delta \xi_0^2
\end{smallmatrix}\right)$, we find 
\begin{equation}
    P_t =  
     \begin{pmatrix}
     	\frac{1}{\coth{t}-\frac{\Delta \xi_0^2}{1+t \Delta \xi_0^2}} & 
     	 \frac{\Delta \xi_0^2}{\coth{t}-\Delta \xi_0^2+t \coth{t} \Delta \xi_0^2} \\
	     \frac{\Delta \xi_0^2}{\coth{t}-\Delta \xi_0^2+t \coth{t} \Delta \xi_0^2} &
	      \frac{\Delta \xi_0^2}{1+t \Delta \xi_0^2-\Delta \xi_0^2 \tanh{t}}
     \end{pmatrix}
\end{equation}
and the $\Delta\xi^2$ entry is plotted in Figure \ref{fig:filtering:example:parameter_estimation:parameter_uncertainty} for  $\Delta\xi_0^2 = 10^5$.  Ideally, we would want to take $\Delta\xi_0^2 \to \infty$ to reflect a complete uncertainty in $\xi$.  Doing so gives
\begin{equation}
    \lim_{\Delta \xi_0^2 \to \infty} P_t = \begin{pmatrix}
        \frac{t}{t\coth{t} - 1 } & \frac{1}{t\coth{t} - 1}\\
        \frac{1}{t\coth{t} - 1}  & \frac{1}{t - \tanh{t}}
    \end{pmatrix}
\end{equation}
which does not reduce to $P_0$ for $t = 0$.  This is because the infinite uncertainty in $\xi$ immediately washes out the certainty we had in $x_0$, since at the first time step, we have no clue what $\xi$ and $dW_t$ will do to the particle.  As such, knowing the initial position of the particle provides essentially no help in estimating the future position and forcing parameter when we have complete initial uncertainty in the parameter.

In order to test the filter, we use the numerical integration techniques in Appendix \ref{appendix:numerical_methods_for_stochastic_differential_equations} to integrate the dynamics of Eq.~\ref{filtering:example:parameter_estimation:system} for a \emph{known} value of $\xi$, say $\xi = 1$ .  Using this system, the measurement record for $dY_t$ is generated and fed into the filtering equation, which constructs the innovations process and provides an estimate of the parameter $\xi$ and the state $x_t$.  Figure \ref{fig:filtering:example:parameter_estimation:filter_performance} shows the performance of the filter for a single run with step-size $\Delta t = 10^{-3}$ and initial parameter uncertainty $\Delta \xi_0^2 = 10^5$.  The top plot shows the noisy measurement process $dY_t$, which is the only signal one gets experimentally.  The middle plot shows the true state $x_t$ and filtered state $\pi_t[x]$.  We see that after large initial fluctuations, the filter does a good job of latching on to the true particle position.  Similarly, the bottom plot shows large initial fluctuations in the estimate $\pi_t[\xi]$, as the filter has difficulty distinguishing forcing changes in the position due to $\xi$ versus changes due to the noise term $W_t$.  However, after this initial period, the Kalman filter quickly latches on to the true value $\xi = 1$ as was suggested by the deterministic uncertainty plotted in Fig.~\ref{fig:filtering:example:parameter_estimation:parameter_uncertainty}.  
\end{example}
\begin{figure}[bt]
    \centering
        \includegraphics[scale=1]{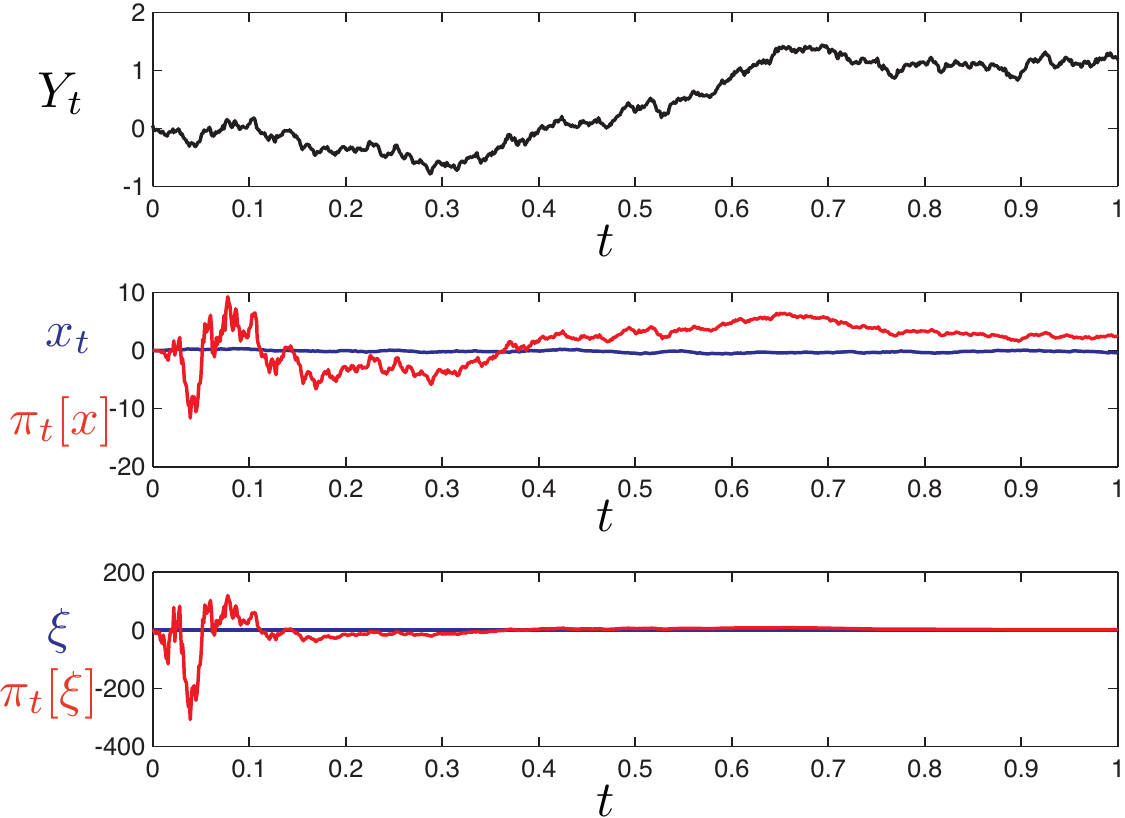}
    \caption[Plot of Kalman filter performance for parameter estimation in Example \ref{filtering:example:parameter_estimation}]{Plot of Kalman filter performance for parameter estimation in Example \ref{filtering:example:parameter_estimation}.  Top plot shows observations record $Y_t$.  Middle plot shows true state $x_t$ in blue and estimated state $\pi_t[x]$ in red.  Bottom plot shows true parameter value $\xi = 1$ in blue and estimate value $\pi_t[\xi]$ in red.  }
    \label{fig:filtering:example:parameter_estimation:filter_performance}
\end{figure}
\section{Summary}
Given such a whirlwind of a chapter, what are the take away points?  In a broad sense, I hope the exhausted reader is now convinced that analysis of continuous-time stochastic processes requires the use of rigorous mathematics, including axiomatic probability theory, measure theory and stochastic calculus.  But more importantly, I hope the reader is further convinced that one need not be an expert in these techniques to appreciate their necessity and to use the resulting formalism gained by such prudence.  Indeed, Example \ref{filtering:example:parameter_estimation} was meant to show how easy it is to apply these techniques to solve a ``real world'' inference problem.  Similarly, all the rigamarole that went into defining Gaussian white noise relative to the Wiener process and constructing stochastic processes in terms of the It\^{o} integral can be safely placed on the shelf; mindless applications of the It\^{o} rule and straightforward composition of stochastic differential equations are all we need to apply our techniques in practice.  I also hope the reader appreciates the power one gains by developing a clear mathematical framework, particularly with regard to filtering and, although not mentioned here, the filter's use for optimal control of stochastic systems \citep{Liptser:1977,Zhou:1996a}. 
\chapter{Quantum Probability and Filtering}
\label{chapter:quantum}

Most modern formulations of quantum mechanics present the theory in terms of the following postulates, here adapted from \citep{Nielsen:2000a}.
\begin{itemize}
    \item The \emph{state} of a \emph{pure} quantum system is completely described by a normalized vector $\ket{\psi}$ in a complex Hilbert space $\mathcal{H}$.  A statistical ensemble of pure states $\ket{\psi_j}$, with probabilities $p_j$, is called a \emph{mixed state} and is written as the \emph{density matrix} $\rho = \sum_j p_j\ketbra{\psi_j}{\psi_j}$.
    \item The \emph{time evolution} of a quantum system is described by a unitary operator $U_t$ and acts as
    $\ket{\psi_t} = U_t\ket{\psi_0}$ for pure states and $\rho_t = U_t\rho_0U_t^{\dag}$ for mixed states.
    \item Physical \emph{observations} are described by self-adjoint, linear operators on $\mathcal{H}$ with eigenvalues $\lambda_j$ and eigenprojectors $P_j$.  The probability of measuring outcome $\lambda_j$ is given by the \emph{Born rule}---$\bra{\psi}P_j\ket{\psi}$ for pure states and $\Tr{\rho P_j}$ for mixed states.
    \item Given a particular measurement outcome $j$, the conditioned state is determined via the \emph{projection postulate}, 
    \begin{equation} \label{quantum:eq:born_rule}
       \begin{split}
       	\psi' &= \frac{P_j\ket{\psi}}{\sqrt{\bra{\psi}P_j\ket{\psi}}} \qquad\text{ for pure states, }\\
	    \rho' &= \frac{P_j\rho P_j}{\Tr{\rho P_j}} \qquad\text{ for mixed states.}
       \end{split}
    \end{equation}
    \item The state of a \emph{composite} quantum system is described by the tensor product of the constituent systems, $\ket{\psi^{(1)}}\otimes \ket{\psi^{(2)}}\otimes \cdots$ for pure states and $\rho^{(1)}\otimes\rho^{(2)}\otimes\cdots$ for mixed states.
\end{itemize}
Nascent in these postulates are rudimentary features of probability theory.  Measurement outcomes are described by probabilities, which are assigned via the quantum state, much as the probability measure $\mathbb{P}$ assigns probabilities to elements in the $\sigma$-algebra, or by extension, to the potential values of random variables.  Similarly, the conditioning provided by the projection postulate is analogous to conditional expectation in probability theory.  As we turn towards solving the quantum filtering problem, in which we perform inference on the state of a quantum system conditioned on continuous measurements of that system, it would be natural to leverage the techniques we developed in solving the classical filtering problem.  But the exposition in the last chapter should have convinced you that care must be taken in developing a mathematically well-posed probability theory, filtering problem and solution.

\begin{figure}[b]
    \centering
        \includegraphics[scale=1]{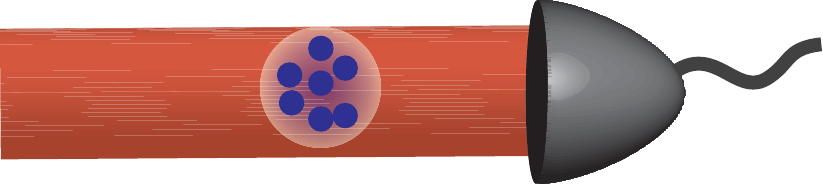}
    \caption[]{Schematic of continuous measurement in quantum optics, in which light scattered
    by a cloud of atoms is continuously measured by a photodetector. }
    \label{fig:quantum:filtering_setup}
\end{figure}

As such, the first section of this chapter reviews quantum probability theory, stressing its differences with the classical theory developed in Chapter \ref{chapter:classical}.  This will make the inchoate features noted above more precise and allows us to interpret the projection postulate as a consequence of conditional expectation rather than as a postulate.  In so doing, we will also find how the distinctly quantum possibility of non-commuting observables limits our ability to condition, which in turn will help formulate the quantum filtering problem.  The second section focuses on quantum stochastic processes, particularly the quantum analog of the Wiener process which we will relate to quadratures of the quantized electric field when in a vacuum or coherent state.  With those tools in hand, we will then solve the quantum filtering problem of quantum optics, depicted in Fig.~\ref{fig:quantum:filtering_setup}, where an optical field is scattered by a cloud of atoms.  Continuous measurements on the light correspond to an observations process which may be filtered to learn about the atomic system.  The exposition in this chapter closely follows \citep{Bouten:2007b}, with added perspective from \cite{Barchielli:2003,Geremia:2008a,Kummerer:1998a,vanHandel:2005a,Accardi:2002a}. 

\paragraph{A word on notation} 
\label{quantum:par:about_notation}
I will be cavalier about placing ``hats'' on operators in this section, as context tends to make that clear and I find $O$ more visually pleasing than $\hat{O}$.  On occasions where confusion may ensue, I will use them. 

\section{Quantum Probability Theory}
Quantum probability theory is the non-commutative generalization of Kolmogorov's axiomatic probability theory.  Just as in the classical case, subsuming discrete and continuous theories within a general measure-theoretic framework will provide an abstraction capable of carefully dealing with the filtering problem.  But unlike the case of classical probability theory, we do not start with an obvious ``intuitive'' theory of discrete quantum probability.  Consequently, we begin this section by studying finite-dimensional quantum systems, where we can focus on the essential ingredients of quantum probability.  After that, we can extend our definitions to infinite-dimensional systems by dealing with the subtleties of functional analysis and measure theory.  
\subsection{Quantum Probability for Discrete Systems}
Let us fix $\mathcal{H} = \mathbb{C}^n$, an $n$-dimensional, complex vector space.  Observables in this space are self-adjoint linear operators ${A} = {A}^{\dag}$, which may be represented as $n\times n$ complex matrices.  From the spectral theorem, we know that a given observable ${A}$ can be diagonalized as
\begin{equation}
    {A} = \sum_i a_i {P}_{a_i},
\end{equation}
where $a_i \in \mathbb{R}$ satisfies the eigenvalue relation
\begin{equation}
    {A}\ket{a_i} = a_i\ket{a_i}
\end{equation}
for the eigenvector $\ket{a_i}$ and associated projector ${P}_{a_i} = \ketbra{a_i}{a_i}$.  From the postulates of quantum mechanics, we know that the probability of observing a particular outcome $a_i$ when in the state $\rho$ is $\Tr{P_{a_i}\rho}$.  Clearly, ${A}$ is a lot like a random variable, in that it relates a particular value $a_i$ to a particular event, $P_i$.  Indeed, the spectral decomposition is essentially identical to the decomposition of random variables in terms of indicator functions we considered in Eq.~\ref{probability:eq:indicator_partition}.  We therefore see that the set of projectors $\{P_{a_i}\}$ is much like $\mathcal{F}^X$, the set of events generated by some random variable $X$.  Similarly, the linear map $\mathbf{P}(P_{a_i}) = \Tr{P_{a_i}\rho}$ is the measure or state which assigns probabilities to those events.  It is important to note that this relation is clearest in the Heisenberg picture, where the state remains fixed and the observables change in time.  This is in analogy to stochastic processes, which change in time relative to a fixed probability measure.

Things get a bit more complicated if we want to describe joint probabilities for two different events.  Classically, we simply have sets $F_1,F_2 \in \mathcal{F}$, so the joint probability for the two events is $\mathbb{P}(F_1 \cap F_2) = \mathbb{E}[\chi_{F_1}\chi_{F_2}]$.  In quantum mechanics, we consider projectors ${P}_{a_i},{P}_{b_i}$ for two different observables ${A},{B}$.  We then hope that the joint probability of observing outcome $a_i$ and $b_i$ is $\mathbf{P}[{P}_{a_i}{P}_{b_i}] = \Tr{{P_{a_i}}{P_{b_i}}\rho}$.  Yet, ${A}$ and ${B}$ will not commute in general, so that the joint probability calculation depends on the \emph{order} of the projectors involved.  But this is entirely contrary to what we mean by a joint probability, which is equivalent to the yes/no question ``Did outcome $a_i$ and outcome $b_i$ occur?''.  Surely this must be the same as the question ``Did outcome $b_i$ and outcome $a_i$ occur?''.  However, we simply cannot pose this question unambiguously in quantum mechanics.  This is no surprise really, as in a given experiment, we cannot ascribe underlying values to all observables consistently; i.e. there is no (local) hidden variable model for the system.  More concretely, if given a quantum spin, there is no sensible way to describe the event that the $x$ and $y$ projections take on specific values simultaneously\footnote{Note that we are talking about projective measurements on a single system, not generalized measurements which might allow for imprecise, but simultaneous, measurements of non-commuting observables.  Such measurements will fit within the quantum probability formalism by explicitly accounting for the auxiliary systems needed to perform them.}.

The ``incompatibility'' of non-commuting quantum events is really the only departure from classical probability theory.  In essence, it states that for a single experimental realization, we may only speak sensibly about a set of commuting observables or events; all other non-commuting events are incompatible with the experiment under consideration and it makes no sense to discuss their probabilities.  Thus, our first step in constructing a quantum probability space is to fix our a set of commuting observables in a mathematically well-defined structure.
\begin{definition}\label{def:star_algebra}
    A \emph{*-algebra} $\mathscr{A}$ is a set of operators closed under arbitrary complex-linear combinations, products and adjoints of its members and contains the identity operator.  A \emph{commutative *-algebra} is a *-algebra whose elements all commute.
\end{definition}
\noindent As was the case classically, it will often be useful to consider generating such a set from a particular observable ${A}$.
\begin{definition}\label{def:generated_star_algebra}
    Given an operator ${A}$, the set $\mathscr{A} = \{{X} : {X} = f({A}), f: \mathbb{R}\mapsto \mathbb{C}\}$ is the smallest commutative *-algebra generated by ${A}$. 
\end{definition}
The generated *-algebra captures the structure of compatible observations, in that given the spectral decomposition of the observable of interest, ${A}$, we may directly calculate any observable $f({A}) \in \mathscr{A}$ as
\begin{equation} \label{quantum:eq:spectral_decomposition_discrete}
    A = \sum_{i} a_i {P}_{a_i} \implies f({A}) = \sum_i f(a_i)P_{a_i} .
\end{equation}
Thus, if we measure outcome $a_i$ we immediately \emph{know} the outcome for any compatible observation, specifically $f(a_i)$, up to any degeneracies in the eigenspectrum.  It is therefore the eigenspace, represented by the label $i$, which truly characterizes compatible observables, where the actual value $a_i$ is just there to give us the correct units.  As we will soon see, this is enough to develop most of a corresponding classical probability space.  The only remaining ingredient is to formalize the measure for the space, as given in the following defintion.
\begin{definition}\label{def:state_quantum}
    A \emph{state} on a *-algebra is the linear map $\mathbf{P} : \mathscr{A} \mapsto \mathbb{C}$ which is positive, ${A} \geq 0 \implies \mathbf{P}({A}) \geq 0$ and normalized $\mathbf{P}(I) = 1$.  Note that one can always write this as $\mathbf{P}({A}) = \Tr{{A}\rho}$ for some density matrix $\rho$.
\end{definition}
\noindent  We now have all the ingredients necessary to map a given commutative $*$-algebra and state into a corresponding classical probability space.
\begin{theorem}[\textbf{Spectral Theorem, Finite Dimensions, (Adapted from Theorem 2.4 in \citep{Bouten:2007b})}]\label{thm:discrete_spectral_theorem}
    Let $\mathscr{A}$ be a commutative *-algebra on a finite-dimensional Hilbert space and let $\mathbf{P}$ be a state on $\mathscr{A}$.  Then there exists a probability space $(\Omega,\mathcal{F},\mathbb{P})$ and a linear, bijective map $\iota$ from elements of $\mathscr{A}$ to measurable functions on $\Omega$ such that $\iota({A}{B}) = \iota({A})\iota({B})$ and $\iota({A}^{\dag}) = \iota({A})^*$ and the probability measure is determined by $\mathbf{P}({A}) = \mathbb{E}_{\mathbb{P}}[\iota({A})]$.
\begin{proof}
    We will simply construct the probability space by hand, taking care to formalize the intuitive relations between projectors and events discussed above.  To begin, given that $\mathscr{A}$ is commutative, we may simultaneously diagonalize each $n \times n$ matrix ${A} \in \mathscr{A}$; for convenience, suppose that each ${A}$ is already diagonal with entries $A_{ii}$.  Then set $\Omega = \{1,\ldots,n\}$, so that $\omega \in \Omega$ serve as labels for the different eigenspaces.  Define the map $\iota({A}) : \Omega \mapsto  \mathbb{C}$ by $\iota({A})(i) = A_{ii}$.  Thus the map $\iota$ takes operators in $\mathscr{A}$ to random variables on the dummy sample space $\Omega$.  Each random variable $\iota({A})$ just takes on the appropriate eigenvalue of ${A}$ when given the eigenspace label $\omega \in \Omega$.  We then generate the $\sigma$-algebra as $\mathcal{F} = \{ \iota({A}) : {A} \in \mathscr{A}\}$ and define the probability measure via $\mathbb{P}(F) = \mathbf{P}(\iota^{-1}(\chi_F))$ for $F \in \mathcal{F}$.
\end{proof}
\end{theorem}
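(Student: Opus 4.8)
The plan is to turn the intuitive dictionary ``projectors $\leftrightarrow$ events, state $\leftrightarrow$ measure'' into an honest $*$-isomorphism onto the algebra of random variables over an explicitly constructed probability space. The enabling structural fact is that in finite dimensions a commuting family can be simultaneously diagonalized. First I would observe that every $A \in \mathscr{A}$ is normal: since $\mathscr{A}$ is commutative and closed under adjoints, $A$ commutes with $A^{\dag}$. A commuting family of normal matrices admits a common orthonormal eigenbasis, so I may fix one basis in which \emph{every} element of $\mathscr{A}$ is simultaneously diagonal.

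Next I would build the probability space. Rather than indexing outcomes by the $n$ basis vectors, I would take $\Omega$ to be the set of minimal nonzero projections $E_1,\dots,E_m$ of $\mathscr{A}$ (equivalently, the distinct joint eigenspaces); this is the choice that makes $\iota$ a genuine bijection. Each $A \in \mathscr{A}$ is constant on each $E_k$, taking some eigenvalue $\lambda_k(A)$, and I define the random variable $\iota(A)(k) = \lambda_k(A)$. I take $\mathcal{F}$ to be the power set of $\Omega$ and set $\mathbb{P}(\{k\}) = \mathbf{P}(E_k)$, extended additively. That $\mathbb{P}$ is a bona fide probability measure follows immediately from positivity of the state ($\mathbf{P}(E_k) \geq 0$ since $E_k \geq 0$) together with $\mathbf{P}(I)=1$ and $\sum_k E_k = I$.

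The homomorphism properties are then routine in the diagonalizing basis: linearity of $\iota$ is immediate; multiplicativity $\iota(AB) = \iota(A)\iota(B)$ is just the entrywise multiplication of diagonal matrices; and $\iota(A^{\dag}) = \overline{\iota(A)}$ records that the diagonal of $A^{\dag}$ is the conjugate of that of $A$. For the defining identity I would expand $A = \sum_k \lambda_k(A) E_k$, apply $\mathbf{P}$ by linearity, and recognize $\sum_k \lambda_k(A)\,\mathbf{P}(E_k) = \sum_k \iota(A)(k)\,\mathbb{P}(\{k\}) = \mathbb{E}_{\mathbb{P}}[\iota(A)]$.

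The one place I would take care---and the main obstacle---is bijectivity. Injectivity is easy, since distinct elements of $\mathscr{A}$ have distinct diagonals and hence distinct images. Surjectivity onto all functions on $\Omega$ is where the $*$-algebra hypothesis does real work: I must know that $\mathscr{A}$ is exactly the linear span of its minimal projections $\{E_k\}$, so that for an arbitrary $f$ the operator $\sum_k f(k) E_k$ lies in $\mathscr{A}$ and maps to $f$. This is precisely the finite-dimensional functional calculus---a commutative $*$-algebra contains the spectral projections of each of its elements and is generated by the resulting minimal projections. Indexing $\Omega$ by atoms (rather than by a full eigenbasis $\{1,\dots,n\}$, which only yields a bijection onto functions measurable with respect to the generated $\sigma$-algebra) is the bookkeeping choice that keeps this step clean.
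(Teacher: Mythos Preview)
Your proof is correct and follows the same essential strategy as the paper---simultaneous diagonalization of the commutative $*$-algebra, followed by an explicit construction of $(\Omega,\mathcal{F},\mathbb{P})$ and the map $\iota$.  The one noteworthy difference is your choice of sample space.  The paper takes $\Omega = \{1,\ldots,n\}$, one point per basis vector, and then compensates by letting $\mathcal{F}$ be the $\sigma$-algebra \emph{generated} by the random variables $\iota(A)$; bijectivity onto the $\mathcal{F}$-measurable functions then holds because measurability collapses precisely those points that $\mathscr{A}$ cannot distinguish.  You instead take $\Omega$ to be the set of minimal projections (joint eigenspaces) of $\mathscr{A}$ and let $\mathcal{F}$ be the full power set.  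This is the cleaner bookkeeping choice---it makes surjectivity transparent and sidesteps the need to argue that the generated $\sigma$-algebra has exactly the right measurable functions---and you correctly flag this as the place where the $*$-algebra hypothesis (closure under functional calculus, hence containment of the minimal projections $E_k$) does its work.  The two constructions are equivalent up to identifying atoms of the paper's $\mathcal{F}$ with your points of $\Omega$.
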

Thus, a commutative *-algebra and quantum state are \emph{equivalent} to a classical probability space.  Once restricted to a commuting set of observables, there is nothing particularly quantum left to worry about.  Of course, we will want to consider a variety of experimental realizations, in which on each trial we might study different observables which do not commute.  This generalization suggests the following definition of a finite-dimensional quantum probability space.
\begin{definition}\label{def:quantum_probability_space_finite}
    A \emph{finite-dimensional quantum probability space} is the pair $(\mathscr{N},\mathbf{P})$, where $\mathscr{N}$ is a $*$-algebra of operators on a finite-dimensional Hilbert space and $\mathbf{P}$ is a state on $\mathscr{N}$.
\end{definition}
\noindent Note that unlike a classical probability space, there is no sample space in the quantum setting; the corresponding classical space simply inherits an $\Omega$ passively through the eigenspace labels.  For the $n$-dimensional space $\mathcal{H}$, we tend to take $\mathscr{N}$ to be the set of all bounded operators on that space, written $\mathscr{B}(\mathcal{H})$.  For a given experimental setup, one selects the commutative sub-*-algebra $\mathscr{A} \subset \mathscr{N}$ relevant for the observations we intend to make.  Using Theorem \ref{thm:discrete_spectral_theorem}, one can then construct the corresponding classical probability space and calculate a variety of statistics using techniques from the previous chapter.  
\begin{example}[Example 2.6 in \citep{Bouten:2007b}] \label{quantum:example:basic_qubit_space}
As a concrete example, consider a single spin-1/2 particle or qubit, which has Hilbert space $\mathcal{H} = \mathbb{C}^2$.  The $*$-algebra of operators may be expanded as
     \begin{equation}
    	 \mathscr{N} = \{\alpha_0 I + \alpha_1 \sigma_x + \alpha_2 \sigma_y + \alpha_3 \sigma_z : \alpha_i \in \mathbb{C}\}
     \end{equation}
     where the Pauli matrices are given by
    \begin{equation}
        I = \begin{pmatrix}
            1 & 0 \\
            0 & 1
        \end{pmatrix} \quad
        \sigma_x = \begin{pmatrix}
            0 & 1 \\
            1 & 0
        \end{pmatrix} \quad
        \sigma_y = \begin{pmatrix}
            0 & -i \\
            i & 0
        \end{pmatrix} \quad
        \sigma_z = \begin{pmatrix}
            1 & 0 \\
            0 & -1
        \end{pmatrix} .
    \end{equation}
    To round out the quantum probability space, we consider the pure qubit state pointing up along $x$, written $\ket{+x} = \frac{1}{\sqrt{2}}\left(\begin{smallmatrix} 1\\1 \end{smallmatrix}\right)$ in the standard basis, so that the quantum probability state is $\mathbf{P}(A) = \bra{+x}A\ket{+x}$.  This completes the quantum probability space $(\mathscr{N},\mathbf{P})$.
    
    In order to apply the spectral theorem, we select the commutative sub-algebra $\mathscr{A}$ generated by the observable $\sigma_z$.  Admittedly, there aren't really many other interesting observables in this commutative algebra, but we can still work through the quantum probability formalism.  Since $\sigma_z$ is already diagonal as written, we read off the two-eigenvalues $\pm z = \pm 1$ and projectors
    \begin{equation}
        P_{+z} = \begin{pmatrix}
            1 & 0\\ 0 & 0
        \end{pmatrix}
        \qquad
        P_{-z} = \begin{pmatrix}
            0 & 0 \\ 0 & -1
        \end{pmatrix}
    \end{equation}
    Applying the spectral theorem, we introduce $\Omega = \{1,2\}$ and $\mathcal{F}=\{\varnothing, \{1\},\{2\}, \Omega\}$.  Since observables in $\mathscr{A}$ are of the form $\alpha P_{+z} + \beta P_{-z}$ for $\alpha,\beta \in \mathbb{C}$, we simply need to know how $\iota$ acts on the projectors.  This is simply $\iota(P_{+}) = \chi_{\{1\}}$ and $\iota(P_-) = \chi_{\{2\}}$.  We then see that, for example,  $\mathbb{P}(\{1\}) = \mathbf{P}(\iota^{-1}(\chi_{\{1\}}) = \bra{+x}P_{+z}\ket{+x} = 1/2$ as expected.
\end{example}

The quantum probability formalism will also allow us to calculate conditional expectations, in which we determine the expected value of a future measurement outcome given a current measurement outcome.  Clearly such an expectation only makes sense when the two measurements are compatible, otherwise there would never be an experiment in which we could even in theory attempt to assign observed values to each measurement simultaneously.  Yet, this may appear troubling at first.  For example, consider a spin-1/2 particle, on which we seek to condition a $\sigma_y$ measurement given a $\sigma_z$ measurement.  Although these observables do not commute, it appears completely sensible to calculate a future expected $\sigma_y$ measurement given a $\sigma_z$ outcome.  Indeed, we know it to be precisely zero, since the quantum state is in one of the two $\sigma_z$ eigenstates after the $\sigma_z$ measurement and both eigenstates have zero $\sigma_y$ expectation.  We clearly have a consistent way to describe observed values for these non-commuting observables, so how do we reconcile this with the limitations imposed by quantum probability theory? 

It is actually straightforward if we carefully consider what conditional expectation means in this context.  Classically for two events $A,B$, the conditional probability of $B$ given $A$ is the probability that $B$ is true given that $A$ is also true in the same realization.  For the spin under consideration, a naive statement of conditional expectation corresponds to the current expected $y$-projection value of the spin given that it also currently has a particular $z$-projection value.  We know that this is not sensible from fundamental quantum uncertainty, as the spin cannot have perfectly defined $\sigma_z$ and $\sigma_y$ values at the same time.  However, it is more likely that we meant to consider the conditional expectation which corresponds to the expected $\sigma_y$ measured value conditioned on a previous $\sigma_z$ measurement.  But this means that the expected $\sigma_z$ value is actually written down somewhere and in order to sensibly talk about performing \emph{both} measurements, we really need to include this other physical system which was used to measure the spin indirectly.  This corresponds to including a physical model of the measurement apparatus or probe system used to perform the indirect $\sigma_z$ measurement in our quantum probability model.  After all, in an experiment there is some physical process by which we learn the direction of the spin, perhaps by coupling the position of the particle to its spin state via a Stern-Gerlach device, after which the position tells us about the spin state.  By including such extra quantum degrees of freedom explicitly, we can then pose the measurement of $\sigma_z$ as an indirect measurement on an auxiliary space, which will then commute with direct $\sigma_y$ measurements on the spin\footnote{Perhaps this seems like only sidestepping the issue, as one can always question why one measurement is considered direct whereas the other is considered indirect.  Moreover, how do we measure the position of the spin after it goes through the Stern-Gerlach device?  Isn't that just another measurement that also requires a physical measurement model?  I agree that the so-called Heisenberg chain of measurements is unsettling, but the issues are more philosophical than practical.  At some point, perhaps all the way to the neurons in our brain, we will assume that a projective measurement happens.  For the sake of being able to consider conditional expectation and inference within the quantum probability setting, it will be sufficient to consider projective measurements only one level away, on the probe system, which could include the entire universe save the primary quantum system if that is more comforting.  } .  

Continuing along then, we see that conditional expectation can be posed sensibly if we include the measurement model within the quantum probability space.  We therefore define the conditional expectation by first selecting the commutative sub-algebra $\mathscr{A} \subset \mathscr{N}$ which represents the measurement we will condition upon.  Then there is some other set $\mathscr{A}' = \{{B} \in \mathscr{N} : {A}{B} = {B}{A}\quad \forall\quad {A} \in \mathscr{A}\}$ called the \emph{commutant} which represents the set of observables which can be simultaneously diagonalized with any ${A} \in \mathscr{A}$.  For some ${B} \in \mathscr{A}'$, the conditional expectation is then inherited from the corresponding classical probability space as $\mathbf{P}(B|\mathscr{A}) = \iota^{-1}(\mathbb{E}_{\mathbb{P}}(\iota({B})|\sigma\{\iota(\mathscr{A})\}))$.  It is important to note that elements in $\mathscr{A}'$ need not commutate with each other, just as they need not be in $\mathscr{A}$ directly.  Physically, the elements in $\mathscr{A}$ are the commutative set of observables on the probe system and elements in $\mathscr{A}'$ are the observables on the initial quantum system, which trivially commute with members in $\mathscr{A}$ but not necessarily each other.  The example at the end of this section should help clarify these different *-algebras.

Although this is enough to perform calculations, one would hope that the abstract mapping between quantum and classical in Theorem \ref{thm:discrete_spectral_theorem} would allow us to calculate the conditional expectation without explicitly working through the $\iota$ mapping.  This turns out to be possible, especially in light of the least-squares projection interpretation of conditional expectation.  The finite-dimensional *-algebra is actually a finite dimensional linear linear space with the Hilbert-Schmidt inner product\footnote{Again, it is actually not quite enough to be a norm, as $\normsq{{A}} = \langle {A},{A}\rangle$ may be zero even if ${A}$ is not the zero operator.} $\langle {A}, {B} \rangle = \mathbf{P}({A}^{\dag}{B})$.  The conditional expectation is then precisely the orthogonal projection from $\mathscr{A}'$ onto the linear subspace $\mathscr{A}$.  We can expand this projection easily in terms of an orthogonal basis for $\mathscr{A}$, which from the spectral theorem is simply the set of eigenprojectors of $\mathscr{A}$.  We then have
\begin{equation} \label{quantum:eq:discrete_conditional}
    \mathbf{P}(B | \mathscr{A}) = \sum_{i;\mathbf{P}({P}_{a_i}) \neq 0}
                            \frac{{P}_{a_i}}{\norm{{P}_{a_i}}_{\mathbf{P}}}
                            \left\langle  \frac{{P}_{a_i}}{\norm{{P}_{a_i}}_{\mathbf{P}}}
                            , {B}\right\rangle
                            = \sum_{i;\mathbf{P}({P}_{a_i}) \neq 0}
                            \frac{\mathbf{P}({P}_{a_i}{B})}
                                 {\mathbf{P}({P}_{a_i})} {P_{a_i}}
\end{equation}
which looks exactly like our explicit formula for discrete conditional expectations in Eq.~\eqref{probability:eq:discrete_conditional}.  Similar to what we saw in that equation, the conditional expectation is an \emph{operator} on $\mathscr{A}$ and we see that the weighting factors in that basis, given by $  \mathbf{P}({P}_{a_i}{B})/\mathbf{P}({P}_{a_i})$, are the expected values of ${B}$ restricted to that eigenspace.  Note that if ${B} \not\in \mathscr{A}'$, the inner product would depend on the order of its arguments and would in general give a complex coefficient in the sum even if ${B}$ were an observable. 

Before attempting to extend these definitions to infinite-dimensional spaces, we close this section with a physical example which will hopefully clarify the above definitions. 
\begin{example}[Based on Example 2.9 in \citep{Bouten:2007b}]
    We work with the qubit system introduced in Example \ref{quantum:example:basic_qubit_space}, but here consider conditioning a $\sigma_z$ measurement on an initial $\sigma_x$ measurement.  As we just found in developing the conditional expectation, since $[\sigma_z,\sigma_x] \neq 0$, we need to introduce an auxiliary probe system in order to discuss conditioning the measurement.  As such, we introduce another qubit system, with quantum probability space $( \mathcal{N}_p, \mathbf{P}_p )$, so that the joint space is $(\mathcal{N}_s \otimes \mathcal{N}_p, \mathbf{P}_s \otimes \mathbf{P}_p)$, where the subscripts stand for system and probe.  Our measurement procedure should work for any system state (afterall, the point of measuring is to learn something we don't know), so it is described by the arbitrary density matrix $\rho_s$.  Conversely, the probe must start in a known fiducial state, here $\ket{+z}$, so that any changes in its state reflect information about the system, thus $\mathbf{P}_p({A}) = \Tr{{A} P_{+z}}$.
    
    Now suppose we are only capable of performing $\sigma_z$ measurements.  Therefore, in order to perform the indirect $\sigma_x$ system measurement using the probe qubit, we must find a unitary $U$ such that measuring $U^{\dag}(I \otimes \sigma_z)U$ gives the same statistics as measuring $\sigma_x \otimes I$ would on the system prior to the interaction.  Also note that the future direct $\sigma_z$ measurement on the spin will then commute with this indirect measurement, i.e. $[U^{\dag}(I \otimes \sigma_z)U, U^{\dag}(\sigma_z \otimes I)U] = 0$, so that $U^{\dag}(\sigma_z \otimes I)U$ is in the commutant of $U^{\dag}(I \otimes \sigma_z)U$ and the conditional expectation is well-defined. 
    
    Following a general procedure in Example 2.9 in \citep{Bouten:2007b}, we construct the unitary
    \begin{equation}
        U = P_{+x} \otimes I + P_{-x} \otimes \sigma_x,
    \end{equation}
    where 
    \begin{equation}
        P_{\pm x} = \ketbra{\pm x}{\pm x} = \frac{1}{2}\begin{pmatrix}
            1 & 1\\
            1 & \pm 1
        \end{pmatrix}
        \qquad
        \sigma_x = \ketbra{+z}{-z} + \ketbra{-z}{+z} .
    \end{equation}
    We now verify explicitly that measuring $\pm z$ on the probe qubit occurs with the same probabilities as measuring $\pm x$ on the initial system qubit.  The probability of measuring $+z$ is given by
    \begin{align}
        \mathbf{P}_s\otimes\mathbf{P}_p(U^{\dag}(I \otimes P_{+z})U)
               &= \mathbf{P}_s\otimes\mathbf{P}_p( P_{+x} \otimes P_{+z} + P_{-x} \otimes P_{-z})\\
               &= \mathbf{P}_s(P_{+x})\mathbf{P}_p(P_{+z}) + \mathbf{P}_s(P_{-x})
                        \underbrace{\mathbf{P}_p(P_{-z})}_{=0}\\
               &= \mathbf{P}_s(P_{+x}) \label{quantum:example_discret:ref}
    \end{align}
    where the particular initial probe state $\ket{+z}$ implies $\mathbf{P}_p(P_{-z}) = \Tr{P_{-z}P_{+z}} = 0$. Similarly, the probability for measuring $-z$ is given by
    \begin{align}
        \mathbf{P}_s\otimes\mathbf{P}_p(U^{\dag}(I \otimes P_{-z})U)
               &= \mathbf{P}_s\otimes\mathbf{P}_p( P_{+x} \otimes P_{-z} + P_{-x} \otimes P_{+z})\\
               &= \mathbf{P}_s(P_{-x})
    \end{align}
    so that the probabilities correspond as desired. 
    
    Given $U$, we may now consider the conditional expectation.  We set $\mathscr{A}$ as the commutative *-algebra generated by the probe measurement $U^{\dag}(I \otimes \sigma_z) U$ so that $U^{\dag}(\sigma_z \otimes I)U \in \mathscr{A}'$ as desired.  From Eq.~\ref{quantum:eq:discrete_conditional}, we find
    \begin{align}
        &\mathbf{P}_s\otimes\mathbf{P}_p(U^{\dag}( \sigma_z \otimes I)U | \mathscr{A})\\
         &= \sum_{a = \pm z} \frac{\mathbf{P}_s\otimes\mathbf{P}_p
                        (U^{\dag}(\sigma_z \otimes I )UU^{\dag}(I \otimes P_{a})U)}
                        {\mathbf{P}_s\otimes\mathbf{P}_p(U^{\dag}(I \otimes P_{a})U)}
                        U^{\dag}(I \otimes P_a)U\\
        &= \sum_{a = \pm z} \frac{\mathbf{P}_s\otimes\mathbf{P}_p
                       (U^{\dag}(\sigma_z \otimes P_{a})U)}
                       {\mathbf{P}_s\otimes\mathbf{P}_p(U^{\dag}(I \otimes P_{a})U)}
                       U^{\dag}(I \otimes P_a)U                                             
    \end{align}
    Without a loss of generality, lets consider one of the conditional probability terms in this sum, say for $a = +z$, the $\mathbf{P}_s\otimes\mathbf{P}_p
                   (U^{\dag}(\sigma_z \otimes P_{+z})U)/
                   \mathbf{P}_s\otimes\mathbf{P}_p(U^{\dag}(I \otimes P_{+z})U)$ factor.  We know from Eq.~\eqref{quantum:example_discret:ref} that the denominator is simply the probability for the system qubit to be measured in $+x$,i.e.~$\mathbf{P}_s(P_{+x})$.  Focusing on the numerator, we find
    \begin{multline}
        \mathbf{P}_s\otimes\mathbf{P}_p(U^{\dag}(\sigma_z \otimes P_{+z})U) =
         \mathbf{P}_s\otimes\mathbf{P}_p(P_{+x}\sigma_z P_{+x} \otimes P_{+z})
        +\mathbf{P}_s\otimes\mathbf{P}_p(P_{-x}\sigma_z P_{+x} \otimes \sigma_x P_{+z})\\
        +\mathbf{P}_s\otimes\mathbf{P}_p(P_{+x}\sigma_z P_{-x} \otimes P_{+z} \sigma_x )
        + \mathbf{P}_s\otimes\mathbf{P}_p(P_{-x}\sigma_z P_{-x} \otimes P_{-z} )
    \end{multline}
But since $\mathbf{P}_p(P_{-z}) = \mathbf{P}_p(\sigma_x P_{+z}) = \mathbf{P}_p(P_{+z}\sigma_x) = 0$ and $\mathbf{P}_p(P_{+z}) = 1$, only the first term survives.  A similar calculation holds for the $a = -z$ term in the sum, so that the conditional expectation is
\begin{equation}
    \begin{split}
   	 \mathbf{P}_s\otimes\mathbf{P}_p(U^{\dag}( \sigma_z \otimes I)U | \mathscr{A}) &= 
	       \frac{\mathbf{P}_s(P_{+x}\sigma_z P_{+x})}{\mathbf{P}_s(P_{+x})} U^{\dag}(I \otimes P_{+z})U\\
	  & + \frac{\mathbf{P}_s(P_{-x}\sigma_z P_{-x})}{\mathbf{P}_s(P_{-x})} U^{\dag}(I \otimes P_{-z})U .
    \end{split}   
\end{equation}
Recalling that $\mathbf{P}_s({A}) = \Tr{{A}\rho_s}$, we introduce the conditioned density matrices $\rho_{\pm x}=  P_{\pm x} \rho_s P_{\pm x} / \Tr{P_{\pm x}\rho_s}$, so that we may further simplify our expression to
\begin{equation}
       	 \mathbf{P}_s\otimes\mathbf{P}_p(U^{\dag}( \sigma_z \otimes I)U | \mathscr{A}) = 
       	    \Tr{\rho_{+x} \sigma_z} U^{\dag}(I \otimes P_{+z})U
       	  + \Tr{\rho_{-x} \sigma_z} U^{\dag}(I \otimes P_{-z})U
\end{equation}
We see that the conditional expectation is a diagonal observable in $\mathcal{A}$, where the eigenvalues associated with each outcome of the probe measurement are precisely the conditional probabilities one finds using the Born rule!   That is, once the probe measurement determines whether outcome $U^{\dag}(I \otimes P_{+z})U$ or $U^{\dag}(I \otimes P_{-z})U$ occurs, this conditional observable immediately reduces to the corresponding expected value of $\sigma_z$ for the conditioned qubit system state.  What is perhaps remarkable, is that the Born rule is then a consequence of conditional expectation, which is \emph{not} a axiomatic definition, but a derived one following the Radon-Nikodym approach and using the least-squares criterion.  This is in contrast to the quantum case, where the Born rule is assumed axiomatically.
\end{example}
\subsection{Quantum Probability Spaces}
The task of developing a general quantum probability theory which describes both finite and infinite dimensional spaces is fraught with the same difficulties we faced in developing a general classical probability theory, but now the infinities can confound us in two ways---issues related to simply describing infinite dimensional quantum systems and issues related to describing infinite dimensional probability spaces.  For the former case, this means the relatively straightforward linear algebraic tools in the previous section must be promoted to more sophisticated functional analysis tools.  For the latter, we again will use methods of measure theory.

We begin by considering a complex Hilbert space $\mathcal{H}$, which may be finite or infinite dimensional.  We further consider $\mathscr{B}(\mathcal{H})$, the set of bounded, linear operators on $\mathcal{H}$.  By restricting consideration to bounded operators for the time being, we can avoid some details which are better handled after introducing the quantum probability space.  As is familiar for quantum systems, the Hilbert space adjoint of an operator $A \in \mathscr{B}(\mathcal{H})$ is written $A^{\dag}$ and is defined by $\bra{\psi}(A\ket{\phi}) = (\bra{\psi}A^{\dag})\ket{\phi}$ for all $\ket{\psi},\ket{\phi} \in \mathcal{H}$.  Given that $\mathscr{B}(\mathcal{H})$ is already a Hilbert space (a complex vector space with norm given by the trace inner product) with operator multiplication, it is an algebra.  Adding in the adjoint operation via $\dag$ makes $\mathscr{B}(\mathcal{H})$ a *-algebra by Definition \ref{def:star_algebra}.  

One would hope that a *-algebra defines a suitable set of operators for a quantum probability space, but this is not true for infinite-dimensional systems.  In particular, we are faced with issues of convergence of a sequence of such operators, which is important for defining quantum probability operations as a limit of sequences of simple operators.  The problem is that there are multiple types of convergence which induce different topologies on $\mathscr{B}(\mathcal{H})$.  Consider a sequence of operators $\{T_n\}$ on $\mathcal{H}$.  By stating that $T_n$ converges to $T$, we could mean that $\norm{T_n -T} \mapsto 0$, where the norm is induced via the Hilbert-Schmidt, trace inner-product norm on the $\mathscr{B}(\mathcal{H})$ Hilbert space.  We could instead mean that $T_n \ket{\psi} \mapsto T\ket{\psi}$ for any $\ket{\psi} \in \mathcal{H}$ or that $\mu(T_n \ket{\psi}) \mapsto \mu(T \ket{\psi})$ for all linear functions $f : \mathcal{H} \mapsto \mathbb{C}$.  A plethora of different topologies defined relative to different convergences exists for sequences\footnote{For topological spaces, we really consider the generalization of sequences called ``nets'', which is a function from a directed set to the topological space.  Sequences are essentially nets where the directed set is the natural numbers.  Generalizing to nets allows one to consider convergence in topological spaces which are are not ``first-countable'', lacking a countable neighborhood basis for elements in the space. I'm already way out of my league on this one, so I defer to textbooks on topology for the real details.} in $\mathscr{B}(\mathcal{H})$.   The following definition classifies the particular topology useful for defining a quantum probability space.
\begin{definition}\label{def:normal_topology}
    Consider a positive linear functional $ g : \mathscr{B}(\mathcal{H}) \mapsto \mathbb{C}$.  It is called \emph{normal} if $g(\sup_a A_a) = \sup_a g(A_a)$ for any upper bounded increasing net $(A_a)$ of positive $A_a \in \mathscr{B}(\mathcal{H})$.  The \emph{normal topology} on $\mathscr{B}(\mathcal{H})$ is defined by the family of seminorms $\{A \mapsto \abs{g(A)} : g \text{ normal } \}$.
\end{definition}
Given this topology, we may define the algebra suitable for quantum probability spaces.
\begin{definition}\label{def:von_neumann_algebra}
    A \emph{von Neumann algebra}\footnote{There are other equivalent ways to define a von Neumann algebra, often in terms of the weak and strong operator topologies, see \citep{Redei:2007a}.} $\mathscr{N}$ is a *-subalgebra of $\mathscr{B}(\mathcal{H})$ which is closed in the normal topology.  A state $\mathbf{P}$ on $\mathscr{N}$ is the restriction of a normal state on $\mathscr{B}(\mathcal{H})$ to $\mathscr{N}$. 
\end{definition}
Of course, it might be tedious to study the topology of some group of operators whenever we are interested in defining a von Neumann algebra.  Fortunately, the following theorem will enable us to generate a von Neumann algebra from a relevant set of operators.
\begin{theorem}[\textbf{Double Commutant Theorem (Theorem 3.8 in \citep{Bouten:2007b})}]\label{thm:double_commutant}
    Let $\mathscr{S} \subset \mathscr{B}(\mathcal{H})$ be a self-adjoint set ( if $S \in \mathscr{S}$ then $S^{\dag} \in \mathscr{S}$).  Then $\mathscr{A} = \mathscr{S}''$ is the smallest von Neumann sub-algebra in $\mathscr{B}(\mathcal{H})$ which contains $\mathscr{S}$.
\end{theorem}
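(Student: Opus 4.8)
The plan is to exploit the elementary algebraic properties of the commutant and to reduce the entire statement to the single identity $\mathscr{M}'' = \mathscr{M}$ for an arbitrary von Neumann algebra $\mathscr{M}$. First I would record the standing facts about commutants: for any set $\mathscr{S}$, the commutant $\mathscr{S}'$ is a subalgebra of $\mathscr{B}(\mathcal{H})$ containing $I$, and it is automatically closed in the normal topology because commutation with a fixed operator is preserved under the limits defining that topology. When $\mathscr{S}$ is self-adjoint, $\mathscr{S}'$ is moreover closed under $\dag$, hence is itself a von Neumann algebra; applying this observation once more shows that $\mathscr{S}'' = (\mathscr{S}')'$ is a von Neumann algebra. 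That $\mathscr{S} \subseteq \mathscr{S}''$ is immediate, since each $S \in \mathscr{S}$ commutes with every element of $\mathscr{S}'$ by the very definition of $\mathscr{S}'$. This establishes that $\mathscr{S}''$ is a von Neumann algebra containing $\mathscr{S}$.

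For minimality I would use that taking commutants reverses inclusions. If $\mathscr{M}$ is any von Neumann algebra with $\mathscr{S} \subseteq \mathscr{M}$, then $\mathscr{M}' \subseteq \mathscr{S}'$ and hence $\mathscr{S}'' \subseteq \mathscr{M}''$, so it suffices to prove $\mathscr{M}'' = \mathscr{M}$. The inclusion $\mathscr{M} \subseteq \mathscr{M}''$ is trivial, exactly as above, so the content is the reverse inclusion $\mathscr{M}'' \subseteq \mathscr{M}$. This is the genuine theorem of von Neumann and is where the real work lies.

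To prove $\mathscr{M}'' \subseteq \mathscr{M}$, I would show that every $T \in \mathscr{M}''$ lies in the strong-operator closure of $\mathscr{M}$, which equals $\mathscr{M}$ because a unital *-subalgebra is closed in the normal topology precisely when it is closed in the strong (equivalently weak) operator topology. Concretely, given $T \in \mathscr{M}''$, vectors $\xi_1,\dots,\xi_n \in \mathcal{H}$ and $\epsilon > 0$, I must produce $A \in \mathscr{M}$ with $\norm{(T-A)\xi_i} < \epsilon$ for all $i$. For a single vector $\xi$ the argument is clean: let $P$ be the orthogonal projection onto $\overline{\mathscr{M}\xi}$. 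Since $\mathscr{M}$ is a *-algebra containing $I$, this subspace is $\mathscr{M}$-invariant, so $P \in \mathscr{M}'$; then $T \in \mathscr{M}''$ commutes with $P$ and leaves its range invariant, and because $\xi = I\xi \in \overline{\mathscr{M}\xi}$ we obtain $T\xi \in \overline{\mathscr{M}\xi}$, which is exactly the desired approximation.

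The passage from one vector to finitely many is the main obstacle, and I would handle it by amplification. Working on $\mathcal{H}^{n} \cong \mathcal{H} \otimes \mathbb{C}^n$ with the diagonal representation $A \mapsto A \oplus \cdots \oplus A$, the key computation is that the commutant of the amplified algebra consists of $n \times n$ operator matrices with entries in $\mathscr{M}'$, from which one reads off that the diagonal amplification of $\mathscr{M}''$ equals the double commutant of the amplified algebra. Applying the single-vector result to the amplified operator $T \oplus \cdots \oplus T$ and the single vector $(\xi_1,\dots,\xi_n)$ then yields one operator $A \in \mathscr{M}$ that works simultaneously for all $\xi_i$. The two technical points I expect to need care with are this commutant-of-the-amplification identity and the prior fact that the normal topology and the strong operator topology define the same closed unital *-subalgebras; both are standard but essential, and I would either cite \citep{Bouten:2007b} or the footnoted reference for the topology equivalence and supply the amplification computation directly.
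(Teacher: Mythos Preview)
The paper does not prove this theorem; it is quoted from \citep{Bouten:2007b} and used as a black box to justify the definition of a generated von Neumann algebra. There is therefore no approach in the paper to compare against. Your sketch is the standard von Neumann argument and is correct: the algebraic facts about commutants give that $\mathscr{S}''$ is a von Neumann algebra containing $\mathscr{S}$, inclusion-reversal reduces minimality to $\mathscr{M}'' = \mathscr{M}$, and the single-vector projection argument followed by matrix amplification to $\mathcal{H}^n$ establishes that $\mathscr{M}''$ lies in the strong closure of $\mathscr{M}$. The one place to be slightly careful is your appeal to the equivalence of normal-topology closure and strong-operator closure for unital $*$-subalgebras; this is true but is itself a nontrivial fact (essentially Kaplansky density plus the characterization of normal functionals), so in a self-contained write-up you would either cite it or run the whole argument in the strong/weak operator topology from the start, which is how the classical proof is usually phrased.
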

Therefore in order to \emph{generate} a von Neumann algebra, we look at the set of operators which commute with what commutes with the operators we started with, i.e. for $\mathscr{S} \subset \mathscr{B}(\mathcal{H})$ the generate von Neumann algebra is $(\mathscr{S} \cup \mathscr{S}^{\dag})''$.
\begin{definition}\label{def:generated_von_neumann}
    $\mathscr{A} = \vN{A_1,\ldots,A_n}$ is the smallest von Neumann algebra \emph{generated} by the observables $A_1,\ldots,A_n$.
\end{definition}
With these definitions, one can now define a spectral theorem appropriate for infinite dimensional systems.
\begin{theorem}[\textbf{Spectral Theorem (Theorem 3.3 in \citep{Bouten:2007b})}]\label{thm:general_spectral_theorem}
    Let $\mathscr{C}$ be a commutative von Neumann algebra.  Then there exists a measure space $(\Omega,\mathcal{F},\mu)$ and a *-isomorphism $\iota$ (up to $\mu$-a.s) which maps from $\mathscr{C}$ to $L^{\infty}(\Omega,\mathcal{F},\mu)$, the algebra of bounded functions on the measure space.  A probability measure $\mathbb{P}$, absolutely continuous with respect to $\mu$, is defined via the normal state $\mathbf{P}$ on $\mathscr{C}$ as $\mathbf{C} = \mathbb{E}_{\mathbb{P}}[\iota(C)]$ for all $C \in \mathscr{C}$.\footnote{The reason for using $\mu$ rather than $\mathbb{P}$ is that there will be $P \in \mathscr{C}$ such that $\mathbf{P}(P) = 0$, which renders $\iota$ not invertible on those null sets.  That is also why the ultimate probability measure $\mathbb{P}$ is absolutely continuous with respect to $\mu$.}
\end{theorem}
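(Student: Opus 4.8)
The plan is to recognize this statement as the Gelfand--Naimark representation theorem specialized to commutative von Neumann algebras, and to build the classical measure space essentially as in the finite-dimensional construction of Theorem~\ref{thm:discrete_spectral_theorem}, but with the Gelfand spectrum playing the role of the eigenspace labels and the Riesz representation theorem supplying the measure. First I would observe that $\mathscr{C}$, being closed in the normal topology, is \emph{a fortiori} norm-closed: the normal topology is coarser than the norm topology (normal positive functionals are norm-continuous), so any set closed in the coarser topology is closed in the finer one. Together with the $*$-algebra and unit axioms, this makes $\mathscr{C}$ a commutative unital C$^*$-algebra, to which the standard machinery applies.

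Next I would apply the Gelfand transform. For a commutative unital C$^*$-algebra, the set of nonzero multiplicative linear functionals (characters) forms a compact Hausdorff space $\Omega$ in the weak-$*$ topology, and the map $\iota : C \mapsto \hat{C}$, defined by $\hat{C}(\omega) = \omega(C)$, is an isometric $*$-isomorphism of $\mathscr{C}$ onto $C(\Omega)$, the continuous functions on $\Omega$. This already realizes the algebraic content of the claim at the level of continuous functions; the multiplicativity $\iota(CD) = \iota(C)\iota(D)$ and the adjoint relation $\iota(C^{\dag}) = \overline{\iota(C)}$ are built into the Gelfand transform. To obtain a measure, I would push the normal state $\mathbf{P}$ forward to a positive, normalized linear functional on $C(\Omega)$ via $\hat{C} \mapsto \mathbf{P}(C)$, and invoke the Riesz representation theorem to produce a regular Borel measure on $\Omega$ representing it as an integral. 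Taking the completion of its $\sigma$-algebra gives the candidate $(\Omega, \mathcal{F}, \mu)$.

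The main obstacle, and the genuinely von Neumann (as opposed to merely C$^*$) part of the argument, is upgrading the target algebra from $C(\Omega)$ to $L^{\infty}(\Omega, \mathcal{F}, \mu)$. The point is that a von Neumann algebra contains the bounded monotone limits of its elements --- this is exactly the normality condition $g(\sup_a A_a) = \sup_a g(A_a)$ built into Definition~\ref{def:normal_topology} --- whereas $C(\Omega)$ is not closed under such pointwise limits. I would show that the Gelfand transform extends to a $*$-isomorphism of $\mathscr{C}$ onto $L^{\infty}(\Omega,\mathcal{F},\mu)$ by matching bounded monotone nets in $\mathscr{C}$ with their pointwise-a.e.\ limits in $L^{\infty}$, using normality of the state to control the limits and the projection-valued spectral structure to realize indicator functions $\chi_F$ as projections in $\mathscr{C}$. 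This step is where the measure-theoretic subtleties enter and is the part I expect to require the most care.

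Finally I would define the probability measure $\mathbb{P}$ directly from the state, setting $\mathbb{P}(F) = \mathbf{P}(\iota^{-1}(\chi_F))$ so that $\mathbf{P}(C) = \mathbb{E}_{\mathbb{P}}[\iota(C)]$ for every $C \in \mathscr{C}$, exactly as in the discrete case. The reason $\mathbb{P}$ is only absolutely continuous with respect to $\mu$, rather than equal to it, is the issue flagged in the footnote: there may be nonzero projections $P \in \mathscr{C}$ with $\mathbf{P}(P) = 0$, so using $\mathbb{P}$ itself as the reference measure would collapse genuinely distinct operators into the same a.e.\ class and destroy injectivity of $\iota$. Keeping $\mu$ as a faithful reference measure for the isomorphism and recording the state as $\mathbb{P} \ll \mu$ resolves this, completing the correspondence between the commutative von Neumann algebra and a classical probability space.
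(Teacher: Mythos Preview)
The paper does not actually prove this theorem: it is stated with attribution to \citep{Bouten:2007b} and no argument is given, so there is no in-paper proof to compare your proposal against. Your Gelfand--Naimark outline is the standard route and is correct in spirit.

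That said, there is a genuine gap in your construction of $\mu$. You build $\mu$ by pushing the given state $\mathbf{P}$ through the Gelfand transform and applying Riesz. But this produces exactly the measure you later call $\mathbb{P}$, not a separate reference measure; with that choice the map into $L^{\infty}(\Omega,\mathcal{F},\mu)$ will identify any two operators that differ only on a projection annihilated by $\mathbf{P}$, and $\iota$ fails to be injective --- precisely the pathology the footnote warns about. To get the statement as written you need $\mu$ to come from a \emph{faithful} normal state (or a separating family of normal states) on $\mathscr{C}$, independent of the particular $\mathbf{P}$; the given state then defines $\mathbb{P} \ll \mu$ via $\mathbb{P}(F) = \mathbf{P}(\iota^{-1}(\chi_F))$. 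Your final paragraph correctly explains why the distinction matters, but your third paragraph's construction does not actually produce it.
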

The technical reasons for moving to von Neumann algebras and the normal topology are not particularly enlightening for us.  In fact, throughout the rest of this thesis, we will rarely worry about the distinction between $*$-algebras and von Neumann algebras.  Nonetheless, there are reasons why these choices were made and I encourage the interested reader to consult Section 3.1 in \citep{Bouten:2007b} for more discussion.  The basic idea for choosing a von Neumann algebra is similar to the reason why one cannot generally use the power set of $\Omega$ in defining the $\sigma$-algebra for a classical probability space---it is ``too big".  By restricting to the normal topology, we guarantee that the von Neumann algebra is generated by its projections.  Similarly, the restriction to normal states ensures monotone convergence of a sequence of observables which is related to the countable additivity requirement we have for classical probability measures.  

\begin{definition}\label{def:quantum_probability_space}
    A \emph{quantum probability space} is the pair $(\mathscr{N},\mathbf{P})$ where $\mathscr{N}$ is a von-Neumann algebra and $\mathbf{P}$ is a normal state on $\mathscr{N}$.
\end{definition}
This is essentially identical to Definition \ref{def:quantum_probability_space_finite}, only with $*$-algebras generalized to von Neumann algebras and states generalized to normal states.  As such, we would use it in the same way, selecting a commutative von Neumann subalgebra $\mathscr{A} \subset \mathscr{N}$ which corresponds to the observables we plan to measure in a given experimental realization.  The statistics for those observables may then be calculated using the spectral theorem (Thm.~\ref{thm:general_spectral_theorem}).  The essential point is that {\bf \emph {a commutative quantum probability space is identical to a classical probability space}}.
\subsection{Quantum Random Variables}
Recall that in the discrete setting, quantum random variables were simply self-adjoint operators, whose spectral decomposition in terms of projectors was analogous to the decomposition of discrete classical random variables in terms of indicator functions of events.  Generalizing this decomposition to the continuous setting proceeds analogously.  We consider the quantum probability space $(\mathscr{N},\mathbf{P})$ and select a particular self-adjoint $A \in \mathscr{N}$ which generates the commutative von Neumann algebra $\mathscr{A} = \vN{A} \subset \mathscr{N}$.  From the spectral theorem (Thm.~\ref{thm:general_spectral_theorem}), we know that there exists a classical probability space $(\Omega,\mathcal{F},\mathbb{P})$ and isomorphism $\iota$ that maps $A$ to a random variable on $\Omega$ which we write as $a : \Omega \mapsto \mathbb{R}$.  Since this is a continuous, real-valued random variable, we know that we can use the Borel algebra $\mathcal{B}$ to decompose $a$ into its events.  That is for some Borel set $B \in \mathcal{B}$, the event $ a \in B$ corresponds to the set $\{w \in \Omega : a(\omega) \in B = a^{-1}(B) \in \mathcal{F}\}$.  To map this back to the quantum space, we invert $\iota$.  The projector that corresponds to this event---``$A$ takes on a value in $B$"---is then written $P_A(B) = \iota^{-1}(\chi_{a \in B})$.  The map $P_A$ is known as the \emph{spectral measure} in functional analysis and allows us to decompose $A$ as
\begin{equation} \label{quantum:eq:spectral_decomposition_infinite}
    A = \int_{\mathbb{R}} \lambda P_A(d\lambda) .
\end{equation}
This is exactly the generalization of the finite-dimensional spectral decomposition in Eq.~\eqref{quantum:eq:spectral_decomposition_discrete}, where $\lambda$ plays the role of the eigenvalue and $P_A(d\lambda)$ plays the role of eigenprojectors.  Again, we have the interpretation that any $f(A)$ can be trivially evaluated using this decomposition once we know which event, or equivalently which eigenspace, occurred. 

Aside from the functional analysis machinery, bounded observables in the general case are treated in exactly the same way as finite-dimensional quantum observables.  Unfortunately, many observables of interest in quantum mechanics are not described by bounded operators, most notably position and momentum.  Although rigorous methods of dealing with such observables exist, I will only sketch a technique discussed in \citep{Bouten:2007b}.  Our von Neumann algebra $\mathscr{N} \subset \mathscr{B}(\mathcal{H})$ contains only bounded operators and we need to somehow relate an unbounded operator $A$ to this algebra.  To do so, define the operator $T_A = (A + iI)^{-1}$.  Since $A$ is self-adjoint, it has a real spectrum, so we know that $T_A$ is invertible and has bounded inverse.  If $T_A \in \mathscr{N}$, we say $A$ is \emph{affiliated} to $\mathscr{N}$.  This is analogous to the classical notion of measurability, in that $A$ is not strictly in $\mathscr{N}$, but its value may be determined if we know the yes-no outcomes of events in $\mathscr{N}$.  Since $A$ is a self-adjoint, linear operator it is trivially affiliated to $\mathscr{B}(\mathcal{H})$; if it is also bounded, then it is affiliated to $\mathscr{N}$ if and only if $A \in \mathscr{A}$.

In order to close the loop, we want to represent $A$ as a classical random variable using the spectral theorem, which was only developed for bounded functions.  We note that the von Neuman algebra generated by $A$ is trivially $\vN{A} = \vN{T_A}$, since the identity operator doesn't change anything.  Moreover, $T_A$ commutes with its adjoint, so $\vN{T_A}$ is commutative and bounded; we may therefore apply the spectral theorem, packaging $A$ in $T_A$, applying $\iota$ and then mapping back.  That is, the classical (unbounded) random variable corresponding to $A$ is $\iota(A) = \iota(T_A)^{-1} - i$.  From this, we can define the spectral measure $P_A$ using Eq.~\ref{quantum:eq:spectral_decomposition_infinite} and proceed without further worry.  Given that this technique exists, we will not worry too much about unbounded operators and their domains throughout the rest of this thesis.

Let's now consider two examples which will clarify the above definitions and which will prove useful when considering quantum white noise processes.
\begin{example}[Example 3.9 in \citep{Bouten:2007b}]\label{quantum:example:l2_position}
    Let $\mathcal{H} = L^{2}(\mathbb{R})$, the vector space of square-normalizable functions and let $\mathscr{N} = \mathscr{B}(\mathcal{H})$.  This is the Hilbert space for a continuous, one-dimensional quantum system, e.g.~a particle on a line.  We define the vector $\ket{\psi} \in \mathcal{H}$ in the position basis as
    \begin{equation}
        \psi(x) = \frac{1}{(2\pi\sigma^2)^{1/4}}e^{-\frac{(x-\mu)^2}{4\sigma^2}}.
    \end{equation}
This pure state defines the quantum probability state $\mathbf{P}(X) = \bra{\psi}X\ket{\psi}$, so that we now have a complete quantum probability space. 

From standard quantum mechanics, we are familiar with two (unbounded) observables on this space, position
\begin{equation}
    \hat{x}\psi(x) = x\psi(x)
\end{equation}
and momentum
\begin{equation}
    \hat{p}\psi(x) = -i\hbar\frac{d}{dx}\psi(x) .
\end{equation}
Using our quantum probability machinery, we can consider what classical random variables these represent under the given state.  Clearly $\hat{x}$ is diagonal (affiliated to $L^{\infty}(\mathbb{R}) \subset \mathscr{N}$) and therefore the state $\ket{\psi}$ tells us it is a Gaussian random variable with mean $\mu$ and variance $\sigma^2$.  Alternatively, we could consider the characteristic function of $\hat{x}$, written $x(k) = \mathbf{P}(e^{ik\hat{x}})$.  Calculating explicitly
\begin{equation}
    x(k) = \bra{\psi}e^{ik\hat{x}}\ket{\psi} 
         = \frac{1}{\sqrt{2\pi\sigma^2}}\int_{-\infty}^{\infty} dx
            e^{ikx}e^{-(x-\mu)^2/2\sigma^2}
         = e^{ik\mu - k^2\sigma^2/2}
\end{equation}
which we recognize as the characteristic function of Gaussian random variable with mean $\mu$ and variance $\sigma^2$.  Similarly, we can definite the characteristic function of $\hat{p}$ as $p(k) = \mathbf{P}(e^{ik\hat{p}})$ and recalling that $\hat{p}$ is the generator of displacements in position
\begin{equation}
    p(k) = \bra{\psi}e^{ik\hat{p}}\ket{\psi}
         = \int_{-\infty}^{\infty}dx \psi(x)\psi(x + \hbar k)
         = e^{-\hbar^2 k^2/8\sigma^2}
\end{equation}
which is also the characteristic function of a Gaussian, but with mean zero and variance $\hbar^2/4\sigma^2$.  Note that $\Delta \hat{x} \Delta \hat{p} = \hbar/2$ as expected for the minimum uncertainty state $\ket{\psi}$. 
\end{example}

The final example in this section considers the Hilbert space of the harmonic oscillator, which given its fundamental role in quantizing the electromagnetic field, serves as an important step towards quantum white noise processes which are prevalent in quantum optics.  The following theorem will play in important part in characterizing operators on this space.
\begin{theorem}[\textbf{Stone's Theorem (Theorem 3.10 in \citep{Bouten:2007b})}]\label{thm:stone}
    Let $\mathscr{N}$ be a von Neumann algebra and let $\{U_t\}_{t\in\mathbb{R}}$ be a strongly continuous group of unitary operators.  Then there is a unique self-adjoint $A$ affiliated to $\mathscr{N}$ called the \emph{Stone generator} such that $U_t = e^{itA}$.
\end{theorem}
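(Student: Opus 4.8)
The plan is to recall the classical Hilbert-space proof of Stone's theorem and then supply the one genuinely algebraic ingredient---affiliation to $\mathscr{N}$---which is what distinguishes this statement from the textbook version. First I would construct the candidate generator directly as the strong derivative of the group at the origin, setting $A\psi = \lim_{t\to 0}(U_t\psi - \psi)/(it)$ on the domain $\mathcal{D}(A)$ of vectors for which this limit exists. To show $\mathcal{D}(A)$ is dense I would use the standard smearing trick: for $\psi\in\mathcal{H}$ and $f\in C_c^\infty(\mathbb{R})$ the vectors $\psi_f = \int f(t)\,U_t\psi\,dt$ (a well-defined strong integral, by strong continuity) lie in $\mathcal{D}(A)$, and letting $f$ run through a delta sequence gives $\psi_f\to\psi$, so $\mathcal{D}(A)$ is a dense core (the G{\aa}rding domain). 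A short computation using the group law $U_sU_t = U_{s+t}$ and $U_t^{\dag}=U_{-t}$ then shows $A$ is symmetric on this domain and that $U_t$ maps $\mathcal{D}(A)$ into itself with $\tfrac{d}{dt}U_t\psi = iAU_t\psi = iU_tA\psi$.

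The next step is to promote $A$ to a self-adjoint operator and recover $U_t = e^{itA}$. I would argue that $A$ is essentially self-adjoint by showing its deficiency subspaces are trivial: if $A^{\dag}\phi = \pm i\phi$, then $t\mapsto\langle\phi, U_t\psi\rangle$ satisfies a first-order ODE forcing $\langle\phi,U_t\psi\rangle = e^{\mp t}\langle\phi,\psi\rangle$, which is incompatible with the bound $\abs{\langle\phi,U_t\psi\rangle}\le\norm{\phi}\norm{\psi}$ holding for all $t\in\mathbb{R}$ unless $\phi = 0$. With $A$ self-adjoint I can form $e^{itA}$ through the functional calculus underwritten by the spectral theorem (Thm.~\ref{thm:general_spectral_theorem}), and then observe that for $\psi\in\mathcal{D}(A)$ both $x(t) = U_t\psi$ and $x(t) = e^{itA}\psi$ solve the abstract initial-value problem $\dot x(t) = iA\,x(t)$, $x(0)=\psi$. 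Since $\tfrac{d}{dt}\norm{U_t\psi - e^{itA}\psi}^2 = 0$ by self-adjointness of $A$, the two solutions coincide on the dense set $\mathcal{D}(A)$, hence $U_t = e^{itA}$ everywhere by continuity.

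Finally I would establish the two claims specific to this setting. For \emph{affiliation}, recall that $A$ is affiliated to $\mathscr{N}$ precisely when the bounded resolvent $T_A = (A+iI)^{-1}$ lies in $\mathscr{N}$. Writing the resolvent as a Laplace transform of the group,
\begin{equation}
    T_A = (A + iI)^{-1} = -i\int_0^\infty e^{-t}\,U_t\,dt ,
\end{equation}
which one verifies spectrally from $-i\int_0^\infty e^{-t}e^{it\lambda}\,dt = (\lambda+i)^{-1}$, exhibits $T_A$ as a strong-operator limit of the Riemann sums $\sum_i e^{-t_i}U_{t_i}\,\Delta t_i$. Each such sum is a finite complex-linear combination of the $U_{t_i}\in\mathscr{N}$ and hence lies in $\mathscr{N}$; since a von Neumann algebra is closed in the weak (equivalently strong) operator topology, the limit $T_A$ lies in $\mathscr{N}$, and so $A$ is affiliated to $\mathscr{N}$. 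For \emph{uniqueness}, if $U_t = e^{itA} = e^{itB}$ with both generators self-adjoint, differentiating at $t=0$ on the common domain gives $A=B$ there, and the core argument extends this to the full operators.

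The main obstacle I expect is the unbounded-operator bookkeeping in the middle step---proving that the formally defined generator is genuinely \emph{essentially self-adjoint} rather than merely symmetric, together with the density of $\mathcal{D}(A)$ and the care needed in differentiating $U_t\psi$. By contrast the algebraic heart of the theorem, affiliation to $\mathscr{N}$, is short once the resolvent is recognized as an average of the $U_t$; the only real subtlety there is that $t\mapsto U_t$ is only strongly, not norm, continuous, so one must realize the integral as a strong-operator limit and invoke the weak-operator closure of $\mathscr{N}$ rather than a norm-convergent Bochner integral.
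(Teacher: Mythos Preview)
The paper does not actually prove this theorem: it is stated as a cited result (Theorem~3.10 in \cite{Bouten:2007b}) and immediately followed only by a one-sentence remark about its role in analyzing continuous symmetries. So there is no in-paper argument to compare your proposal against.

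That said, your sketch is a correct and standard route to Stone's theorem in this setting. The G{\aa}rding-domain construction, the deficiency-index argument for essential self-adjointness, and the ODE uniqueness step are all textbook; your Laplace-transform formula $-i\int_0^\infty e^{-t}U_t\,dt = (A+iI)^{-1}$ is the right way to handle the affiliation claim, and your closing remark about using strong-operator convergence together with the weak-closure of $\mathscr{N}$ is exactly the subtlety one needs to flag. If anything, you have supplied substantially more than the thesis itself does.
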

\noindent This theorem is often implicitly used in quantum mechanics when analyzing continuous symmetries, such as when identifying the Hamiltonian as the generator of time displacements.
\begin{example}[Adapted from Example 3.11 in \citep{Bouten:2007b}]
    \label{quantum:example:harmonic_oscillator}
    Let $\mathscr{N} = \mathscr{B}(\mathcal{H})$ with $\mathcal{H} = \ell^2(\mathbb{N})$, the set of square normalizable functions on the integers.  On this space, define the orthonormal number basis $\ket{n}$ with $n = 0, 1, \ldots$ where $\braket{n}{k} = \delta_{nk}$.  We further define the unnormalized \emph{exponential state} for $\alpha \in \mathbb{C}$ as
    \begin{equation}
        \ket{e(\alpha)} = \sum_n \frac{\alpha^n}{\sqrt{n!}}\ket{n}
    \end{equation}
which are the unnormalized form of the coherent states $\ket{\alpha} = \ket{e(\alpha)}e^{-\modsq{\alpha}/2}$.  As we know from quantum mechanics, the exponential vectors provide an overcomplete basis for $\mathcal{H}$, which mathematically means their linear span $\mathcal{D}$ is dense in $\mathcal{H}$.  As the last ingredient for our quantum probability space, we define the quantum probability states $\mathbf{P}_{\alpha}(X) = \bra{\alpha}X\ket{\alpha}$.

Lets consider some observables on this space.  The most straightforward is the diagonal operator $\hat{n}$ which acts on number states as $\hat{n}\ket{n} = n\ket{n}$ and although unbounded is affiliated to $\ell^{\infty}(\mathbb{N}) \subset \mathscr{N}$.  The spectral measure of $\hat{n}$ is simply $P_{\hat{n}}(B)\ket{\psi} = \chi_B(k)\ket{\psi}$ which occurs with probability
\begin{equation}
    \mathbf{P}_{\alpha}(P_{\hat{n}}(B)) = \bra{e(\alpha)}P_{\hat{n}}(B)\ket{e(\alpha)}e^{-\modsq{\alpha}}
                     = \sum_{k \in B} \frac{e^{-\modsq{\alpha}}(\modsq{\alpha})^k}{k!} .
\end{equation}
Thus, the event ``$\hat{n}$ takes on a value in $B$" occurs with the probability written above, suggesting that for coherent states, $\hat{n}$ is a Poisson-distributed random variable with intensity $\modsq{\alpha}$. 

That is basically it for diagonal random variables in the number basis, but given our familiarity with the quantum harmonic oscillator, we suspect position and momentum observables are hiding somewhere as well.  In light of Stone's theorem above and our prior knowledge that position and momentum are generators of displacements, we will construct them from a unitary representation of the translation group.  Generally, we have a two-dimensional translation, which we implement with the unitary Weyl (or displacment) operator
\begin{equation}
    W_\gamma \ket{e(\alpha)} = \ket{e(\alpha + \gamma)}e^{-\gamma^{*}\alpha - \modsq{\gamma}/2}
\end{equation}
where $\gamma \in \mathbb{C}$ determines the displacement in the complex plane.  This is analogous to the standard displacement operator in quantum optics used to transform coherent states.  One can verify the unitarity of $W_\alpha$ directly and further note that the Weyl operators form a group under multiplication since $W_\alpha W_\beta = W_{\alpha + \beta}e^{i\imag{\beta^*\alpha}}$.  Note that we have defined the action of $W_\gamma$ on the exponential vectors, from which their linear span may be used to extend the action to all of $\mathcal{H}$. 

In order to apply Stone's theorem, we need to turn this into a one parameter unitary group.  As such, fix a particular $\beta \in \mathbb{C}$ and consider the one parameter group $\{W_{t\beta}\}_{t\in\mathbb{R}}$.  It is continuous since $W_{t\beta}\ket{e(\alpha)} \mapsto \ket{e(\alpha)}$ as $t \mapsto 0$, so by Thm.~\ref{thm:stone}, there exists a self-adjoint $B_\beta$ such that $W_{t\beta} = e^{it B_\beta}$.  We can then ask for the distribution of this generator under the coherent state in terms of the characteristic function.  Letting $b_\beta(k) = W_{k\beta} = e^{ikB_{\beta}}$ be the characteristic function of $B_\beta$, we find
\begin{equation}
    B_\beta(k) = \mathbf{P}_\alpha(W_{k\beta}) 
               = \braket{e(\alpha)}{e(\alpha + k\beta)}e^{-k\beta^{*}\alpha -k^2\modsq{\beta}/2 - \modsq{\alpha}}
               = e^{2ik\imag(\alpha^{*}\beta) - k^2\modsq{\beta}/2}
\end{equation}
which means that for coherent states, $B_\beta$ is a Gaussian random variable with mean $2\imag(\alpha^{*}\beta)$ and variance $\modsq{\beta}$.

We can also find an explicit representation of $B_\beta$ acting on the exponential vectors.  Given the Stone representation of $W_{t\beta}$, this is simply
\begin{equation}
    B_\beta \ket{e(\alpha)} = \left.\frac{1}{i}\frac{d}{dt}W_{t\beta}\ket{e(\alpha)}\right|_{t=0}
        = i\beta^{*}\alpha \ket{e(\alpha)} - \left. i\frac{d}{dt}\ket{e(\alpha + t\beta)}\right|_{t=0} .
\end{equation}
In order to recover the familiar harmonic oscillator operators, we need to explore particular $\beta$ values.  Given that $\hat{x}$ generates displacements in momentum, which is the imaginary axis in the complex plane, we set $\hat{x} = B_{i}$.  Similarly, since $\hat{p}$ generates displacements in position, we set $\hat{p} = B_{-1}$.  Given these two operators, we can introduce the lowering operator $\hat{a} = (\hat{x} + i \hat{p})/2$, which from the representation of $B_\beta$ above means
\begin{align}
    \hat{a}\ket{e(\alpha)} &= \frac{1}{2}\left[ \alpha \ket{e(\alpha)} - \left. i\frac{d}{dt}\ket{e(\alpha + ti)} \right|_{t=0}
    + \alpha \ket{e(\alpha)}  
    + \left. \frac{d}{dt}\ket{e(\alpha - t)}\right|_{t=0}\right]\\
    &= \alpha\ket{e(\alpha)} +
       \frac{1}{2}\sum_n\frac{d}{dt} \left. 
       \left[-i(\alpha + ti)^n  + (\alpha -t)^n\right]\right|_{t=0}\frac{\ket{n}}{\sqrt{n!}}\\
   &= \alpha\ket{e(\alpha)} +
      \frac{1}{2}\sum_n\left.
      \left[n(\alpha + ti)^{n-1}  -n(\alpha -t)^{n-1}\right]\right|_{t=0}\frac{\ket{n}}{\sqrt{n!}}\\
  &= \alpha\ket{e(\alpha)} +
     \frac{1}{2}\sum_n\left.
     \left[n(\alpha)^{n-1}  -n(\alpha)^{n-1}\right]\right|_{t=0}\frac{\ket{n}}{\sqrt{n!}}\\
  &= \alpha\ket{e(\alpha)}
\end{align}
Thus, the lowering operator acts as expected on exponential (and by extension coherent) states and one can easily show that $\hat{a}\ket{n+1} = \sqrt{n+1}\ket{n}$ as expected.  One can also verify that the raising operator, which is the adjoint $\hat{a}^{\dag}$, acts as $\hat{a}^{\dag}\ket{n} = \sqrt{n+1}\ket{n}$ so that $\hat{n} = \hat{a}^{\dag}\hat{a}$.  

On the one hand, this example shows that all of the familiar observables and operators of the harmonic oscillator may be posed in the quantum probability framework.  On the other hand, if one were to instead focus on a classical probability model for these observables, it is seems unusual that both Poisson and Gaussian random variables emerge from the same state $\mathbf{P}_\alpha$ and moreover, there is a continuous map between the two via $\hat{x},\hat{p} \mapsto (\hat{x}-i\hat{p})(\hat{x}+i\hat{p})/4 = \hat{n}$.  One could never continuously transform two continuous random variables into a discrete random variable in classical probability theory.  The reason we can do so here is that $\hat{x},\hat{p}$ and $\hat{n}$ do not commute, indicating that we could never realize them in the same measurement and therefore need not worry about applying the spectral theorem to all simultaneously.  
\end{example} 
\subsection{Quantum Conditional Expectation}
As was the case for finite dimensional quantum systems, all the heavy lifting needed to construct a quantum conditional expectation is handled by the spectral theorem (Thm.~\ref{thm:general_spectral_theorem}), which relates a commutative von Neumann algebra to a classical probability space.  Additionally, all the details about including an explicit probe model for conditioning are no different than was the case in finite dimensions.  I forego recounting those details and instead focus on some subtleties of the quantum conditional expectation that have heretofore been overlooked.  For completeness, I first restate the quantum conditional expectation in terms of the quantum probability model.
\begin{definition}\label{def:conditional_expectation}
    Consider the quantum probability space $(\mathscr{N},\mathbf{P})$ and let $\mathscr{A} \subset \mathscr{N}$ be a commutative von Neumann subalgebra.  Then the map $\mathbf{P}( \cdot | \mathscr{A}) : \mathscr{A}' \mapsto \mathscr{A}$ is (a version of) \emph{the conditional expectation} if $\mathbf{P}(\mathbf{P}(B | \mathscr{A})A) = \mathbf{P}(BA)$ for all $A \in \mathscr{A}$ and $B \in \mathscr{A}'$.
\end{definition}

Firstly, what does ``a version of'' mean in this context?  As is often the case for infinite dimensional systems, there is a freedom of definition for operators which have measure zero under the state $\mathbf{P}$.  Thus, the uniqueness of conditional expectation in the quantum probability setting means that any two version of $\mathbf{P}(B | \mathscr{A})$, call them $P$ and $Q$, satisfy $\norm{P - Q}_{\mathbf{P}} = 0$ where $\normsq{X}_{\mathbf{P}} = \mathbf{P}(X^{\dag}X)$.  If $P$ and $Q$ happen to differ on a part of Hilbert space where the state $\mathbf{P}$ has no support, then they would be different operators, but not in any important way relative to the conditional expectation.

Secondly, we only defined the spectral theorem for bounded, self-adjoint operators.  For such operators, the conditional expectation is explicitly calculable as $\mathbf{P}(B|\mathscr{A}) = \iota^{-1}(\mathbb{E}_{\mathbb{P}}(\iota(B) | \sigma\{\iota(\mathscr{A}\}))$.  Although we have discussed how to extend such a definition to unbounded operators, it is not clear how to find an explicit form for the conditional expectation when the operators are not self-adjoint.  After all, such operators do not generally have a spectral decomposition, so the simple mapping through $\iota$ does not exist.  But we can trivially decompose an operator in terms of its self-adjoint parts.  That is, $B \in \mathscr{A}'$ may be written $B = B_1 + i B_2$, where $B_1 = (B+B^{\dag})/2$ and $B_2 = i(B^{\dag} - B)/2$.  Since $B_1$,$B_2$ are self-adjoint and since the conditional expectation is linear, we may use $\iota$ on $B_1$ and $B_2$ such that $\mathbf{P}(B | \mathscr{A}) \equiv  \mathbf{P}( B_1 | \mathscr{A}) + i\mathbf{P}(B_2 | \mathscr{A})$.  

\subsection{Quantum Bayes formula}
As we saw in Chapter \ref{chapter:classical}, using the explicit formula for conditional expectation is not always convenient when working in infinite-dimensional spaces.  This is also true for infinite dimensional quantum spaces, as the simple formula in Eq.~\eqref{quantum:eq:discrete_conditional} is often unwieldy, especially in the filtering problem, where conditional expectation calculations are often easier under a different measure.  We therefore will often use the following quantum Bayes formula when performing inference.
\begin{theorem}[\textbf{Quantum Bayes Formula (Lemma 3.18 in \citep{Bouten:2007b})}]\label{thm:quantum_bayes_formula}
    Let $\mathscr{A}$ be a commutative von Neumann algebra and let $\mathscr{A}'$ be equipped with a normal state $\mathbf{P}$.  Choose the reference operator $V \in \mathscr{A}'$ such that $V^{\dag}V > 0$ and $\mathbf{P}(V^{\dag}V) = 1$.  Then we define a new state on $\mathscr{A}'$ by $\mathbf{Q}(A) = \mathbf{P}(V^{\dag}AV)$ so that
    \begin{equation}
        \mathbf{Q}(A | \mathscr{A}) = \frac{\mathbf{P}(V^{\dag} AV | \mathscr{A})}
                        {\mathbf{P}(V^{\dag}V | \mathscr{A})} \qquad
                        \forall A \in \mathscr{A}' .
    \end{equation}
\begin{proof}
    Let $K$ be an arbitrary element of $\mathscr{A}$.  Then for all $A \in \mathscr{A}'$, we have
    \begin{align}
        \mathbf{P}(\mathbf{P}(V^{\dag}AV | \mathscr{A}) K)
                &= \mathbf{P}(V^{\dag}AKV)\\
                &= \mathbf{Q}(AK)\\
                &= \mathbf{Q}(\mathbf{Q}(A | \mathscr{A})K)\\
                &= \mathbf{P}(V^{\dag}V \mathbf{Q}(A | \mathscr{A}) K)\\
                &= \mathbf{P}(\mathbf{P}(V^{\dag}V \mathbf{Q}(A | \mathscr{A}) K) | \mathscr{A})\\
                &= \mathbf{P}(\mathbf{P}(V^{\dag}V | \mathscr{A})\mathbf{Q}(X | \mathscr{A})K)
    \end{align}
    Since $K$ was general, this must be true for the other operators under the outermost $\mathbf{P}$, so that we read off Bayes formula by moving $\mathbf{P}(V^{\dag}V | \mathscr{A})$ to the other side of the equation.  In the manipulations above, we used the fact that $K$ commutes with all operators involved, the definition of the conditional expectation and the ``module property'' that $\mathbf{P}(AB | \mathscr{C}) = B \mathbf{P}(A | \mathscr{C})$ if $B \in \mathscr{C}$.
\end{proof}
\end{theorem}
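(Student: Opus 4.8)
The plan is to follow the same strategy used for the classical Bayes formula (Theorem~\ref{thm:bayes_formula}): exploit the defining property of the conditional expectation from Definition~\ref{def:conditional_expectation} together with the fact that an identity of the form $\mathbf{P}(\,\cdot\,K)=\mathbf{P}(\,\cdot\,K)$ holding for every $K\in\mathscr{A}$ forces the two ``$\,\cdot\,$'' operators to agree as versions of conditional expectation. Before that, I would first check that the constructions make sense. Because $\mathscr{A}$ is commutative we have $\mathscr{A}\subseteq\mathscr{A}'$, so $\mathbf{Q}(\,\cdot\mid\mathscr{A})$ is a bona fide conditional expectation onto $\mathscr{A}$; and since $A\geq 0$ implies $V^{\dag}AV\geq 0$ while $\mathbf{Q}(I)=\mathbf{P}(V^{\dag}V)=1$, the map $\mathbf{Q}$ is a normal state on $\mathscr{A}'$, so $\mathbf{Q}(\,\cdot\mid\mathscr{A})$ exists. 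I would also record the single algebraic fact that powers the whole argument: every element of $\mathscr{A}'$, in particular $V$ and $V^{\dag}$, commutes with every element of $\mathscr{A}$.

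Next I would carry out the central computation. Fix an arbitrary $K\in\mathscr{A}$. Starting from $\mathbf{P}\big(\mathbf{P}(V^{\dag}AV\mid\mathscr{A})\,K\big)$, the defining property of the conditional expectation (applied with $B=V^{\dag}AV\in\mathscr{A}'$) collapses this to $\mathbf{P}(V^{\dag}AVK)$. Using $VK=KV$ this equals $\mathbf{P}(V^{\dag}(AK)V)=\mathbf{Q}(AK)$, and the defining property of $\mathbf{Q}(\,\cdot\mid\mathscr{A})$ rewrites it as $\mathbf{Q}\big(\mathbf{Q}(A\mid\mathscr{A})K\big)=\mathbf{P}\big(V^{\dag}\,\mathbf{Q}(A\mid\mathscr{A})K\,V\big)$. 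Now I push $V^{\dag}$ and $V$ back through the $\mathscr{A}$-elements $\mathbf{Q}(A\mid\mathscr{A})$ and $K$ via the commutation fact, producing $\mathbf{P}\big(V^{\dag}V\,\mathbf{Q}(A\mid\mathscr{A})K\big)$; since $\mathbf{Q}(A\mid\mathscr{A})K\in\mathscr{A}$ and $V^{\dag}V\in\mathscr{A}'$, one more application of the defining property turns this into $\mathbf{P}\big(\mathbf{P}(V^{\dag}V\mid\mathscr{A})\,\mathbf{Q}(A\mid\mathscr{A})\,K\big)$. The upshot is the chain
\[
    \mathbf{P}\big(\mathbf{P}(V^{\dag}AV\mid\mathscr{A})\,K\big)
        = \mathbf{P}\big(\mathbf{P}(V^{\dag}V\mid\mathscr{A})\,\mathbf{Q}(A\mid\mathscr{A})\,K\big),
    \qquad \forall\, K\in\mathscr{A}.
\]

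Finally I would remove the arbitrary $K$. Both $\mathbf{P}(V^{\dag}AV\mid\mathscr{A})$ and $\mathbf{P}(V^{\dag}V\mid\mathscr{A})\,\mathbf{Q}(A\mid\mathscr{A})$ lie in $\mathscr{A}$, and the displayed identity says they have the same $\mathbf{P}$-pairing against every $K\in\mathscr{A}$; by the characterizing property of conditional expectation this means they coincide as versions, that is, up to a $\mathbf{P}$-null operator. Rearranging gives $\mathbf{P}(V^{\dag}AV\mid\mathscr{A})=\mathbf{P}(V^{\dag}V\mid\mathscr{A})\,\mathbf{Q}(A\mid\mathscr{A})$, and dividing by $\mathbf{P}(V^{\dag}V\mid\mathscr{A})$ yields the claimed formula. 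The division is legitimate because $V^{\dag}V>0$ is bounded below by some $\varepsilon I$, so its (positive, unital) conditional expectation satisfies $\mathbf{P}(V^{\dag}V\mid\mathscr{A})\geq\varepsilon I$ and is therefore invertible in $\mathscr{A}$.

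I expect the main obstacle to be this last step rather than the algebra: justifying honestly that equality of $\mathbf{P}$-pairings for all $K\in\mathscr{A}$ forces equality of the two $\mathscr{A}$-elements at the level of \emph{versions} of conditional expectation (the $\mathbf{P}$-a.s.\ caveat), together with the invertibility of $\mathbf{P}(V^{\dag}V\mid\mathscr{A})$ needed to make the final division meaningful. Everything else is a mechanical shuttling of $V$ and $V^{\dag}$ through $\mathscr{A}$ using the commutant relation.
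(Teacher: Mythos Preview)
Your proposal is correct and follows essentially the same route as the paper: fix an arbitrary $K\in\mathscr{A}$, run the chain $\mathbf{P}(\mathbf{P}(V^{\dag}AV\mid\mathscr{A})K)=\mathbf{Q}(AK)=\mathbf{Q}(\mathbf{Q}(A\mid\mathscr{A})K)=\mathbf{P}(\mathbf{P}(V^{\dag}V\mid\mathscr{A})\mathbf{Q}(A\mid\mathscr{A})K)$ using the defining property of conditional expectation and the commutation of $V,V^{\dag}$ with $\mathscr{A}$, then strip off $K$ and divide. Your extra care in checking that $\mathbf{Q}$ is a state and that $\mathbf{P}(V^{\dag}V\mid\mathscr{A})$ is invertible is a welcome addition but does not change the argument.
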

\noindent  The definition (and proof) are very similar to the classical Bayes formula in Theorem \ref{thm:bayes_formula}, but we do have an added interpretation in the quantum setting.  Although $V$ will not always be unitary, the transformation to the state $\mathbf{Q}$ is reminiscent of moving into an interaction picture, which is a common tool in standard quantum mechanics for simplifying calculations.  We see such a change in the following example that is similar to the reference probability approach we will use in deriving the quantum filter. 
\begin{example}[Example 3.19 in \citep{Bouten:2007b}]\label{quantum:example:stern_gerlach}
    Consider modeling a Stern-Gerlach (SG) experiment in which we measure the spin state of an atom using its spatial degree of freedom.  Following our previous examples, we define the spin degree of freedom for a spin-1/2 particle by the von Neumann algebra $\mathscr{N}_\mu = \mathscr{B}(\mathbb{C}^2)$ spanned by the Pauli operators and we define the position degree of freedom along the $z$ axis by $\mathscr{N}_q = \mathscr{B}(\ell^2(\mathbb{N}))$ with position operator $\hat{q}$ and momentum operator $\hat{p}$.  Note that we are using the harmonic oscillator definitions from Example \ref{quantum:example:harmonic_oscillator} rather than the $\mathcal{L}^2(\mathbb{R})$ definition from Example \ref{quantum:example:l2_position}, which are equivalent up to a change in units.  Thus, the overall von Neumann algebra is $\mathscr{N} = \mathscr{N}_\mu \otimes \mathscr{N}_q$.  We will assume that the initial states of the two degrees of freedom are uncorrelated so that we may write $\mathbf{P} = \mathbf{P}_\mu \otimes \mathbf{P}_0$, with $\mathbf{P}_\mu(X) = \bra{\psi_0}X\ket{\psi_0}$ for an arbitrary spin-state $\ket{\psi_0}$ and $\mathbf{P}_0 = \bra{0}X\ket{0}$.  We choose the vacuum state as the initial position state, indicating the atom is initially at rest in a minimum uncertainty state.  
    
    Our simple model of the SG device corresponds to appling a magnetic field gradient that is linearly related to the spin along $z$ and the position along $z$.  This will cause a displacement of the momentum relative to the spin state, so that measuring the momentum will provide an indirect measurement of $\sigma_z$.  For simplicity, we will ignore the free Hamiltonian of the system which would transform the shift in momentum into a translation in position.  That is, we assume we can measure momentum directly, so that it acts as a probe for the internal spin state of the atom.  The unitary which describes the action of the magnetic field gradient is 
    \begin{equation}
        U = \exp\left( i \kappa \sigma_z \otimes \hat{q} \right)
          = P_{z,+1} \otimes e^{i\kappa\hat{q}} 
          + P_{z,-1} \otimes e^{-i\kappa\hat{q}}
          = P_{z,+1} \otimes W_{i\kappa}
          + P_{z,-1} \otimes W_{-i\kappa}
    \end{equation}
    where $\kappa$ represents the time integrated gradient in appropriate units.  Since we intend to measure the momentum, we begin by considering the statistics of that measurement in terms of the characteristic function for $U^\dag(I \otimes \hat{p})U$,
    \begin{align}
        \mathbf{P}(e^{ik U^{\dag}(I \otimes \hat{p})U})
            &= \mathbf{P}(U^{\dag}(I \otimes W_{-k})U)\\
            &= \mathbf{P}_\mu(P_{z,+1})\mathbf{P}_z(W_{-i\kappa}W_{-k}W_{i\kappa})
             + \mathbf{P}_\mu(P_{z,-1})\mathbf{P}_z(W_{i\kappa}W_{-k}W_{-i\kappa})\\
            &= \mathbf{P}_\mu(P_{z,+1}) e^{2i\kappa k - k^2/2}
             +  \mathbf{P}_\mu(P_{z,-1}) e^{-2i\kappa k - k^2/2}
    \end{align}
    where we have used the group property of the Weyl operator and calculations from Example \ref{quantum:example:harmonic_oscillator}.  The characteristic function tells us that the atom's momentum distribution after the interaction is a sum of two Gaussians, each with unit variance but with means $\pm 2\kappa$ weighted by the probability of having spin up or down given by $\mathbf{P}_\mu(P_{z,\pm 1})$.  Note that this distribution does not perfectly resolve the spin states.  If our policy was to assign the spin state according to the sign of the observed momentum, there is some probability to assign the wrong spin state since the tails of the Gaussians overlap as is seen in Fig.~\ref{quantum:fig:stern_gerlach_probability}.  This probability becomes smaller as the field gradient $\kappa$ becomes larger.
    \begin{figure}[h]
        \centering
            \includegraphics[scale=1]{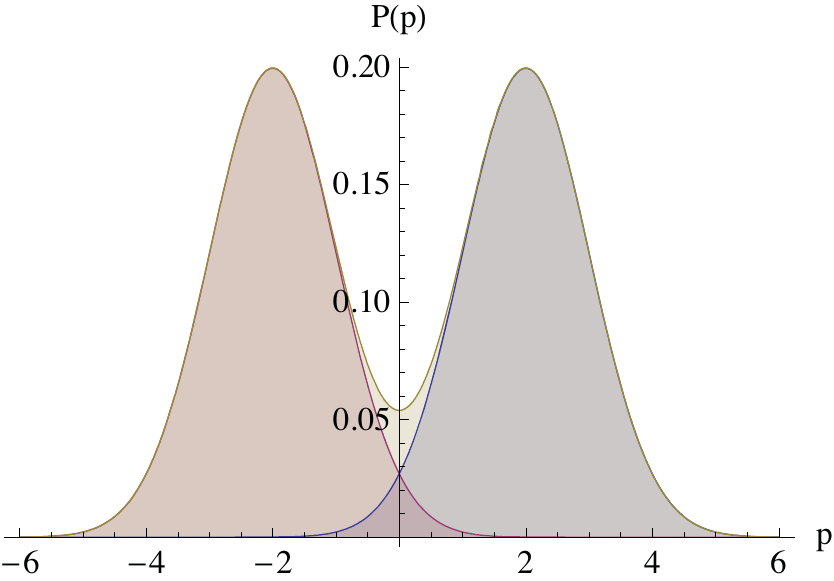}
        \caption[Probability distribution for momentum measurement in Example \ref{quantum:example:stern_gerlach}]{Probability distribution for momentum measurement $U^{\dag}(I\otimes\hat{p})U$ with $\kappa = 1$ in arbitrary units and initial spin state $\ket{+x}$.}
        \label{quantum:fig:stern_gerlach_probability}
    \end{figure}
    
    Of course, the purpose of using the position degree of freedom as a probe for the spin degree is so that we may talk about the conditional expectation of spin observables, such as $\sigma_x$, given the indirect $\sigma_z$ measurement.  The two clearly commute since $[U^{\dag}(I\otimes \hat{p}) U, U^{\dag}(\sigma_x \otimes I)U] = 0$.  We therefore set $\mathscr{A} = \vN{U^{\dag}(I\otimes \hat{p}) U}$, for which $U^{\dag}(\sigma_x \otimes I)U \in \mathscr{A}'$ so that $\mathbf{P}(U^{\dag}(\sigma_x \otimes I)U | \mathscr{A})$ is well defined.  Following the development of the quantum Bayes formula, we note that for a unitary $U$ and state $\mathbf{Q}(X) = \mathbf{P}(U^\dag X U)$, the definition of the conditional expectation shows that $\mathbf{P}(U^{\dag}X U | U^{\dag}\mathscr{C}U) = U^{\dag}\mathbf{Q}(X | \mathscr{C}) U$.  For our problem, this means
    \begin{equation}
        \mathbf{P}(U^{\dag}(\sigma_x \otimes I)U | \mathscr{A}) 
            = U^{\dag}\mathbf{Q}(\sigma_x \otimes I | \vN{I \otimes \hat{p}}) U ,
    \end{equation}
which is analogous to performing the conditional expectation calculation in the Schr\"{o}dinger picture, then using $U$ to transform back to the Heisenberg picture.  Note that $\mathscr{A} = U^{\dag}\vN{I\otimes\hat{p}}U$.  Given that $U$ entangles the two subsystems of the atom, we expect the conditional expectation calculation would be easier under the state $\mathbf{Q}$ and a simple application of quantum Bayes rule would allow us to evaluate the desired conditional expectation.  Unfortunately, $U$ does not commute with $I \otimes \hat{p}$ ($U \not\in \mathscr{A}'$), so the Bayes rule will not work in this form.

However, given that we are working in the vacuum state, we can perform some tricks to construct a related operator $V$ which does allow us to apply the Bayes rule.  Specifically, the part of $U$ that gives us trouble is $e^{i\kappa \hat{q}}$, which clearly does not commute with $\hat{p}$.  But given that $a\ket{0} =0$, we find
\begin{multline}
    e^{i\kappa\hat{q}}\ket{0} = e^{i\kappa(a + a^{\dag})}\ket{0}
        = e^{-\kappa^2/2}e^{i\kappa a^{\dag}}e^{i\kappa a}\ket{0}
        = e^{-\kappa^2/2}e^{i\kappa a^{\dag}}\ket{0}\\
        = e^{-\kappa^2/2}e^{i\kappa a^{\dag}}e^{-i\kappa a}\ket{0}
        = e^{-\kappa^2}e^{\kappa \hat{p}}\ket{0}
\end{multline}
so that
\begin{equation}
    \mathbf{P}_0(e^{-i\kappa\hat{q}}Xe^{i\kappa\hat{q}})
     = e^{-2\kappa^2}\mathbf{P}_0(e^{\kappa\hat{p}}Xe^{\kappa\hat{p}}) .
\end{equation}
Thus in the vacuum, we can replace expressions involving $\hat{q}$ with expressions involving $\hat{p}$ without changing the results of any calculations.  We can therefore replace $U$ by 
\begin{equation}
    V = e^{-\kappa^2}e^{\kappa \sigma_z \otimes\hat{p}}
      = e^{-\kappa^2}(P_{z,+1}\otimes e^{\kappa\hat{p}} + P_{z,-1}\otimes e^{-\kappa \hat{p}})
\end{equation}
so that $\mathbf{Q}(X) = \mathbf{P}(U^{\dag}XU) = \mathbf{P}(V^{\dag}XV)$.  Although $V$ is not unitary, it does commute with $I \otimes \hat{p}$ so that we can apply the quantum Bayes formula to find
\begin{equation}
    \mathbf{P}(U^{\dag}(\sigma_x \otimes I)U | \mathscr{A})
         = \frac{U^{\dag}\mathbf{P}(V^{\dag}(\sigma_x \otimes I)V | \vN{I\otimes\hat{p}})U}
                    {U^{\dag}\mathbf{P}(V^{\dag}V | \vN{I \otimes\hat{p}})U} .
\end{equation}
Now note that
\begin{multline}
    V^{\dag}(\sigma_x \otimes I)V = P_{z,+1}\sigma_x P_{z,+1} \otimes e^{2\kappa\hat{p}}
                 + P_{z,-1}\sigma_x P_{z,-1} \otimes e^{-2\kappa\hat{p}}\\
                 + P_{z,+1}\sigma_x P_{z,-1} \otimes I
                 + P_{z,-1}\sigma_x P_{z,+1} \otimes I
\end{multline}
and
\begin{equation}
    V^{\dag}V = P_{z,+1} \otimes e^{2\kappa \hat{p}}
              + P_{z,-1} \otimes e^{-2\kappa \hat{p}}
\end{equation}
Since $I \otimes \hat{p}$ is independent\footnote{Recall that $\mathbf{P}(B | \mathscr{A}) = \mathbf{P}(B)I$ if $B$ is independent of $\mathscr{A}$.} of any spin operator under $\mathbf{P}$, we use the module property to pull it through the conditional expectation and find
\begin{multline}
    \mathbf{P}(V^{\dag}(\sigma_x \otimes I)V | \vN{I\otimes\hat{p}})
        = \mathbf{P}_\mu(P_{z,+1}\sigma_x P_{z,+1})e^{2\kappa \hat{p}} \\
        +\mathbf{P}_\mu(P_{z,-1}\sigma_x P_{z,-1})e^{-2\kappa \hat{p}}
        + 2 \real{}\mathbf{P}_\mu(P_{z,-1}\sigma_x P_{z,+1})
\end{multline}
and 
\begin{equation}
    \mathbf{P}(V^{\dag}V | \vN{I\otimes\hat{p}}) = \mathbf{P}_\mu(P_{z,+1})e^{2\kappa \hat{p}}
        + \mathbf{P}(P_{z,-1}) e^{-2\kappa \hat{p}} .
\end{equation}
Wrapping these in $U^\dag$ and $U$ gives the overall conditional expectation as
\begin{equation}
	\begin{split}
	    &\mathbf{P}(U^{\dag}(\sigma_x \otimes I)U | \mathscr{A}) \\
	   =&\left(\mathbf{P}_\mu(P_{z,+1}\sigma_x P_{z,+1})e^{2\kappa U^{\dag}(I\otimes \hat{p})U} 
	        +\mathbf{P}_\mu(P_{z,-1}\sigma_x P_{z,-1})e^{-2\kappa U^{\dag}(I\otimes \hat{p})U}\right. \\
	        &\Bigl.	+ 2 \real{}\mathbf{P}_\mu(P_{z,-1}\sigma_x P_{z,+1})\Bigr) \left/\left(
	        \mathbf{P}_\mu(P_{z,+1})e^{2\kappa U^{\dag}( I \otimes \hat{p})U}
	            + \mathbf{P}(P_{z,-1}) e^{-2\kappa U^{\dag}( I \otimes \hat{p}) U}\right)\right.
	\end{split}
\end{equation}
Although the result looks a bit unwieldy, it is actually rather straightforward.  Once we perform the SG measurement of $\sigma_z$, we will have determined the value of $U^{\dag}(I\otimes \hat{p})U$, which is then plugged into the above expression to immediately evaluate the conditional expectation of $U^{\dag}(\sigma_x\otimes I)U$.  As we saw in Fig.~\ref{quantum:fig:stern_gerlach_probability}, this is not quite the same result the projection postulate would give for a projective measurement of $\sigma_z$, reflecting the physical nature that the SG device does not perfectly discriminate the $\sigma_z$ outcome for any finite $\kappa$.  However, as $\kappa \to\infty$, we do recover the projection postulate.  

So what was the point of this example?  After all, the Born rule provides a perhaps more transparent way to calculate the same probabilities.  While this is true, the fact is that we rarely have a truly projective measurement available; the quantum probability formalism allows us to handle these generalized measurement with ease.  As a result, conditioning is again a consequence of conditional expectation---no extra postulates are needed.  Most importantly, these techniques are highly reminiscent of those used in developing the classical filtering equation and will therefore be essential when we develop the quantum filter.
\end{example}
\subsection{Summary}
The purpose of this section was to lay the groundwork for performing inference in the quantum setting.  By developing a quantum probability formalism, we found that a commutative set of observables is \emph{identical} to a classical probability theory, indicating we can easily leverage all the techniques we developed in Chapter \ref{chapter:classical}.  As was the case classically, care must be taken for infinite dimensional spaces, but the resulting tools are not substantively different.  In developing the quantum conditional expectation, we found that inference is only possible between commutative observables.  This requires us to include a model for the probe quantum system within the quantum probability space, after which the familiar Born postulate for conditioning quantum systems simply pops out of conditional expectation calculations.  Finally, we developed a quantum Bayes rule for relating conditional expectation calculations under different states.   
\section{Quantum Stochastic Processes}
Heartened by our success in developing a quantum probability theory, we now consider developing a quantum analog of the classical stochastic processes discussed in Chapter \ref{chapter:classical}.  Given the broad applicability of classical white noise processes in describing classical stochastic systems, we hope that an analogous quantum white noise and stochastic differential equation formalism will allow us to cast the the quantum optics filtering problem in similar language, after which we may extend the classical filtering solution to the quantum case. 

  There are then two separate issues to address in this section.  First, we need to develop a mathematical description of quantum white noise processes in terms of a quantum probability model and similarly devise a quantum stochastic calculus for manipulating such processes.  The second task is to connect this mathematical model to a physical one, in which quantum white noise arises naturally from a suitable limit of a standard physical model.  Although both issues admit rigorous solutions, I will primarily focus on the details salient for solving the quantum optics filtering problem as was depicted in Fig.~\ref{fig:quantum:filtering_setup}.  There are several approaches for developing the fundamental quantum noise processes, including starting with the classical processes and then extending to a quantum probability model or starting from a quantum model and developing the quantum processes directly.  We will follow the latter approach as done in \citet{Bouten:2004a,Barchielli:2003}.   This offers a more straightforward route to quantum Brownian motion than the development in \citet{Bouten:2007b}, which focuses on developing both Poisson and Gaussian quantum noise processes by generating a quantum probability space from a classical probability space.  The interested reader should consult \citep{Parthasarathy:1992a} and \citep{Accardi:2002a} for a thorough and rigorous presentation of the topics discussed within this section.
\subsection{Symmetric Fock Space}
Given our ultimate goal of describing experiments in quantum optics, the Hilbert space for our quantum probability space is naturally that for the quantum electromagnetic field.  In this section, we focus on the development of this space in terms of a single polarized mode of the field which may then be extended to describe the full quantized field over many spatial modes.   The Hilbert space for a single photon in this single mode is
\begin{equation}
    \mathcal{H} = \mathbb{C}^2 \otimes \mathcal{L}^2(\mathbb{R}) \cong \mathcal{L}^2(\mathbb{R}; \mathbb{C}^2) . 
\end{equation}
$\mathcal{H}$ is simply the space of $\mathcal{L}^2$ integrable functions in time that return elements in $\mathbb{C}^2$.  Thus an element in $\mathcal{H}$ is a function $f_t \in \mathcal{L}^2(\mathbb{R};\mathbb{C}^2)$, which for every time $t$, tells us the polarization state of a single photon.  If we fix an orthonormal polarization basis $e_1,e_2$ in $\mathbb{C}^2$, then we can decompose these functions as $f_t^{(1)}e_1 + f_t^{(2)}e_2$, so that the inner product is
\begin{equation}
\Ip{f}{g} = \int dt f^{\dag} g = \int dt f_t^{(1)*}g_t^{(1)} + f_t^{(2)*}g_t^{(2)}
\end{equation}
Embedding time directly into the Hilbert space is perhaps unusual, since time usually appears as a parameter via unitary evolution.  Later when tying this formalism to a physical model, we will see that the explicit inclusion of time in $\mathcal{H}$ is analogous to an interaction picture representation, where the states have a explicit time dependence due to a free field evolution.  For now at least, we take this approach so that the resulting quantum stochastic processes, which are merely families of operators on $\mathcal{H}$ indexed by time, are defined in analogy to classical stochastic processes.

States of the electromagnetic field mode involve potentially many photons, which are best considered in the second quantized picture, where the Hilbert space is the \emph{symmetric Fock space}
\begin{equation}
    \mathcal{F} = \mathbb{C} \oplus \bigoplus_{n=1}^{\infty} \mathcal{H}^{\otimes_s n} .
\end{equation}
Each tensor sum term corresponds to a sector with a fixed number of photons, e.g. the zero photon sector, the one photon sector, \ldots ; within a given sector, we use the symmetrized tensor product $\otimes_s$, which ensures that only symmetric states of the constituent photons are possible (which must be true for bosons).  Following our approach in Example \ref{quantum:example:harmonic_oscillator}, we define the \emph{exponential vectors} for $f \in \mathcal{H}$ as
\begin{equation}
    \ket{e(f)} = 1 \oplus \bigoplus_{n=1}^{\infty} \frac{1}{\sqrt{n!}} f^{\otimes n} ,
\end{equation}
which when normalized are the \emph{coherent vectors} $\ket{\psi(f)} = \exp(-\frac{1}{2}\normsq{f})\ket{e(f)}$.  Note that $\braket{e(f)}{e(g)} = \exp\Ip{f}{g}$.  These states are dense in $\mathcal{F}$ so that we may define the action of operators on them and extended to all other states.  The coherent vectors are analogous to coherent states of the harmonic oscillator, which we recall had number state amplitudes related to powers of the complex number $\alpha$.  For the coherent vectors $\ket{\psi(f)}$, this generalizes to having the same single particle state $f$ for each photon in the different sectors, where this state is specified over all time $t$.  An important state for our purposes is the vacuum vector $\ket{\Phi} = \ket{\psi(0)} = \ket{e(0)} = 1 \oplus 0 \oplus \ldots$, which defines the \emph{vacuum state} $\mathbf{P}_\phi(A) = \bra{\Phi}A\ket{\Phi}$ for $A \in \mathscr{B}(\mathcal{F})$.

The quantum probability space is then defined by the von Neumann algebra $\mathscr{N} = \mathscr{B}(\mathcal{F})$, the set of bounded operators on the symmetric Fock space, with vacuum state $\mathbf{P}_\phi$.  Before studying specific operators in this space, we note that it admits a natural decomposition analogous to those for classical filtrations called a \emph{continuous tensor product structure}
\begin{equation}
    \mathcal{F} = \mathcal{F}_{s]} \otimes \mathcal{F}_{[s,t]} \otimes \mathcal{F}_{[t}
\end{equation}
for $ 0 < s < t$.  This continuous decomposition also holds for the von Neumann algebra 
\begin{equation}
    \mathscr{N}  = \mathscr{N}_{s]} \otimes
                  \mathscr{N}_{[s,t]} \otimes
                  \mathscr{N}_{[t}
                = \mathscr{B}(\mathcal{F}_{s]}) \otimes 
                    \mathscr{B}(\mathcal{F}_{[s,t]}) \otimes 
                    \mathscr{B}(\mathcal{F}_{[t})
\end{equation}
and the exponential vectors
\begin{equation}
    \ket{e(f)} = \ket{e(f_{s]})} \otimes \ket{e(f_{[s,t]})} \otimes \ket{e(f_{[t})} .
\end{equation}
Thus the operator process $\{X_t\}$ affiliated to $\mathscr{N}$ is \emph{adapted} if $X_t$ is affiliated to $\mathscr{N}_{t]}$ for every $t$, which is equivalent to it having the form $X_t \otimes I$ on $\mathcal{N}_{t]}\otimes\mathcal{N}_{[t}$.  

\subsection{Quantum White Noise}
Given the close analogy of the symmetric Fock space to the harmonic oscillator space considered in Example \ref{quantum:example:harmonic_oscillator}, we expect to find a Gaussian operator process by studying Weyl transformations of the exponential vectors.  Taking this analogy seriously, pick a $g \in \mathcal{H}$ and define the \emph{Weyl operator} $W(g) \in \mathscr{B}(\mathcal{F})$ as
\begin{equation}
    W(g)\ket{e(f)} = e^{-\frac{1}{2}\normsq{g} - \Ip{f}{g}}\ket{e(f+g)} .
\end{equation}
Recall that the harmonic oscillator Weyl operator performed a translation in $\mathbb{C}$ by some complex number $\gamma$; the Weyl operator here extends this to a translation in $\mathcal{L}^2(\mathbb{R} \otimes \mathbb{C}^2)$ by the single photon state $g$.  Note that the Weyl operators form a group via the relation
\begin{equation}
    W(f)W(g) = e^{-i \imag{\Ip{f}{g}}}W(f+g) \qquad f,g \in \mathcal{H} .
\end{equation}
From the continuous tensor product structure, we see that $W(g\chi_{[0,t]})$ is an adapted operator process.  

In order to apply Stone's theorem to find the generators of these translations, we pick a particular $g \in \mathcal{H}$ and study the one parameter group $\{W(tg)\}_{t\in\mathbb{R}^+}$.  Stone's theorem then tells us that there exists a self-adjoint $B(g) \in \mathscr{N}$ such that
\begin{equation}
    W(tg) = e^{it B(g)}
\end{equation}
The operators $B(g)$ are known as field operators, which we will later tie to the more familiar electromagnetic field operators.  For now, we continue in the tradition of Example \ref{quantum:example:harmonic_oscillator} and consider the statistics of these operators under the vacuum state.  Their characteristic function is
\begin{equation} \label{quantum:eq:fock:weyl_characteristic}
    b_g(k) = \mathbb{P}_{\phi}(W(kg)) = \braket{e(0)}{e(kg)} e^{-\frac{k^2}{2}\normsq{g}}
                 = e^{-\frac{k^2}{2}\normsq{g}}
\end{equation}
which indicates that $B(g)$ is a mean zero Gaussian random variable with variance $\normsq{g}$.  Note that if we were to use an arbitrary coherent state, rather than the vacuum state, the field operators would still be Gaussian with the same variance, but with non-zero mean.

Now in order to identify this with a classical stochastic process via the spectral theorem, we need to consider a commutative operator process.  Specifically, consider the operator process $\{B_t^{\phi,q} = B(e^{i\phi}e_q\chi_{[0,t]})\}$ for a fixed $\phi$ and polarization $q$.  By construction, this is an adapted process and from the continuous tensor product structure, we further know that $B(e^{i\phi}e_q \chi_{[s,t]})$ is affiliated to $\mathscr{N}_{[s,t]}$.  Therefore increments for independent intervals commute and since $B_0^{\phi,q} = I$, we know that any pair $B_t^{\phi,q}$ and $B_s^{\phi,q}$ commute.  Thus $\vN{\{B_t^{\phi,q}\}_{t\in\mathbb{R}}}$ is a commutative von Neumann algebra and from the spectral theorem, is equivalent to a classical stochastic process.  But we also know from Eq.~\eqref{quantum:eq:fock:weyl_characteristic} that the increment $B_t^{\phi,q} - B_s^{\phi,q}$ ($t > s$) is a mean zero Gaussian random variable with variance $t-s$ when in the vacuum state.  The continuous tensor product structure further implies that time independent increments are statistically independent, so that Definition \ref{def:weiner_process} tells us $\iota(B_t^{\phi,q})$ is \emph{identically} a classical Wiener process!  Indeed, by varying $\phi$, we see that an entire family of Wiener processes may be constructed.  However, they do not generally commute with each other, so that only one may be identified in a given realization.  

There is a particular set of these \emph{quantum Wiener processes} which we now identify.  Let $Q_t^q = B(ie_q\chi_{[0,t]})$, $P_t^q = B(-e_q\chi_{[0,t]})$ and $A_t^q = (Q_t^q + iP_t^q)/2$.  These are analogous to the position, momentum and annihilation operators introduced in Example \ref{quantum:example:harmonic_oscillator} and correspond to different quadratures of each polarization mode of the quantum electromagnetic field.  These allow us to introduce the fundamental noises
\begin{align}
    A^q_t\ket{e(f)} &= \left(\int_0^t f_q(t)dt\right) \ket{e(f)}\\
    \bra{e}(g) A_t^{q,\dag}\ket{e(f)} &= \left(\int_0^t g_q^{*}(s)ds\right)\braket{e(g)}{e(f)}\\
    \bra{e(g)} \Lambda_t^{qr} \ket{e(f)} &= \left(\int_0^t g_q^{*}(s)f_r(s)ds\right)\braket{e(g)}{e(f)}
\end{align}
The creation $A^q_t$ and annihilation $A_t^{q,\dag}$ processes are precisely quantum Wiener processes we just studied and are formally related to the familiar interaction picture Bose field operators for the single mode via
\begin{equation} \label{quantum:eq:field_mode_wiener_relation}
    A^q_t = \int_0^t a_q(s) ds \qquad A^{q,\dag}_t = \int_0^t a_q^{\dag} ds
\end{equation}
where $[a_q(s),a_r(s)^{\dag}] = \delta_{qr}\delta(t-s)$ and all other commutators are zero.  Given the delta-time correlation, we see that these canonical field operators are singular objects, in analogy to the usual delta distribution definition of classical white noise.  The remaining gauge or scattering process $\Lambda_t^{qr}$ may also be represented in terms of the usual Bose fields as
\begin{equation}
    \Lambda_t^{qr} = \int_0^t a_q^{\dag}(s) a_r(s) ds .
\end{equation}
As detailed in \citep{Bouten:2007b}, the diagonal entries $\Lambda_t^{qq}$ correspond to counting quanta in a given polarization mode and can be related to classical Poisson processes when the field is in a coherent state, recovering the other stochastic process expected in generalizing Example \ref{quantum:example:harmonic_oscillator}.
\subsection{Quantum Stochastic Calculus} \label{quantum:subsect:quantum_stochastic_calculus}
Since we are ultimately interested in describing quantum stochastic processes driven by the fundamental noises, e.g. systems with a formal\footnote{Meaning a careful interpretation of what the time-derivative of a quantum Wiener process means.} Hamiltonian $H(t) = H_0 + H_1 \dot{Q}_t+ H_2\dot{P}_t + H_3 \dot{\Lambda}_t$, our next task is to develop an appropriate quantum stochastic integral and calculus, keeping in mind the mathematical issues we had to handle in the classical case.  Note that in order to more clearly focus on the essentials, I have restricted consideration to a single polarization mode by dropping the polarization index on the fundamental noises; it should be clear how to generalize the following to account for multiple polarization modes.  Suppose we were only interested in integrals with respect to a single quadrature, say $Q_t$, where the integrands are adapted quantum stochastic processes that commute with $Q_t$.  Given this commutative set, we inherit the It\^{o} integral and calculus definitions through the spectral theorem, so that all the mathematical subtleties are handled by our classical construction in Chapter \ref{chapter:classical}.  Of course, we are really interested in processes which are driven by all three fundamental noises, which do not commute with each other and therefore do not admit a simultaneous classical probability mapping.  Following \citet{Bouten:2007b}, I will attempt to sketch the development of quantum stochastic calculus following \citet{Parthasarathy:1992a,Hudson:1984a}, noting many of the technical issues involved, but neglecting to delve into the details.

Recalling the physical picture we have in mind (Fig.~\ref{fig:quantum:filtering_setup}), we see that there are really two physical systems to consider---the optical field and the atoms.  Letting $(\mathscr{N}_f,\mathbf{P}_\phi)$ be the quantum probability space for the electromagnetic field in the vacuum state that was developed in the last section, we similarly need to define the quantum probability space $(\mathscr{N}_s, \mathbf{P}_s)$ for the atomic system.  Generalizing slightly, we set $\mathscr{N}_s = \mathscr{B}(\mathcal{H}_s)$ and $\mathbf{P}_s(A) = \Tr{A\rho}$, where $\mathcal{H}_s$ is some finite dimensional Hilbert space and $\rho$ is a corresponding density matrix on that space.  Thus the overall space, $(\mathscr{N}_s \otimes \mathscr{N}_f,\mathbf{P}_\phi\otimes \mathbf{P}_s)$ allows us to study how the quantum noises couple to the atomic system and how the two jointly evolve.  This will be made more precise later in this section and for the time being, we focus on the mathematical problem of defining integrals of the form $\int_0^t L_s dM_s$ where $M_t$ is one of the fundamental noises and $L_t$ is an adapted process, which here means it is affiliated to $\mathscr{N}_s \otimes {\mathscr{N}_f}_{t]}$.  Often, $L_t$ will be trivially adapted and correspond to an time-independent operator which acts as the identity on the entire $\mathscr{N}_f$ space. 

The approach we take in defining quantum stochastic integrals follows the one taken classically; we begin by defining the integral for simple processes and then look to define arbitrary integrals as a suitable limit of simple approximations.  First, recall that for $s < t$, any of the fundamental noise increments $M_t - M_s$ are affiliated to ${\mathscr{N}_f}_{[s,t]}$.  Given that $L_t$ is affiliated to ${\mathscr{N}_f}_{t]}$ by assumption, this means we may write $L_s \otimes (M_t - M_s) = L_s(M_t - M_s) = (M_t - M_s)L_s$, i.e.~the processes commute and there are no issues in multiplying these unbounded operators.  This is analogous to the non-anticipative property of the classical Wiener increment.  \emph{Simple processes} $L_t$ are those whose values change at the fixed sequence of times defined by the partition $\pi_n$ of $[0,t]$, e.g.
\begin{equation}
    L_t = \sum_{t_i \in \pi_n} L_{t_i}\chi_{[t_i,t_{i+1})}
\end{equation}
so that we may define the quantum stochastic integral for these simple processes as
\begin{equation}
    \int_0^t L_s dM_s = \sum_{t_i \in \pi_n} L_{t_i} (M_{t_{i+1}} - M_{t_i}) .
\end{equation}
As was the case classically, the difficulty now is to extend this definition to arbitrary $L_t$ in terms of an approximating sequence of simple processes $L_t^n$ whose stochastic integrals converge to give a unique integral for the initial process $L_t$.

More concretely, consider the set of adapted processed $(E,F,G,H)$ which admit the simple approximations $(E_t^n,F_t^n,G_t^n,H_t^n)$.  We want to define the integral
\begin{equation}
           I_t = \int_0^t E_s d\Lambda_s + F_s dA_s + G_s dA_s^{\dag} + H_s ds
\end{equation}
as a suitable limit of the corresponding simple approximations $I_t^n$ over the simple processes.  Recall that classically, we were able to use the It\^{o} isometry (Lem.~\ref{lem:ito_isometry}) to define this limit uniquely in $\mathcal{L}^2$.  Taking the same approach here is not quite so straightforward.  For example, let $\mathcal{H}_s = \mathbb{C}$ so that it may be ignored for the time being; then the seminorm is given by $\normsq{X}_\phi = \bra{\Phi}X\ket{\Phi}$.  Mean square convergence of $I_t^n \to I_t$ then corresponds to $\normsq{I_t -I_t^n}_\phi = \bra{\Phi}(I_t - I_t^n)^{\dag} (I_t - I_t^n)\ket{\Phi} \to 0$ as $n \to \infty$.  Such convergence is clearly sensitive to the particular state $\ket{\Phi}$.  Does this mean the domain of $I_t$ is only the vacuum?  That is, how does $I_t$ act on vectors orthogonal to the vacuum if it is only defined relative to convergence in the vacuum state?

There are many inequivalent ways out of this ambiguity and we follow the approach of Hudson and Parthasarthy \citep{Hudson:1984a}.  We fix the domain of $I_t$ to be $\mathcal{H}_s \otimes \mathcal{D}$ from the start, where $\mathcal{D}$ is the linear span of exponential vectors $\ket{e(f)}$.  $I_t$ is then the unique operator on this domain, such that $\bra{v}\otimes\bra{\psi} (I_t - I_t^n)^{\dag} (I_t - I_t^n)\ket{v}\otimes\ket{\psi}$ for every $\ket{v} \in \mathcal{H}_s, \ket{\psi} \in \mathcal{D}$.  This corresponds to a simultaneous mean square limit for \emph{all} states in our fixed domain.  Hudson and Parthasarthy show that this limit exists as long as $\int_0^t \normsq{(E_s - E_s^n)\ket{v}\otimes\ket{\psi}}ds \to 0$ as $n \to \infty\quad \forall \ket{v} \in \mathcal{H}_s, \ket{\psi} \in \mathcal{D}$ and likewise for $F,G,H$.  Additionally, they show that every square-integrable process, e.g. $\int_0^t \normsq{E_s\ket{v}\otimes\ket{\psi}} < \infty\quad \forall \ket{v} \in \mathcal{H}_s, \ket{\psi} \in \mathcal{D}$, admits a simple process approximation.  Thus on the fixed domain, we have essentially the same stochastic integral construction we did classically, in which square-integrable processes admit a unique simple approximation, the integrals of which admit a unique limit as long as these approximations converge independently of the choice of approximation for each of $E,F,G,H$ on all states in the domain.
\begin{definition}\label{def:quantum_ito_integral}
    The \emph{quantum It\^{o} integral} for the adapted and square-integrable processes $(E,F,G,H)$, written
    \begin{equation}
        I_t = \int_0^t E_s d\Lambda_s + F_s dA_s + G_s dA_s^{\dag} + H_s ds ,
    \end{equation}
    is uniquely defined on $\mathcal{H}_s \otimes \mathcal{D}$ as a limit of simple approximations.
\end{definition}
Note that for the vacuum reference state, we have the nice property that $\Lambda_t \ket{\Phi} = A_t\ket{\phi} = 0$, so that in the vacuum the $E_t,F_t$ terms go to zero.   Similarly, since $A_t^{\dag}$ acts to the left as $A_t$ does to the right, we further know that the $G_t$ term is zero in vacuum expectation, although $A_t^{\dag}\ket{\Phi} \neq 0$.  Therefore just like the classical It\^{o} integral, the quantum It\^{o} integral is entirely ``deterministic'', i.e. not driven by the quantum noises, in vacuum expectation. 

Rather than working with the quantum It\^{o} integral, we often write a corresponding \emph{quantum stochastic differential equation (QSDE)}
\begin{equation}
    dI_t = E_t d\Lambda_t + F_t dA_t + g_t dA_t^{\dag} + H_t dt
\end{equation}
which is really just a shorthand representation for the integral in Definition \ref{def:quantum_ito_integral}.  The differential notation is retained to remind us of the singular nature of the stochastic processes, which don't have a well-defined standard time derivative.  As the quantum generalization of the classical stochastic differential equation, we can also study the transformation rules of QSDEs.  Again, defining such properties are done relative to a particular domain, as it is not clear a priori that e.g.~the product of integrals $I_tJ_t$ is a well-defined operator on the domain $\mathcal{H}_s \otimes \mathcal{D}$.  The insight of Hudson and Parthsarathy is to use the fact that the adjoint\footnote{Taking the physicists perspective, I have been \emph{very} casual in using the adjoint $\dag$ in place of the Hilbert space adjoint independent of the domain of the operators.  It is not generally true that domain of the adjoint of an operator coincides with the domain of the operator itself.  Therefore the Hudson-Parthasarathy approach involves more care than I let on, but the details are not so relevant for this introduction.  The reader should consult the references \citep{Bouten:2007b,Parthasarathy:1992a} for more rigor.} $I_t^{\dag}$ is well-defined when restricted to our fixed basis, so that the expression for  $I_tJ_t$ is read off from examining the matrix elements $(\bra{v'}\otimes \bra{\psi'}I_t^{\dag}) (J_t\ket{v}\otimes\ket{\psi})$ for arbitrary states in the domain.  This gives rise to the quantum generalization of the It\^{o} rule (Thm.~\ref{thm:ito_rule}).
\begin{theorem}[\textbf{Quantum It\^{o} rule, Theorem 4.2 in \citep{Bouten:2007b}}]\label{thm:quantum_ito_rule}
    Let $(F^{qr},G^q,H^q,I)$, $(B^{qr},C^q,D^q,E)$ and $(B^{qr\dag},C^{q\dag},D^{q\dag},E^{\dag})$ be integrable stochastic processes where the latter are adjoint pairs.  Consider the stochastic integrals with QSDEs
    \begin{align}
        dX_t &= B_t^{qr}d\Lambda_t^{qr} + C_t^{q}dA_t^{q} + D_t^{q}dA_t^{q\dag} + E_t dt\\
        dY_t &= F_t^{qr}d\Lambda_t^{qr} + G_t^{q}dA_t^{q} + H_t^{q}dA_t^{q\dag} + I_t dt\\        
    \end{align}
where repeated polarization indices are summed over.  The stochastic process $Z_t = X_tY_t$ satisfies the QSDE
\begin{equation}
    dZ_t = X_t dY_t + dX_t Y_t + dX_t dY_t
\end{equation}
where the differential products are evaluated using the fundamental It\^{o} table
\begin{center}
	\begin{tabular}{c|c|c|c|c}
	    $dM_1 \backslash dM_2$ & $dA_t^{i\dag}$ & $d\Lambda_t^{ij}$ & $dA_t^{i}$ & $dt$\\
	    \hline
	    $dA_t^{k\dag}$ & $0$ & $0$ & $0$ & $0$\\
	    $d\Lambda_t^{kl}$ & $\delta_{li} dA_t^{k\dag}$ & $\delta_{li}d\Lambda_t^{kj}$ & $0$ & $0$\\
	    $dA_t^{k}$ & $\delta_{ki}dt$ & $\delta_{ki}dA_t^{j}$ & $0$ & $0$\\
	    $dt$ & $0$ & $0$ & $0$ & $0$
	\end{tabular}
\end{center}
\end{theorem}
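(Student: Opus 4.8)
The plan is to prove the product formula at the level of matrix elements between vectors in the fixed domain $\mathcal{H}_s\otimes\mathcal{D}$, following Hudson and Parthasarathy. Because $X_t$ and $Y_t$ are unbounded, I first make sense of $Z_t=X_tY_t$ through the sesquilinear form $(\bra{v'}\otimes\bra{e(g)}\,X_t^{\dag})(Y_t\,\ket{v}\otimes\ket{e(f)})$, which is well defined since both $X_t^{\dag}$ and $Y_t$ act on the domain (the adjoint trick noted above Theorem~\ref{thm:quantum_ito_rule}). The theorem then amounts to verifying that this form equals the matrix element of $X_0Y_0+\int_0^t\big(X_s\,dY_s+dX_s\,Y_s+dX_s\,dY_s\big)$ for all $\ket{v},\ket{v'}\in\mathcal{H}_s$ and all $f,g\in\mathcal{H}$.

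First I would reduce to simple (step) processes $X^n,Y^n$ adapted to a common refining partition $\pi_n$ of $[0,t]$, for which the telescoping algebraic identity
\[
X^n_tY^n_t-X^n_0Y^n_0=\sum_{t_i\in\pi_n}\big[(\Delta X_i)\,Y^n_{t_i}+X^n_{t_i}\,(\Delta Y_i)+(\Delta X_i)(\Delta Y_i)\big],\qquad \Delta X_i:=X^n_{t_{i+1}}-X^n_{t_i},
\]
holds exactly. The first two sums are, by Definition~\ref{def:quantum_ito_integral}, the simple-process integrals $\int_0^t dX^n_s\,Y^n_s$ and $\int_0^t X^n_s\,dY^n_s$; the new feature is the cross term $\sum_i(\Delta X_i)(\Delta Y_i)$, which would vanish for ordinary calculus but here survives to produce the It\^{o} correction.

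The core of the proof is the evaluation of this cross term. Each $\Delta X_i$ is a linear combination of the fundamental increments $\Delta\Lambda^{qr}_i,\Delta A^q_i,\Delta A^{q\dag}_i,\Delta t_i$ with coefficients frozen at the left endpoint $t_i$, and similarly for $\Delta Y_i$; multiplying them out and taking matrix elements against exponential vectors, I would invoke the first fundamental formula, whereby $\Delta A^q_i$ acting to the right on $\ket{e(f)}$ contributes $\int_{t_i}^{t_{i+1}}f_q\,ds$, $\Delta A^{q\dag}_i$ acting to the left on $\bra{e(g)}$ contributes $\int_{t_i}^{t_{i+1}}g_q^{*}\,ds$, and $\Delta\Lambda^{qr}_i$ contributes $\int_{t_i}^{t_{i+1}}g_q^{*}f_r\,ds$. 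A product of two first-order increments is generically $O((\Delta t_i)^2)$ and drops out in the refinement limit, \emph{except} when an annihilation increment can be normally reordered past a preceding creation increment via $[a_q(s),a_r^{\dag}(s')]=\delta_{qr}\delta(s-s')$: the surviving contraction of $\Delta A^q_i$ against $\Delta A^{r\dag}_i$ leaves exactly $\delta_{qr}\Delta t_i$, and the analogous contractions of $d\Lambda$ against $dA^{\dag}$ and of $dA$ against $d\Lambda$ supply the remaining nonzero entries. Tabulating which ordered products leave a single surviving $\delta$-contraction reproduces precisely the fundamental It\^{o} table, giving $\sum_i(\Delta X_i)(\Delta Y_i)\to\int_0^t dX_s\,dY_s$ with the products read off from that table.

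Finally I would pass from simple to arbitrary integrands. Here lies the principal obstacle: as $X^n\to X$ and $Y^n\to Y$ in the square-integrable sense of Definition~\ref{def:quantum_ito_integral}, one must show that \emph{all three} sums converge on every domain vector to the corresponding integrals against $X$ and $Y$, and in particular that the quadratic cross term is stable under the approximation rather than collapsing to zero. This demands form estimates on $\mathcal{H}_s\otimes\mathcal{D}$ analogous to the It\^{o} isometry (Lemma~\ref{lem:ito_isometry}), together with vigilance that the unbounded operators and their adjoints share the fixed domain at every stage. The most delicate bookkeeping is the handling of the scattering increments $d\Lambda$, which---unlike a pure $dt$ term---can keep an accompanying $dA^{\dag}$ at first order rather than annihilating it; this is the step I expect to require the most care. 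Once the table entries are justified, assembling the three limits yields $dZ_t=X_t\,dY_t+dX_t\,Y_t+dX_t\,dY_t$ as claimed.
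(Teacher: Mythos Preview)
Your proposal is a faithful sketch of the Hudson--Parthasarathy argument, and it is consistent with the paper's treatment. However, the paper does not actually prove this theorem: it is stated as a cited result (Theorem~4.2 in \citep{Bouten:2007b}) and the reader is referred to \citep{Bouten:2007b,Parthasarathy:1992a} for rigor. The only proof-related content in the paper is the single sentence preceding the theorem statement, which mentions the adjoint trick---defining $X_tY_t$ through the matrix elements $(\bra{v'}\otimes\bra{\psi'}\,X_t^{\dag})(Y_t\,\ket{v}\otimes\ket{\psi})$ on the fixed domain---and this is exactly your starting point. Your proposal therefore goes well beyond what the paper provides, supplying the simple-process reduction, the cross-term evaluation, and the limiting argument that the paper deliberately omits.
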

\noindent Theorem \ref{thm:quantum_ito_rule} provides us with a simple set of algebraic rules for manipulating products of quantum stochastic differential equations and makes the transformation of complicated stochastic processes almost trivial.  It is worth noting that the Hudson-Parthasarathy construction was a particular choice which leads to a useful quantum stochastic calculus that describes many interesting physical setups (as we will soon see).  Nonetheless, there are open mathematical questions of how to generalize the approach or what alternate constructions may be useful.
\subsection{Quantum Stochastic Limit}
The main lingering question is whether the quantum white noise processes we have developed are actually useful in describing physical systems of interest.  After all, there is no utility gained in developing a quantum stochastic calculus if we can't use it in practice!  Obviously, we have been working with a particular physical model in mind, in which the quantum white noise processes are operators on the electromagnetic field.  We thus need to formally tie the physical quantum model one usually would write down for such a setup to the abstract mathematical model considered above.  Although there are several approaches one might consider, we follow that of \citet{Accardi:2002a} who layout a very general method for deriving quantum white noise approximations of a broad range of system-reservoir interactions.  That is, one considers a Hamiltonian of the form
\begin{equation}
    H = H_0 + \lambda H_I
\end{equation}
where $H_0 = H_S + H_R $ is the \emph{free system and reservoir Hamiltonian} and $H_I$ is the \emph{interaction Hamiltonian} modulated by the coupling parameter $\lambda$.  We are interested in regimes where the coupling is weak, $\lambda \to 0$, but when its affect builds up over long times, $t \to \infty$; essentially considering simultaneously the long-term regime of scattering theory and the weak-effect regime of perturbation theory.  Although Accardi et.~al refer to this regime as the quantum singular limit, it is also known as the quantum Markov limit, the van Hove limit, the quantum stochastic limit and the quantum central limit \citep{Gardiner:1985a,Accardi:1990a,Gough:2005a,Gough:1999a,vanHove:1955a}.  The quantum central limit name is particularly appealing since we are interested in the cumulative affect of many infinitesimal interactions, much as the classical central limit considers the accumulation of many infinitesimal random kicks, e.g. our construction of the Wiener process as the limit of a random walk in Eq.~\eqref{classical:eq:wiener_random_walk}.

In a physical sense, the quantum stochastic limit is related to a separation of timescales of an interacting system.  One timescale is the relaxation time $t_R$ which is the characteristic decay time of the correlation $\expect{H_I(0)H_I(t)}$, where $H_I(t)=e^{itH_0}H_Ie^{-itH_0}$ is the interaction Hamiltonian with respect to the free evolution.  The slow degrees of freedom have characteristic decay time $t_S$ with respect to the correlation $\expect{X(0)X(t)} - \expect{X(0)}\expect{X(t)}$ where $X(t) = e^{it H_0}Xe^{-itH_0}$ and $X$ is an arbitrary observable.  There is also the interaction time $t_{\text{int}}$ which again describes the decay of correlations of observables $X(t)$, but now where $X(t)$ is evolved under the total Hamiltonian $H$.  A typical scenario has $t_R \ll t_{\text{int}} \ll t_S$, so that the fast degrees of freedom, when considered relative to the slow degrees of freedom, look completely uncorrelated and are therefore well described as a white noise.  This should surely be the case for the quantum optics systems, where the vacuum fluctuations occur on a much faster timescale than atomic interactions.

The quantum stochastic limit then attempts to find the form of both $U_t$ and $H_I(t)$ in the following sense
\begin{equation} \label{quantum:eq:stochast_limit:general_form}
  \lim_{t\to\infty,\lambda\to 0}    
    \left[\partialD{U_t^{\lambda}}{t} = -i\lambda H_I(t)U_t^{(\lambda)} \right] \to
    \partialD{U_t}{t} = -i H_I(t) U_t .
\end{equation}
This is notably \emph{different} than the standard quantum Markov approximation taken in deriving a quantum master equation, as for e.g.~done in \citet[Chapter 6]{Walls:2008a}. Rather than finding effective dynamics for a \emph{reduced} system, i.e.~just the atoms, the quantum stochastic limit finds effective dynamics for the \emph{full} system, i.e.~both the atoms and field.  This is particularly useful for the quantum filter, in which we want to measure the electromagnetic field in order to perform inference on the atomic system; if it were eliminated in the weak limit, we would have nothing left to measure! 

Clearly, there must be some relationship between $t$ and $\lambda$ in this limit, since taking $\lambda \to 0$ independently would completely decouple the interaction so that $H_I(t) \to 0$ in the free Hamiltonian interaction picture.  The following lemma shows that the only sensible scaling is to set $t \mapsto t/\lambda^2$ and study just the $\lambda\to 0$ limit.
\begin{lemma}[Lemma 1.8.1 in \citep{Accardi:2002a}]\label{lem:lamda_scaling}
    Let $\expect{\cdot}$ represent expectation with respect to some fixed state and suppose that $H_I(t)$ as described above is mean zero, time-invariant and integrable:
    \begin{align}
        \expect{H_I(t)} & = 0\\
        \expect{H_I(t_1 + s)\cdots H_I(t_n+s)} & = \expect{H_I(t_1)\cdots H_I(t_n)}\\
        \int_{-\infty}^{\infty}\abs{\expect{H_I(0)H_I(t)}} &< \infty
    \end{align}
    Then the expectation value of the second-order term in a perturbative series of $U_t^{(\lambda)}$,
    \begin{equation}
        -\lambda^2\int_0^t dt_1 \int_0^{t_1}dt_2 \expect{H_I(t_1)H_I(t_2)} ,
    \end{equation}
    has a finite nonzero limit as $\lambda \to 0,t \to \infty$ if and only if
    \begin{equation}
        \lim_{\lambda\to 0, t \to\infty} \lambda^2t = \tau
    \end{equation}
    for some finite, non-zero constant $\tau$.   In this case, the limit is
    \begin{equation} \label{quantum:lambda_scaling_transport_limit}
        -\tau\int_{-\infty}^{0} ds \expect{H_I(0)H_I(s)}
    \end{equation}
    \begin{proof}
        By the time-translation invariance property, we may rewrite the second-order integral as
        \begin{equation}
            -\lambda^2 \int_0^{t} dt_1 \int_0^{t_1}dt_2\expect{H_I(0)H_I(t_2 - t_1)} .
        \end{equation}
        Setting $s_2 = t_2 - t_1$ this can further be rewritten as
        \begin{equation}
            -\lambda^2 \int_0^{t_1} dt_1 \int_{-t_1}^{0} ds_2 \expect{H_I(0)H_I(s_2)} .
        \end{equation}
        Now setting $s_1 = \lambda^2 t_1 $, we have
        \begin{equation}
            - \int_0^{\lambda^2 t} ds_1 \int_{-s_1/\lambda^2}^{0} ds_2 \expect{H_I(0)H_I(s_2)}
        \end{equation}
        Clearly $s_1 > 0$, so that as $\lambda \to 0$ the inner integral tends to
        \begin{equation}
            \int_{-\infty}^{0} ds \expect{H_I(0)H_I(s)}
        \end{equation}
        which is independent of $s_1$ since $s_1/\lambda^2 \to \infty$ independent of the value of $s_1$.  This decouples the two integrals and leaves only the outer one, which tends to zero as $\lambda \to 0$ unless the upper limit $\lambda^2 t \to \tau$ as given in the theorem, recovering Eq.~\eqref{quantum:lambda_scaling_transport_limit}.
    \end{proof}
\end{lemma}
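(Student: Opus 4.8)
The plan is to reduce the iterated time integral to a product of a ``slow'' (macroscopic) integral and a ``fast'' (microscopic) correlation integral by exploiting time-translation invariance, and then to take the limit in each factor separately.

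First I would invoke the time-invariance hypothesis to replace the two-point function $\expect{H_I(t_1)H_I(t_2)}$ by $\expect{H_I(0)H_I(t_2-t_1)}$, which depends on the times only through their difference. The substitution $s_2 = t_2 - t_1$ then turns the inner integral over $t_2 \in [0,t_1]$ into one over $s_2 \in [-t_1,0]$, so that the second-order term takes the form
\[
    -\lambda^2 \int_0^t dt_1 \int_{-t_1}^0 ds_2\, \expect{H_I(0)H_I(s_2)} .
\]

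Next I would rescale to the macroscopic time $s_1 = \lambda^2 t_1$; the Jacobian $dt_1 = \lambda^{-2}\,ds_1$ cancels the prefactor $\lambda^2$ and pushes the lower endpoint of the inner integral to $-s_1/\lambda^2$, giving
\[
    -\int_0^{\lambda^2 t} ds_1 \int_{-s_1/\lambda^2}^0 ds_2\, \expect{H_I(0)H_I(s_2)} .
\]
The key point is that for each fixed $s_1 > 0$ one has $s_1/\lambda^2 \to \infty$ as $\lambda \to 0$, so the inner integral converges to the full negative-time correlation integral $\int_{-\infty}^0 ds\, \expect{H_I(0)H_I(s)}$, a value independent of $s_1$. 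Because the limiting inner integral no longer depends on the outer variable, the two integrations decouple and the expression factorizes in the limit as $-(\lambda^2 t)\int_{-\infty}^0 ds\, \expect{H_I(0)H_I(s)}$. Assuming the correlation integral is nonzero, this is finite and nonzero precisely when $\lambda^2 t \to \tau$ for a finite nonzero $\tau$, which settles both implications of the ``if and only if'' and yields the stated value.

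The main obstacle is justifying the interchange of $\lim_{\lambda\to 0}$ with the outer integral $\int_0^{\lambda^2 t} ds_1$, i.e.\ upgrading the pointwise convergence of the inner integral to convergence of the full double integral. This is exactly what the absolute integrability hypothesis $\int_{-\infty}^\infty \abs{\expect{H_I(0)H_I(t)}}\,dt < \infty$ provides: it furnishes the uniform bound $\bigl|\int_{-s_1/\lambda^2}^0 ds_2\, \expect{H_I(0)H_I(s_2)}\bigr| \le \int_{-\infty}^0 \abs{\expect{H_I(0)H_I(s)}}\,ds < \infty$, valid for all $s_1$ and $\lambda$, so that dominated convergence applies on the fixed outer domain. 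I would also remark that the mean-zero hypothesis $\expect{H_I(t)} = 0$ is what makes the first-order Dyson term vanish, so that this second-order contribution is indeed the leading nontrivial one.
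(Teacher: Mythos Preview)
Your proposal is correct and follows essentially the same route as the paper: time-translation invariance to reduce to a function of $t_2 - t_1$, the substitution $s_2 = t_2 - t_1$, the macroscopic rescaling $s_1 = \lambda^2 t_1$, and then the decoupling of the two integrals in the limit. Your explicit invocation of dominated convergence (via the absolute-integrability hypothesis) to justify passing the limit through the outer integral, and your remark on the role of the mean-zero assumption, are welcome additions that the paper's proof leaves implicit.
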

We see that a non-trivial limit implies that for times of order $t/\lambda^2$, the interaction produces effects of order $\tau$ and thus $\lambda$ serves as a natural timescale for the problem. It is of note that this limit can be performed for all terms in the Dyson perturbation series, which may then be re-summed to give the effective stochastic propagator on the right hand side of Eq.~\eqref{quantum:eq:stochast_limit:general_form}.

As a prototypical example, we now study the stochastic limit for a single two-level atom coupled to the quantized electromagnetic field.  This will allow us to focus on the relevant details of the stochastic limit rather than issues involving representation of the reduced dipole operator for complicated atoms.  For a more general derivation, the reader is enthusiastically encouraged to consult Chapters 2-5 of \citet{Accardi:2002a}.  The general procedure is to first identify the free and interaction Hamiltonians in order to determine the interaction picture propagator in Eq.~\eqref{quantum:eq:stochast_limit:general_form}.  We then make the replacement $t \mapsto t/\lambda^2$ and study the \emph{time-correlations} of the suitably rescaled interaction picture field mode operators $\tilde{a}_t(\lambda),\tilde{a}_t^{\dag}(\lambda)$.  The hope is that in the $\lambda \to 0$ limit, the correlation $\expect{\tilde{a}_t(\lambda)\tilde{a}^{\dag}_s(\lambda)} \to \delta(t-s)$, allowing us to reconstruct the quantum Wiener processes in a fashion analogous to Eq.~\eqref{quantum:eq:field_mode_wiener_relation}.  I refer the reader to \citep{Walls:2008a} for more detail on developing the quantized electromagnetic field and the dipole Hamiltonian given below.

We begin by introducing the free atom Hamiltonian
\begin{equation}
    H_A = \frac{\hbar\omega_{eg}}{2}(\ketbra{e}{e} - \ketbra{g}{g}) 
        = \frac{\hbar\omega_{eg}}{2}\sigma_z
\end{equation}
where I have used the usual isomorphism between an arbitrary two-level system and the Pauli operators.  The free electromagnetic-field Hamiltonian is
\begin{equation}
    H_F = \int d^3\mathbf{k} \sum_{ q }( \hbar \omega_{\mathbf{k}} a^{\dag}_{\mathbf{k},q} a_{\mathbf{k},q} + \frac{1}{2})
\end{equation}
where $\omega_{\mathbf{k}} \geq 0$, $q$ is the polarization index, and the mode operators satisfy $[ a_{\mathbf{k},q},a^{\dag}_{\mathbf{k}'q'}] = \delta^{3}(\mathbf{k}-\mathbf{k}')\delta_{qq'}$.  The dipole interaction Hamiltonian is given by
\begin{equation}
    H_I = -\mathbf{d}\cdot\mathbf{E}(\mathbf{r})
\end{equation}
with the quantum electromagnetic field written as
\begin{equation}
    \mathbf{E}(\mathbf{r}) = i\mathbf{E}^{+}(\mathbf{r}) - i \mathbf{E}^{-}(\mathbf{r})
        =  i \sum_q \int d^3\mathbf{k} g(\mathbf{k})\sqrt{\frac{\hbar\omega_{\mathbf{k}}}{2\epsilon_0}}
            a_{\mathbf{k},q}\mathbf{e}_{\mathbf{k},q}e^{i\mathbf{k}\cdot\mathbf{r}}
            + \text{ h.c.}
\end{equation}
Here, $\mathbf{e}_{\mathbf{k},q}$ are polarization vectors and $g(\mathbf{k}) \geq 0$ is a mode function to account for the spatial variation of the optical field.  Although we leave it general, we assume it is integrable and infinitely differentiable.

In the dipole approximation, we take $\mathbf{r} = 0$ and write the dipole operator in terms of the atomic energy eigenstates: $\mathbf{d} = \bra{g}\mathbf{d}\ket{e}(\sigma_- + \sigma_+) = \mathbf{d}_{ge}(\sigma_- +\sigma_+)$ with $\sigma_- = \ketbra{g}{e}$, $\sigma_+ = \ketbra{e}{g}$.  We then rewrite the interaction Hamiltonian as
\begin{equation}
    H_I = -i\int d^3\mathbf{k}\sum_q \hbar g_{\mathbf{k},q} a_{\mathbf{k},q}(\sigma_- +\sigma_+) + h.c.
\end{equation}
with the newly defined coupling strength 
\begin{equation}
	g_{\mathbf{k},q} = \sqrt{\frac{\omega_{\mathbf{k}}}{2\hbar\epsilon_0}} g(\mathbf{k}) (e_{\mathbf{k},q}\cdot \mathbf{d}_{ge}) .
\end{equation}
Note that I have not taken the usual rotating-wave approximation, which drops non-energy conserving terms such as $a_{\mathbf{k},q}\sigma_{-}$.  These will end up dropping out as part of the stochastic limit and given our interest in a white noise process with infinite spectral bandwidth, it would be inconsistent to neglect these terms at the outset (even though we would get the same result).

We next use the fact that 
\begin{align}
    e^{i\frac{t}{\hbar}(H_A + H_F)}\sigma_{\pm}e^{-i\frac{t}{\hbar}(H_A + H_F)}
                &= \sigma_{\pm}e^{\pm i \omega_{eg} t}\\
    e^{i\frac{t}{\hbar}(H_A + H_F)}a_{\mathbf{k},q}e^{-i\frac{t}{\hbar}(H_A + H_F)}
                &=  a_{\mathbf{k,q}}e^{-i\omega_{\mathbf{k}}t}
\end{align}
to rewrite the interaction Hamiltonian in the interaction picture with respect to the free Hamiltonian $H_0 = H_A + H_F$:
\begin{equation}
    H_I(t) = -i\int d^3\mathbf{k}\sum_q \hbar g_{\mathbf{k},q} 
           (a_{\mathbf{k},q}\sigma_- e^{-i(\omega_{\mathbf{k}} + \omega_{eg})t} +
            a_{\mathbf{k},q}\sigma_+ e^{-i(\omega_{\mathbf{k}} - \omega_{eg})t}) + h.c.
\end{equation}
Plugging into Eq.~\eqref{quantum:eq:stochast_limit:general_form} and rescaling $t \mapsto t/\lambda^2$, we have
\begin{align}
    \partialD{U_t^{\lambda}}{t} &= -\frac{i}{\lambda} H_I(\frac{t}{\lambda^2})U_t^{(\lambda)}\\
      &= \left[-\hbar (
             \tilde{a}_{\lambda,-\omega_{eg}}(t) \sigma_- + 
             \tilde{a}_{\lambda,+\omega_{eg}}(t) \sigma_+) + h.c.\right]U_t^{(\lambda)}
\end{align}
where I have introduced the rescaled time-domain field operators
\begin{equation}
    \tilde{a}_{\lambda,\omega}(t)  = \int d^3\mathbf{k}\sum_q
                g_{\mathbf{k},q}
                \frac{1}{\lambda}a_{\mathbf{k},q}
                e^{-i(\omega_{\mathbf{k}} - \omega)t/\lambda^2} .
\end{equation}
In order to characterize the behavior of these operators in the limit $\lambda \to 0$, we study their correlation with respect to the vacuum field,
\begin{align}
    \lim_{\lambda \to 0}\bra{\Phi}\tilde{a}_{\lambda,\omega}(t) 
                \tilde{a}^{\dag}_{\lambda,\omega}(s)\ket{\Phi}
            &=  \lim_{\lambda \to 0} \int d^3\mathbf{k}\sum_q \modsq{g_{\mathbf{k},q}}
                \frac{1}{\lambda^2} e^{-i(\omega_{\mathbf{k}} - \omega)(t-s)/\lambda^2}
                \\
             &=  \delta(t-s) \int d^3\mathbf{k}\sum_q \modsq{g_{\mathbf{k},q}}2\pi
             \delta(\omega_{\mathbf{k}} - \omega)\\
             &= \kappa(\omega)\delta(t-s) 
\end{align}
where we used the identity $\lim_{\lambda\to 0}e^{-i\omega t/\lambda^2}/\lambda^2 = 2\pi\delta(t)\delta(\omega)$\footnote{Following Proposition 1.2.1 in \citet{Accardi:2002a}, we can easily demonstrate this with respect to two Schwartz functions, which are infinitely differentiable and whose derivatives decrease to zero faster than any polynomial.  Using the test functions $\psi(t),\phi(\omega)$, we have
\begin{equation}
    I = \frac{1}{\lambda^2}\int dt \psi(t) \int d\omega \phi(\omega) e^{-i\omega t/\lambda^2}
\end{equation}
Setting $t = \lambda^2 \tau$ this becomes
\begin{equation}
    I =\int d\tau \psi(\lambda^2 \tau) \int d\omega \phi(\omega) e^{-i\omega\tau}
      = \sqrt{2\pi} \int d\tau\psi(\lambda^2\tau) \hat{\phi}(\tau)
\end{equation} 
where $\hat{\phi}$ is the Fourier transform of $\phi$.  We then have
\begin{equation}
    \lim_{\lambda \to 0} \sqrt{2\pi} \int d\tau\psi(\lambda^2\tau) \hat{\phi}(\tau)
      = \sqrt{2\pi}\psi(0) \int d\tau \hat{\phi}(\tau)
      = 2\pi\psi(0)\phi(0)
\end{equation} 
which shows that this is equivalent to $\lim_{\lambda\to 0}e^{-i\omega t/\lambda^2}/\lambda^2 = 2\pi\delta(t)\delta(\omega)$ in the sense of distributions.}, which in turn allows the introduction of $\kappa(\omega) = 2\pi\sum_q \modsq{g_{w_{\mathbf{k}}^{-1}(\omega),q}}$, the mode function evaluated at $\omega_{\mathbf{k}} = \omega$.  This suggests that the limit of $\tilde{a}_{\lambda,+\omega_{eg}}(t)$ is a delta-correlated quantum Wiener process with strength $\sqrt{\kappa(\omega_{eg})}$.  However, looking at the limit for $\tilde{a}_{\lambda,-\omega_{eg}}(t)$ requires evaluating the coupling strength at $\omega_{\mathbf{k}} = -\omega_{eg}$, which is impossible since $\omega_{\mathbf{k}} \geq 0$ by definition.  In fact, one can show that these non-energy conserving terms go to zero in the weak coupling limit, thus recovering the rotating wave approximation.

Although we have demonstrated that the correlations of the rescaled field operators limit to those for a quantum Gaussian white noise (cf.~Eq.~\eqref{quantum:eq:field_mode_wiener_relation}), there is considerably more effort required to find the limit of the interaction picture propagator $U_t^{\lambda}$, which involves studying the convergence of each term in the Dyson series expansion.  Several chapters of \citet{Accardi:2002a} are devoted to this task, indicating it is certainly beyond the scope of this overview.  Instead,we simply quote the perhaps unsurprising result given by the following \emph{quantum Stratonovich propagator} 
\begin{equation}
    dU_t = \sqrt{\kappa} \left[dA_t^{\dag}\circ \sigma_{-}dU_t - dA_t \circ \sigma_+ U_t\right]
\end{equation}
where I have set $\kappa = \kappa(\omega_{eg})$.  The Stratonovich increments are defined as they were classically in Definition \ref{def:stratonovich}, except we now have to worry about operators not commuting.  Unlike the quantum It\^{o} formulation, the quantum Stratonovich noise increments \emph{do not commute} with adapted operators, e.g.~$ O\circ dA_t \neq dA_t \circ O$, although they do transform via the normal chain rule.  The fact that the stochastic limit of a standard quantum differential equation is a quantum Stratonovich equation is precisely the quantum generalization of the classical Wong-Zakai Theorem \eqref{thm:wong_zakai}.  As was the case then, we can still convert to the following \emph{quantum It\^{o} propagator} or \emph{stochastic propagator}
\begin{equation} \label{quantum:eq:ito_propagator_two_level_atom}
    dU_t = \left[\sqrt{\kappa} \sigma_{-} dA_t^{\dag} - \sqrt{\kappa}\sigma_+ dA_t
            - \frac{1}{2}\kappa \sigma_+\sigma_- dt \right]U_t .
\end{equation}
The upside of the It\^{o} form is that the quantum noise increments commute with adapted processes and are zero in vacuum expectation, although one now needs to use the quantum It\^{o} chain rule rather than the normal calculus chain rule.  
\subsection{Summary}
We have briefly developed the quantum generalizations of stochastic noise processes and stochastic differential equations discussed in Chapter \ref{chapter:classical}. Certainly, that is the take away message from this section---that in spirit, the quantum versions are really no different than their classical counterparts.  As such, I have not stressed many of the statistical features and intuitions we focused on classically because they are more or less the same.  The main differences come down to the non-commutativity of quantum mechanics, but as we saw when developing the quantum It\^{o} integral, the solutions amount to a careful extension of the classical approach.  The more novel discussion was devoted to developing a physical model of quantum white noise, since unlike was the case classically, we now have a very particular class of physical systems in mind.  We found that for such quantum optics systems, the quantum electromagnetic field serves as an excellent model of quantum white noise.  Moreover, the quantum stochastic description arises naturally from a weak coupling limit of standard physical models.  In the following section, we will use these techniques to carefully define and solve the quantum optics filtering problem.
\section{Quantum Filtering Theory}
By this point, it should come as no surprise that a broad class of models in quantum optics are well described by the quantum stochastic formalism.  One such setup is shown schematically in Figure \ref{quantum:fig:filter_setup_annotated}, in which an input field, described in terms of quantum Wiener processes $dA_t,dA_t^{\dag}$, interacts with a cloud of atoms.  Although the two systems are initially unentangled, a joint interaction such as the one in Eq.~\ref{quantum:eq:ito_propagator_two_level_atom} would correlate them.  Therefore, later measurements of the scattered light field should contain some information about the cloud of atoms, although they will also contain the inherent quantum noise fluctuations of the optical field.  The task of the quantum filter is to process this continuous measurement stream in order to best estimate the state of the atomic system.  In this section, we formalize this problem using our newly gained quantum probability and quantum stochastic skills and then derive the quantum filtering equation in analogy to the reference probability method we used to derive the classical filtering equation.
\begin{figure}[bt]
    \centering
        \includegraphics[scale=1]{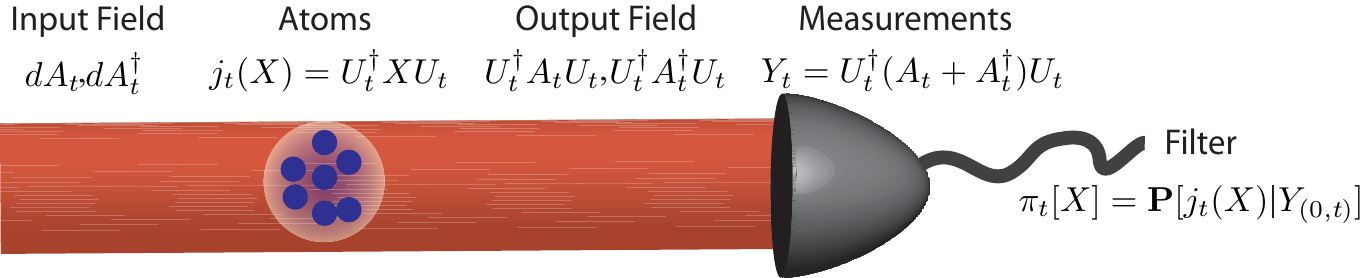}
        \caption[Schematic of continuous measurement in quantum optics]{Schematic of continuous measurement in quantum optics, in which light scattered
        by a cloud of atoms is continuously measured by a photodetector which is then filtered.  Based on Fig 5.1 in \citep{Bouten:2007b}. }
    \label{quantum:fig:filter_setup_annotated}
\end{figure}
\subsection{Statement of the Filtering Problem}
Following the classical approach, we would like to pose the quantum filtering problem in analogy to the systems-observations pair of Eqs.~\eqref{classical_filtering:eq:system} and \eqref{classical_filtering:eq:observations}, where now the system corresponds to the state of the atoms and the observations correspond to the measurements of the field.  Both are described by the quantum probability model considered in Subsection \ref{quantum:subsect:quantum_stochastic_calculus}.  The time evolution and corresponding QSDEs of both field and atom observables are determined by the quantum stochastic propagator for the experiment under consideration.  Based on Eq.~\ref{quantum:eq:ito_propagator_two_level_atom}, we will consider the generic stochastic propagator
\begin{equation} \label{quantum:eq:generic_propagator}
    dU_t = \left[ L dA_t^{\dag} - L^{\dag} dA_t - \frac{1}{2} L^{\dag}Ldt - iH dt\right]U_t
\end{equation}
where $L$ is some atomic operator that results in the weak coupling limit and $H$ is an arbitrary atomic Hamiltonian.  We also fix the quantum state as the product state $\rho \otimes \ketbra{\Phi}{\Phi}$, where $\rho$ is an arbitrary atomic state and $\ket{\Phi}$ is the usual vacuum state.  Note that we could equally well consider coherent field states by noting that $\ket{e(f)} = W_{f}\ket{\Phi}$, suggesting that we could explicitly include the Weyl operator in our dynamical equations and work with a vacuum reference state.  In particular, $U_t \mapsto U_tW_t$ and applying the quantum It\^{o} rules to $d(U_tW_t)$ would give a new propagator which describes the displaced dynamics.  Since this is not an essential part of the filtering problem, we will not dwell on it here.

Instead, we now focus on the atomic evolution given in terms of the \emph{quantum flow} or \emph{Heisenberg evolution}\footnote{It is a bit disingenuous to call this the Heisenberg evolution since $U_t$ is really the interaction picture propagator.  To return to the Heisenberg picture, we would need to rotate back by the free Hamiltonians.  Often the initial atomic state is an eigenstate of the free system Hamiltonian $H_S$, so that this return rotation introduces canceling phases, indicating $U_t$ already gives the Heisenberg evolution.  However, this is not the case generally and will depend on the specifics of the system at hand. }, written as $j_t(X) = U_t^{\dag}(X \otimes I) U_t$ for any atomic observable $X \in \mathscr{N}_s$.  Using the quantum It\^{o} rules and the fact that an explicitly time-independent observable satisfies $dX = 0$, $j_t(X)$ satisfies the following QSDE:
\begin{align}
    dj_t(X) &= dU_t^{\dag}(XU_t) + U_t^{\dag}XdU_t + dU_t^{\dag}XdU_t\\
            &= U_t^{\dag}\left[L^{\dag}X dA_t - LXdA_t^{\dag} - \frac{1}{2}L^{\dag}LX dt
                + iH X\right]U_t\nonumber\\
            &+ U_t^{\dag}\left[ XLdA_t^{\dag} - XL^{\dag}dA_t -\frac{1}{2}XL^{\dag}L
                - iXH\right]U_t\nonumber\\
            &+ U_t^{\dag}\left[ L^{\dag}XL dt\right]U_t\\
            &= j_t(\mathcal{L}_{L,H}[X])dt + j_t([L^{\dag},X])dA_t + j_t([X,L])dA_t^{\dag} \label{quantum:eq:jt_qsde}
\end{align}
where we have used the fact that the fundamental noises are non-anticipative to pull them out of the $j_t$ terms and introduced the familiar \emph{Lindblad generator} 
\begin{equation} \label{quantum:eq:lindblad_generator}
    \mathcal{L}_{L,H}[X] = L^{\dag}XL - \frac{1}{2}L^{\dag}LX - \frac{1}{2}XL^{\dag}L
                 + i[H,X] .
\end{equation}
The form of Eq.~\eqref{quantum:eq:jt_qsde} is pleasing, as it contains a deterministic part which depends on the familiar open quantum system Lindblad generator, plus quantum noise pieces which contain extra information regarding the field.  Thus, if we were to take a partial trace of the field system, we would recover the standard Heisenberg picture master equation.

Of course, if it were easy to measure the atomic system directly, we would be done at this point.  We have a dynamical equation which describes the exact evolution of the atoms and field in the weak coupling limit and $j_t(X)$ is the corresponding evolved atomic operator to be measured.  But lacking such direct atomitc measurements, we instead use the outgoing or scattered field as a probe of the atomic dynamics.  In order to perform inference, we therefore need to fix the probe observable we intend to measure.  The most common quantum optics measurements are photon counting, related to the $\Lambda_t$ process, and homodyne detection, related to the noise quadratures $e^{-i\phi}A_t + e^{i\phi}A_t^{\dag}$.  I will focus on the latter measurement and refer the reader to \citep{Bouten:2007b,Barchielli:2003} for more discussion on the topic. 

We still need to decide \emph{which} quadrature to measure and should so by looking at the form of $dU_t$.  Comparing to our two level atom example, $U_t$ appears to have come from a system-field coupling of the form $i(L+L^{\dag})(a - a^{\dag})$.  That is, the atoms appear to couple to the $p$ field quadrature.  We therefore would not want to measure that quadrature of the field, since it commutes with the coupling Hamiltonian and therefore remains unchanged under evolution by $U_t$.  Instead, we want to measure the orthogonal $x$ quadrature, which will evolve non-trivially under $U_t$ and carry off some information about its interaction with the atoms.  We thus write the measurements as $Y_t = U_t^{\dag}(A_t + A_t^{\dag})U_t$, which corresponds to the scattered $x$ quadrature.  We can again use the quantum It\^{o} rules and some patience to calculate\footnote{I don't list all the steps below because it is a useful exercise in the It\^{o} rules to calculate these terms by hand.  I will give the hint that all terms involving $(A_t + A_t^{\dag})$ are most easily treated together, as they simplify in a fashion similar to $d(U_t^{\dag}U_t)$, which we know is zero if $U_t^{\dag}U_t = I$.}
\begin{align}
    dY_t &= dU_t^{\dag}(A_t + A_t^{\dag})U_t + U_t^{\dag}d\left[(A_t + A_t^{\dag})U_t\right]
          + dU_t^{\dag}d\left[(A_t + A_t^{\dag})U_t\right]\\
         &= dU_t^{\dag}(A_t + A_t^{\dag})U_t + U_t^{\dag}(dA_t + dA_t^{\dag})U_t
          + U_t^{\dag}(A_t + A_t^{\dag})dU_t \nonumber\\
         & + U_t^{\dag}(dA_t + dA_t^{\dag})dU_t + dU_t^{\dag}(dA_t + dA_t^{\dag})U_t
           + dU_t^{\dag}(A_t + A_t^{\dag})dU_t \nonumber\\
         & + dU_t^{\dag}(dA_t + dA_t^{\dag})dU_t\\
         &= j_t(L + L^{\dag})dt + dA_t + dA_t^{\dag}
\end{align}
We see that the measurements are a noisy observation of $j_t(L+L^{\dag})$, albeit corrupted by the input $x$ quadrature noise $dA_t + dA_t^{\dag}$.  Recall that the $p$-quadrature drives the $j_t$ evolution, so that both non-commuting noises are somehow mixed up in the observations process.

We would like to define the filtering problem as $\mathbf{P}[j_t(X) | \mathscr{Y}_{(0,t)}]$, the conditional expectation of the atomic state given the entire measurement record.  However, as we saw in defining the quantum conditional expectation, we should be careful that this is actually a well-posed inference problem.  The first question is whether the entire observations process generates a commutative algebra.  If it did not, we would not be able to incrementally build up information over time, as future measurements could not be combined with past measurements.  Fortunately, it is straightforward to check that the observations are commutative which is equivalent to satisfying the \emph{self-nondemolition property}.  As a start, note that
\begin{equation}
	\begin{split}
	    U_t^{\dag}(A_s + A_s^{\dag})U_t &= U_s^{\dag}(A_s + A_s^{\dag})U_s
	                     + \int_s^{t}U_\tau^{\dag}\mathcal{L}_{L,H}(A+A^{\dag})U_\tau d\tau\\
	                     & + \int_s^{t}U_\tau^{\dag}[L^{\dag},(A+A^{\dag})]U_\tau d\tau
	                     + \int_s^{t}U_\tau^{\dag}[(A+A^{\dag}),L]U_\tau d\tau\\
	                     &= U_s^{\dag}(A_s + A_s^{\dag})U_s
	\end{split}
\end{equation}
indicating $Y_s =U_t^{\dag}(A_s + A_s^{\dag})U_t$ for $t \geq s$.  This is essentially a consequence of the Markov approximation implicit in the weak coupling limit.  We see that $A_s + A_s^{\dag}$, which corresponds to the free fields at time $s$ evolved under the free field Hamiltonian, only interacts with the atoms at time $s$ and is then moves on.  In other words, under the Markov approximation, the interaction occurs instantaneously, so that after $s < t$, $U_t$ does nothing to the interaction picture operators $A_s + A_s^{\dag}$.  Using this fact, we can readily verify
\begin{equation}
    [Y_t,Y_s] = [U^{\dag}_t(A_t + A_t^{\dag})U_t, U^{\dag}_t(A_s + A_s^{\dag})U_t]
              = U_t^{\dag}[A_t + A_t^{\dag}, A_s + A_s^{\dag}]U_t = 0 .
\end{equation}
Thus, $\mathscr{Y}_{(0,t)} = \vN{ Y_s : 0 \leq s \leq t}$ is a commutative von Neumann algebra which corresponds to a classical stochastic process via the spectral theorem.  This fixes the commutative algebra $\mathscr{A} = \mathscr{Y}_{(0,t)}$ used to define the quantum conditional expectation.  The only remaining check is that $j_t(X)$ is in its commutant, which is easily verified using the property just discussed,
\begin{equation}
    [j_t(X), Y_s] = [U_t^{\dag}XU_t, U_t^{\dag}(A_s + A_s^{\dag})U_t]
                  = 0 .
\end{equation} 

We therefore have a well-defined inference problem which is summarized in the following definition.
\begin{definition}\label{def:quantum:filtering_problem}
    The \emph{filtering problem} in quantum optics, defined on the quantum probability space $(\mathscr{N}_s\otimes\mathscr{N}_f)$ with state $\mathbf{P} = \mathbf{P}_s \otimes\mathbf{P}_\phi$, is to calculate
    \begin{equation}
        \pi_t[X] = \mathbf{P}(j_t(X) | \mathscr{Y}_{(0,t)})
    \end{equation}
for the system-observations pair
\begin{align}
    dj_t(X) & = j_t(\mathcal{L}_{L,H}[X])dt + j_t([L^{\dag},X])dA_t + j_t([X,L])dA_t^{\dag}\\
    dY_t &= j_t(L + L^{\dag})dt + dA_t + dA_t^{\dag} .
\end{align}
\end{definition}

Before solving this problem in the following subsection, let us pause and reflect on what makes this different than the classical filtering problem.  Classically, we considered the problem of estimating the state of a stochastically evolved system given observations corrupted by independent noise.  Thus, there was some underlying system state to find and only technical reasons limited our observation of that state.  In the quantum case, we have not added any extra corrupting noise; the limited observations of the system are a consequence of the fundamental uncertainties in quantum mechanics which arise from non-commuting observables, which here are the two field quadratures.  Furthermore, there is no hidden or underlying state independent of the observations process, since measurement back-action non-trivially changes the state of the system.  Fortunately, the structure of the filter is such that we can still estimate $j_t(X)$ using the observations process.  
\subsection{Quantum Filtering Equation}
Again we will follow the classical reference probability approach taken in solving the non-linear filtering problem.  Recall that the approach is to find a new measure under which the observations and state are statistically independent, so that the conditional expectation becomes trivial to evaluate.  For us, this means finding a new state under which the quantum conditional expectation is much simpler, after which an application of the quantum Bayes rule in Theorem \ref{thm:quantum_bayes_formula} allows us to relate this back to the original problem.   All of these steps were taken in Example \ref{quantum:example:stern_gerlach}, so you may refer to that for another demonstration of what follows.

Our first step is to find a quantum analogue of the Girsanov transformation, which here amounts to finding a state under which $Y_t$ is a Wiener process.  Instead, it actually is more convenient to work with the input quadrature $Z_t = A_t + A_t^{\dag}$ directly, suggesting we work under a new state
\begin{equation}
    \mathbf{Q}_t(X) = \mathbf{P}(U_t^{\dag}X U_t) .
\end{equation}
Thus, we move the time-evolution via $U_t$ onto the states and work with operators in the free field interaction picture.  From Example \ref{quantum:example:stern_gerlach}, we have $\mathbf{P}(U_t^{\dag}XU_t | U_t^{\dag}\mathscr{C}_t U_t) = U_t^{\dag}\mathbf{Q}_t(X | \mathscr{C}_t)U_t$, where $\mathscr{C}_t$ is the von Neumann algebra generated by $Z_t$ which is related to the original observations algebra via $\mathscr{Y}_{(0,t)} = U_t^{\dag}\mathscr{C}_tU_t$.  Thus, we have
\begin{equation} \label{quantum:eq:filtering_state_relation_temp}
    \mathbf{P}(j_t(X) | \mathscr{Y}_{(0,t)}) = U_t^{\dag}\mathbf{Q}_t(X | \mathscr{C}_t)U_t .
\end{equation}
The hope is that the conditional expectation may be calculated more easily under $\mathbf{Q}_t$, where the $U_t$ evolution is part of the state, after which we reapply $U_t$ to return to our original picture.  The Girsanov analogy comes from noting that $Z_t$ is precisely a classical Wiener process under the original state $\mathbf{P}$, so that applying the quantum Bayes rule to Eq.~\eqref{quantum:eq:filtering_state_relation_temp} would allow us to easily evaluate the $\mathbf{Q}_t$ conditional expectation in terms of $\mathbf{P}$ conditional expectations.  But just as in Example \ref{quantum:example:stern_gerlach} we have the problem that our change of measure operator $U_t$ is not in the commutant $\mathscr{C}_t^{\prime}$ since $Z_t$, which is the $x$ field quadrature, does not commute with the $p$ field quadrature which generates $U_t$. 

Fortunately, the vacuum reference state provides a nice means for finding a $V_t \in \mathscr{C}_t^{\prime}$ which nonetheless satisfies $\mathbf{P}(U_t^{\dag}X U_t) = \mathbf{P}(V_t^{\dag}X V_t)$ for all atomic operators $X$.  Such a $V_t$ is governed by the QSDE
\begin{equation}
    dV_t = \left[ L(dA_t + dA_t^{\dag}) - \frac{1}{2}L^{\dag}Ldt - iHdt\right]V_t
\end{equation}
which will give the same \emph{vacuum expectation} as $U_t$ since $dA_t\ket{\Phi} = 0$.  Clearly, $V_t \in \mathscr{C}_t^{\prime}$ since it is driven by the $x$ quadrature noise $Z_t = A_t + A_t^{\dag}$. 

An application of the quantum Bayes formula in Theorem \ref{thm:quantum_bayes_formula} gives the \emph{quantum Kallianpur-Striebel formula} 
\begin{equation}
    \pi_t[X] = \frac{U_t^{\dag}\mathbf{P}(V_t^{\dag}X V_t | \mathscr{C}_t)U_t}
                    {U_t^{\dag}\mathbf{P}(V_t^{\dag}V_t | \mathscr{C}_t)U_t}
             = \frac{\sigma_t(X)}{\sigma_t{(I)}} ,
\end{equation}
where all the condition is on $\mathscr{C}_t$, the algebra generated by the $\mathbf{P}$-Wiener process $A_t + A_t^{\dag}$.  In short, the whole point of introducing $\mathbf{Q}_t$ and $V_t$ was to make $\mathscr{C}_t$ the conditioned algebra, whose statistics we know given our understanding of the Wiener process. 

We now focus on deriving an SDE for $\sigma_t(X)$, which is done by explicit calculation.  From the quantum It\^{o} rules in integral form, we have
\begin{equation}
    V_t^{\dag}XV_t = X + \int_0^t V_s^{\dag}\mathcal{L}_{L,H}[X]V_s ds
                   + \int_0^t V_s^{\dag}(L^{\dag}X + XL) V_s d(A_s + A_s^{\dag}) .
\end{equation}
This is easily derived by noting that $j_t(X)$ in Eq.~\eqref{quantum:eq:jt_qsde} is identical save for changing from $U_t$ to $V_t$, which amounts to mapping $-L^{\dag}dA_t \mapsto +LdA_t$.  We next evaluate the conditional expectations of each term in this expression, using the fact that the conditional expectation may be pulled inside the integrals to find
\begin{multline}
    \mathbf{P}(V_t^{\dag}XV | \mathscr{C}_t) = \mathbf{P}(X)
                +  \int_0^t\mathbf{P}(V_s^{\dag}\mathcal{L}_{L,H}[X]V_s | \mathscr{C}_s)ds\\
                + \int_0^t \mathbf{P}(V_s^{\dag}(L^{\dag}X + XL) V_s | \mathscr{C}_s)dZ_s
\end{multline}
Finally, we apply the It\^{o} rules to $U_t^{\dag}\mathbf{P}(V_t^{\dag}XV | \mathscr{C}_t)U_t$ to find
\begin{equation}
    d\sigma_t(X) = \sigma_t(\mathcal{L}_{L,H}[X])dt + \sigma_t(L^{\dag}X + XL)dY_t
\end{equation}
where we have identified the observations process $Y_t = U_t^{\dag}Z_tU_t$.  In order to recover the normalized form, we again use the It\^{o} rules as we did in solving the classical \emph{Kushner-Stratonovich} equation, cf.~Eq.~\eqref{classical:eq:unnormalized_filter_ratio_sde}, to arrive at the filter given in the following theorem.
\begin{theorem}[\textbf{Quantum Filtering Equation}]\label{thm:quantum_filtering_equation}
    The solution to the quantum filtering problem satisfies the SDE
    \begin{equation} \label{quantum:eq:quantum_filter}
        d\pi_t[X] = \pi_t[\mathcal{L}_{L,H}[X]]dt
                  + \left(\pi_t[L^{\dag}X + XL] - \pi_t[L^{\dag} + L]\pi_t[X]\right)
                    \left(dY_t - \pi_t[L + L^{\dag}]dt\right)
    \end{equation}
    with $\pi_0[X] = \mathbf{P}_S(X) = \Tr{X\rho}$. 
\end{theorem}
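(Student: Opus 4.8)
The plan is to derive the normalized filter from the unnormalized (Zakai) equation $d\sigma_t(X) = \sigma_t(\mathcal{L}_{L,H}[X])dt + \sigma_t(L^{\dag}X + XL)dY_t$ obtained just above, together with the quantum Kallianpur-Striebel identity $\pi_t[X] = \sigma_t(X)/\sigma_t(I)$, by a single application of the Itô quotient rule. This exactly parallels the classical passage from the Zakai equation to the Kushner-Stratonovich equation carried out in Eq.~\eqref{classical:eq:unnormalized_filter_ratio_sde}, so the work is to transcribe that computation into the present notation.

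First I would record the SDE for the denominator. Because the Lindblad generator annihilates the identity, $\mathcal{L}_{L,H}[I] = L^{\dag}L - \frac{1}{2}L^{\dag}L - \frac{1}{2}L^{\dag}L + i[H,I] = 0$, specializing the Zakai equation to $X = I$ gives $d\sigma_t(I) = \sigma_t(L^{\dag} + L)\,dY_t$. Next I would establish the one Itô relation needed, namely $dY_t^2 = dt$: from the observations SDE $dY_t = j_t(L+L^{\dag})dt + dA_t + dA_t^{\dag}$ and the fundamental Itô table of Theorem~\ref{thm:quantum_ito_rule}, only $dA_t\,dA_t^{\dag} = dt$ survives, so $(dA_t + dA_t^{\dag})^2 = dt$.

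With these in hand I would apply the quotient rule to $\pi_t[X] = \sigma_t(X)/\sigma_t(I)$, writing (for $g = \sigma_t(X)$, $h = \sigma_t(I)$)
\begin{equation}
  d(g/h) = \frac{dg}{h} - \frac{g\,dh}{h^2} - \frac{dg\,dh}{h^2} + \frac{g\,(dh)^2}{h^3},
\end{equation}
then substituting the two SDEs and using $dY_t^2 = dt$ to evaluate $(dh)^2 = \sigma_t(L^{\dag}+L)^2\,dt$ and $dg\,dh = \sigma_t(L^{\dag}X+XL)\,\sigma_t(L^{\dag}+L)\,dt$. Dividing each surviving $\sigma_t(\cdot)$ by $\sigma_t(I)$ turns every coefficient into a $\pi_t[\cdot]$ factor. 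Collecting the $dY_t$ terms yields the gain $\pi_t[L^{\dag}X+XL] - \pi_t[L^{\dag}+L]\pi_t[X]$, while the three $dt$ corrections beyond $\pi_t[\mathcal{L}_{L,H}[X]]dt$ reorganize precisely into that same gain multiplied by $-\pi_t[L+L^{\dag}]dt$, assembling the innovations increment $dY_t - \pi_t[L+L^{\dag}]dt$ and reproducing Eq.~\eqref{quantum:eq:quantum_filter}. The initial condition $\pi_0[X] = \mathbf{P}_S(X) = \Tr{X\rho}$ follows since at $t=0$ no conditioning has occurred.

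The main obstacle is conceptual rather than computational: justifying that the \emph{classical} Itô quotient rule may legitimately be applied to the operator-valued processes $\sigma_t(X)$, $\sigma_t(I)$ and $Y_t$. The resolution, which I would state explicitly, is that all three are affiliated to the commutative von Neumann algebra $\mathscr{Y}_{(0,t)}$ generated by the observations; by the spectral theorem (Thm.~\ref{thm:general_spectral_theorem}) this algebra is isomorphic to a classical probability space, so the processes map to genuine commuting classical semimartingales with no operator-ordering ambiguity, and the ordinary Itô calculus---together with $dY_t^2 = dt$---applies verbatim. I would also note that $\sigma_t(I) = U_t^{\dag}\mathbf{P}(V_t^{\dag}V_t \mid \mathscr{C}_t)U_t > 0$ is invertible, so the quotient is well defined for all $t$.
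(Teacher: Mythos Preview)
Your proposal is correct and follows exactly the route the paper takes: the paper derives the Zakai equation for $\sigma_t(X)$ and then states that ``in order to recover the normalized form, we again use the It\^{o} rules as we did in solving the classical Kushner--Stratonovich equation, cf.~Eq.~\eqref{classical:eq:unnormalized_filter_ratio_sde},'' which is precisely your quotient-rule computation. Your explicit justification---that $\sigma_t(X)$, $\sigma_t(I)$, and $Y_t$ all lie in the commutative algebra $\mathscr{Y}_{(0,t)}$ so the spectral theorem licenses ordinary It\^{o} calculus---is a useful point the paper leaves implicit.
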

\noindent  This is precisely a recursive formula which may integrated on a classical computer by processing the observations process $Y_t$.  Comparing this to the classical non-linear filter in Eq.~\ref{filtering:eq:kushner_stratonovich}, we see the \emph{innovations process} appearing as $dY_t - \pi_t[L + L^{\dag}]dt$, which is again precisely a classical Wiener process.  This provides a very nice interpretation of the resulting filter, in which the innovations drive the estimate by pulling out all the information from the measurement process which is not already in our estimate $\pi_t[X]$.  This information, which includes the quantum noise $dA_t + dA_t^{\dag}$ in addition to the true atomic state $j_t(X)$, is then used to condition our estimate of the atomic system in accordance with the expected back action which results from the field measurement.   

Just as we saw classically, the form of the filtering equation above is not always convenient, since it requires iterating $\pi_t[X],\pi_t[L + L^{\dag}],\pi_t[L^{\dag}X],\cdots$ until a closed system of equations is found.  Instead, one often works with a state representation in terms of the \emph{conditional density matrix} $\rho_t$ which satisfies $\pi_t[X] = \Tr{X\rho_t}$ for all atomic operators $X$.  Plugging this into Eq.~\eqref{quantum:eq:quantum_filter} gives the quantum filter in its adjoint form as
\begin{equation} \label{quantum:eq:quantum_filter_rho_form}
    d\rho_t = -i[H,\rho_t]dt + (L\rho_tL^{\dag} - \frac{1}{2}L^{\dag}L\rho_t 
                     -\frac{1}{2}\rho_tL^{\dag}L)dt
            + (L\rho_t + \rho_t L^{\dag} - \Tr{(L + L^{\dag})\rho_t}\rho_t)dW_t
\end{equation}
where we recognize the familiar Lindblad form for the deterministic pieces, characteristic of an open quantum system master equation (see \citep{Walls:2008a} for more detail).  The stochastic term, which is non-linear in $\rho_t$, performs the conditioning via the innovations process which I have written as the Wiener process $dW_t = dY_t - \Tr{(L+L^{\dag})\rho_t}dt$.  The adjoint form suggests a nice interpretation of the filter as a \emph{continuous measurement} of the observable $L+L^{\dag}$.  Indeed, one can show \citep{Adler:2001a} that if $H = 0$, the steady-state of $\rho_t$ is precisely an eigenstate of $L+L^{\dag}$ and occurs with probability $\Tr{(L + L^{\dag})\rho_0}$.  Thus, rather than considering an instantaneous projective measurement of $L+L^{\dag}$, the measurement is extended in time and appears as a deterministically driven Wiener process, opening the door for performing feedback \citep{Wiseman:1994a} using the current filtered estimate.  Such a possibility will be considered in Chapter \ref{chapter:error_correction}.

Before closing this section with an example, I do want to note that continuous measurement can be considered entirely within the generalized measurement and quantum operations framework of quantum information theory \citep{Jacobs:2006a}.  Although many of the mathematical subtleties are glossed over, I believe there is also an interpretational issue which arises.  Specifically, when working with the conditional density matrix formalism, the measurements are usually written as
\begin{equation}
    dY_t = \Tr{(L+L^{\dag})\rho_t}dt + dW_t
\end{equation}
where $\rho_t$ is the conditional density matrix and $dW_t$ is a Wiener process that arises from taking the central limit of many infinitesimal measurements.  But as we saw above, this is \emph{not} the measurement process, which actually contains the \emph{true} system state $j_t(L+L^{\dag})$ and the quantum noise $dA_t + dA_t^{\dag}$.  The innovations Wiener process $dW_t$ only arises by explicitly subtracting the current estimate from the measurements.  That is, if we want to \emph{learn} something about the system, the measurements better contain some information about it, rather than just our current estimate corrupted by noise.  Philosophically, this amounts to deciding whether $\rho_t$ is the true state of the atoms or whether it is simply our estimate of the true state.  I prefer the latter perspective, which allows for a careful consideration of the stability of the filter under incorrect initial state estimates \citep{vanHandel:2008a}.  But such a case is not uncommon, especially when the continuous measurement process is being used to measure an unknown initial state.  

\begin{example}[Qubit in a magnetic field]\label{quantum:example:qubit_filter}
    \begin{figure}[ht]
    	\centering
    		\includegraphics{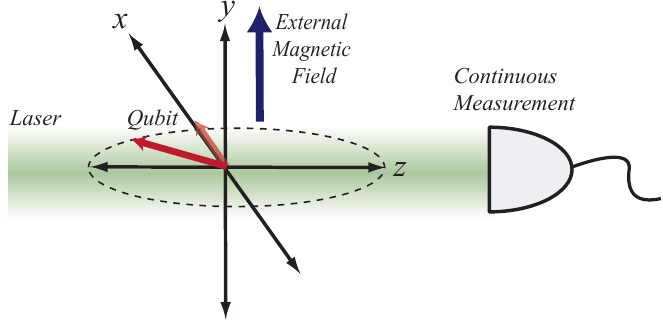}
    	\caption[Schematic for continuous qubit measurement in Example \ref{quantum:example:qubit_filter} ]{Continuous-measurement of single qubit precessing in an external magnetic field}
    	\label{quantum:example:qubit:fig:schematic}
    \end{figure}
    \begin{figure}[t]
    	\centering
    		\includegraphics[scale=1]{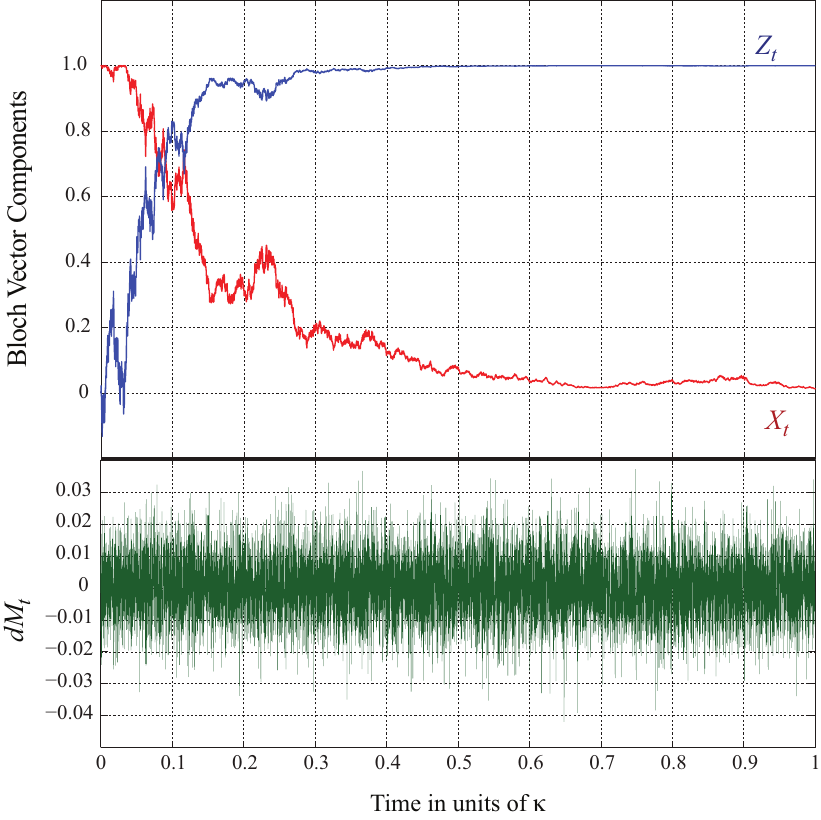}
    	\caption[Quantum filter simulation for qubit in Example \ref{quantum:example:qubit_filter}]{(Bottom) Simulated typical measurement trajectory for continuous $Z$ measurement, $\kappa = 1$, $B = 0$ (Top) Filtered values of $\pi_t[\sigma_x]$ and $\pi_t[\sigma_z]$ for simulated trajectory}
    	\label{quantum:example:qubit:fig:typical_trajectory}
    \end{figure}
    Consider the setup depicted in Figure \ref{quantum:example:qubit:fig:schematic}.  A qubit, initially in the pure state $\ket{+x}$, precesses about a magnetic field $B$ while undergoing a continuous measurement along $z$.  In terms of the general framework, $H = B \sigma_y$ and $L = \sqrt{\kappa}\sigma_z$, where $\sqrt{\kappa}$ is the continuous measurement strength in the weak coupling limit.  We will not dwell on the underlying physical mechanism which gives rise to the $\sigma_z$ measurement, though continuous polarimetry measurements could suffice \cite{Bouten:2007a}.  Plugging into Eq.~\eqref{quantum:eq:quantum_filter}, the quantum filter for the Bloch vector $n_t = (\pi_t[\sigma_x],\pi_t[\sigma_y],\pi_t[\sigma_z])$ is 
    \begin{align}
    	d\pi_t[\sigma_x] &= 2B\pi_t[\sigma_z] dt - 2\kappa \pi_t[\sigma_x] dt
    					  - 2\sqrt{\kappa}\pi_t[\sigma_x]\pi_t[\sigma_z] dW_t\\
    	d\pi_t[\sigma_y] &= -2M\pi_t[\sigma_y] dt - 2\sqrt{\kappa}\pi_t[\sigma_x]\pi_t[\sigma_y] dW_t\\
    	d\pi_t[\sigma_z] &= -2B\pi_t[\sigma_x]dt  + 2\sqrt{\kappa}( 1 -  \pi_t[\sigma_z]^2)dW_t
    \end{align}
    with innovations $dW_t = dM_t - 2\sqrt{\kappa}\pi_t[\sigma_z] dt$.  It is not difficult to verify that the quantum filter maintains pure states and that the initial state $n_0 = (1,0,0)$ remains on the Bloch circle in the $x$-$z$ plane.  Letting $\theta$ be the angle from the positive $x$-axis such that $\tan{\theta} = \pi_t[\sigma_z]/\pi_t[\sigma_x]$, we then simplify the filter to
    \begin{equation} \label{quantum:eq:known_qubit_filter}
    	d\theta_t = -2Bdt + \kappa \sin(2\theta_t)dt + 2\sqrt{\kappa}\cos(\theta_t)dW_t 
    \end{equation}
    where now $dW_t = dM_t - 2 \sqrt{\kappa}\sin\theta_t$.  Figure \ref{quantum:example:qubit:fig:typical_trajectory} shows a computer simulation of a typical measurement trajectory and filtered Bloch vector values when $B = 0$.  We see that the initial $+x$ state indeed collapses to a $+z$ eigenstate, which is then a fixed state of the continuous $\sigma_z$ measurement.
\end{example}   
\section{Summary}
Perhaps a yet unstated purpose of this chapter was to probe the distinction we tend to hold between what is quantum and what is classical in quantum information theory.  We found that a commutative set of operators is well-described by a classical probability model and that inference between commuting observables is readily performed in terms of the classical tools we developed in Chapter \ref{chapter:classical}.  Indeed, the quantum filter, which is capable of describing a continuous measurement of a quantum system as a stochastic process, is an entirely classical object.  Moreover, the filter did not require using the standard projection postulate but instead recovers it in the long time, strong measurement strength limit.  This perspective allows us to ``look inside'' the projective measurement, watch the wave function collapse and potentially modify it via feedback.  In short, the traditional weirdness of quantum back-action arises naturally from the interplay of classical conditioning and quantum dynamics.  The second goal was to again convince the reader that spending a little time familiarizing oneself with the mathematics of quantum probability theory, quantum stochastic calculus and quantum filtering provides a sophisticated yet simple approach to solving many problems in quantum optics.  In fact, most of the original research in this thesis leverages the filtering formalism in this chapter, solving the problem of continuous-time quantum parameter estimation and studying quantum error correction via continuous measurement and feedback.  In relating those results, I hope the reader will appreciate the groundwork they laid in this chapter and the ease with which classical stochastic control and estimation methods are trivially adapted to the quantum setting.  
\chapter{Quantum Parameter Estimation}
\label{chapter:quantum_parameter_estimation}
In this chapter, I extend the quantum filtering techniques of Chapter \ref{chapter:quantum} to allow for the estimation of unknown parameters which drive the evolution of the system undergoing continuous measurement.  By embedding parameter estimation in the standard quantum filtering formalism, we will find the optimal Bayesian filter for cases when the parameter takes on a finite range of values.  For cases when the parameter is continuous valued, I develop \emph{quantum particle filters} as a practical computational method for quantum parameter estimation.  The techniques developed within this chapter were published in \citep{Chase:2009a}.
\section{Introduction}
Determining unknown values of parameters from noisy measurements is a ubiquitous problem in physics and engineering.    In quantum mechanics, the single-parameter problem is posed as determining a coupling parameter $\xi$ that controls the evolution of a probe quantum system via a Hamiltonian of the form $H_{\xi} = \xi H_0$ \citep{Helstrom:1976a,Holevo:1982a,Braunstein:1994a,Braunstein:1995a,Giovannetti:2004a,Giovannetti:2006a,Boixo:2007a}.  Traditionally, an estimation procedure proceeds by (i) preparing an ensemble of probe systems, either independently or jointly; (ii) evolving the ensemble under $H_{\xi}$; (iii) measuring an appropriate observable in order to infer $\xi$.  The quantum Cram\`{e}r-Rao bound \citep{Cramer:1946a,Helstrom:1976a,Holevo:1982a,Braunstein:1994a,Braunstein:1995a} gives the optimal sensitivity for \emph{any} possible estimator and much research has focused on achieving this bound in practice, using entangled probe states and nonlinear probe Hamiltonians \citep{Nagata:2007a,Pezze:2007a,Woolley:2008a}. 

Yet, it is often technically difficult to prepare the exotic states and Hamiltonians needed for improved sensitivity.    Instead, an experiment is usually repeated many times to build up sufficient statistics for the estimator.  In contrast, the burgeoning field of continuous quantum measurement \citep{Bouten:2006a} provides an opportunity for on-line \emph{single-shot} parameter estimation, in which an estimate is provided in near real-time using a measurement trajectory from a single probe system.  Parameter estimation via continuous measurement has been previously studied in the context of force estimation \citep{Verstraete:2001a} and magnetometry \citep{Geremia:2003a}.  Although Verstraete et. al develop a general framework for quantum parameter estimation, both of \citep{Verstraete:2001a,Geremia:2003a} focus on the readily tractable case when the dynamical equations are linear and the quantum states have Gaussian statistics.  In this case, the optimal estimator is the quantum analog of the classical Kalman filter \citep{Belavkin:1999a,Kalman:1960a,Kalman:1961a}, seen in example \ref{filtering:example:parameter_estimation} in Chapter \ref{chapter:classical}.

In this chapter, I develop on-line estimators for continuous measurement when the dynamics and states are not restricted.  Rather than focusing on fundamental quantum limits (which is the topic of Chapter \ref{chapter:magnetometry}), I instead consider the more basic problem of developing an actual parameter filter for use with continuous quantum measurements.  By embedding parameter estimation in the standard quantum filtering formalism \citep{Bouten:2006a}, I construct the optimal Bayesian estimator for parameters drawn from a finite dimensional set.  The resulting filter is a generalized form of one derived by Jacobs for binary state discrimination \citep{Jacobs:2006a}.  Using recent stability results of van Handel \citep{vanHandel:2008a}, I give a simple check for whether the estimator can successfully track to the true parameter value in an asymptotic time limit.  For cases when the parameter is continuous valued, I develop \emph{quantum particle filters} as a practical computational method for quantum parameter estimation.  These are analogous to, and inspired by, particle filtering methods that have had much success in classical filtering theory \citep{Doucet:2001,Arulampalam:2002a}.  Although the quantum particle filter is necessarily sub-optimal, I present numerical simulations which suggest they perform well in practice.  Throughout, I demonstrate the technqiues using a single qubit magnetometer.  

\section{Estimation of a parameter from a finite set}
\label{parameter_estimation:sec:finite}
We begin by considering the case where the parameter takes on a known, finite set of values.  Using the quantum filtering techniques in Chapter \ref{chapter:quantum}, we know that continuous measurements of the probe\footnote{Note that the word ``probe'' is used in regard to the system which couples to the parameter $\xi$ and is then used to infer the value of $\xi$.  This is in addition to the idea of using an ancillary  system, such as the electromagnetic field, to perform continuous measurements on the probe system.} system that couples to the parameter $\xi$ are well-described by a quantum filter as in Eq.~\eqref{quantum:eq:quantum_filter} with Hamiltonian
\begin{equation} \label{parameter_estimation:eq:start_hamiltonian}
	H = \xi H_0 , \quad  H_0 \in \mathscr{N}_s .
\end{equation}
Recall that $\mathscr{N}_s$ is the space of system (atomic) operators as introduced in Chapter \ref{chapter:quantum}, which are distinct from $\mathscr{N}_p$, the set of operators on the ancillary system (field) used to perform the continuous measurement.  Although what follows applies for arbitrary systems which admit a continuous measurement description, we fix our language to that of atoms and fields for a more transparent discussion.

Supposing we knew the true value of the parameter, the quantum filtering equations would give us the best least-squares estimate of the atomic system conditioned on the measurements and the knowledge of dynamics induced by $\xi$ through $H$.  But given the optimality of the filter, we could equally well embed the parameter $\xi$ as a diagonal operator $\Xi$ acting on an auxiliary quantum space, after which the filter \emph{still} gives the best estimate of \emph{both} system and auxiliary space operators.  Finding the best estimate of $\xi$ conditioned on the measurements simply corresponds to integrating the equations for $\pi_t[\Xi]$.

More precisely, extend the atomic Hilbert space $\mathcal{H}_S \mapsto \mathcal{H}_{\xi}\otimes\mathcal{H}_S$ and the operator space $\mathscr{N}_s \mapsto \mathfrak{D}(\mathcal{H}_{\xi})\otimes\mathscr{N}_s$, where $\mathfrak{D}(\mathcal{H}_{\xi})$ is the set of diagonal operators on $\mathcal{H}_{\xi}$.  Assuming $\xi$ takes on $N$ possible values $\{\xi_1,\ldots, \xi_N\}$, $\dim{\mathfrak{D}(\mathcal{H}_{\xi})} = N$.   Introduce the diagonal operator 
\begin{equation}
	\mathfrak{D}(\mathcal{H}_{\xi}) \ni \Xi = \sum_{i=1}^{N} \xi_i \ketbra{\xi_i}{\xi_i}
\end{equation}
so that $\Xi\ket{\xi_i} = \xi_i\ket{\xi_i}$ with $\ket{\xi_i} \in \mathcal{H}_{\xi}$.  This allows one to generalize Eq.~\eqref{parameter_estimation:eq:start_hamiltonian} as 
\begin{equation}
	H \mapsto \Xi\otimes H_0 \in \mathfrak{D}(\mathcal{H}_{\xi})\otimes\mathscr{N}_s .
\end{equation}
Any remaining atomic operators $X_A \in \mathscr{N}_s$ act as the identity on the auxiliary space, i.e. $I\otimes X_A$.  
Given these definitions, the derivation of the quantum filtering equation remains essentially unchanged, so that the filter in either the operator form of Eq.~\eqref{quantum:eq:quantum_filter} or the adjoint form of Eq.~\eqref{quantum:eq:quantum_filter_rho_form} is simply updated with the extended forms of operators given in the last paragraph.  

Since $\xi$ is a classical parameter, we require that the reduced conditional density matrix $(\rho_{\xi})_t = \operatorname{Tr}_{\mathcal{H}_S}{(\rho_t)}$ be diagonal in the basis of $\Xi$.  Thus we can write
\begin{equation} \label{parameter_estimation:eq:reduced_state}
	(\rho_{\xi})_t = \sum_{i=1}^N p_t^{(i)} \ketbra{\xi_i}{\xi_i}
\end{equation}
where
\begin{multline}
	p_t^{(i)} \equiv \Tr{(\ketbra{\xi_i}{\xi_i} \otimes I)\rho_t} 
		\equiv \pi_t[\ketbra{\xi_i}{\xi_i}\otimes I]\\
		\equiv \mathbbm{E}[\ketbra{\xi_i}{\xi_i} \otimes I | M_{[0,t]}] 
		\equiv P(\xi = \xi_i | M_{[0,t]}) .
\end{multline}
Then $p_t^{(i)}$ is precisely the conditional probability for $\xi$ to have the value $\xi_i$ and the set $\{p_t^{(i)}\}$ gives the discrete conditional distribution of the random variable represented by $\Xi$.  Similarly, by requiring operators to be diagonal in $\mathcal{H}_{\xi}$, we ensure that they correspond to classical random variables.  In short, we have simply embedded filtering of a truly classical random variable in the quantum formalism.

The fact that both states and operators are diagonal in the auxiliary space suggests using an ensemble form for filtering.  As such, consider an ensemble consisting of a weighted set of $N$ conditional atomic states,  each state evolved under a different $\xi_i$.  Later, in section \ref{parameter_estimation:sec:infinite}, we will call each ensemble member a \emph{quantum particle}. For now, we explicitly write the conditional quantum state as
\begin{equation} \label{parameter_estimation:eq:finitedim:ensembleform}
	\rho_t^{E} = \sum_{i = 1}^{N} p^{(i)}_t \ketbra{\xi_i}{\xi_i} \otimes \rho^{(i)}_t 
\end{equation}
where $\rho^{(i)}_t$ is a density matrix on $\mathcal{H}_S$.  The reduced state, $\operatorname{Tr}_{\mathcal{H}_A}{(\rho_t^{E})}$, is clearly diagonal in the basis of $\Xi$.  Using the extended version of the adjoint quantum filter in Eq.~\eqref{quantum:eq:quantum_filter_rho_form}, one can derive the \emph{ensemble quantum filtering equations}
\begin{subequations} \label{parameter_estimation:eq:finitedim:ensemblefilter}
	\begin{align}
			d\rho_t^{(i)} &= -i[\xi_i H_0, \rho_t^{(i)}]dt   
						   + (L\rho_t^{(i)}L^{\dag} - \frac{1}{2}L^{\dag}L\rho_t^{(i)} - \frac{1}{2}\rho_t^{(i)}L^{\dag}L)dt
							\nonumber \\
					      & + \left(L\rho_t^{(i)} + \rho_t^{(i)}L^{\dag}
					 		- \Tr{(L+L^{\dag})\rho_t^{(i)}}\rho_t^{(i)}\right)dW_t \label{parameter_estimation:eq:finitedim:ensemblefilter:rho}\\
			dp_t^{(i)} &= \left(\Tr{(L+L^{\dag})\rho_t^{(i)}} - \Tr{I\otimes(L+ L^{\dag})\rho_t^{E}}\right)p_t^{(i)}dW_t
							\label{parameter_estimation:eq:finitedim:ensemblefilter:prob}\\
			dW_t &= dM_t - \Tr{I\otimes ( L +  L^{\dag})\rho_t^{E}}dt \label{parameter_estimation:eq:finitedim:ensemblefilter:innov}
	\end{align}
\end{subequations}
We see that each $\rho^{(i)}_t$ in the ensemble evolves under a quantum filter with $H = \xi_i H_0$ and is coupled to other ensemble members through the innovation factor $dW_t$, which depends on the ensemble expectation of the measurement observable.  Note that one can incorporate any prior knowledge of $\xi$ in the weights of the initial distribution $\{p_0^{(i)}\}$.

The reader should not be surprised that a similar approach would work for estimating more than one parameter at a time, such as three cartesian components of an applied magnetic field.  One would introduce an auxiliary space for each parameter and extend the operators in the obvious way.  The ensemble filter would then be for a joint distribution over the multi-dimensional parameter space.  Similarly, one could use this formalism to distinguish initial states, rather than parameters which couple via the Hamiltonian.  For example, in the case of state discrimination, one would introduce an auxiliary space which labels the possible input states, but does not play any role in the dynamics.  The filtered weights would then be the probabilities to have been given a particular initial state. In fact, using a slightly different derivation, Jacobs derived equations similar to Eq.~\eqref{parameter_estimation:eq:finitedim:ensemblefilter} for the case of binary state discrimination \citep{Jacobs:2006a}.  Yanagisawa recently studied the general problem of retrodiction or ``smoothing'' of quantum states \citep{Yanagisawa:2007a}.  In light of his work and results in the following section, the retrodictive capabilities of quantum filtering are very limited without significant prior knowledge or feedback.   
\subsection{Conditions for convergence}
Although introducing the auxiliary parameter space does not change the derivation of the quantum filter, it is not clear how the initial uncertainty in the parameter will impact the filter's ability to ultimately track to the correct value.  Indeed, outside of anecdotal numerical evidence (which I will presently add to), there has been little formal consideration of the sensitivity of the quantum filter to the initial state estimate.  Recently, van Handel presented a set of conditions which determine whether the quantum filter will asymptotically track to the correct state independently of the assumed initial state \citep{vanHandel:2008a}.  Since we have embedded parameter estimation in the state estimation framework, such stability then determines whether the quantum filter can asymptotically track to the true parameter, i.e. whether $\lim_{t \rightarrow \infty} p_{t}^{(j)} =  \delta_{ij}$ when $\xi = \xi_i$.  In this section, I present van Handel's results in the context of our parameter estimation formalism and present a simple check of asymptotic convergence of the parameter estimate.  We begin by reviewing the notions of absolute continuity and observability.

In the general stability problem, let $\rho_1$ be the true underlying state and $\rho_2$ be the initial filter estimate.  We say that $\rho_1$ is \emph{absolutely continuous} with respect to $\rho_2$, written $\rho_1 \ll \rho_2$, if and only if $\operatorname{ker} \rho_1 \supset \operatorname{ker} \rho_2$.  In the context of parameter estimation, we assume that we know the initial atomic state exactly, so that $\rho_1 \ll \rho_2$ as long as the reduced states satisfy $\rho_1^{E} \ll \rho_2^{E}$.  Since these reduced states are simply discrete probability distributions, $\{(p_t^{i})_1\}$ and $\{(p_t^{i})_2\}$, this is just the standard definition of absolute continuity in classical probability theory as we saw in Chapter \ref{chapter:classical} when studying the Radon-Nikodym theorem \ref{thm:radon_nikodym}.  In our case, the true state has ${(p_{t=0}^{(j)})}_{1} = \delta_{ij}$ if the parameter has value $\xi_i$.  Thus, as long as our estimate has non-zero weight on the $i$-th component, $\rho_1 \ll \rho_2$.  This is trivially satisfied if ${(p_{t=0}^{(j)})}_{2} \neq 0$ for all $j$.

The other condition for asymptotic convergence is that of observability.  A system is \emph{observable} if one can determine the exact initial atomic state given the entire measurement record over the infinite time interval.  Observability is then akin to the ability to distinguish any pair of initial states on the basis of the measurement statistics alone.  Recall the definition of the Lindblad generator in Eq.~\eqref{quantum:eq:lindblad_generator} and further define the operator $\mathcal{K}[X_A] = L^{\dag}X_a + X_aL$.  Then according to Proposition 5.7 in \citep{vanHandel:2008a}, the observable space $\mathcal{O}$ is defined as the smallest linear subspace of $\mathscr{N}_S$ containing the identity and which is invariant under the action of $\mathcal{L}$ and $\mathcal{K}$.  The filter is observable if and only if $\mathscr{N}_s = \mathcal{O}$, or equivalently $\dim{\mathscr{N}_s} = \dim{\mathcal{O}}$.  

In the finite-dimensional case, van Handel presents an iterative procedure for constructing the observable space.  Define the linear spaces $\mathcal{Z}_n \subset \mathscr{N}_s$ as
\begin{equation}
	\begin{split}
		\mathcal{Z}_0 &= \operatorname{span}\{I\}\\
		 \mathcal{Z}_n &= \operatorname{span}\{\mathcal{Z}_{n-1},
								\mathcal{L}[\mathcal{Z}_{n-1}], \mathcal{K}[\mathcal{Z}_{n-1}]\},\quad n > 0
	\end{split}
\end{equation}
The procedure terminates when $\mathcal{Z}_n = \mathcal{Z}_{n+1}$, which is guaranteed for some finite $n = m$, as the dimension of $\mathcal{Z}_n$ cannot exceed the dimension of the ambient space $\mathscr{N}_s$.  Moreover, the terminal $\mathcal{Z}_m = \mathcal{O}$, so that using a Gram-Schmidt procedure, one can iteratively find a basis for $\mathcal{O}$ and easily compute its dimension.  Note that for operators $A$ and $B$, the inner-product $\langle A, B \rangle$ is the Hilbert-Schmidt inner product $\Tr{A^{\dag} B}$.

Given these definitions, one has the following theorem for filter convergence and corollary for parameter estimation.

\begin{theorem} (Theorem 2.5 in \citep{vanHandel:2008a}) Let $\pi_t^{\rho_i}(X_A)$ be the evolved filter estimate, initialized under state $\rho_i$.  If the system is observable and $\rho_1 \ll \rho_2$, the quantum filter is asymptotically stable in the sense that
\begin{equation}
	\abs{\pi_t^{\rho_1}(X_A) - \pi_t^{\rho_2}(X_A)}_{M^{\rho_1}_{[0,t]}} \stackrel{t \rightarrow\infty}
						{\longrightarrow} 0 \quad \forall X_a \in \mathscr{N}_s
\end{equation} 
where the convergence is under the observations generated by $\rho_1$.
\end{theorem}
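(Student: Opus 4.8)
The plan is to follow the reference probability strategy already used to derive the quantum filter, reducing the stability question to a statement about classical martingales on the commutative observation algebra $\mathscr{Y}_{(0,t)}$. First I would pass to the reference state $\mathbf{Q}_t$ under which the observation process is a standard Wiener process, and write both filters through the quantum Kallianpur--Striebel formula as ratios $\pi_t^{\rho_i}[X] = \sigma_t^{\rho_i}(X)/\sigma_t^{\rho_i}(I)$ of unnormalized estimates. The essential feature is that the unnormalized Zakai dynamics are \emph{linear} in the initial state and driven by the \emph{same} observation increments $dY_t$, so the two unnormalized filters differ only through their initial conditions $\rho_1$ and $\rho_2$. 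Because $\mathscr{Y}_{(0,t)}$ is commutative, the spectral theorem (Thm.~\ref{thm:general_spectral_theorem}) maps everything to an ordinary filtered classical probability space, where the full machinery of Chapter~\ref{chapter:classical} --- in particular the martingale representation and martingale convergence theorems --- is available.

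Second I would establish that the filter difference converges. Absolute continuity $\rho_1 \ll \rho_2$ guarantees that the likelihood process relating $\sigma_t^{\rho_1}$ to $\sigma_t^{\rho_2}$ is well defined and does not develop a vanishing denominator, so the comparison filter $\pi_t^{\rho_2}$ remains meaningful for all time under the observations generated by $\rho_1$. The natural convergent object is the \emph{prediction martingale}: for a fixed terminal time $T$ and observable $X_A$, the process $t \mapsto \mathbb{E}^{\rho_1}[j_T(X_A)\mid\mathscr{Y}_{(0,t)}]$ is a bounded $\mathscr{Y}$-martingale and hence converges almost surely. Comparing the two initializations through the Radon--Nikodym derivative $d\mathbf{P}^{\rho_1}/d\mathbf{P}^{\rho_2}$ restricted to $\mathscr{Y}_{(0,t)}$ and invoking the quantum Bayes formula (Thm.~\ref{thm:quantum_bayes_formula}) shows that $\pi_t^{\rho_1}[X_A]-\pi_t^{\rho_2}[X_A]$ is itself controlled by a convergent supermartingale, so that the limit exists in $\mathcal{L}^2(\mathbf{P}^{\rho_1})$.

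Third I would identify this limit as zero using observability. The iterative construction of $\mathcal{Z}_n$ through the generators $\mathcal{L}_{L,H}$ and $\mathcal{K}$ shows that, when $\mathcal{O}=\mathscr{N}_s$, the repeated derivatives of the output map $\pi_t[L+L^{\dag}]$ together with its drift span the entire operator algebra. The idea is that two conditional states producing asymptotically identical output statistics must agree on every element of $\mathcal{Z}_n$ for all $n$, hence on $\mathcal{O}=\mathscr{N}_s$; since both filters reproduce the same realized observation process by construction, their difference is forced to vanish on all of $\mathscr{N}_s$. Concretely, one assumes the limiting difference is nonzero on some $X_A$, propagates this through the invariant subspaces, and derives a contradiction with the fact that the innovations $dW_t$ form a genuine Wiener process for both filters.

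The hard part will be the final observability step: rigorously bridging the \emph{algebraic} invariance condition $\mathcal{O}=\mathscr{N}_s$ and the \emph{analytic} conclusion that the filter difference decays. This requires uniform integrability control of the Radon--Nikodym derivative under merely absolute continuity (rather than equivalence) of measures, which may force a localization argument, and careful use of the quantum It\^{o} rules to ensure that all comparison processes remain affiliated to the commutative algebra $\mathscr{Y}_{(0,t)}$ so that the classical convergence theorems legitimately apply. The noncommutativity of $\mathscr{N}_s$ enters only through the generators $\mathcal{L}_{L,H}$ and $\mathcal{K}$ defining $\mathcal{O}$, so once the problem is correctly projected onto $\mathscr{Y}_{(0,t)}$ the remaining difficulty is essentially the classical stability problem solved in van Handel's systems-theoretic framework.
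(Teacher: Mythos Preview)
The paper does not prove this theorem at all; it is merely stated, with an explicit citation to Theorem~2.5 of \citet{vanHandel:2008a}, and then immediately used as a black box to establish the subsequent corollary on observability of the extended parameter filter. There is therefore no in-paper argument for you to compare against.

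For what it is worth, your outline is broadly aligned with the strategy one finds in the cited reference: reduce to a classical filtering problem on the commutative algebra $\mathscr{Y}_{(0,t)}$ via the spectral theorem, exploit the linearity of the unnormalized (Zakai) dynamics in the initial state, and then use the observable-space construction to force the limiting difference to vanish. The weakest link in your sketch is Step~2, where the assertion that $\pi_t^{\rho_1}[X_A]-\pi_t^{\rho_2}[X_A]$ is ``controlled by a convergent supermartingale'' is stated rather than derived; the genuine argument requires care about precisely which process is a martingale under which measure, and the division of labor between absolute continuity and observability is sharper than you indicate (absolute continuity alone already yields a stability statement; observability is what upgrades it to hold for \emph{every} $X_A\in\mathscr{N}_s$). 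Your self-identified ``hard part'' is indeed where the real content lies, and the contradiction-via-innovations heuristic you propose would need to be replaced by a more careful ergodic or asymptotic-detectability argument to become a proof.
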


One could use this theorem to directly check the stability of the quantum filter for parameter estimation, using the extended forms of operators in $\mathcal{L}$ and $\mathcal{K}$ and being careful that the observability condition is now $\dim{\mathcal{O}} = \dim{\mathcal{D}{(\mathcal{H}_\xi)}\otimes\mathscr{N}_s}$.  However, the following corollary relates the observability of the parameter filter to the observability of the related filter for a known parameter.  Combined with the discussion of extending the absolute continuity condition, this then gives a simple check for the stability of the parameter filter.  

\begin{corollary}
Consider a parameter $\xi$ which takes on one of $N$ distinct positive real values $\{\xi_i\}$.  If the quantum filter with known parameter is observable, then the corresponding extended filter for estimation of $\xi$ is observable.
\end{corollary}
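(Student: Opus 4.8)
The plan is to exploit the block structure of the extended problem, reducing observability of the $N\dim\mathscr{N}_s$-dimensional extended filter to the observability of the $N$ ordinary known-parameter filters sitting in the eigenspaces of $\Xi$, with the distinctness of the $\xi_i$ supplying the decoupling between blocks. First I would record how the extended generators act. Since $\Xi$ is diagonal and every admissible operator lies in $\mathfrak{D}(\mathcal{H}_\xi)\otimes\mathscr{N}_s$, i.e.~has the block form $X=\sum_{i=1}^N\ketbra{\xi_i}{\xi_i}\otimes A_i$ with $A_i\in\mathscr{N}_s$, a direct computation using $L\mapsto I\otimes L$ and $H\mapsto\Xi\otimes H_0$ gives
\begin{equation}
\mathcal{L}[X] = \sum_i \ketbra{\xi_i}{\xi_i}\otimes\mathcal{L}_{\xi_i}[A_i], \qquad \mathcal{K}[X] = \sum_i \ketbra{\xi_i}{\xi_i}\otimes\mathcal{K}[A_i],
\end{equation}
where $\mathcal{L}_{\xi_i}$ is the known-parameter Lindblad generator at the value $\xi_i$ and $\mathcal{K}[\,\cdot\,]=L^\dag\,\cdot+\,\cdot\,L$ is the same in every block. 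Both generators therefore act blockwise, $\mathcal{K}$ identically across blocks and $\mathcal{L}$ with the $\xi_i$-dependent generator in block $i$. Identifying $\mathfrak{D}(\mathcal{H}_\xi)\otimes\mathscr{N}_s\cong\mathscr{N}_s^{\oplus N}$, the observable space is $\mathcal{O}=\mathcal{A}[I\otimes I]$, where $\mathcal{A}$ is the unital algebra of superoperators generated by $\mathcal{L}$ and $\mathcal{K}$. Applying any word $W\in\mathcal{A}$ to $I\otimes I$ produces $(W_1[I],\dots,W_N[I])$, where $W_i$ is the same word with $\mathcal{L}_{\xi_i}$ substituted for $\mathcal{L}$; since $W_i[I]$ lies in the known-parameter observable space $\mathcal{O}_{\xi_i}$, the easy inclusion $\mathcal{O}\subseteq\bigoplus_i\mathcal{O}_{\xi_i}$ is immediate.

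The substance is the reverse inclusion, for which I would use a Vandermonde/module argument built on the distinctness of the $\xi_i$. Let $\mathcal{M}$ denote the superoperator ``multiply by $\Xi\otimes I$'', acting as $A_i\mapsto\xi_i A_i$ in block $i$. A short calculation shows $\mathcal{M}$ commutes with $\mathcal{K}$ and with $\mathcal{L}$ (the dissipative part and the $\operatorname{ad}_{H_0}$ part act blockwise and $\mathcal{M}$ is a blockwise scalar), hence with all of $\mathcal{A}$. If one can establish that $\mathcal{M}[I\otimes I]=\Xi\otimes I\in\mathcal{O}$, then $\mathcal{M}\,\mathcal{O}=\mathcal{M}\,\mathcal{A}[I\otimes I]=\mathcal{A}\,\mathcal{M}[I\otimes I]\subseteq\mathcal{A}[I\otimes I]=\mathcal{O}$, so $\mathcal{O}$ is invariant under $\mathcal{M}$ and therefore under the entire commutative algebra $\mathbb{C}[\mathcal{M}]\cong\mathfrak{D}(\mathcal{H}_\xi)$.

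Because the $\xi_i$ are distinct, $\mathbb{C}[\mathcal{M}]$ contains the Lagrange interpolation projectors $P_j=\prod_{l\neq j}(\mathcal{M}-\xi_l)/(\xi_j-\xi_l)$ onto the individual eigenblocks. Applying $P_j$ to a reachable vector $W[I\otimes I]$ isolates its $j$-th block, so $P_j\mathcal{O}$ equals the block-$j$ copy of $\operatorname{span}\{W_j[I]\}=\mathcal{O}_{\xi_j}$. Summing over $j$ gives $\mathcal{O}=\bigoplus_j\mathcal{O}_{\xi_j}$, and the hypothesis that each known-parameter filter is observable, $\mathcal{O}_{\xi_j}=\mathscr{N}_s$, yields $\dim\mathcal{O}=N\dim\mathscr{N}_s=\dim(\mathfrak{D}(\mathcal{H}_\xi)\otimes\mathscr{N}_s)$, which is exactly observability of the extended filter.

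The hard part will be the single remaining step, showing $\Xi\otimes I\in\mathcal{O}$; this is precisely where distinctness of the $\xi_i$ and the nontriviality of the coupling $H_0$ genuinely enter. The degree-counting heart of it is that $\mathcal{L}$ raises the $\Xi$-degree by at most one, through the commutator term $i[\Xi\otimes H_0,\,\cdot\,]$, which weights block $i$ by $\xi_i$, so positive $\Xi$-degree can only be manufactured through $\operatorname{ad}_{H_0}$. I expect to prove $\Xi\otimes I\in\mathcal{O}$ by tracking the degree-one graded component generated from $I\otimes I$ and using the known-parameter observability to supply enough independent directions to isolate $\Xi\otimes I$ after subtracting reachable lower-degree tails. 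It is worth flagging the one degenerate case, $H_0\propto I$: then $\mathcal{L}_{\xi_i}$ is independent of $i$, every block evolves identically, $\Xi\otimes I\notin\mathcal{O}$, and the extended filter fails to be observable even though the known-parameter filter may be observable, simply because the parameter has no effect on the dynamics. This case is excluded by the standing assumption that $\xi$ \emph{drives} the evolution, and the proof should make explicit that it is exactly this nontriviality of the coupling, present whenever the known filter is observable and genuinely $\xi$-dependent, that delivers $\Xi\otimes I\in\mathcal{O}$.
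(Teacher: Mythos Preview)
Your module-theoretic reduction via the commuting multiplication operator $\mathcal{M}$ and Lagrange interpolation is genuinely different from the paper's route, which argues directly that for each atomic basis element $A_i$ the iterates produce operators of the form $\Xi^{k}\otimes A_i$ for $N$ distinct exponents $k$, and then invokes the \emph{generalized} Vandermonde determinant (valid for distinct \emph{positive} reals) to conclude these are linearly independent and hence $\dim\mathcal{O}=Nr$. Your reduction to the single claim $\Xi\otimes I\in\mathcal{O}$ is clean, and the subsequent interpolation argument is correct.

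The gap is in your remaining ``hard step,'' and it is more serious than your sketch suggests. Your outline for proving $\Xi\otimes I\in\mathcal{O}$ invokes only distinctness of the $\xi_i$ and nontriviality of $H_0$; positivity never appears. But positivity is essential, not cosmetic. The paper's own $\{+\kappa,-\kappa\}$ example, discussed immediately after the corollary, has distinct values, nontrivial $H_0=\sigma_y$, and an observable known-parameter filter (on $\operatorname{span}\{I,\sigma_x,\sigma_z\}$), yet the extended observable space is $\operatorname{span}\{I\otimes I,\ I\otimes\sigma_z,\ \Xi\otimes\sigma_x\}$, which does \emph{not} contain $\Xi\otimes I$. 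So a proof of $\Xi\otimes I\in\mathcal{O}$ based on distinctness alone would be a proof of a false statement. Whatever argument fills your gap must use positivity in an essential way, and your degree-one graded-component sketch gives no indication of where that would enter. The paper's proof, by contrast, injects positivity at exactly one identifiable place: the generalized Vandermonde matrix $(\xi_i^{k_j})$ has independent columns for distinct positive $\xi_i$, which is precisely what fails when the $\xi_i$ are allowed to change sign.
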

\begin{proof}
In order to satisfy the observability condition, we require $\dim{\mathcal{O}} = Nr$, where we have set $\dim{\mathscr{N}_s} = r$ and used the fact that $\dim{\mathfrak{D}(\mathcal{H}_\xi)} = N$.  Given that the filter for a known parameter is observable, its observable space coincides with $\mathscr{N}_s$ and has an orthogonal operator basis $\{A_i\}$, where we take $A_0 = I$.

Similarly, consider the $N$-dimensional operator space $\mathfrak{D}(\mathcal{H}_\xi)$.  If $\{\xi_i\}$ are distinct, any set of the form
\begin{equation}
	\{\Xi^{k_1}, \Xi^{k_2}, \ldots, \Xi^{k_N}\}, k_i \in \mathbbm{N}, k_i \neq k_j \text{ if } i \neq j
\end{equation}
is linearly independent, since the corresponding generalized Vandermonde matrix
\begin{equation}
	V_\xi = \begin{pmatrix}
		\xi_1^{k_1} & \xi_1^{k_2} & \ldots &  \xi_1^{k_N} \\
		\vdots & \vdots & \ddots & \vdots  \\
		\xi_N^{k_1} & \xi_N^{k_2} &\ldots & \xi_N^{k_N}
	\end{pmatrix}
\end{equation}
has linearly independent columns \citep{Gantmakher:2000a}.  

Following the iterative procedure, we construct the observable space for the parameter estimation filter starting with $I \otimes A_0$, which is the identity in the extended space. We then iteratively apply $\mathcal{L}$ and $\mathcal{K}$ until we have an invariant linear span of operators.  The only non-trivial operator on the auxiliary space comes from the Hamiltonian part of the Lindblad generator, which introduces higher and higher powers of the diagonal matrix $\Xi$.  Since $\dim{\mathfrak{D}{(\mathcal{H}_\xi)}\otimes\mathscr{N}_s}$ is finite, this procedure must terminate. The resulting observable space can be decomposed into subspaces
\begin{equation}
\mathcal{O}_{i} = \{\Xi^{k_i^j} \otimes A_i\}, \quad i = 1,\ldots, r \quad k_i^j \in \mathbbm{N}  
\end{equation}
where $k_i^j$ is some increasing sequence of non-negative integers which correspond to the powers of $\Xi$ that are introduced via the Hamiltonian.  Note that the specific values of $k_i^j$ depend on the commutator algebra of $H_0$ and the atomic-space operator basis $\{A_i\}$.  Regardless, since the Hamiltonian in $\mathcal{L}$ can always add more powers of $\Xi$, the procedure will not terminate until $\mathcal{O}_i$ is composed of a largest linearly independent set of powers of $\Xi$.  This set has at most $N$ distinct powers of $\Xi$, since it cannot exceed the dimension of the auxiliary space.  Given that any collection of $N$ powers of $\Xi$ is linearly independent, this means once we reach a set of $N$ powers $k_i^j$, the procedure terminates and $\dim{\mathcal{O}_i} = N$.   Since $\mathcal{O}$ has $r$ subspaces $\mathcal{O}_i$, each of dimension $N$, $\dim{\mathcal{O}} =  Nr$ as desired and the observability condition is satisfied.
\end{proof}

Although these conditions provide a simple check, I would like to stress that they do not determine how quickly the convergence occurs, which will depend on the specifics of the problem at hand.  Additionally, as posed, the question of observability is a binary one.  One might expect that some unobservable systems are nonetheless ``more observable'' than others or simply that unobservable systems might still be useful for parameter estimation.  Given the corollary above, one can see that this may occur if a single parameter $\xi_j = 0$.  Then $V_{\xi}$ has a row of all zeros, so that the maximal dimension of a set of linearly independent powers of $\Xi$ is $N - 1$.  Similarly, if one allows both positive and negative real-valued parameters, the properties of $V_{\xi}$ are not as obvious, though in many circumstances, having both $\xi_i$ and $-\xi_i$ renders the system unobservable.  We explore these nuances in numerical simulations presented in the following section.  
\subsubsection*{Qubit Example}
Consider using the single qubit from Example \ref{quantum:example:qubit_filter} in Chapter \ref{chapter:quantum} as a probe for the magnetic field $B$.  Since the initial state is restricted to the $x$-$z$ plane, the $y$ component of the Bloch vector is always zero and thus is not a relevant part of the atomic observable space, which is spanned by $\{I,\sigma_x,\sigma_z\}$.  In other words, the filter with known $B$ is trivially observable, since we assume the initial state is known precisely.  

When $B$ is unknown, the ensemble parameter filter is given by
\begin{subequations} \label{parameter_estimation:eq:qubit_ensemble}
	\begin{align} 
			d\theta_t^{(i)} &= -2B_idt + \kappa \cos(\theta_t^{(i)})
									(\sin(\theta_t^{(i)}) - 2\expect{\sigma_z}^{E})dt \nonumber\\
							&	+ 2\sqrt{\kappa}\cos(\theta_t^{(i)})dW_t \\
			dp^{(i)}_t &= 2\sqrt{\kappa}(\sin(\theta_t^{(i)}) - \expect{\sigma_z}^{E}) p^{(i)}_t dW_t
	\end{align}
\end{subequations}
where $dW_t = dM_t -2 \sqrt{\kappa} \expect{\sigma_z}^{(E)}$ and $\expect{\sigma_z}^{E} = \sum_{i} p_t^{(i)}\sin(\theta_t^{(i)})$.  We simulated this filter by numerically integrating the quantum filter in Eq.~\eqref{quantum:eq:known_qubit_filter} using a value for $B$ uniformly chosen from the given ensemble of potential $B$ values.  This generates a measurement current $dM_t$, which is then fed into the ensemble filter of Eq.~\eqref{parameter_estimation:eq:qubit_ensemble}.  For all simulations, I set $\kappa = 1$ and used a simple It\^{o}-Euler integrator as described in Appendix \ref{appendix:numerical_methods_for_stochastic_differential_equations} with a step-size $dt = 10^{-5}$.

\begin{figure}[bt]
	\centering
		\includegraphics[scale=0.9]{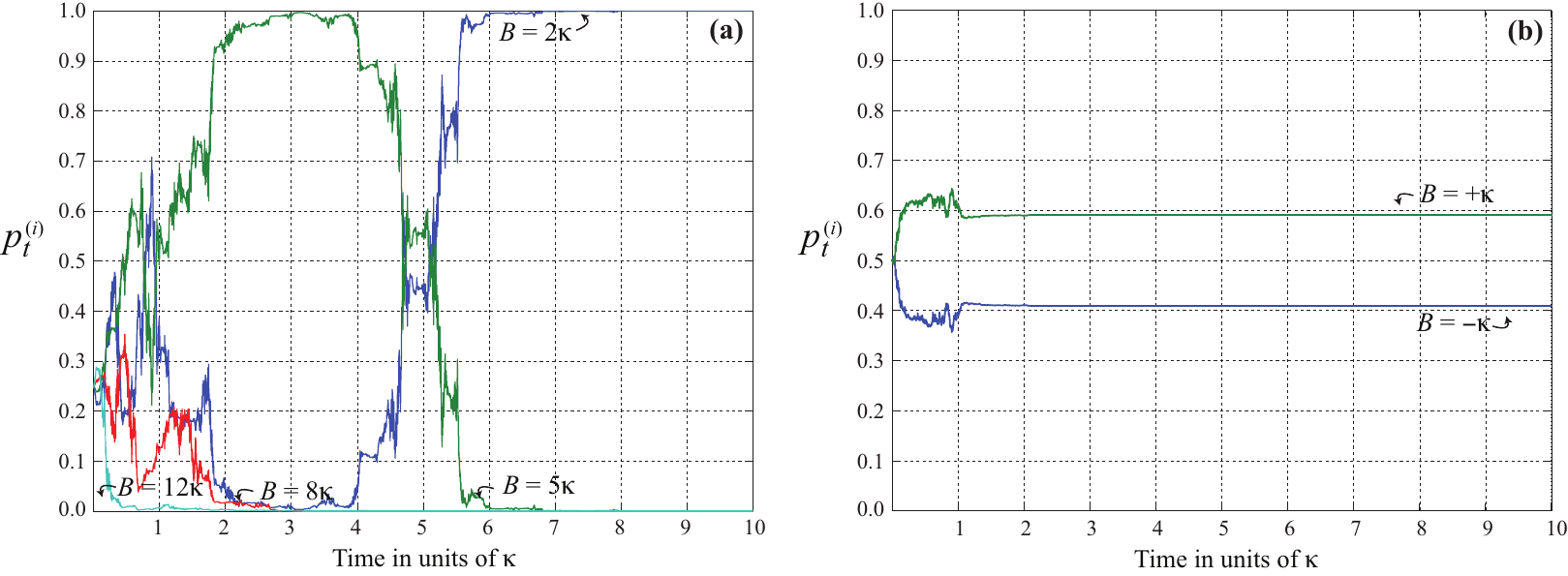}
	\caption[Qubit magnetometer filter performance]{(a) Filtered $p_t^{(i)}$ for $B \in \{ 2\kappa, 5\kappa, 8\kappa, 12\kappa\}$.  The filter tracks to the true underlying value of $B = 2\kappa$ (b) Filtered $p_t^{(i)}$ for $B \in \{ -\kappa, +\kappa\}$.  The filter does not track to $B = +\kappa$ with probability one, though it is the most probable parameter value.}
	\label{parameter_estimation:fig:discrete:combinedruns}
\end{figure}

Figure \ref{parameter_estimation:fig:discrete:combinedruns}(a) shows a simulation of a filter for the case $B \in \{ 2\kappa, 5\kappa, 8\kappa, 12\kappa\}$.  The filter was initialized with a uniform distribution, $p^{(i)}_0 = 1/4$.  For the particular trajectory shown, the true value of $B$ was $2\kappa$ and we see that the filter successfully tracks to the correct $B$ value.  This is not surprising, given that the potential values of $B$ are positive and distinct, thus satisfying the convergence corollary.  It is also interesting to note that the filter quickly discounts the probabilities for $8\kappa,12\kappa$, which are far from the true value.  Conversely, the filter initially favors the incorrect $B = 5\kappa$ value before honing in on the correct parameter value.

In Figure \ref{parameter_estimation:fig:discrete:combinedruns}(b), we see a simulation for the case of $B \in \{+\kappa, -\kappa\}$, which does not satisfy the convergence corollary.  In fact, using the iterative procedure, one finds the observable space is spanned by $\{I\otimes I, I \otimes \sigma_z, B \otimes \sigma_x, B^2 \otimes I, B^2 \otimes \sigma_z, B^3 \otimes \sigma_x\}$.  But since $B = \left(\begin{smallmatrix} \kappa & 0\\ 0 & -\kappa \end{smallmatrix}\right)$, $B^2 = \kappa^2 I$ so that only 3 of the 6 operators are linearly independent.  Although the filter does not converge to the true underlying value of $B = +\kappa$, it does reach a steady-state that weights the true value of $B$ more heavily.  Simulating 100 different trajectories for the filter, there were 81 trials for which the final probabilities were weighted more heavily towards the true value of $B$.  This confirms our intuition that the binary question of observability does not entirely characterize the performance of the parameter filter.  

\begin{figure}[hb]
	\centering
		\includegraphics[scale=1]{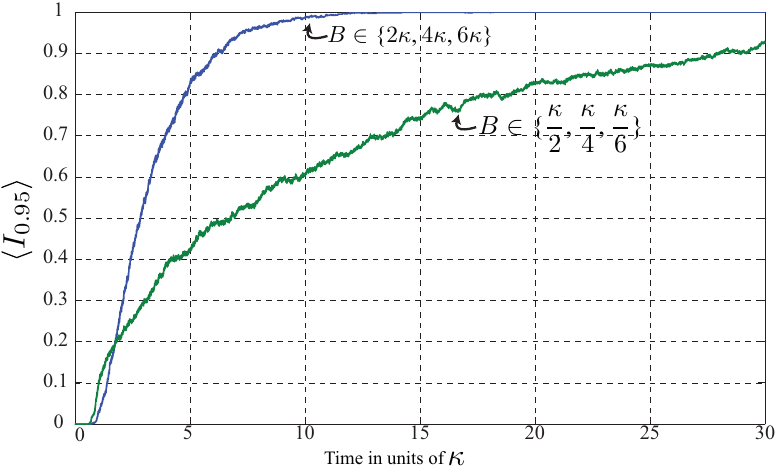}
	\caption[Qubit magnetometer filter convergence]{Rate of convergence ($I_{0.95}$), averaged over 1000 trajectories.  The filters are for cases when possible $B$ values are either all larger or all smaller than the measurement strength $\kappa$.  }
	\label{parameter_estimation:fig:discrete:convergence}
\end{figure}
Figure \ref{parameter_estimation:fig:discrete:convergence} shows the rate of convergence of filters meant to distinguish different sets of $B$.  The rate of convergence is defined as the ensemble average of the random variable
\begin{equation}
	I_{\alpha} = \begin{cases}
		1, &\text{ if $p_t^{(i)} > \alpha $ for any $i$}\\
		0, &\text{ otherwise }
	\end{cases} .
\end{equation}
Although any individual run might fluctuate before converging to the underlying $B$ value, the average of $I_\alpha$ over many runs should give some sense of the rate at which these fluctuations die down.  For the simulation shown, I set $\alpha = 0.95$ and averaged $I_{0.95}$ over 1000 runs for two different cases---either all possible $B$ values are greater than $\kappa$ or all are less than $\kappa$.  As shown in the plot, the former case shows faster convergence since the $B$ field drives the dynamics more strongly than the measurement process, which in turn makes the trajectories of different ensemble members more distinct.  Of course, one cannot make the measurement strength too weak since we need to learn about the system evolution.  Therefore care must be taken to tune the signal-to-noise ratio in the problem at hand, relative to the timescales relevant for the parameter values of interest.
\section{Quantum Particle Filter}
\label{parameter_estimation:sec:infinite}
Abstractly, developing a parameter estimator in the continuous case is not very different than in the finite dimensional case.  One can still introduce an auxiliary space $\mathcal{H}_{\xi}$, which is now infinite dimensional.  In this space, we embed the operator version of $\xi$ as 
\begin{equation}
	\mathfrak{D}(\mathcal{H}_{\xi}) \ni \Xi = \int d\xi \xi \ketbra{\xi}{\xi} ,
\end{equation}
where $\Xi\ket{\xi} = \xi\ket{\xi}$ and $\braket{\xi}{\xi'} = \delta(\xi - \xi')$.  Again, by extending operators appropriately, the filters in Eq.~\eqref{quantum:eq:quantum_filter} and Eq.~\eqref{quantum:eq:quantum_filter_rho_form} become optimal parameter estimation filters.  We generalize the conditional ensemble state of Eq.~\eqref{parameter_estimation:eq:finitedim:ensembleform} to
\begin{equation} \label{parameter_estimation:eq:infinite:rho}
		\rho_t^{E} = \int d\xi p_t(\xi) \ketbra{\xi}{\xi} \otimes \rho^{(\xi)}_t ,
\end{equation}
where $p_t(\xi) \equiv  P(\xi | M_{[0,t]})$ is the continuous conditional probability density.  Although the quantum filter provides an exact formula for the evolution of this density, calculating it is impractical, as one cannot exactly represent the continuous distribution on a computer.  The obvious approximation is to discretize the space of parameter values and then use the ensemble filter determined by Eq.~\eqref{parameter_estimation:eq:finitedim:ensemblefilter}; indeed such an approach is very common in classical filtering theory and encompasses a broad set of Monte Carlo methods called particle filters \citep{Doucet:2001,Arulampalam:2002a}.

The inspiration for particle filtering comes from noting that any distribution can be approximated by a weighted set of point masses or \emph{particles}.  In the quantum case, we introduce a \emph{quantum particle} approximation of the conditional density in Eq.~\eqref{parameter_estimation:eq:infinite:rho} as
\begin{equation}
	p_t(\xi) \approx \sum_{i = 1}^{N} p^{(i)}_t \delta(\xi - \xi_i) .
\end{equation}
The approximation can be made arbitrarily accurate in the limit of $N \rightarrow \infty$.  Plugging this into Eq.~\eqref{parameter_estimation:eq:infinite:rho}, we recover precisely the form for the discrete conditional state given in Eq.~\eqref{parameter_estimation:eq:finitedim:ensembleform}.  Accordingly, the quantum particle filtering equations are identical to those of the ensemble filter given in Eq.~\eqref{parameter_estimation:eq:finitedim:ensemblefilter}.  The only distinction here is in the initial approximation of the space of parameter values.  Thus the basic quantum particle filter simply involves discretizing the parameter space,  then integrating the filter according to the ensemble filtering equations.  

The basic particle filter suffers from a degeneracy problem, in that all but a few particles may end up with negligible weights $p^{(i)}_t$.  This problem is even more relevant when performing parameter estimation, since the set of possible values for $\xi$ are fixed at the outset by the choice of discretization.  Even if a region in parameter space has low weights, its particles take up computational resources, but contribute little to the estimate of $\xi$.  More importantly, the ultimate precision of the parameter estimate is inherently limited by the initial discretization; we can never have a particle whose parameter value $\xi_i$ is any closer to the true value $\xi$ than the closest initial discretized value.

In order to circumvent these issues, we can adopt the kernel resampling techniques of Liu and West \citep{Liu:2001a}.  The idea is to replace low weight particles with new ones concentrated in high weight regions of parameter space.  One first samples a source particle from the discrete distribution given by the weights $\{p^{(i)}_t\}$, ensuring new particles come from more probable regions of parameter space.  Given a source particle, we then create a child particle by sampling from a Gaussian kernel centered near the source particle.  By repeating this procedure $N$ times, we create a new set of particles which populate more probable regions of parameter space.  Over time, this adaptive procedure allows the filter to move away from unimportant regions of parameter space and more finely explore the most probable parameter values.

The details of the adaptive filter lie in parameterizing and sampling from the Gaussian kernel.  Essentially, we are given a source particle, characterized by $\ketbra{\xi_i}{\xi_i}$ and $\rho_t^{(i)}$, and using the kernel, create a child particle, characterized by $\ketbra{\tilde{\xi}_i}{\tilde{\xi}_i}$ and $\tilde{\rho}_t^{(i)}$.  One could attempt to sample from a multi-dimensional Gaussian over both the parameter and atomic state components, but ensuring that the sampled $\tilde{\rho}_t^{(i)}$ is a valid atomic state would be non-trivial in general.  There will be some cases, including the qubit example in the following section, where the atomic state is conveniently parameterized for Gaussian resampling.  But for clarity in presenting the general filter, we will create a child particle with the same atomic state as the parent particle.  

Under this assumption, the Gaussian kernel for parent particle $i$ is characterized by a mean $\mu^{(i)}$ and variance ${\sigma^2}^{(i)}$, both defined over the one dimensional parameter space.  Rather than setting the mean of this kernel to the parameter value of the parent, Liu and West suggest setting
\begin{equation} \label{parameter_estimation:eq:kernel:mean}
	\mu^{(i)} = a \xi_i + (1 - a) \bar{\xi} , \quad a \in [0,1]
\end{equation}
where $\bar{\xi} = \sum_i p_t^{(i)}\xi_i$ is the ensemble mean.  The parameter $a$ is generally taken to be close to one and serves as a mean reverting factor.  This is important because simply resampling from Gaussians centered at $\xi_i$ results in an overly dispersed ensemble relative to the parent ensemble.  The kernel variance is set to
\begin{equation} \label{parameter_estimation:eq:kernel:var}
	{\sigma^2}^{(i)} = h^2 V_t , \quad h \in [0,1]
\end{equation}
where $V_t = \sum_i p_t^{(i)}( \xi_i - \bar{\xi})^2$ is the ensemble variance and $h$ is the smoothing parameter.  It is generally a small number chosen to scale with $N$, so as to control how much kernel sampling explores parameter space.  While $a$ and $h$ can be chosen independently,  Liu and West relate them by $h^2 = 1 - a^2$, so that the new sample does not have an increased variance.
  
Of course, it would be computationally inefficient to perform this resampling strategy at every timestep, especially since there will be many steps where most particles have non-negligible contributions to the parameter estimate.  Instead, we should only resample if some undesired level of degeneracy is reached.  As discussed by Arulampalam et al. \citep{Arulampalam:2002a}, one measure of degeneracy is the effective sample size
\begin{equation}
	N_{\text{eff}} = \frac{1}{\sum_{i=1}^N (p_t^{(i)})^2} .
\end{equation}
At each timestep, we then resample if the ratio $N_{\text{eff}}/N$ is below some given threshold.  We are not aware of an optimal threshold to chose in general, but the literature suggest $2/3$ as a rule of thumb \citep{Doucet:2001}.

Altogether, the \emph{resampling quantum particle filter} algorithm proceeds as follows:
\begin{description}
	\item[Initialization] for $i = 1,\ldots, N$:
	 \begin{enumerate}
	 	\item Sample $\xi_i$ from the prior parameter distribution.
		\item Create a quantum particle with weight $p_t^{(i)} = 1/N$, parameter state $\ketbra{\xi_i}{\xi_i}$ and atomic state $\rho_0^{(i)} = \rho_0$, where $\rho_0$ is the known initial atomic state.
	 \end{enumerate}
	\item[Repeat] for all time:
		\begin{enumerate}
			\item Update the particle ensemble by integrating a timestep of the filter given in Eq.~\eqref{parameter_estimation:eq:finitedim:ensemblefilter}.
			 \item If $N_{\text{eff}}/N$ is less than the target threshold, create a new particle ensemble:
			  \begin{description}
			  	\item[Resample] for $i = 1,\ldots, N$:
				 \begin{enumerate}
				 	\item Sample an index $i$ from the discrete density $\{p_t^{(i)}\}$.
					\item Sample a new parameter value $\tilde{\xi}_i$ from the Gaussian kernel with mean $\mu^{(i)}$ and variance ${\sigma^2}^{(i)}$ given by Eq.~\eqref{parameter_estimation:eq:kernel:mean} and Eq.~\eqref{parameter_estimation:eq:kernel:var}. 
					\item Add a quantum particle to the new ensemble with weight $p_t^{(i)} = 1/N$, parameter state $\ketbra{\tilde{\xi}_i}{\tilde{\xi}_i}$ and atomic state $\rho_t^{(i)} = \rho_t^{(i)}$
				 \end{enumerate}
			  \end{description}
		\end{enumerate}
\end{description}

Unfortunately, checking asymptotic convergence of the filter is more involved in the continuous-valued case, as the observability and absolute continuity conditions require extra care in infinite dimensions.  However, given that the quantum particle filter actually works on a discretized space, in practice we can simply use the results we had for the finite-dimensional case.  As before, we note that one can generalize the quantum particle filter to multidimensional parameters by using a multi-dimensional Gaussian kernel.  One might also consider using alternate kernel forms, such as a regular grid which has increasingly finer resolution with each resampling stage.  We will not consider such extensions here.  
\subsubsection*{Qubit Example}
We now consider a resampling quantum particle filter for the qubit magnetometer introduced earlier in this chapter.  As hinted at in the previous section, since the qubit state is parameterized by the continuous variable $\theta_t$, we can easily resample both the magnetic field $B_i$ and state $\theta^{(i)}$ using a two-dimensional Gaussian kernel for $(\tilde{B}_i,\tilde{\theta}^{(i)})$, with mean vector and covariance matrix given by generalizations of Eq.~\eqref{parameter_estimation:eq:kernel:mean} and Eq.~\eqref{parameter_estimation:eq:kernel:var}.  Since different values of $B$ result in different state evolutions, resampling both the state and magnetic field values should result in child particles that are closer to the true evolved state. 

\begin{figure}[t]
	\centering
		\includegraphics[scale=0.5]{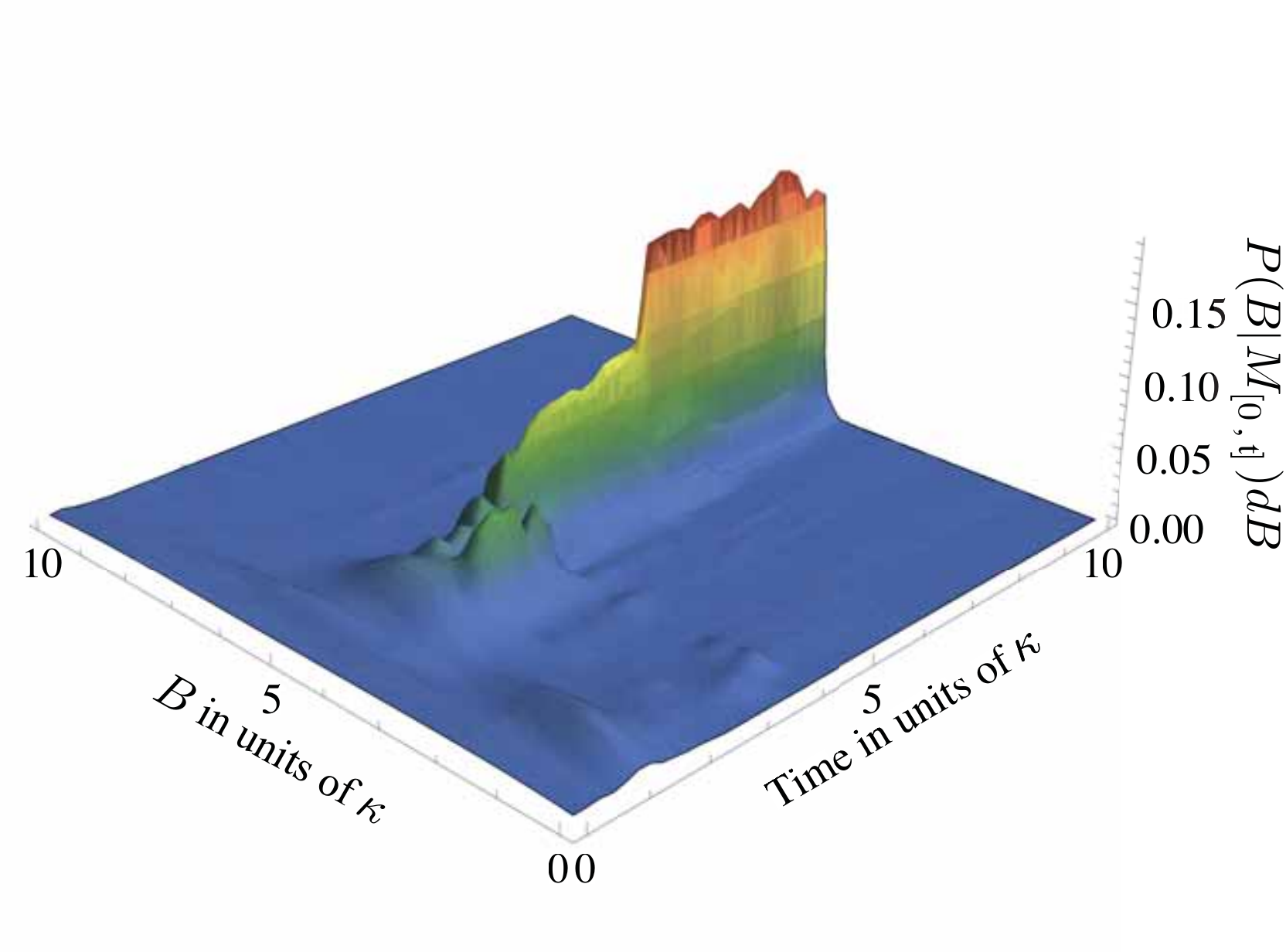}
	\caption[Kernel density reconstruction of qubit particle filter performance]{Kernel density reconstruction of $p_t(B)dB =  P(B | M_{[0,t]})dB$ for $N = 1000$ particle filter set with $dB = 10\kappa/150$, $a = 0.98$, $h = 10^{-3}$ and resampling threshold of $2/3$.  The true magnetic field was $B = 5\kappa$.}
	\label{parameter_estimation:fig:densityPlot}
\end{figure}

Figure \ref{parameter_estimation:fig:densityPlot} shows a typical run of the quantum particle filter for $N = 1000$ particles.  The true $B$ value was $5\kappa$ and the prior distribution over $B$ was taken to be uniform over the interval $[0,10\kappa]$.  As before, I used an It\^{o}-Euler integrator with a step-size of $dt = 10^{-5}\kappa$.  Note that both the timespan of integration and the potential values of $B$ range from $0$ to $10\kappa$ in our units.  The resampling parameters were $a = 0.98$, $h = 10^{-3}$ and resampling threshold $2/3$.  Note that I chose not to use Liu and West's relation between $a$ and $h$.   

In order to generate the figure, each particle's weight and parameter values were stored at 50 equally spaced times over the integration timespan.  Using Matlab's \texttt{ksdensity} function, these samples were then used to reconstruct $p_t(B)$ via a Gaussian kernel density estimate of the distribution.  The resulting kernel density estimate was then evaluated at 150 equally spaced $B$ values in the range $[0,10\kappa]$, which I plotted as $p_t(B)dB$ with $dB = 10\kappa/150$.  As is seen in the figure, after some initial multi-modal distributions over parameter space, the filter hones in on the true value of $B = 5\kappa$.  For the simulation shown, the final estimate was $\hat{B} = 5.03\kappa$ with uncertainty $\sigma_{\hat{B}} = 0.18\kappa$.  The filter resampled 7 times over the course of integration.    
\section{Summary}
\label{parameter_estimation:sec:conclude}
I have presented practical methods for single-shot parameter estimation via continuous quantum measurement.  By embedding the parameter estimation problem in the standard quantum filtering problem, the optimal parameter filter is given by an extended form of the standard quantum filtering equation.  For parameters taking values in a finite set, I gave conditions for determining whether the parameter filter will asymptotically converge to the correct value.  For parameters taking values from an infinite set, I introduced the quantum particle filter as a computational tool for suboptimal estimation.  Throughout, I presented numerical simulations of the methods using a single qubit magnetometer.  

These techniques should generalize straightforwardly for estimating time-dependent parameters and to a lesser extent, estimating initial state parameters.  The binary state discrimination problem studied by \citep{Jacobs:2006a} is one such example and his approach is essentially a special case of our ensemble parameter filter.  Future extensions of this work include exploring alternate resampling techniques for the quantum particle filter, considering alternative discretization schemes beyond the delta function particle basis and developing feedback strategies for improving the parameter estimate. 

\chapter{Precision Magnetometry}
\label{chapter:magnetometry}
In this chapter, I review the application of the parameter estimation techniques developed in Chapter \ref{chapter:quantum_parameter_estimation} for a proposed experimental demonstration of precision magnetometry.  By double-passing an optical field through an atomic system, one hopes to create effective nonlinear interactions which offer improved sensitivity to the strength of an external magnetic field.  Using quantum stochastic formalsim of Chapter \ref{chapter:quantum}, I review simulations of quantum information theoretic bounds on the optimal estimator performance which suggest magnetic field uncertainty scalings better than that of traditional atomic magnetometers, which is further supported by simulations of corresponding quantum particle filed parameter estimators.  The research in this chapter appears in \citep{Chase:2009b,Chase:2009c,Chase:2009d}.

\section{Introduction}
It is well-appreciated in physics that the properties of a field must often be determined indirectly, such as by observing the effect of the field on a test particle.  Take magnetometry for example:  the strength of a magnetic field might be inferred by observing Larmor precession in a spin-polarized atomic sample \cite{Budker:2002a} and estimating the field strength $B$ from the precession rate.  Inherent in this process is the fact that the atomic spin must be measured to determine the extent of the magnetically-induced dynamics.  For very precise measurements, uncertainty $\delta \tilde{B}$ in the estimated value $\tilde{B}$ of the field is dominated by quantum fluctuations in the observations performed on the atomic sample.   The results presented here fall under the umbrella of quantum parameter estimation theory \cite{Helstrom:1976a,Braunstein:1994a}, where the objective is to work within the rules of quantum mechanics to minimize, as much as possible, the propagation of this quantum uncertainty into the determination of metrological quantities, like $B$.

Given, for instance, a $y$-axis magnetic field $\mathbf{B} = B\,  \vec{\mathbf{y}}$, an atomic sample couples to $B$ via the magnetic dipole Hamiltonian
\begin{equation} \label{magnetometry:Equation::HLarmor}
	\hat{H} = - \hbar \gamma B \Fy,
\end{equation}
where $\gamma$ is the atomic gyromagnetic ratio and $\hat{F}_i =\sum_{j=1}^N \hat{f}_i^{(j)}$ ($i=\mathrm{x}, \mathrm{y}, \mathrm{z}$) are the collective spin operators obtained from a symmetric sum over $N$ identical spin-$f$ atoms.  If the atoms are initially polarized along the $x$-axis, the Larmor dynamics and thus $B$ can be inferred by observing the $z$-component of the atomic spin $F_{\mathrm{z}}$ \cite{Budker:2002a,Romalis:2003a,Geremia:2003a}.

Through the quantum Cram\'{e}r-Rao inequality \cite{Helstrom:1976a,Holevo:1982a,Braunstein:1994a,Braunstein:1995a}, it is possible to place an information-theoretic lower bound on the units-corrected mean-square deviation of the estimate $\tilde{B}$ from $B$,
\begin{equation} \label{magnetometry:Equation::CRError}
    \delta \tilde{B} = \left\langle\left(\frac{\tilde{B}}
    {\abs{d\expect{\tilde{B}}/dB}} - B\right)^2 \right\rangle^{1/2} .
\end{equation}
The behavior of the estimator uncertainty with the number of atoms $N$ depends on the characteristics (e.g., separable, entangled, etc.) of the quantum states used to compute the expectation value in Eq.\ (\ref{magnetometry:Equation::CRError}) as well as the nature of the induced dynamics \cite{Boixo:2007a}.  If one does not permit quantum entanglement between the different atoms in the probe, it can be shown that the optimal parameter resolution obtained from Eq.\ (\ref{magnetometry:Equation::HLarmor}) is given by the so-called shotnoise uncertainty \cite{Budker:2002a,Geremia:2003a}
\begin{equation} \label{magnetometry:Equation::DeltaBSN}
	\delta \tilde{B}_{\mathrm{SN}}(t) = \frac{1} { \gamma t \sqrt{2 F }} ,
\end{equation}
whose characteristic $1/\sqrt{N}$ scaling is a byproduct of the projection noise $\langle \Delta \Fz\rangle =\sqrt{F/2}$ for a spin coherent state \cite{Wineland:1994a} (here $F = f N$ for a sample of $N$ atoms each with total spin quantum number $f$).  It was believed for some time that the fundamental limit to parameter estimation, even when exploiting arbitrary entanglement between atoms in the probe, offers only a quadratic improvement
\begin{equation} \label{magnetometry:Equation::DeltaBHL}
	\delta \tilde{B}_{\mathrm{HL}}( t ) = \frac{\alpha} { \gamma t F} ,
\end{equation}
up to an implementation-dependent constant $\alpha$.  Eq.\ (\ref{magnetometry:Equation::DeltaBHL}) has traditionally been called the Heisenberg uncertainty scaling, and it can be achieved in principle for various spin resonance metrology problems \cite{Wineland:1994a}, including magnetometry \cite{Geremia:2003a}.  For an ensemble of $N$ spin-1/2 particles prepared into the initial cat-state $( \ket{\uparrow\uparrow \cdots \uparrow} + \ket{\downarrow\downarrow\cdots\downarrow})/\sqrt{2}$, the uncertainty scaling is given by $1/\gamma t N$ and is sometimes called the \textit{Heisenberg Limit}. 

Recently, however, it was shown that $1/N$ scaling can be surpassed \cite{Boixo:2007a} by extending the linear coupling that underlies Eq.\ (\ref{magnetometry:Equation::HLarmor}) to allow for multi-body collective interactions \cite{Boixo:2007a,Rey:2007a}.  Were one to engineer a probe Hamiltonian where $B$ multiplies $k$-body probe operators, such as $\Fy^k$, then the quantum Cramer-Rao bound \cite{Braunstein:1994a} indicates that the optimal estimation uncertainty would scale more favorably as $\Delta B_k \sim 1/N^k$ \cite{Boixo:2007a}.   Unfortunately, metrological coupling Hamiltonians are rarely up to us--- they come from nature, like the Zeeman interaction--- suggesting that one is stuck with a given uncertainty scaling without changing the fundamental structure of Eq.\ (\ref{magnetometry:Equation::HLarmor}).  Furthermore, it was shown in Ref.\ \cite{Boixo:2007a} that the addition of an auxiliary parameter-independent Hamiltonian $\hat{H}_1(t)$ such that
\begin{equation} \label{magnetometry:Equation::Haux}
	\hat{H} = - \hbar \gamma B \hat{F}_{\mathrm{y}} + \hat{H}_1(t)
\end{equation}
does not change the scaling of the parameter uncertainty for any choice of $\hat{H}_1(t)$.

At the same time, however, it should be well-appreciated that the dynamics one encounters in any actual physical setting are \textit{effective dynamics}.  Indeed, even the hyperfine Zeeman Hamiltonian Eq.\ (\ref{magnetometry:Equation::HLarmor}) is an effective description at some level.  This begs the question as to whether one can utilize an auxiliary system to induce effective dynamics that improve the uncertainty scaling in quantum parameter estimation by going outside the structure of Eq.\ (\ref{magnetometry:Equation::Haux}).  The purpose of this paper is to provide some direct evidence that doing so is possible.

In particular, we will study effective nonlinear couplings generated by double-passing an optical field through an atomic sample (q.v. Figure \ref{magnetometry:Figure::Schematic}) \cite{Sherson:2006a,Sarma:2008a}.  Continuous measurement of the scattered field then allows for the estimation of $\Fz$ and by extension, the magnetic field.  Building on the quantum stochastic calculus approach in \cite{Sarma:2008a}, I present the quantum filtering equations for estimating the state of the atomic sample.  Although the effective dynamics are no longer described by a Hamiltonian, numerical calculations of the quantum Fisher information can be used to obtain a theoretical lower bound on the uncertainty scaling of an optimal magnetic field estimator \cite{Braunstein:1994a}.  Such simulations suggest that for certain parameter regimes, the double-pass system's sensitivity to magnetic fields scales better than that of a comparable single-pass system and what would be computed by applying the methods of Ref.\ \cite{Boixo:2007a} to Eq.\ (\ref{magnetometry:Equation::Haux}).  Other simulations suggest that the quantum Heisenberg limit may be attained without generating any appreciable entanglement.  I also review direct simulations of magnetic field estimation for the system using quantum particle filters as further evidence for the improved uncertainty scaling provided by our proposed magnetometer.

Unfortunately the results are somewhat muted by the fact that despite our best efforts, we have not found a parameter estimator whose uncertainty scaling can be shown analytically to outperform the conventional Heisenberg limit.  In particular, I show that improved scaling is not achieved by a quantum Kalman filter \cite{Belavkin:1999a,Kalman:1960a,Kalman:1961a}, as such a filter is only suitable for estimating magnetic fields in the linear small-angle regime and where the state is Gaussian and the dynamics are well approximated by a low order Holstein-Primakoff expansion \cite{Holstein:1940a,Geremia:2003a}.   Although Kalman filters have had success in describing the single-pass system \cite{Geremia:2003a}, simulations suggest the Gaussian and small-angle approximations break down precisely when exact simulations of the double-pass system show improved sensitivity.  For pedagogical purposes, I detail the derivation of such linear-Gaussian filters using the method of projection filtering \cite{vanHandel:2005b,Mabuchi:2008a}.  Doing so allows us to observe directly the limitations that arise when imposing the small-angle and Gaussian assumptions, and it also provides a framework for the future development of more sophisticated filters.  


\begin{figure}[t]
\begin{center}
\includegraphics{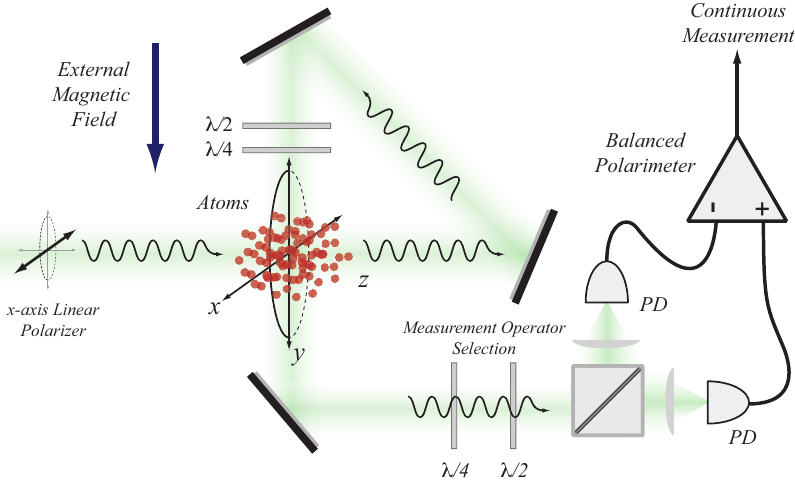}
\end{center}
\vspace{-4mm}
\caption[Schematic of double-pass atomic magnetometer]{ Schematic of a broadband atomic magnetometer based on continuous observation of a polarized optical probe field double-passed through the atomic sample.}
\label{magnetometry:Figure::Schematic}
\end{figure}

\section{Continuous measurement of the double-pass system} 
\label{magnetometry:sec:continuous_measurement_of_double_pass_system}

Consider the schematic in Fig.\ \ref{magnetometry:Figure::Schematic}.  The objective of this apparatus is to estimate the strength of a magnetic field oriented along the laboratory $y$-axis by observing the effect of that field on the spin state of the atomic sample.  Like most atomic magnetometer configurations, our procedure relies upon Larmor precession and uses a far-detuned laser probe to observe the spin angular momentum of the atomic sample.  Unlike conventional atomic magnetometer configurations, however, the probe laser is routed in such a way that it passes through the atomic sample twice prior to detection \cite{Sherson:2006a,Sarma:2008a}. 

Qualitatively, the magnetometer operates as follows.  The incoming probe field propagates initially along the atomic $z$-axis and is linearly polarized.  As a result of the atomic polarizability of the atoms, the probe laser polarization acquires a Faraday rotation proportional to the $z$-component of the collective atomic spin.  Two folding mirrors are then used to direct the forward scattered probe field to pass through the atomic sample a second time, now propagating along the atomic $y$-axis.  Prior to its second interaction with the atoms, polarization optics convert the initial Faraday rotation into ellipticity.  Thus on the second pass, the atoms perceive the optical helicity as a fictitious $y$-axis magnetic field acting in addition to the real field $B$, providing a positive feedback effect modulated by the strength of $B$.  The twice forward-scattered optical field is then measured in such a way that is sensitive only to the Faraday rotation induced by the first pass atom-field interaction.

\subsection{Quantum Stochastic Model}

When the collective spin angular momentum of a multilevel atomic system interacts dispersively with a traveling wave laser field with wavevector $\mathbf{k}$, the atomic spin couples to the two polarization modes of the electromagnetic field transverse to $\mathbf{k}$.   These polarization modes can can be viewed as a Schwinger-Bose field that when quantized in terms of a plane-wave mode decomposition yields the familiar Stokes operators:
\begin{eqnarray}
 	\sHOmegaZero & = &  +\frac{1}{2} \left(
				\aDOmegaX \aHOmegaX + \aDOmegaY \aHOmegaY 
			\right) \\ 
			& = & + \frac{1}{2} \left(
				\aDOmegaP \aHOmegaP + \aDOmegaM \aHOmegaM
			\right) \nonumber \\
	 \sHOmegaX & = & +\frac{1}{2} \left(
			\aDOmegaY \aHOmegaY - \aDOmegaX \aHOmegaX	 
	 	\right) \\
				& = & + \frac{1}{2} \left(
				\aDOmegaP \aHOmegaM + \aDOmegaM \aHOmegaP 	
			\right) \nonumber \\
	 \sHOmegaY & = & - \frac{1}{2} \left(
	 		\aDOmegaY \aHOmegaX + \aDOmegaX \aHOmegaY
	 	\right) \\
		& = & - \frac{i}{2} \left(
			\aDOmegaP \aHOmegaM -  \aDOmegaM \aHOmegaP
		\right) \nonumber \\
	 \sHOmegaZ & = & + \frac{i}{2} \left( 
	 		\aDOmegaY \aHOmegaX - \aDOmegaX \aHOmegaY
	 	\right) \\
		& = & + \frac{1}{2} \left(
			\aDOmegaP \aHOmegaP - \aDOmegaM \aHOmegaM
		\right). \nonumber		
 \end{eqnarray} 
Here, we have expressed the Stokes operators in terms of the Schr\"{o}dinger-picture field annihilation operators, $\aHOmegaX$ and $\aHOmegaY$, for the plane-wave modes with frequency $\omega$ and linear polarization along the x- and y-axes, respectively, as well as their corresponding transformations into the spherical polarization basis.  

In developing a physical model for the atom-field interaction in Fig.\ \ref{magnetometry:Figure::Schematic}, it is convenient to transform from a plane-wave mode decomposition of the electromagetic field to operators that are labeled by time.  Towards this end, we define the time-domain Schwinger boson annihilation operator as the operator distribution
\begin{equation}
	\hat{s}_t = \frac{1}{2} \int_{-\infty}^{+\infty}  g(\omega) 
		\, \aDOmegaX \aHOmegaY e^{i \omega t} d \omega,
\end{equation}
where $g(\omega)$ is a form factor.  This definition permits us to express the Stokes operators as
\begin{equation}
	\hat{s}_{\mathrm{z},t} =  i 
		\left(  \hat{s}^\dagger_t - \hat{s}^{\phantom{\dagger}}_t \right) \quad\text{and}\quad
	\hat{s}_{\mathrm{y},t}  =  - \left( \hat{s}^{\phantom{\dagger}}_t + \hat{s}^\dagger_t  \right),
\end{equation}
which should be reminiscent of quadrature operators and also places the field operators in a form that is directly in line with the standard nomenclature adopted in the field of quantum stochastic calculus. 

With a suitable orientation of the polarization optics ($\lambda/2$ and $\lambda/4$) in Fig.\ \ref{magnetometry:Figure::Schematic}, the interaction Hamiltonians for each pass of the probe light through the sample are then
\begin{eqnarray} 
		\hat{H}_t^{(1)}  & = & + \hbar \mu \Fz \hat{s}_{\mathrm{z},t}
			=  + i\hbar\mu\Fz \left( \hat{s}_t^{\dagger} - \hat{s}_t^{\phantom{\dagger}} \right)
		\label{magnetometry:Equation::H1} \\
		\hat{H}_t^{(2)}  &= &  +\hbar \kappa \Fy \hat{s}_{\mathrm{y},t} =  -\hbar\kappa \hat{F}_{\mathrm{y}} \left( \hat{s}_t^{\phantom{\dagger}} + \hat{s}_t^\dagger \right),
		\label{magnetometry:Equation::H2}
\end{eqnarray}
respectively.  Note that in developing these Hamiltonians, which are of the standard atomic polarizability form, it was assumed that rank-two spherical tensor interactions \cite{Jessen:2004a,Geremia:2006b} can be neglected.  In practice, the validity of such an assumption can depend heavily on the choice of atomic level structure and experimental parameters such as the intensity and detuning of the probe laser field.

In addition to specifying the Hamiltonians for the two atom-field interactions, it is also necessary to stipulate the measurement to be performed on the probe laser.  Since we expect that the amount of Larmor precession (possibly augmented by the addition of the double-passed probe field) will cary information about the magnetic field strength $B$, we must choose the measured field operator $\hat{z}_t$ appropriately.   Since the magnetic field drives rotations about the atomic $y$-axis, it is the $z$-component of the atomic spin that indicate such a rotation.  From the form of the first-pass interaction Hamiltonian $\hat{H}^{(1)}_t$, we see that the $z$-component of the atomic spin couples to dynamics generated by the field operator $\hat{s}_{\mathrm{z},t} = i ( \hat{s}^\dagger_t - \hat{s}^{\phantom{\dagger}}_t)$.  The affect of such a coupling is then observed by measuring the orthogonal quadrature, indicating that the appropriate polarization measurement should be $\hat{z}_t  =  \hat{s}_{\mathrm{y,t}}$.

\subsubsection{The Stochastic Propagator and Quantum Filter}

Analyzing the two individual interactions $\hat{H}_1$ and $\hat{H}_2$ via the stochastic limit studied in Chapter \ref{chapter:quantum} gives rise to the following quantum stochastic differential equations (QSDE) for the interaction-picture propagators
\begin{align}
	d\hat{U}_t^{(1)} &= \left\{\sqrt{m}\Fz (dS_t^\dagger - dS_t) 
			- \frac{1}{2}m\Fz^2dt 
			- \frac{i}{\hbar} \hat{H} dt \right\}U_t^{(1)} \label{magnetometry:Equation::dU1}\\
	d\hat{U}_t^{(2)} &= \left\{i\sqrt{k}\Fy(dS_t + dS^{\dagger}_t) 
			- \frac{1}{2}k\Fy^2dt
			- \frac{i}{\hbar} \hat{H}dt \right\}U_t^{(2)} \label{magnetometry:Equation::dU2}
\end{align}
where $m$ and $k$ are the weak-coupling interaction strengths obtained from the rates $\mu$ and $\kappa$, $\hat{H}$ is an arbitrary atomic Hamiltonian and $d\St^\dagger$ and $d \St$ are delta-correlated noise operators derived from the quantum Brownian motion
\begin{equation}
	\St = \int_0^t \hat{s}_u du.
\end{equation}
The noise terms satisfy the quantum It\^o rules: $d \St d \St^\dagger = dt$ and $d \St^\dagger d \St = d \St^2 = (d \St^\dagger)^2 = 0$, and can be viewed heuristically as a consequence of vacuum fluctuations in the probe field.

To obtain a single weak-coupling limit for the double-pass interaction, we combine the separate equations of motion for the two propagators into a single weak-couping limit as follows.  First, write the two single-pass evolutions in terms of the generators of the dynamics
\begin{equation}
	d\hat{U}_t^{(1)} = \hat{a}_t U_t^{(1)}, \quad \mathrm{and},
	\quad
	d\hat{U}_t^{(2)} = \hat{b}_t U_t^{(1)}
\end{equation}
and then expand the differential $d\hat{U}_t$ of the combined propagator
\begin{eqnarray}
	d \hat{U}_{t+\delta t}  & = & ( \hat{1} +\hat{b}_t ) ( \hat{1} + \hat{a}_t ) \hat{U}_t \\
	& = &  \hat{U}_t + \left( \hat{a} + \hat{b} + \hat{b} \hat{a} \right) \hat{U}_t
\end{eqnarray}
such that the combined propagator $d \hat{U}_t = \hat{U}_{t+\delta t} - \hat{U}_t$ then satisfies
\begin{equation}
	d\hat{U}_t = \left( \hat{a} + \hat{b} + \hat{b} \hat{a} \right) \hat{U}_t \,.
\end{equation}
After evaluating the combined evolution for the propagators in Eqs.\ (\ref{magnetometry:Equation::dU1}) and (\ref{magnetometry:Equation::dU2}) in light of the quantum It\^{o} rules, we find that the single weak-coupling limit propagator satisfies
\begin{multline}
	d\hat{U}_t=   \left[i\sqrt{km}\Fy\Fz dt 
		- \frac{1}{2}m\Fz^2 dt - \frac{1}{2}k\Fy^2dt  - \frac{2 i}{\hbar} \hat{H} dt \right. \\
		 \left. 
			+ \sqrt{m}\Fz(d\St^{\dagger} - d\St)
			+ i \sqrt{k}\Fy(d\St^{\dagger} + d\St) \right]\hat{U}_t. 
\end{multline}
Observe that as a result of the manner in which the combined weak-coupling limit was taken, the Hamiltonian term has the property that rates which appear in it differ by a factor of two from those that would be expected from a single weak-coupling limit.  This factor of two is essentially the rescaling of time units that arises from aggregating two sequential weak-coupling limits as a single differential process.  To retain consistency with the original definition of the frequencies that appear in the parameter-coupling Hamiltonian, it is essential to rescale time units such that frequencies in the parameter-coupling Hamiltonian are as expected.  Doing so is accomplished by reversing the effective $2 dt  \rightarrow dt$ transformation that occurred in the derivation, and thus dividing all rates by two to give
\begin{multline} \label{magnetometry:eq:double_pass_prop}
	d\hat{U}_t =  \left[i\sqrt{KM}\Fy\Fz dt 
		- \frac{1}{2}M\Fz^2 dt - \frac{1}{2}K\Fy^2dt  - \frac{i}{\hbar} \hat{H} dt \right. \\
		 \left. 
			+ \sqrt{M}\Fz(d\St^{\dagger} - d\St)
			+ i \sqrt{K}\Fy(d\St^{\dagger} + d\St) \right]\hat{U}_t
\end{multline}
where $M = m / 2$ and $K= k/2$.  I note that this final result agrees with the propagator obtained by Sarma~et.~al \cite{Sarma:2008a} who also derived the quantum stochastic propagator of this system in order to characterize the generation of polarization and spin squeezing as suggested by Sherson and M{\o}lmer \cite{Sherson:2006a}.

Following the derivation of the quantum filter in Chapter \ref{chapter:quantum}, we recognize the dipole operator $\hat{L} = \sqrt{M}\Fz + i\sqrt{K}\Fy$ and Hamiltonian $H = -\gamma B\Fy - \sqrt{KM}(\Fz\Fy + \Fy\Fz)/2$ in comparing the double pass propagator of Eq.~\eqref{magnetometry:eq:double_pass_prop} to the general form of Eq.~\eqref{quantum:eq:generic_propagator}.  Plugging in these forms into the adjoint filter of Eq.~\eqref{quantum:eq:quantum_filter_rho_form} yields the \emph{double-pass quantum filter}
\begin{eqnarray} 
    d\rho_t & = & i\gamma B[\Fy,\rho_t]dt + i\sqrt{KM}[\Fy,\{\Fz,\rho_t\}]dt \nonumber \\
	&&    + M \mathcal{D}[\Fz]\rho_t dt 
             + K\mathcal{D}[\Fy]\rho_t dt \label{magnetometry:eq:adjoint_quantum_filter} \\
         &&
             + \left(\sqrt{M}\mathcal{M}[\Fz]\rho_t 
             + i\sqrt{K}[\Fy,\rho_t]\right)dW_t \nonumber
\end{eqnarray}
where the innovations process 
\begin{equation}
	dW_t = dZ_t - 2\sqrt{M}\Tr{\Fz\rho_t}dt
\end{equation}
is a Wiener process, i.e. $\mathbbm{E}[dW_t] = 0, dW_t^2 = dt$.  The various superoperators are defined as
\begin{align}
    \mathcal{D}[\Fk]\rho_t &= \Fk\rho_t\Fk^{\dag} - \frac{1}{2}\Fk^{\dag}\Fk\rho_t - \frac{1}{2}\rho_t\Fk^{\dag}\Fk\\
    \mathcal{M}[\Fz]\rho_t &= \Fz\rho_t + \rho_t\Fz - 2\Tr{\Fz\rho_t}\rho_t\\
    \{\Fz,\rho_t\} &= \Fz\rho_t + \rho_t\Fz
\end{align}

One other form which is useful when the quantum state remains pure is the stochastic Schr\"{odinger} equation (SSE).  As developed in Appendix \ref{magnetometry:app:SSE}, the SSE for the double-pass quantum filter is
\begin{eqnarray} \label{magnetometry:eq:double_pass_SSE}
	d\ket{\psi}_t & = & \left(
		i\gamma B\Fy -\frac{M}{2}(\Fz-\expect{\Fz}_t)^2 \right. \\
	&& \left. + i\sqrt{KM}\Fy(\Fz + \expect{\Fz}_t)
	 					- \frac{K}{2}\Fy^2\right)\ket{\psi}_t dt \nonumber \\
	&&
				  + \left( \sqrt{M}(\Fz - \expect{\Fz}_t) + i\sqrt{K}\Fy\right)\ket{\psi}_t dW_t . \nonumber
\end{eqnarray}


\section{The Quantum Cram\'{e}r-Rao Inequality }
In order to characterize the performance of the magnetometer, we may consider quantum information theoretic bounds on the units-corrected mean-square deviation of the magnetic field estimate $\tilde{B}$ of the true magnetic field $B$ \cite{Braunstein:1994a,Braunstein:1995a},  given in Eq.\ (\ref{magnetometry:Equation::CRError}).  The quantum Cram\'{e}r-Rao bound  \cite{Helstrom:1976a,Holevo:1982a,Braunstein:1994a,Braunstein:1995a} states that the deviation of \emph{any} estimator is constrained by  
\begin{equation}
    \delta \tilde{B} \geq \frac{1}{\sqrt{\mathcal{I}_B(t)}}, \quad \mathcal{I}_B(t) = \Tr{\rho_B(t)\mathfrak{L}_B^2(t)} ,
\end{equation}
where the quantum Fisher information $\mathcal{I}_B(t)$ is the expectation of the square of the symmetric logarithmic derivative operator, defined implicitly as
\begin{equation}
    \frac{\partial \rho_B(t)}{\partial B} = \frac{1}{2}(\mathfrak{L}_B(t)\rho_B(t) + \rho_B(t)\mathfrak{L}_B(t)) .
\end{equation}
For pure states, $\rho_B^2 = \rho_B$, so that
\begin{equation}
    \mathfrak{L}_B(t) = 2\frac{\partial \rho_B(t)}{\partial B}
\end{equation}
which indicates
\begin{equation} \label{magnetometry:eq:cramer_rao}
    \delta \tilde{B} \geq \frac{1}{2} \left\langle\left(\frac{\partial \rho_B(t)}{\partial B}\right)^2\right\rangle^{-\frac{1}{2}} .
\end{equation}
In this form, we see that the lower bound is related to the sensitivity of the evolved state to the magnetic field parameter.  That is, any estimator's performance is constrained by how well the dynamics transform differences in the value of $B$ into differences in Hilbert space.  

As discussed by Boixo et.~al in \cite{Boixo:2007a}, for Hamiltonian evolution, the quantum Cram\'{e}r-Rao bound may be expressed in terms of the operator semi-norm, which is the difference between the largest and smallest (non-degenerate) eigenvalues of the probe Hamiltonian.  For the magnetic dipole Hamiltonian in Eq.\ (\ref{magnetometry:Equation::HLarmor}), this bound is simply the Heisenberg limit in Eq.\ (\ref{magnetometry:Equation::DeltaBHL}).  More generally, the authors show that a probe Hamiltonian which involves $k$-body operators gives rise to an uncertainty scaling of $1/tF^k$.  They further argue that no ancillary quantum systems or auxiliary Hamiltonians contribute to this bound; it is determined solely by the Hamiltonian that directly involves the parameter of interest.

Such analysis suggests the double-pass quantum system, whose only direct magnetic field coupling is in the magnetic dipole Hamiltonian, should show no more sensitivity than a single pass system.  There are several reasons why one might believe there is more to the story.  Firstly, the unitary evolution of the joint atom-field system in Eq.~(\ref{magnetometry:eq:double_pass_prop}) involves an auxiliary system of infinite dimension.  As such, it is not clear that the arguments leading to the operator semi-norm are valid, in particular due to the fact that the white noise terms $d\St,d\St^{\dag}$ are singular.  Additionally, the double-pass limit is a Markov one, in which the interaction the light field mediates between atoms is essentially instantaneous relative to other time-scales in the problem.  The effective interaction is therefore fundamentally different than one in which measurements of a finite dimensional ancilla system are used to modulate the evolution of the probe atoms.  Thus the conditioned system, given in terms of the quantum filter of Eq.~(\ref{magnetometry:eq:double_pass_SSE}), does not correspond to unitary dynamics.  Indeed, looking at Eq.\ (\ref{magnetometry:eq:double_pass_SSE}), we see that the local generator of dynamics is \emph{path-dependent}, given in terms of the expectation of $\Fz$.  Therefore, as the magnetic field directly impacts the state through the magnetic dipole term, it also non-trivially modulates future dynamics through a state-dependent generator.

\subsection{Numerical Analysis of the Quantum Fisher Information}
Unfortunately, it is not clear how to fold the quantum stochastic or quantum filtered dynamics analytically into the semi-norm bound considered in \cite{Boixo:2007a}.  Nonetheless, the quantum Cram\'{e}r-Rao bound in Eq.\ (\ref{magnetometry:eq:cramer_rao}) is excellent fodder for computer simulation.  By numerically integrating the stochastic Schr\"{o}dinger form of the quantum filter in Eq.\ (\ref{magnetometry:eq:double_pass_SSE}), a finite difference approximation of $\partial \rho_B(t)/\partial B$ may be evaluated for different collective spin sizes $F$.  That is, for a given choice of $F$, a finite difference approximation of the quantum Fisher information near $B = 0$ can be constructed by co-evolving three trajectories, $\rho_0(\tau)$, $\rho_{\delta B}(\tau)$, and $\rho_{-\delta B}(\tau)$ (seeded by the same noise realization),  and calculating
\begin{equation}
    \mathcal{I}_B | Z_{(0,t)} \left< \left(\frac{\partial \rho_B(\tau)}{\partial B}\right)^2 \right>
        \approx \Tr{\left(\frac{\rho_{\delta B}(\tau) 
                - \rho_{-\delta B}(\tau)}{2\delta B}\right)^2 \rho_{0}(\tau)}.
                \label{magnetometry:eqn:cr_fd}
\end{equation}   
As is suggestively written, the Fisher information calculated on the particular measurement realization that generated $\hat{\rho}_t$ and must be averaged over many realizations to obtain the unconditional quantum Fisher information $\mathcal{I}_t = \mathbbm{E}[ \mathcal{I}_t | Z_t ]$.   The lower bound $\delta\tilde{B}_\tau$ can then be obtained from Eq.\ (\ref{magnetometry:eq:cramer_rao}) with statistical errorbars given by $\sigma (\delta\tilde{B}_\tau) = \mathcal{I}_\tau^{-3/2} \sigma[ \mathcal{I}_\tau | Z_t ] / 2$.

\subsubsection{Simulation Results}
\begin{figure}[t]
    \centering
        \includegraphics[scale=1.2]{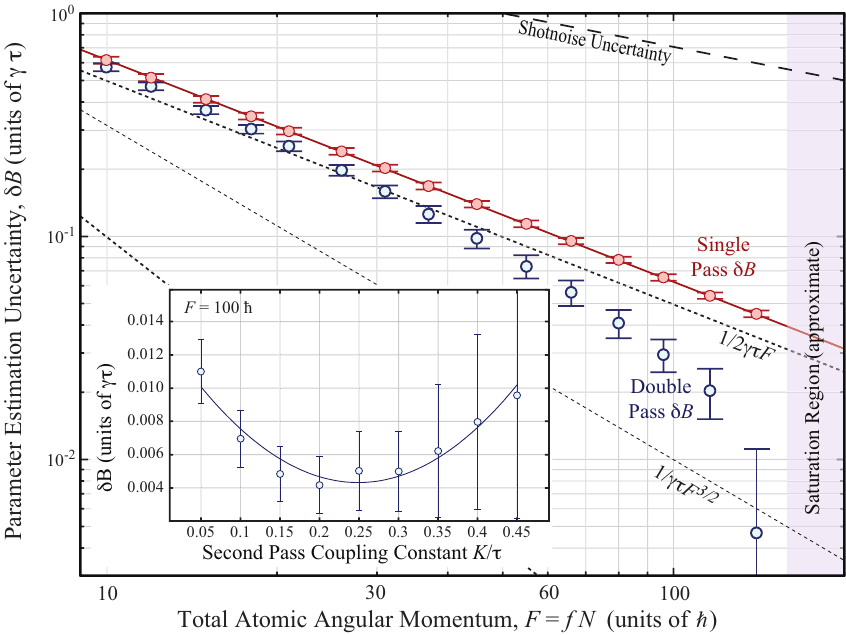}
    \caption[Numerical Fisher Information scaling beyond $1/N$]{ Comparison of the estimation uncertainty $\Delta \tilde{B}$ as a function of the total atomic angular momentum (proportional to $N$) for double-pass and single-pass atomic magnetometers determined by calculating the quantum Fisher Information with $M=1$ (in units of $1/\tau$) and $K=1\times10^{-4}$ chosen to be optimal for $F=140 \hbar$. }
    \label{magnetometry:fig:cramer_rao}
\end{figure}

We calculated $\mathcal{I}_t(B)$ over a range of spin quantum quantum numbers $F=Nf$ spanning more than an order of magnitude to determine a lower bound on the magnetic field estimation uncertainty using  Eq.\ (\ref{magnetometry:eq:cramer_rao}).   The results indicate that the Fisher information depends heavily upon the choice of the coupling strengths $M$ and $K$, which is not surprising since the measurement strength $M$ determines how much spin-squeezing is generated and $K$ determines the strength of the effective nonlinearity.   Like any measurement procedure that involves amplification, both the signal and noise are affected, and optimal performance requires choosing the correct gain.

If one choses $M = 1 / \tau$, to obtain an optimal spin-squeezed state at the final time $t=\tau$ \cite{Geremia:2003a}, then it is straightforward to optimize over the nonlinearity $K$, as illustrated in the inset of Fig.\ (\ref{magnetometry:fig:cramer_rao}) for $F=100 \hbar$.  We found that the optimal value $K^*$ depends upon the number of atoms, and that the Fisher information saturates and then decreases if the number of atoms exceeds the value of $N=F_{\mathrm{sat}}/f$ used to compute $K^*(F)$.    Figure \ref{magnetometry:fig:cramer_rao}) shows the behavior of $\delta \tilde{B}_\tau$ as a function of $F$ up to the saturation point $F < F_{\mathrm{sat}}\sim 150$.  The largest value of $F$ prior to saturation yields a $\delta\tilde{B}_\tau$ that is slightly below the bound $1/\tau \gamma F^{3/2}$ that would be obtained for a two-body coupling Hamiltonian and an initially separable state $\hat{\rho}_0$ \cite{Boixo:2008a}.  Despite this saturation of the quantum Fisher information for $F > F_{\mathrm{sat}}$ at a given choice of $K$, one can choose the value of $K^*$ such that saturation occurs only for $F_{\mathrm{sat}} > F_{\mathrm{max}}$ over any specified finite range $F \le F_{\mathrm{max}}$.  An improvement beyond $1/N$ scaling can be achieved over any physically realistic number of particles.

The saturation effect can be understood in light of the quantum stochastic model of the previous subsection.  In considering the general stochastic propagator of Eq.\ (\ref{magnetometry:eq:double_pass_prop}), we identified the coupling operator $\hat{L} = \sqrt{M}\Fz + i\sqrt{K}\Fy$, which if $M = K$, is essentially the angular momentum lowering operator along $x$---$\Fmx$.  If $M,K \gg \gamma B$, a continuous measurement of this operator very quickly moves the $+x$-polarized initial state onto the $-x$-polarized state, which is an attractive fixed point of $\Fmx$.  Once this state is reached, the dynamics become relatively insensitive to the magnetic field value and result in a poor uncertainty lower bound.  On the other hand, if $M,K$ are much smaller than $\gamma B$, the positive feedback from the $i\sqrt{K}\Fy$ term is washed out by Larmor precession.  Given that we are interested in detection limits, i.e. $B \approx 0$, we do not focus on the regime where Larmor precession dominates.

A second approach to avoiding saturation of the Fisher information for large $F$ is to scale the parameters $M$ and $K$ as a decreasing function of $F$.   For practical considerations, it is also desirable to set $M=K$ as these parameters are determined by the atom-field coupling strengths on the first and second pass interactions, thus quantities such as the laser intensity and detuning not easily changed between the two passes.  We have found that scaling $M$ and $K$ according to the functional form
\begin{equation} \label{magnetometry:Equation::MKScaling}
	M = K = c / \tau F^\alpha, 
\end{equation}
where $c$ and $\alpha$ are constants, leads to a power-law scaling for the uncertainty bound $\delta \tilde{B}_\tau \sim 1/N^k$.  The inset plot in Fig.\ (\ref{magnetometry:Figure::WeakFScaling}) shows the slope of a linear fit of $\log_{10}\delta\tilde{B}_\tau$ to $\log_{10}F$ (i.e., a slope of $k=-1$ corresponds to the Heisenberg uncertainty scaling) as a function of $\alpha$ (with $c$ chosen so as to avoid the saturation behavior described above).  As demonstrated by the data points in Fig.\ (\ref{magnetometry:Figure::WeakFScaling}), it is possible to achieve $1/N$ scaling (to within a small prefactor offset) with $\alpha = 0.77$ and $c=0.589$.   The distribution of conditional uncertainties $\delta\tilde{B}_\tau | Z_t$ for the statistical ensemble of measurement realizations [dots in Fig.\ (\ref{magnetometry:Figure::WeakFScaling})] is depicted for the different values of $F$.  The mean and uncertainty of this distribution are denoted by the circles and errorbars, and a fit to this data gives $\delta\tilde{B}_\tau \sim F^{-0.97}$.
\begin{figure}[tb]
\begin{center}
\includegraphics{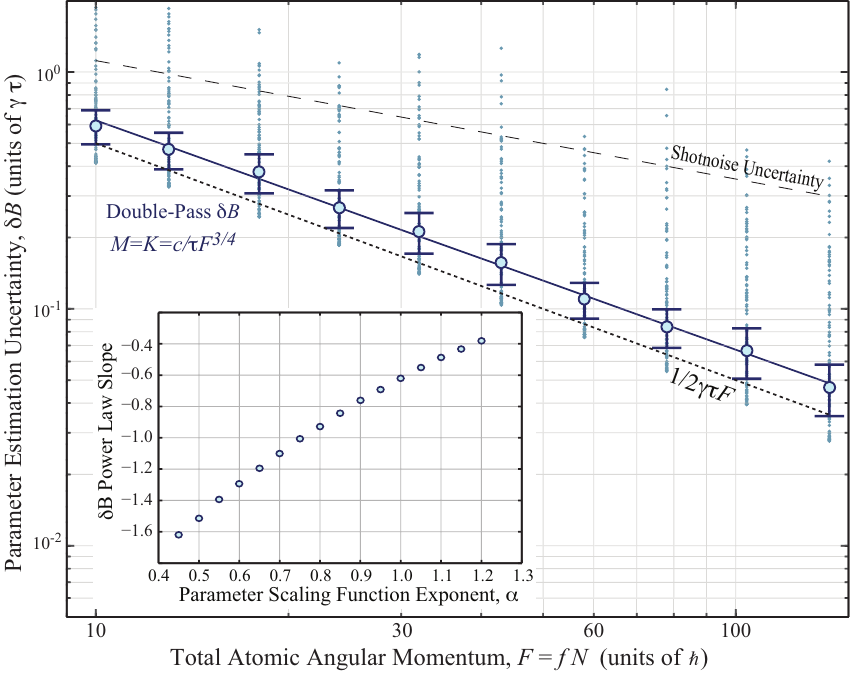}
\end{center}
\vspace{-5mm}
\caption[Numerical Fisher Information scaling as $1/N$ with minimal squeezing]{Evidence that the field estimation uncertainty $\Delta \tilde{B}$ can be made to scale as a power law $\delta\tilde{B}_\tau \sim 1/N^k$ by decreasing the parameters $M$ and $K$ as a function of the total angular momentum $F$ according to Eq.\ (\ref{magnetometry:Equation::MKScaling}) with $\alpha \approx 3/4$.  The power-law fit (solid line) has a slope of $-0.97$. \label{magnetometry:Figure::WeakFScaling}} 
\end{figure}

In short, Figures \ref{magnetometry:Figure::WeakFScaling} and \ref{magnetometry:fig:cramer_rao} suggest that there are some parameter values, appropriate for some range of $F$, which show an estimator uncertainty lower bound scaling at and or the Heisenberg limit.  In practice, it seems that one would need to fine tune the coupling strengths $M$ and $K$ in order to be in a regime with such scaling.  It may be that such coupling strengths are inaccessible in an experimental setting.  While this is an important consideration, there is a more pressing theoretical question---does a practical estimator exist which saturates the quantum Cram\'{e}r-Rao bound?  I summarize our search for such an estimator in the following section.

\section{Magnetic Field Estimators} 
\label{magnetometry:sec:magnetic_field_estimators}

While studying the properties of lower bounds on estimator performance is important for developing an understanding of the capabilities of a given parameter coupling scheme, any actual procedure for implementing quantum parameter estimation must also develop a constructive procedure for doing the estimation.

In this section, we consider two methods for estimating the strength of the magnetic field $B$ based on the stochastic measurement record $Z_{(0,t)}$.  In both cases, we extend the quantum filters developed in the previous section to account for our uncertainty in $B$, which in turn results in new filters capable of estimating $B$.  

\subsection{Quantum Particle Filter} 
\label{magnetometry:sec:quantum_particle_filter}
The technique of quantum particle filtering, as developed in Chapter \ref{chapter:quantum_parameter_estimation} and reviewed below, leverages the fact that the quantum filtering equations already provide a means for estimating the \emph{state} of a quantum system conditioned on the measurement record.  If we place the magnetic field parameter on the same footing as the quantum state, we can simply apply the quantum filtering results we already derived.  Indeed, by embedding the magnetic field parameter as a diagonal operator in an auxiliary Hilbert space, the quantum filter \emph{still} gives the best estimate of \emph{both} system and auxiliary space operators.  We accomplish this by promoting the magnetic field parameter to the diagonal operator
\begin{equation}
    B \mapsto \hat{B} = \int B \ketbra{B}{B} dB \in \mathcal{H}_B, 
\end{equation}
where $\mathcal{H}_B$ is the new auxiliary Hilbert space with basis states satisfying $\hat{B}\ket{B} = B\ket{B}$ and $\braket{B}{B'} = \delta(B-B')$.  All atomic operators and states, which are associated with the atomic Hilbert space $\mathcal{H}_A$, act as the identity on $\mathcal{H}_B$, e.g. $\Fz \mapsto I \otimes \Fz$.  The only operator which joins the two spaces is the magnetic dipole Hamiltonian, which is now given by
\begin{equation}
    \hat{H} \mapsto -\hbar\gamma \hat{B}\otimes\Fy
\end{equation}
The derivation of the quantum filtering equation is essentially unchanged, provided one replaces atomic operators with the appropriate forms for the joint space $\mathcal{H}_B \otimes \mathcal{H}_A$.  

For parameter estimation, the adjoint form is the more convenient version of the quantum filter.  Since $\hat{B}$ corresponds to a classical parameter, we require the marginal density matrix $(\rho_B)_t = \operatorname{Tr}_{\mathcal{H_A}}[\rho_t]$ be diagonal in the basis of $\hat{B}$, so that it corresponds to a classical probability distribution.  This suggests we write the total conditional density matrix in the ensemble form
\begin{equation} \label{magnetometry:eq:ensemble_continuous_form}
    \rho_t^E = \int dB p_t(B) \ketbra{B}{B} \otimes \rho_t^{(B)} 
\end{equation}  
where $p_t(B) = P(B | Z_{(0,t)})$ is precisely the conditional probability density for $B$.

While one could attempt to update this state via the quantum filter, doing so is entirely impractical, as one can not represent an arbitrary distribution for $p_t(B)$ with finite resources.  Instead, one approximates the distribution with a weighted set of point masses or \emph{particles}:
\begin{equation} \label{magnetometry:eq:approximate_density}
    p_t(B) \approx \sum_{i=1}^N p_t^{(i)} \delta(B - B_i) .
\end{equation}
The approximation can be made arbitrarily accurate in the limit of $N \rightarrow \infty$.  Plugging this distribution into the ensemble density matrix form of Eq. \ref{magnetometry:eq:ensemble_continuous_form} gives
\begin{equation} \label{magnetometry:eq:discrete_ensemble_rho}
    \rho_t^E = \sum_{i = 1}^N p_t^{(i)} \ketbra{B_i}{B_i} \otimes \rho_t^{(B_i)} 
\end{equation}
Each of the $N$ triples $\{p_t^{(i)}, B_i, \rho_t^{(B_i)} \}$ is called a \emph{quantum particle}.  Intuitively, the particle filter works by discretizing the parameter space and then evolving an ensemble of quantum systems according to the exact dynamics for each parameter value.  The filtering equations below perform Bayesian inference on this ensemble, updating the relative probabilities of particular parameter values given the measurement record.

The quantum particle filter for the double-pass system with \emph{unknown} $B$ is found by plugging the discretized ensemble $\rho_t^E$ into the extended double-pass filter.  After a little manipulation, one finds
\begin{subequations}\label{magnetometry:eq:ensemble_filtering_eqs}
	\begin{eqnarray} 
		    dp_t^{(i)} &= & 2\sqrt{M}(\Tr{\Fz\rho_t^{(B_i)}} \nonumber \\
		    &&
		        - \sum_{j=1}^N p_t^{(j)}\Tr{\Fz\rho_t^{(B_j)}})p_t^{(i)}dW_t 
		          \\
		    d\rho_t^{(B_i)} &= & i\gamma B_i[\Fy,\rho_t^{(B_i)}]dt
		     + \sqrt{KM}[\Fy,\{\Fz,\rho_t^{(B_i)}\}]dt \nonumber\\
		    && + M \mathcal{D}[\Fz]\rho_t^{(B_i)} dt 
		             + K\mathcal{D}[\Fz]\rho_t^{(B_i)} dt \\
		             && + \left(\sqrt{M}\mathcal{M}[\Fz]\rho_t^{(B_i)} 
		             + i\sqrt{K}[\Fy,\rho_t^{(B_i)}]\right)dW_t \nonumber \\
		    dW_t &=& dZ_t 
		        - 2\sqrt{M}\sum_{i=1}^Np_t^{(i)}  \Tr{\Fz\rho_t^{(B_i)}}dt
	\end{eqnarray}
\end{subequations}
where the prior distribution $p_0(B)$ is used to determine the initial parameter weights, $p_0^{(i)}$, and values, $B_i$.  All initial quantum states, $\rho_0^{(B_i)}$, are taken to be the spin coherent state pointing along $+x$.

An estimate of the magnetic field strength is then constructed from the approximate density in Eq.\ (\ref{magnetometry:eq:approximate_density}), either taking the most probable $B$ value, corresponding to the largest $p_t^{(i)}$ or calculating the expected value of $\hat{B}$
\begin{equation}
    \tilde{B}_{pf} = \expect{\hat{B}} = \sum_{i=1}^{N} p_t^{(i)} B_i .
\end{equation}
For the latter estimate, the uncertainty is given by
\begin{eqnarray}
    \Delta \tilde{B}_{pf} & = & \left(\expect{\hat{B}^2} - \tilde{B}_{pf}^2\right)^{1/2}  \nonumber \\
    & = & \left(\sum_{i=1}^{N} p_t^{(i)} B_i^2 - \tilde{B}_{pf}^2\right)^{1/2}  .
\end{eqnarray}

\subsection{Quantum Kalman Filter} 
\label{magnetometry:sec:quantum_kalman_filter}
Rather than constructing a magnetic field estimator from the exact quantum dynamics, one could instead first focus on deriving an approximate filter for the atomic state, which is then a starting point for the magnetic field estimator.  Indeed, previous work in precision magnetometry via continuous measurement \cite{Geremia:2003a} has taken this route by constructing a \emph{quantum Kalman filter} to describe the atomic dynamics.  Such a filter leverages the fact that for an initially spin polarized state of many atoms (say along $+x$), a first order Holstein-Primakoff expansion \cite{Holstein:1940a} linearizes the small-angle dynamics in terms of a Gaussian state characterized by the means $\pi_t[\Fz], \pi_t[\Fy]$ and the covariances $\Delta\Fz^2,\Delta\Fy^2,\Delta\Fz\Fy$.  Just as we saw in developing the Kalman-Bucy filter of Theorem \ref{thm:kalman_bucy}, the conditional state for a linear system with Gaussian noise is itself described by a Gaussian distribution and therefore only requires filtering equations for the means and a deterministic equation for the variances \cite{Kalman:1960a,Kalman:1961a}.  For the case of magnetometry, the number of these parameters is independent of the number of atoms in the atomic ensemble.  We will also find that within this approximation, we can again embed $B$ as an unknown state parameter and find a corresponding Kalman filter appropriate for estimating its value.  

However, applying the small-angle and Gaussian approximations in the quantum case is usually done in an ad-hoc fashion, especially in light of the recent introduction of \emph{projection filtering} into the quantum filtering setting \cite{vanHandel:2005b,Mabuchi:2008a}.  In this framework, one selects a convenient manifold of states whose parameterization reflects the approximations to enforce.  At each point in this manifold, the exact differential dynamics induced on these states is orthogonally projected back into the chosen family.  For our purposes, this means projecting the filter in Eq.\ (\ref{magnetometry:eq:double_pass_SSE}) onto a manifold of Gaussian spin states. Although the resulting equations are not substantively different than those derived less carefully, we believe the potential application of projection filtering in deriving other approximate filters and master equations warrants the following exposition.  

\subsubsection{Projection Filter Overview}
Abstractly, projection filtering proceeds as follows.  We assume we already have a dynamical equation, such as Eq.\ (\ref{magnetometry:eq:double_pass_SSE}), for a given manifold of states, such as pure states.  For convenience, let these dynamics be represented as
\begin{equation}
    d\ket{\psi}_t = \mathcal{N}[\ket{\psi}_t] ,
\end{equation}
where $\mathcal{N}$ is the generator of dynamics.  Now select the desired family of ``approximating'' states which are a submanifold of the exact states.  We assume this family is parameterized by a finite number of parameters $x_1,x_2,\ldots,x_n$ and we denote states in this family as $\ket{x_1,x_2,\ldots,x_n}$.  At every point in this manifold, the tangent space is spanned by the tangent vectors
\begin{equation}
    v_i = \partialD{\ket{x_1,x_2,\ldots, x_n}}{x_i} .
\end{equation}
Loosely speaking, these tangent vectors tell us how differential changes in the parameters move us through the corresponding submanifold of $\ket{x_1,x_2,\ldots,x_n}$ states in the space of pure states.  This is particularly useful, as the action of the generator $\mathcal{N}[\ket{x_1,x_2,\ldots,x_n}]$ does not necessarily result in a state within the family.  But by projecting the dynamics onto the tangent space, we can find a filter, called the projection filter, which constrains evolution within the chosen submanifold.  Explicitly, this projection is written as
\begin{eqnarray} 
    T & = & \Pi_{\text{span}\{v_i\}}[d\ket{x_1,x_2,\ldots,x_n}] \nonumber \\
        & = & \sum_i \frac{\langle v_i,
            \mathcal{N}[\ket{x_1,x_2,\ldots,x_n}] \rangle}
                {\langle v_i, v_i \rangle}v_i, \label{magnetometry:eq:general_projection}
\end{eqnarray}
where in this pure state formulation, the inner product is the standard Hilbert space inner product.  

\subsubsection{Gaussian State Family and Tangent Vectors}
For our double-pass magnetometer, we begin by introducing the two-parameter family of Gaussian states
\begin{align} \label{magnetometry:eq:gaussian_state_parameterization}
        \ket{\theta_t,\xi_t} &= e^{-i\theta_t\Fy}e^{-2i\xi_t(\Fz\Fy + \Fy\Fz)}\ket{F,+F_x} \nonumber\\
        &= \Yt\Sqt\ket{F,+F_x}, 
\end{align}
where $\ket{F,+F_x}$ is the spin coherent state pointing along $+x$, $\Sqt$ is a spin squeezing operator \cite{Kitagawa:1993a} with squeezing parameter $\xi_t$ and $\Yt$ is a rotation about the $y$-axis by angle $\theta_t$.  Intuitively, the squeezing along $z$ generated by $\Sqt$ corresponds to the squeezing induced by measuring $\Fz$.  The rotation via $\Yt$ then accounts for both the random evolution due to the measurement as well as any rotation induced by the magnetic field.  The tangent vectors for these states are
\begin{eqnarray}
    v_{\theta_t} & = & \partialD{\ket{\theta_t,\xi_t}}{\theta_t} \nonumber \\
    & = &   -i\Fy \Yt\Sqt\ket{F,+F_x}\\
    v_{\xi_t} & = & \partialD{\ket{\theta_t,\xi_t}}{\xi_t} \nonumber \\
    & = &
        \Yt\Sqt
        (-2i(\Fz\Fy + \Fy\Fz))\ket{F,+F_x} .
\end{eqnarray}

In calculating the normalization of these tangent vectors, we encounter terms such as
\begin{equation}
    \langle v_{\theta_t}, v_{\theta_t} \rangle
     = \bra{F,+F_x}\Sqt^{\dag}\Fy^2\Sqt\ket{F,+F_x}.
\end{equation}
More generally, almost all inner-products needed for the projection filter will be of the form
\begin{equation}
    \bra{F,+F_x} \Sqt^{\dag}g(\Fx,\Fy,\Fz)\Yt^{\dag} f(\Fx,\Fy,\Fz)
            \Yt\Sqt\ket{F,+F_x} . \nonumber
\end{equation}
Here, $g$ and $f$ are polynomial functions of their arguments.  Since $\Yt$ is a rotation, we can exactly evaluate
\begin{equation}
    \Yt^{\dag}f(\Fx,\Fy,\Fz) \Yt 
       = f( \Yt^{\dag}\Fx \Yt, 
           \Yt^{\dag}\Fy\Yt, \Yt^{\dag}\Fz\Yt), 
\end{equation}
where
\begin{align}
    \Yt^{\dag}\Fx\Yt &=  \Fx(\theta_t) = \Fx \cos{\theta_t} + \Fz\sin{\theta_t} \\
    \Yt^{\dag}\Fy\Yt &=  \Fy\\
    \Yt^{\dag}\Fz\Yt &=  \Fz(\theta_t) = \Fz \cos{\theta_t} - \Fx\sin{\theta_t} . 
\end{align}
This leaves us with expectations of the form
\begin{equation} \label{magnetometry:eq:hp_expectations}
    \bra{F,+F_x}
    \Sqt^{\dag}g(\Fx,\Fy,\Fz)
            f(\Fx(\theta_t),\Fy,\Fz(\theta_t))
            \Sqt \ket{F,+F_x}
\end{equation}
where $g\times f$ will just be linear combinations of powers and products of $\Fx,\Fy,\Fz$.  Unfortunately, we cannot evaluate this expectation for arbitrary $\xi_t$.  However, for small $\xi_t$, the state which we are taking expectations with respect to is the ``squeezed vacuum'' in our preferred basis, e.g. it is the state $\ket{F,+F_x}$ pointing in the same direction, but with squeezed uncertainty in $\Fz$ and increased uncertainty in $\Fy$.  

For large $F$, angular momentum expectations of such a state are extremely well described by the Holstein-Primakoff approximation to lowest order \cite{Holstein:1940a}
\begin{equation} \label{magnetometry:eq:HP_firstorder}
	\begin{split} 
		\Fpx &\approx \sqrt{2F}a\\
		\Fmx &\approx \sqrt{2F}a^{\dag}\\
		\Fx &\approx F ,
	\end{split}
\end{equation}
where $\hat{F}_{\pm,x} = \Fy \pm i\Fz$, and $a,a^{\dag}$ are bosonic creation and annihilation operators.  We then write our state as $\ket{F,+F_x} = \ket{0}$, which is the vacuum in the Holstein-Primakoff representation.  Under this approximation, we can use the relations
\begin{align}
    \Sqt^{\dag}\Fx\Sqt &= F\\
    \Sqt^{\dag}\Fy\Sqt &= \frac{\sqrt{2F}}{2}e^{4F\xi_t}(a + a^{\dag})\\
    \Sqt^{\dag}\Fz\Sqt &= -i\frac{\sqrt{2F}}{2}e^{-4F\xi_t}(a - a^{\dag})   
\end{align}
to evaluate the expectation in Eq.\ (\ref{magnetometry:eq:hp_expectations}).  In light of this approximation, the tangent vector overlaps are readily shown to be
\begin{align}
     \expect{v_{\theta_t},v_{\theta_t}} &= \frac{Fe^{8F\xi_t}}{2}\\
     \expect{v_{\xi_t},v_{\xi_t}} &= 8F^2\\
     \expect{v_{\xi_t},v_{\theta_t}} &= 0 ,
\end{align}
where the last result indicates the tangent vectors are orthogonal as desired.

\subsubsection{Orthogonal Projection of Double-pass Filter}
Before performing orthogonal projection of the dynamics onto the tangent space, we must first convert the filtering equation from It\^{o} to Stratonovich form.  As is discussed in Ref.\ \cite{vanHandel:2005a}, the It\^{o} chain rule is incompatible with the differential geometry picture of projecting onto the tangent space.  Fortunately, Stratonovich stochastic integrals follow the standard chain rule and are thus amenable to projection filtering methods.  Following the derivation in Appendix \ref{magnetometry:sec:app:ito_to_stratonovich}, we find that the Stratonovich SSE is given by
\begin{equation} \label{magnetometry:eq:double_pass_SSE_stratonovich}
	\begin{split} 
	    d\ket{\psi}_t &= \left[ -i\gamma B\Fy 
	        -M\left[ (\Fz - \expect{\Fz}_t)^2 - \expect{\Delta\Fz^2}_t\right]
		        -\frac{\sqrt{KM}}{2}\Fx \right.\\
		     & \left.
		        +2i\sqrt{KM}\expect{\Fz}_t\Fy
		        + i\sqrt{KM} \expect{\Fz\Fy}_t
		        \right]\ket{\psi}_t dt\\
		 &+ \left( \sqrt{M}(\Fz - \expect{\Fz}_t) + i\sqrt{K}\Fy\right)\ket{\psi}_t \circ dW_t ,
	\end{split}  
\end{equation}
where $\expect{\Delta\Fz^2}_t = \expect{\Fz^2} - \expect{\Fz}^2$.

In order to find the projection filter, we compare the general projection formula in Eq.\ (\ref{magnetometry:eq:general_projection}) to the general dynamical equation for states in our chosen family, given by
\begin{equation} \label{magnetometry:eq:projlhs}
    d\ket{\xi_t,\theta_t} = v_{\xi_t}d\xi_t + v_{\theta_t}d{\theta_t} .
\end{equation}
Using the orthogonality of the tangent vectors, the general forms for $d\xi_t$ and  $d\theta_t$ are
\begin{align}
    d\theta_t &= \frac{2e^{-8F\xi_t}}{F} \expect{v_{\theta_t}, d\ket{\psi_t}[\xi_t,\theta_t]}\\
    d\xi_t &= \frac{1}{8F^2}\expect{v_{\xi_t}, d\ket{\psi_t}[\xi_t,\theta_t]} ,
\end{align}
where $d\ket{\psi_t}[\xi_t,\theta_t]$ is the evolution of $\ket{\xi_t,\theta_t}$ under the Stratonovich filter of Eq.\ (\ref{magnetometry:eq:double_pass_SSE_stratonovich}).  

As an example calculation using these methods, consider projecting the dynamics generated by the magnetic field term.  Its contribution $c_\theta$ to the $\theta_t$ dynamics is given by
\begin{eqnarray}
   c_\theta & = &\frac{2e^{-8F\xi_t}}{F}\langle v_{\theta_t}, -i\gamma B \Fy \ket{\theta_t,\xi_t}dt\rangle
   	\nonumber \\
    & =& \frac{2\gamma Be^{-8F\xi_t}}{F}
        \bra{0}\Sqt^{\dag}\Yt^{\dag}\Fy^2\Yt\Sqt\ket{0}dt\\
    & =& \gamma B \bra{0}(a+a^{\dag})^2\ket{0} dt \nonumber \\
    & =& \gamma B dt \nonumber . 
\end{eqnarray}
Similarly, the contribution to $\xi_t$ is
\begin{eqnarray}
  c_\xi & = &\frac{ 1 }{8F^2}  \langle v_{\xi_t}, -i\gamma B \Fy 
        \ket{\theta_t,\xi_t}dt\rangle \nonumber \\
    & = & \frac{\gamma B}{4F^2}\bra{0} \Sqt{^\dag}(\Fz\Fy + \Fy\Fz)\Yt^{\dag}
    \Fy\Yt\Sqt\ket{0}dt \nonumber \\
   &  \propto& \bra{0} a^3 + a^2a^{\dag} - {a^{\dag}}^2a - {a^{\dag}}^3\ket{0}dt\\
    & =& 0  \nonumber.
\end{eqnarray}
Chugging through the remaining terms in a similar fashion, we arrive at the full projection filter equations
\begin{eqnarray}
    d\theta_t & = & \gamma B dt  
        + \frac{\sqrt{KM}}{2}e^{-8F\xi_t}\sin{\theta_t}dt
        + 2F\sqrt{KM}\sin{\theta_t} dt \nonumber\\
   &&     -\left[\sqrt{M}e^{-8F\xi_t}\cos{\theta_t}
         +\sqrt{K}\right] \circ dW_t
\end{eqnarray}
and 
\begin{equation}
    d\xi_t = \frac{M}{4}e^{-8F\xi_t}\cos^2{\theta_t}dt .
\end{equation}
Converting back to It\^{o} form using Eq.\ (\ref{probability:eq:ito_to_stratonovich}), we have
\begin{subequations}\label{magnetometry:eq:projected_filter_ito}
	\begin{eqnarray} 
		    d\theta_t &= &\left[B \gamma 
		               - \frac{M}{4}e^{-16F\xi_t}\sin(2\theta_t)
		               + 2F\sqrt{KM}\sin{\theta_t}\right]dt \nonumber \\
		              && -\left[\sqrt{M}e^{-8F\xi_t}\cos{\theta_t} + \sqrt{K}\right]dW_t   \\
		    d\xi_t &= & \frac{M}{4}e^{-8F\xi_t}\cos^2{\theta_t}dt, 
	\end{eqnarray}
\end{subequations}
where the innovations are now in terms of the approximation of $\expect{\Fz}_t$ within the Gaussian family:
\begin{eqnarray}
	dW_t & = & dZ_t - 2\sqrt{M}\expect{\Fz}_tdt \nonumber \\
	& = & dZ_t + 2F\sqrt{M}\sin{\theta_t}dt .
\end{eqnarray}
\subsubsection{Small-angle Kalman Filter}
We see that the projected filter in Eq.\ (\ref{magnetometry:eq:projected_filter_ito}) is actually more general than the filters usually derived for the magnetometry problem, which do not distinguish the Gaussian and small-angle approximations.  That is, the family of states in Eq.\ (\ref{magnetometry:eq:gaussian_state_parameterization}) and the approximations considered in the above derivation only enforce the Gaussian state assumption through the Holstein-Primakoff approximation.  We can separately apply the small-angle approximation to recover an equation appropriate for the Kalman filter.  In this limit, the equation for $\xi_t$ completely decouples and has a closed form solution
\begin{equation}
    \xi_t = \frac{1}{8F}\ln\left[1+2FMt\right] .
\end{equation}
Taking the small-angle approximation for $\theta_t$ and plugging in the explicit form of $\xi_t$ gives
\begin{multline}
	   d\theta_t = \left[B\gamma + \left(2F\sqrt{KM} 
                - \frac{M}{2(1+2FMt)^2}\right)\theta_t\right]dt\\
                - \left[\frac{\sqrt{M}}{1+2FMt} + \sqrt{K}\right]dW_t, 
\end{multline}
which is linear in the remaining state parameter $\theta_t$.

While we could consider the Kalman filter for the quantum state alone, we can just as easily account for our uncertainty in $B$ at the same time.  That is, if we now embed $B$ as a state variable, setting $X_t = [ \theta_t\ B]^T$, the dynamics can be written in a linear form as
\begin{eqnarray}
    dX_t &=& A X_t dt + B dW_t \label{magnetometry:eq:linear_system}\\
    dZ_t &=& C X_t dt  + D dW_t \label{magnetometry:eq:linear_observations}\\
    A &=& \begin{pmatrix}
         2F\sqrt{KM} - \frac{M}{2(1+2FMt)^2} & \gamma\\
         0 & 0
    \end{pmatrix}\\
    B &=& \begin{pmatrix}
           -\frac{\sqrt{M}}{1+2FMt} - \sqrt{K}\\0
    \end{pmatrix}\\
    C &=& \begin{pmatrix}
        -2\sqrt{M}F & 0
    \end{pmatrix}\\
    D &=& 1.
\end{eqnarray}
Equations (\ref{magnetometry:eq:linear_system}) and (\ref{magnetometry:eq:linear_observations}) are precisely a classical linear system/observation pair, in which the same white noise process (the innovations) drives \emph{both} the system and observation processes.  The estimate $\tilde{X}_t = \mathbbm{E}[X_t | Z_{(0,t)}]$ admits a Kalman filter solution \cite{Liptser:1977}, given by
\begin{eqnarray}
    d\tilde{X}_t &= &A\tilde{X}_t dt + (B + VC^{\dag})d\tilde{W}_t\\
    \dot{V}    &= & AV +VA^{\dag} + BB^{\dag} - (B+VC^{\dag})(B+VC^{\dag})^{\dag} \nonumber
\end{eqnarray}
where $V$ is the covariance matrix
\begin{align}
    V &= \mathbbm{E}[(\tilde{X} - \mathbbm{E}[\tilde{X}])(\tilde{X} - \mathbbm{E}[\tilde{X}])^T]\\
      &= \begin{pmatrix}
        \Delta \tilde{\theta}_t^2 & \Delta\tilde{\theta}_t\tilde{B}_{kf}\\
        \Delta\tilde{\theta}_t\tilde{B}_{kf} & \Delta \tilde{B}_{kf}^2
      \end{pmatrix}
\end{align}
and
\begin{equation}
	d\tilde{W}_t = dZ_t + 2F\sqrt{M}\tilde{\theta}_t dt
\end{equation}
is the innovations constructed from the current $\theta_t$ estimate in the small-angle approximation.

Looking at the explicit system of equations for the variances, which unfortunately do not admit a straightforward analytic Riccati solution as discussed in Appendix \ref{appendix:riccati}, we have
\begin{eqnarray} \label{magnetometry:eq:kalmanvar}
	   \frac{d(\Delta \tilde{\theta}_t^2)}{dt} &= & - M\Delta\tilde{\theta}_t^2
	     \left(\frac{1+4 F+8 F^2 M t}{(1+2 F M t)^2} 
	        + 4 F^2 \Delta\tilde{\theta}_t^2\right)\nonumber\\
	     & &+2 \gamma  \Delta\tilde{\theta}_t\tilde{B}_{kf}\\
	    \frac{d(\Delta \tilde{B}_{kf}^2)}{dt}&= &
	            -4F^2M (\Delta\tilde{\theta}_t\tilde{B}_{kf})^2\\
	        \frac{d(\Delta\tilde{\theta}_t\tilde{B}_{kf})}{dt} &= &
	           \gamma\Delta \tilde{B}_{kf}^2 -\frac{M}{2 (1+2 F M t)^2} 
	            \nonumber\\
	           & &  \left(1+4 F+8 F^2 M t+ \right. \\
	           && \left.  8 F^2 (1+2 F M t)^2
	               \Delta \tilde{\theta}_t^2\right)
	                \Delta\tilde{\theta}_t\tilde{B}_{kf} \nonumber
\end{eqnarray}
which are completely \emph{independent} of the second-pass coupling strength $K$.  That is, within the small-angle and Gaussian approximations, the double-pass system has no improvement in sensitivity and gives rise to the same $F^{-1}$ uncertainty scaling found previously for single-pass systems \cite{Geremia:2003a}. Perhaps this is unsurprising, as we attempted to find a linear description of an essentially non-linear affect.  Indeed, the numeric simulations in the next section suggest the single-mode Gaussian approximation breaks down just as the double-pass filter begins to show improved sensitivity to the magnetic field parameter.  Finally, not that I have also derived a filtering equation which retains the next term in the Holstein-Primakoff expansion, but whose $K$ dependence nonetheless shows a negligible change relative to the lowest order expansion.

\section{Simulations} 
\label{magnetometry:sec:simulations}
Given the absence of an analytic improvement in the sensitivity of the quantum Kalman filter, we turn to numerical simulations of the quantum particle filter in order to gauge the potential of the double-pass system for magnetometry.  First recall how the filter would be used in an actual experiment.  Continuous measurements of the atomic cloud Larmor precessing under a particular, albeit unknown, magnetic field $B$ would give rise to the observations process $Z_{(0,t)}$.  This would then be fed into a classical computer to propagate the quantum particle filtering equations given in (\ref{magnetometry:eq:ensemble_filtering_eqs}).  The computer would then use the quantum particle set to provide the estimate $\tilde{B}_{pf}$ and uncertainty $\Delta \tilde{B}_{pf}^2$.  

In order to simulate such an experiment, we can generate the stochastic measurement record $Z_{(0,t)}$ using the quantum filter for the double-pass system given in Eq.\ (\ref{magnetometry:eq:double_pass_SSE}), evolved with a known magnetic field $B$.  Since the system is driven by the white noise process $dW_t$, the filtering equations may be integrated by the same integrator previously used to approximate the quantum Cram\'{e}r-Rao bound.  The measurements generated by these trajectories are equivalent to what the quantum particle filter would receive in an experiment, which means they can then be fed into the same particle filtering code to simulate an estimate of $B$.  In order to compare performance, we actually simulate two systems in parallel, one representing the double-pass system and the other, with $K=0$, representing a single-pass system.  Both utilize the same noise realizations on an individual trajectory.  

As is common when considering detection limits, we focus on the case of $B = 0$.  Although an unbiased estimator would assume no prior knowledge of the magnetic field value, such an approach is impractical for the particle filter, which would fail in approximating such large uncertainty with a finite number of particles.  As such, we take the initial distribution of $B$ values for the quantum particle set to be Gaussian
\begin{equation}
    p_0(B;\mu_B,\sigma_B) = \frac{1}{\sqrt{2\pi\sigma_B^2}}
    \exp(-\frac{(B-\mu_B)^2}{2\sigma_B^2})
\end{equation}
with mean $\mu_B = 0$ and variance $\sigma_B^2 = 10{\tau^{-1}}^2$, where we again set $\gamma = 1$ and again define all parameters in units of ${\tau^{-1}}$.  For a set of $N$ particles, the particle magnetic field values $\{B_i\}$ are drawn from the initial distribution, with weights $p_0^{(i)} = 1/N$.  The initial quantum state for all particles is set to the spin-coherent state along $+x$,i.e. $\ket{F,+F_x}$.  
\begin{figure}[tb]
    \centering
        \includegraphics[scale=0.8]{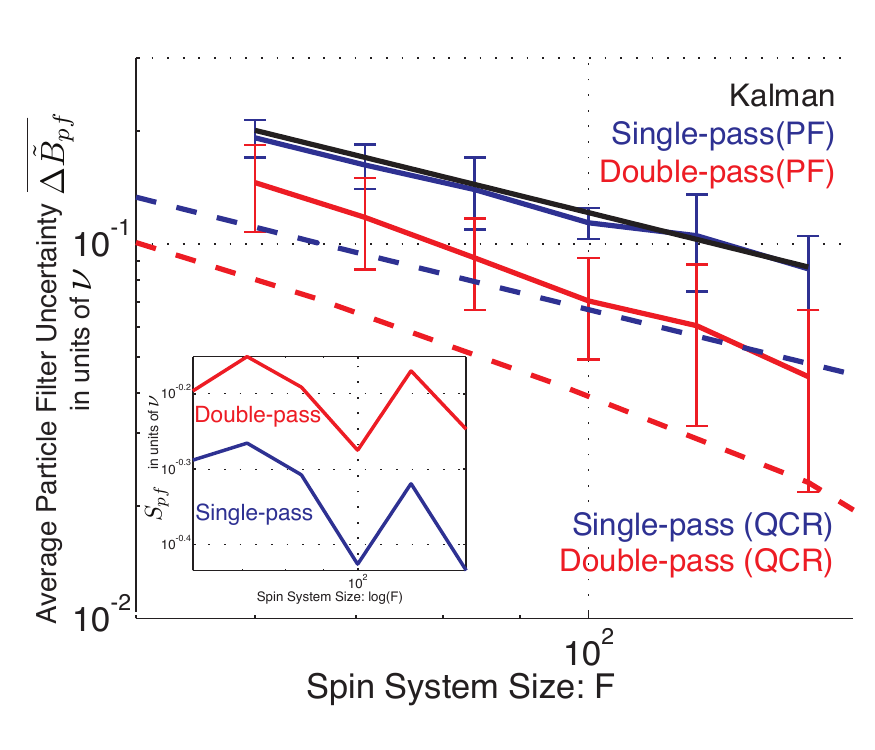}
    \caption[Particle filter performance for double-passed magnetometer]{ Estimator uncertainties as a function of $F$ averaged over 100 trajectories with $M = 10{\tau^{-1}}$, $K=0.0006{\tau^{-1}}$, $B=0$ and $\tau = 0.1$.  The initial $N = 1000$ particle set was drawn from a Gaussian distribution with mean zero and variance $10{\tau^{-1}}^2$, which was also the same initial uncertainty in the Kalman filter $\Delta \tilde{B}_{kf}$.  A power-law fit to the particle filter (PF) scalings shows a single-pass scaling of $F^{-0.93}$ and a double-pass scaling of $F^{-1.39}$.  Also shown are the quantum Cram\'{e}r-Rao (QCR) bounds previously simulated for Figure \ref{magnetometry:fig:cramer_rao}. The inset shows the sample estimator deviation $S_{pf}$ for the same simulations.}
    \label{magnetometry:fig:particleFilterSimulations}
\end{figure}

Figure \ref{magnetometry:fig:particleFilterSimulations}(a) shows, with solid lines, the average particle filter uncertainty $\overline{\Delta \tilde{B}_{pf}}$ as a function of $F$, averaged over 100 measurement realizations using $N = 1000$ particles in each run of the filter. The error bars represent the the deviation in the simulated uncertainties over the 100 runs. As was the case for the Fisher information calculations, we observe an improved sensitivity scaling for the double-pass system, albeit with increased fluctuations in the individual run uncertainty $\Delta \tilde{B}_{pf}$.  Power-law least-squares fits of the average give a single-pass uncertainty scaling $F^{-0.93}$ and a double-pass scaling of $F^{-1.39}$ which are consistent with the quantum Cram\'{e}r-Rao scalings in figure \ref{magnetometry:fig:cramer_rao}.  Also shown is the analytic single-pass uncertainty scaling given by numerical integration of the Kalman covariance matrix via Eq.\ (\ref{magnetometry:eq:kalmanvar}).  We see that this agrees very well with the single-pass particle filter scaling and since it is consistent with previous Kalman filters used for magnetometry \cite{Geremia:2003a}, suggests the double-pass scaling does indicate improved sensitivity.  

Of course, these statements are not without caveats.  The dashed lines in the plot correspond to the numerically computed quantum Cram\'{e}r-Rao bound, which is clearly below the estimates of all the filters.  This might mean that the continuous-measurement which gives rise to the numerical bound is simply not saturated by the corresponding estimator for that continuous-measurement.  Unfortunately, the above data took a week to generate on a quad-core workstation, indicating the technical challenges already present in simulating an $N=1000$ quantum particle set for the depicted range of $F$ limits the quality of the statistics.  As previously mentioned, the particle filter approximation is inherently biased, with the variance of estimates converging as $N^{-1}$.  
The inset in figure \ref{magnetometry:fig:particleFilterSimulations} shows the sample estimator deviation $S_{pf}$, which is the deviation in the actual performance error of the particle filter on each individual run, i.e. $\tilde{B}_{pf} -B $ where the true $B = 0$.  In other words, $\Delta \tilde{B}_{pf}$ is the uncertainty calculated for an individual trajectory from the particle distribution $\{p_t^{(i)}\}$, which is averaged over many trajectories to get $\overline{\Delta \tilde{B}_{pf}}$.  However, an individual run of the particle filter also gives an estimate $\tilde{B}_{pf}$ of the true magnetic field $B$.  Since we know that the measurements were generated from a system evolved with $B = 0$, we can calculate the deviation in the actual estimates $\tilde{B}_{pf}$.  If the particle filter were unbiased, we would expect this sample deviation to equal the average particle filter deviation, i.e. $S_{pf} = \overline{\Delta \tilde{B}_{pf}}$.  Instead, the sample deviation dwarfs the average estimator uncertainty, indicating that the particle filter bias dominates.  As discussed in \citep{Chase:2009d}, this bias seems to be due to the prior distribution considered for $B$.  Ideally, we would want this distribution to have infinite variance in order to be truly unbiased, but that is not practical for the particle filter simulations.  Instead, future work will need to consider alternate strategies for eliminating this bias in practice.


\begin{figure}[tb]
\begin{center}
        \includegraphics[scale=0.4]{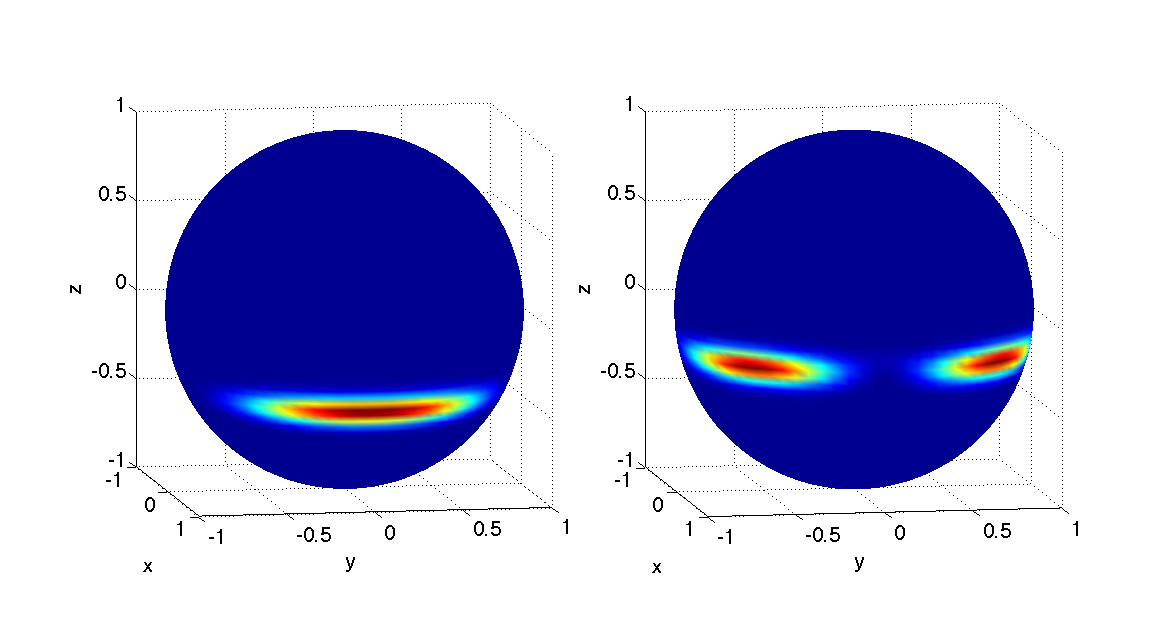}
\end{center}
\vspace{-10mm}
    \caption[Quasi-Probability distribution showing bimodal Gaussian distribution]{ Quasi-Probability distributions $Q(\theta,\phi,t)$ for two different trajectories at time $\tau = 0.1{\tau^{-1}}^{-1}$ for $M = 10{\tau^{-1}}, K = 0.0006{\tau^{-1}}, B = 0$ and $F=140$.}
    \label{magnetometry:fig:splitting_example}
\end{figure}

Numerical simulation also provides insight into how the Gaussian state assumption of the Kalman filter applies in the double-pass case.  Figure \ref{magnetometry:fig:splitting_example} shows quantum states evolved under two different noise realizations with $B=0,M=10{\tau^{-1}},K=0.0006{\tau^{-1}}$.  Both states were initially spin-polarized along $+x$ and evolved under the full double-pass SSE in Eq.\ (\ref{magnetometry:eq:double_pass_SSE}).  The Q-function shown is defined as
\begin{equation}
    Q(\theta,\phi,t) = \modsq{\braket{\theta,\phi}{\psi_t}}
\end{equation}
where the spin-coherent state $\ket{\theta,\phi}$ is the $+F$ eigenstate of the spin-operator 
\begin{equation}
	\Fx \sin\theta \cos\phi + \Fy \sin\theta \sin\phi + \Fz \cos\theta.
\end{equation}
Although one example shows a Gaussian squeezed spin state, the other shows a state with a bimodal Gaussian distribution.  Such a state is poorly described by the Gaussian family in Eq.\ (\ref{magnetometry:eq:gaussian_state_parameterization}) and helps explain why the Kalman filter fails to find a difference between the single and double-pass setup.  These plots suggests a family of bimodal Gaussian states might result in a useful projection filter.  I have been unable to find a parameterization of such a family which admits an analytic derivation of a projection filter.

\section{Summary} 
\label{magnetometry:sec:conclusion}
In this chapter, we have explored the use of double-pass continuous measurement for precision magnetometry.  The primary result involves numerical simulations of the quantum Cram\'{e}r-Rao bound which indicate that a double-pass system shows an improved magnetic field uncertainty scaling with atom number over a comparable single-pass system, albeit only for particular choices of coupling strengths relative to the collective spin size.  This is in contrast to quantum information theoretic bounds which suggest that the Heisenberg limit bounds the uncertainty scaling for both a single and double-pass system.  Clearly, future work aimed at reconciling these results is necessary, particularly deriving analytic quantum Cram\'{e}r-Rao bounds for unbounded ancilla systems.  We have also explored estimators intended to achieve the uncertainty scaling seen in numerical simulations.  Taking a brute force approach, quantum particle filters show evidence of the improved double-pass scaling, although the results suffer from limited statistics which can not be significantly improved with current computational power.  More practical quantum Kalman filters show no improved sensitivity, which are consistent with an observed breakdown in the Gaussian state assumption used to derive them.  However, the general projecting filtering technique used in the Kalman filter derivation provides an avenue for deriving more appropriate filters which might prove more tractable for practical magnetic field estimation.  More generally, similar effective nonlinear interactions may prove an important tool in precision measurement.

\begin{subappendices}
\section{Converting between It\^{o} and Stratonovich SDE} 
\label{magnetometry:sec:app:ito_to_stratonovich}
For the double-pass It\^{o} SSE in Eq.\ (\ref{magnetometry:eq:double_pass_SSE}), we begin the conversion by noting that states with entirely real amplitudes form an invariant set and therefore write $\ket{\psi}_t = \sum_{m=-F}^F x_t^{m} \ket{m}$.  The stochastic coefficient is then 
\begin{equation}
	\begin{split}
	    b(t,x_t) &= \sqrt{M}\sum_{m = -F}^F m {x_t}^m\ket{m}
	                      - \sqrt{M}\sum_{m,n = -F}^F  n({x_t}^n)^2 {x_t}^m\ket{m}\\
	                      &+ \frac{1}{2}\sqrt{K}\sum_{m = -F}^F
	                        \left[ \sqrt{(F-m)(F+m+1)}{x_t}^m\ket{m+1} \right.\\
	                       & \left. - \sqrt{(F+m)(F-m+1)}{x_t}^m\ket{m-1}\right]
	\end{split}
\end{equation}
which has as its $j$-th entry
\begin{equation}
    \begin{split}
	    b^j(t,x_t) &= \sqrt{M} (j  - \sum_{n=-F}^{F}
	                                n({x_t}^n)^2){x_t}^j \\
	                &+\frac{\sqrt{K}}{2}
	                \left[\sqrt{(F-j+1)(F+j)}{x_t}^{j-1} \right.\\
	                & \left.
	                    - \sqrt{(F+j+1)(F-j)}{x_t}^{j+1}\right]
    \end{split}
\end{equation}
The derivative with respect to ${x_t}^k$ is then
\begin{equation}
	\begin{split}
	    \partialD{b^j(t,x_t)}{x_k} &= \sqrt{M} (j  - \sum_{n=-F}^{F}
	                                n({x_t}^n)^2)\delta_{jk}
	                                -\sqrt{M}2k{x_t}^k{x_t}^j \nonumber\\
	             &+\frac{\sqrt{K}}{2}
	                \left[\sqrt{(F-j+1)(F+j)}\delta_{(j-1),k} \right.\\
	             &\left. - \sqrt{(F+j+1)(F-j)}\delta_{(j+1),k}\right]
	\end{split}
\end{equation}
so that the sum in Eq.\ (\ref{probability:eq:ito_to_stratonovich}) is

\begin{multline}
	    \sum_{k = -F}^Fb^k(t,x_t) \partialD{b^j(t,x_t)}{{x_t}^k} = 
	             \sqrt{M} (j  - \sum_{n=-F}^{F}n({x_t}^n)^2)b^j(t,x_t)
	             - 2\sqrt{M}\sum_{k} k{x_t}^kb^k(t,x_t) {x_t}^j \\
	             +\frac{\sqrt{K}}{2}
	                 \left[\sqrt{(F-j+1)(F+j)}b^{j-1}(t,x_t)
	                 - \sqrt{(F+j+1)(F-j)}b^{j+1}(t,x_t)\right] 
\end{multline}
This suggests an equivalent operator form
\begin{multline}    
	\left[\sqrt{M}(\Fz - \expect{\Fz}_t) + i\sqrt{K}\Fy\right]^2 
	- 2\sqrt{M}\expect{\Fz(\sqrt{M}(\Fz - \expect{\Fz}) + i\sqrt{K}\Fy)}_t
     = \\ \left[\sqrt{M}(\Fz - \expect{\Fz}_t)+ i\sqrt{K}\Fy\right]^2 
     - 2M\expect{\Delta\Fz^2}_t -2i\sqrt{KM}\expect{\Fz\Fy}_t,
\end{multline}
where $\expect{\Delta\Fz^2}_t = \expect{\Fz^2} - \expect{\Fz}^2$, so the Stratonovich SSE is
\begin{multline}
    d\ket{\psi}_t = 
    \left[ -i\gamma B\Fy -M\left[ (\Fz - \expect{\Fz}_t)^2 - \expect{\Delta\Fz^2}_t\right]  
    		-\frac{\sqrt{KM}}{2}\Fx \right.\\
    	   \shoveright{\left.
				        +2i\sqrt{KM}\expect{\Fz}_t\Fy
				        + i\sqrt{KM} \expect{\Fz\Fy}_t
				        \right]\ket{\psi}_t dt}\\
		 + \left( \sqrt{M}(\Fz - \expect{\Fz})_t + i\sqrt{K}\Fy\right)\ket{\psi}_t \circ dW_t . \label{magnetometry:app:eq:SSE_stratonovich} 
\end{multline}
\end{subappendices}


\chapter{Feedback controllers for quantum error correction}
\label{chapter:error_correction}
In this chapter, I review quantum feedback protocols for performing continuous-time quantum error correction.  After studying the structure of the quantum filter, I describe a low-dimensional representation which although inexact, gives rise to the same feedback performance of the exact quantum filter.  The work presented here is published in \citep{Chase:2008a} and I refer the reader to \citep{Nielsen:2000a,Gottesman:1997a} for a thorough introduction to quantum error correction.
\section{Introduction}

Quantum error correction is inherently a feedback process where the error
syndrome of encoded qubits is measured and used to apply conditional
recovery operations \citep{Gottesman:1997a}.  Most
formulations of quantum error correction treat this feedback process as a
sequence of discrete steps.  Syndrome measurements and recovery operations
are performed periodically, separated by a time-interval chosen small enough
to avoid excessive accumulation of errors but still comparable to the time
required to implement quantum logic gates
\citep{Gottesman:1997a,Nielsen:2000a}.  There is, however, mounting evidence
from the field of real-time quantum feedback control
\citep{Wiseman:1994a,Armen:2002a,Bouten:2006a} that continuous
observation processes offer new, sometimes technologically advantageous,
opportunities for quantum information processing.

Toward this end, Ahn, Doherty and Landahl (ADL) \citep{Ahn:2002a} devised a
scheme to implement general stabilizer quantum error correction
\citep{Gottesman:1997a} using continuous measurement and feedback.
Unfortunately an exact implementation of the ADL scheme is computationally
demanding.    For an $n$-qubit code, the procedure requires one to time-evolve 
a $2^n$-dimensional density matrix for the logical qubit alongside the
quantum computation \citep{Ahn:2002a}.    This classical
information-processing overhead must be performed to interpret the continuous-time
error syndrome measurement data and determine how recovery operations, in
the form of a time-dependent feedback Hamiltonian, should be applied.  
While $n$ is a constant for any particular choice of code, even modest codes such as the
five-qubit code \citep{Bennett:1996a,Laflamme:1996a} and the seven-qubit Steane code
\citep{Steane:1996a} push classical computers to their limits.  Despite
state-of-the art experimental capabilities, it would be extremely
difficult to implement the ADL bit-flip code in practice.  Consequently, Ahn and others have devised alternate feedback protocols which are less demanding \citep{Sarovar:2004a, Ahn:2004a}, but perform worse than the the original ADL scheme.

Recently, van Handel and Mabuchi addressed the computational overhead of
continuous-time error syndrome detection \citep{vanHandel:2005c} using
techniques from quantum filtering theory presented in Chapter \ref{chapter:quantum}.  They developed an exact,
low-dimensional model for continuous-time error syndrome measurements, but did not
go on to treat continuous-time recovery.  The complication
is that any feedback Hamiltonian suitable for correcting errors during the
syndrome measurements violates the dynamical symmetries that
were exploited to obtain the low-dimensional filter in Ref.
\citep{vanHandel:2005c}.  While one might address this complication by simply
postponing error recovery operations until a point where the 
measurements can be stopped, there may be scenarios where it would be preferable to perform
error recovery in real-time.  For example, if the recovery operation is not instantaneous, responding to errors as they occur might outperform
protocols where there are periods without any error correction.  

In this chapter, I extend the quantum filtering approach developed by van
Handel and Mabuchi to include recovery operations.  I further consider an
error-correcting feedback Hamiltonian of the form devised by Ahn, Doherty
and Landahl, but the approach readily extends to other forms for the
feedback.  While an exact low-dimensional model for continuous-time
stabilizer generator measurements in the presence of feedback does not
appear to exist, I present an approximate filter that is still
low-dimensional, yet sufficiently accurate such that high-quality error
correction is possible.  

\section{Continuous-Time Quantum Error Correction}

For our purposes, a quantum error correcting code is a triple $(E,
\mathcal{G}, R)$.  The quantum operation $E:\mathbb{C}^{2k} \mapsto
\mathbb{C}^{2n}$ \emph{encodes} $k$ logical qubits in $n$ physical qubits.
$\mathcal{G}$ is a set of $l=n-k$ stabilizer generator observables with
outcomes $\pm 1$ that define the \emph{error syndrome}.   $R:\{\pm
1\}^{\otimes l}\mapsto \mathbb{C}^{2n\times 2n}$ is the
\textit{recovery operation}, which specifies what correction should be
applied to the physical qubits in response to the syndrome measurement
outcomes. 

The particular choice of code $(E, \mathcal{G},R)$ is usually made with
consideration for the nature of the decoherence affecting the physical
qubits \citep{Knill:2000a}.  For example, the bit-flip code (considered by
both ADL and van Handel and Mabuchi) improves protection against an error
channel that applies the Pauli $\sigma_x$ operator to single qubits at a 
rate $\gamma$.  Here, we adopt the notation that  $X_n$ represents the Pauli
$\sigma_x$ operator on qubit $n$, and similarly for $Y_n$ and
$Z_n$.  In the bit-flip code,  $E$ encodes $k=1$ qubits in $n=3$ qubits by
the map $\alpha\ket{0}+\beta\ket{1} \mapsto \alpha\ket{000} +
\beta\ket{111}$.  The $l=2$ stabilizer generators are $g_1 = ZZI :=
\sigma_z\otimes\sigma_z\otimes I$ and $g_2 = IZZ :=  I \otimes \sigma_z
\otimes \sigma_z$; each extracts the parity of different qubit pairs.  The
recovery $R$, given the outcomes of measuring $(g_1,g_2)$, is defined by
$(+1,+1)\mapsto I $, $(+1,-1)\mapsto X_3$, $(-1,+1)\mapsto X_1$ and
$(-1,-1)\mapsto X_2$.  

In this chapter, we focus primarily on the five-qubit-code ($n = 5, k = 1$)
that increases protection against general separable channels, and in
particular the continuous-time symmetric depolarizing channel that applies
all three Pauli operators to each of the physical qubits at the same rate
$\gamma$.  The five-qubit code has $l=4$ stabilizer generators
$\{XZZXI,IXZZX,XIXZZ,ZXIXZ\}$.  It is also a \emph{perfect} code in that all
16 distinct syndrome outcomes indicate distinct errors: one corresponding to
the no-error condition, and one syndrome for each of the three Pauli errors
on each of the five qubits.  I defer to \citep{Nielsen:2000a,
Gottesman:1997a} for the encoding and recovery procedures for this code.

\subsection{Stabilizer Generator Measurements}

Quantum error correction can be extended to continuous time by replacing
discrete measurements of the stabilizer generators $g_1, \ldots, g_l$ with a
set of $l$ continuous observation processes $dQ_t^{(i)}$ \citep{Ahn:2002a}.  We do not consider here
how one might implement the set of $l$ simultaneous stabilizer generator
observations other than to comment that doing so in an AMO technology would
likely involve coupling the $n$ physical qubits to a set of electromagnetic
field modes and then performing continuous photodetection on the scattered
fields.  While this model is rather general, we take the same measurement strength
$\kappa$ for each qubit, implying symmetric coupling of the qubits.

Following the techniques in Chapter \ref{chapter:quantum}, one arrives at the
following form of the quantum filter for the conditional density matrix $\rho_t$.
\begin{eqnarray}   \label{error_correction:eq:BelavkinFilter}
	d\rho_t & = & 
		\gamma  \sum_{m=1}^n \sum_{j}\mathcal{D}[\sigma_{j}^{(m)}]\rho_t dt 
		+ \kappa \sum_{i=1}^{l}\mathcal{D}[g_i]\rho_t dt \nonumber \\
	& & + \sqrt{\kappa}\sum_{i=1}^l \mathcal{H}[g_i]\rho_t \left( dQ_t^{(i)}
		 - 2 \sqrt{\kappa}\, \mathrm{Tr}[ g_i \rho_t ] dt \right) \nonumber \\
	& & 
		- i [ H_t, \rho_t ] dt\, ,
\end{eqnarray}    
where $j\in \{ x, y, z\}$ and the superoperators are defined as:
$\mathcal{D}[\sigma]\rho = \sigma\rho \sigma-\rho$ and $\mathcal{H}[g_l]\rho
= g_l\rho  + \rho g_l - 2\Tr{g_l\rho }\rho $.  The innovations
    \begin{equation}
        dW_t^{(i)} =  dQ_t^{(i)} - 2\sqrt{\kappa} \Tr{g_i \rho_t}dt
    \end{equation} obtained from the measurements $dQ_t^{(i)}$
     are independent Wiener processes, each with $\mathbbm{E}[dW_t] = 0$ and $dW_t^2 = dt$.
The first term in the
filtering equation accounts for the action of the continuous-time symmetric
depolarizing channel.  The time evolution $\rho_t$ generated by a particular noise realization is generally called a \emph{trajectory}.  

The final term in Eq.\ (\ref{error_correction:eq:BelavkinFilter}) describes the action of the time-dependent feedback Hamiltonian used to implement error recovery.   Following Ahn, Doherty and Landahl, we choose the feedback Hamiltonian to be of the form
\begin{equation} \label{error_correction:eq:Hamiltonian}
	H_t = \sum_{m=1}^{n} \sum_j \lambda_{j,t}^{(m)} \sigma_{j}^{(m)},
\end{equation}
which corresponds to applying Pauli operators $\sigma^{(m)}_j$ to each qubit
with a controllable strength $\lambda_{j,t}^{(m)}$.   The policy for determining the feedback strengths $\lambda_{j,t}^{(m)}$ at each point in time should be chosen optimally.  Ahn, Doherty, and Landahl obtained their feedback policy by defining the \emph{codespace projector} $\Pi_0$ onto the no error states (states which are $+1$ eigenvectors of all stabilizers) and then maximizing the \emph{codespace fidelity} $\Tr{\Pi_0\rho_t}$.   Assuming a maximum feedback strength  $\lambda_{\text{max}}$, the resulting feedback policy is given by setting
\begin{equation} \label{error_correction:eq:FeedbackPolicy}
	\lambda_{j,t}^{(m)} = \lambda_{\text{max}} \sgn\bigl(\Tr{-i[\Pi_0,\sigma_{j}^{(m)}] \rho_t}\bigr) \, .
\end{equation}

\subsubsection{Computational Expense}

Because this is a closed-loop strategy, the feedback controller must determine each $\lambda_{j,t}^{(m)}$ from the evolving measurement in real time.    The utility of feedback in any real setting then relies greatly upon the controller's ability to integrate the filtering equation rapidly enough to maintain pace with the quantum dynamics of the qubits.  For the five-qubit code, $1024-1$ real parameters are needed to represent the density matrix.   We found that stable numerical integration via the techniques in Appendix~\ref{appendix:numerical_methods_for_stochastic_differential_equations} for even a single trajectory required approximately 36 seconds on a 2.1 GHz desktop computer ($\gamma dt \approx 10^{-5}$ over a timespan $[0,0.25\gamma]$).   This is far from adequate for use in an actual feedback controller even in state-of-the-art experiments.

Moreover, Eq.\ (\ref{error_correction:eq:BelavkinFilter}) is a nonlinear filter, and for such
filters it is rarely possible to evaluate even qualitative properties
analytically.  One must then average over an appreciable number of
trajectories to find the expected behavior of quantities such as the
codespace fidelity as a function of time.  For the five-qubit code, our
integrator requires approximately 10 hours to simulate 1000 trajectories.

\subsection{Reduced-Dimensional Filters}

Considering that the syndrome measurements yield information about
correlations between qubits and not information about the individual states
of the qubits, one can imagine that  propagating the full density matrix is
excessive.  Indeed, the ADL scheme only makes use of the projection of
$\rho_t$ onto the codespace, generating the same feedback policy
regardless of which state $\rho_0$ in the codespace is initially chosen.  It
is reasonable to expect that a lower dimensional model could track solely
the information extracted from the syndrome measurements.  This is exactly
the premise used by van Handel and Mabuchi to obtain a low-dimensional model
of continuous-time stabilizer generator measurements (in the absence of
feedback) \citep{vanHandel:2005c}.   They formulate the problem as a graph
whose vertices correspond to syndromes and whose edges reflect the action of
the error model.  The filtering problem is then reduced to tracking the node
probabilities, i.e., the likelihoods for the qubit to be described by each
of the various syndrome conditions.  Dynamical transitions occur between the
syndromes due to the error channel, and the filter works to discern these
transitions from the stabilizer measurement data.

For an $(E, \mathcal{G}, R)$ code, van Handel and Mabuchi define a set of projectors onto the distinct syndrome spaces.  For the five-qubit code, there are 16 such projectors; $\Pi_0$ is the codespace projector as before and $\Pi_{j}^{(m)}=\sigma_j^{(m)}\Pi_0\sigma_j^{(m)}$ are projectors onto states with a syndrome consistent with a $\sigma_j$ error on qubit $m$.  Forming the probabilities 
\begin{equation}
	p_{j,t}^{(m)}=\Tr{\Pi_{j}^{(m)} \rho_t}
\end{equation}
into a vector $\mathbf{p}_t$ and computing $dp_{j,t}^{(m)}$ from the full dynamics leads to the reduced filter
\begin{equation} \label{error_correction:eq:BasicWonham}
	d\mathbf{p}_t = \Lambda\mathbf{p}_t \, dt 
		+ 2\sqrt{\kappa}\sum_{k=1}^{l}(H_l - \mathbf{h_l}^T\mathbf{p}_t \, I) \mathbf{p}_t \, 
			dW_t
\end{equation}
with $\Lambda_{rs} = \gamma(1-16\delta_{rs})$, $h_l^{j,m}$ the outcome of
measuring $g_l$ on $\Pi_{j}^{(m)}$ and $H_l =
\operatorname{diag}\mathbf{h}_l$ (Eq.\ (4) in Ref.\ \citep{vanHandel:2005c}).  The
equations for $p_{j,t}^{(m)}$ are closed and encapsulate all the information
that is gathered from measuring the stabilizer generators.  Equation
(\ref{error_correction:eq:BasicWonham}) is an example of a \emph{Wonham filter}, which is the
classical optimal filter for a continuous-time finite-state Markov chain with an observation process driven by white noise \citep{Wonham:1965a}.    Further discussion of the Wonham filter and its use
in conjunction with discrete-time error correction can be found e.g., in Ref. \citep{vanHandel:2005c}.

\section{Error Correction with Feedback}

We now extend Eq.\ (\ref{error_correction:eq:BasicWonham}) to include a feedback Hamiltonian suitable for error recovery.  Following van Handel and Mabuchi's lead, we see that Eq.\ (\ref{error_correction:eq:BasicWonham}) was derived by taking $dp_{j,t}^{(m)} = \Tr{\Pi_{j}^{(m)} d\rho_t}$ for a basis which closed under the dynamics of the continuous syndrome measurement.  One hope is that simply adding the feedback term in by calculating $\Tr{-i\lambda_{k,t}^{(r)}\Pi_{j}^{(m)}[\sigma_k^{(r)},\rho_t]}$ also results in a set of closed equations.  However, that is not the case when using the basis of the sixteen syndrome space projectors $\Pi_{j}^{(m)}$.  Specifically,  $[\Pi_{j}^{(m)},\sigma_k^{(r)}]$ cannot be written as a linear combination of syndrome space projectors.  This is not surprising as the feedback Hamiltonian term under consideration is the only term which generates unitary dynamics.

Inspired by the form of the commutator between the feedback and the syndrome space projectors, we define feedback coefficient operators
\begin{equation} \label{error_correction:eq:FeedbackCoefficient}
\Pi_{j,c}^{(m)} = (+i\text{ or } +1)\sigma^{\otimes 5}\Pi_{j}^{(m)}\sigma^{\otimes 5}\, ,
\end{equation}
where $c$ is an arbitrarily chosen index used to distinguish the $i$ or 1 prefactor and combination of Pauli matrices which sandwich the syndrome space projector $\Pi_{j}^{(m)}$.   For the five-qubit code, the syndrome projectors are simply those operators which have the 1 prefactor and 10 identity matrices.  The corresponding feedback coefficient is $p_{j,c}^{(m)}=\Tr{\Pi_{j,c}^{(m)}\rho_t}$.  If we then iterate the dynamics of the filter (\ref{error_correction:eq:BelavkinFilter}) by calculating $p_{j,c}^{(m)}$ starting from the syndrome space projectors, we find that each feedback Hamiltonian term generates pairs of feedback coefficient terms.  For example, calculating the dynamics due to feedback $X_1$ on $\Pi_0$ generates two feedback coefficient operators: $\Pi_{0,0}=i\Pi_0X_1$ and $\Pi_{0,1}=iX_1\Pi_0$.  We must then determine the dynamics for these first level feedback coefficients.  This will include calculating the $Y_5$ feedback on $\Pi_{0,1}$, which generates second level feedback coefficients $\Pi_{0,2}=X_1Y_5\Pi_0$ and $\Pi_{0,3}=X_1\Pi_0Y_5$.  Continuing to iterate feedback coefficient terms, we find that an additional 1008 distinct $p_{j,c}^{(m)}$ terms are needed to close the dynamics and form a complete basis.  Adding in the initial 16 syndrome space projectors gives a 1024 dimensional basis---clearly no better than propagating the full density matrix.  However, it is now relatively easy to calculate the feedback strengths, which depend only on pairs of first-level feedback coefficients.  For example, from Eq.\ (\ref{error_correction:eq:FeedbackPolicy}) we find that $\lambda_{0,t}^{(1)}=\lambda_{\text{max}}\sgn\left(-p_{0,0}+p_{0,1}\right)$, where $p_{0,0}=\Tr{\Pi_{0,0}\rho_t}$ and $p_{0,1}=\Tr{\Pi_{0,1}\rho_t}$ are first-level coefficients developed earlier in the paragraph.

\subsection{Approximate Filter for the Five-Qubit Code}
\begin{figure}[bt]
\begin{center}
	\includegraphics[scale=0.8]{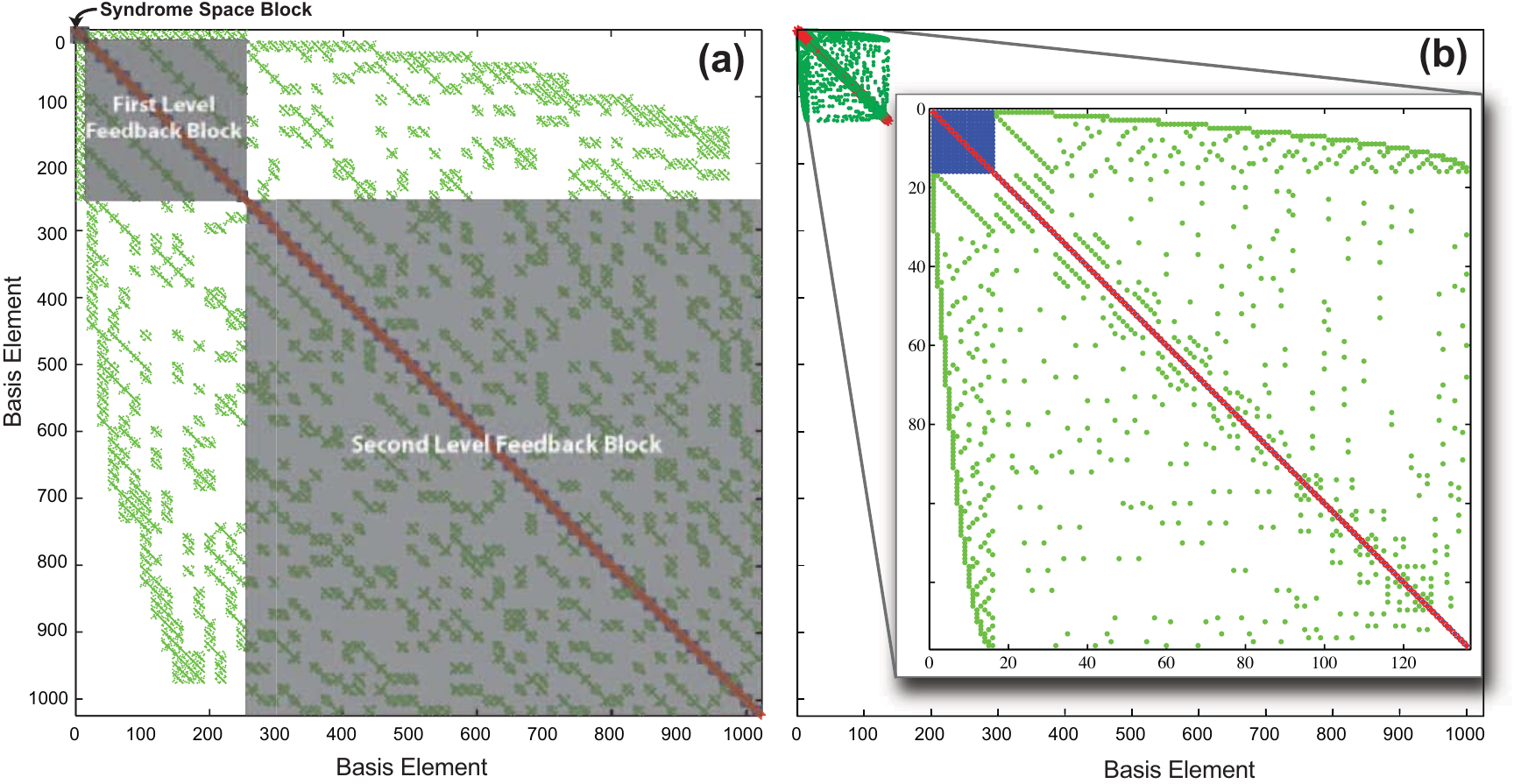}
\end{center}
\vspace{-5mm}
\caption[Graphical representation of non-zero matrix elements of error correction filter.]{Non-zero matrix elements of $(a)$ untruncated and $(b)$ truncated
filter.  Blue squares correspond to decoherence terms, red crosses
correspond to measurement terms and green dots correspond to feedback terms.
Note the difference in dimension of the matrices.} \label{error_correction:fig:annotatedMatrix}
\end{figure}

Although the dimension of the alternate basis is no smaller than the dimension of the full density matrix, the structure of the filter represented in the alternate basis provides a manner for interpreting the relative importance of the $p_{j,c}^{(m)}$ feedback coefficients.  This is best seen graphically in Fig.\ \ref{error_correction:fig:annotatedMatrix}(a), which superimposes the non-zero matrix elements coming from the noise, measurement and feedback terms.  Both measurement and noise are block diagonal as expected; it is the feedback that couples blocks together in a hierarchical fashion.  This hierarchy can be parameterized by the number of ``feedback transitions'' which connect a given feedback coefficient to the syndrome space block.  For example, the upper left block, which corresponds to the syndrome space projectors, is connected via feedback terms to the first level feedback block, whose feedback coefficients are each one feedback transition away from the syndrome space block.  In turn, the first level block is then connected to a second level feedback block, whose feedback coefficients are two feedback transitions away from the syndrome block.

Given that the initial state starts within the codespace and given that
feedback is always on, the feedback coefficients that are more than one
feedback transition from the syndrome space block should be vanishingly
small.  Limiting consideration to these first two blocks, we also find that
pairs of feedback coefficients couple identically to the syndrome space
block.  For example, we find that $-iX_1\Pi_0$ and $i\Pi_0X_1$ couple to syndrome space projectors identically.  This is not surprising, as these two terms comprise the commutator that results from the $X_1$ feedback Hamiltonian.   However, outside the first level of feedback transitions, the matrix elements of these feedback coefficients differ.  Additionally, feedback coefficients involving feedback Hamiltonians which correspond to a syndrome error on the codespace projector are related as
\begin{equation} \label{error_correction:eq:PauliRelation}
-i\sigma_j^{(m)}\Pi_0+i\Pi_0\sigma_j^{(m)} = -i\Pi_{j}^{(m)}\sigma_j^{(m)}+i\sigma_j^{(m)}\Pi_j^{(m)}\, .
\end{equation}
For the feedback coefficient examples just mentioned, this relation is $-iX_1\Pi_1^{(1)} + i\Pi_1^{(1)}X_1 = -i\Pi_0X_1 + iX_1\Pi_0$.  Truncating the dynamics to include only the first level of feedback and combining distinct feedback coefficients which act identically within this block results in the matrix of Fig.\ \ref{error_correction:fig:annotatedMatrix}(b) over only 136 basis elements.  Note that the controller now only needs to reference a single basis element for calculating a given feedback strength $\lambda_{j,t}^{(m)}$.

\subsection{Approximate Filter for General Codes}

The truncation scheme generalizes for reducing the dimensionality of the
quantum filter for an arbitrary $(E,\mathcal{G},R)$ code.  Such a filter for
an $[\![n,k]\!]$ quantum error-correcting code \citep{Nielsen:2000a} has the same form as
Eq.\ (\ref{error_correction:eq:BelavkinFilter}), but involves $n$ physical qubits and $l = n-k$
continuous-time stabilizer generator measurements.  In the following, we assume the continuous-time symmetric depolarizing channel, though it should be straightforward to extend to other noise models.  For a non-perfect, non-degenerate code, there are a total of $2^{n-l}$ stabilizer generator measurement outcomes, but only $3n+1$ will be observed for the given noise channel.  For a perfect, non-degenerate code ($2^{n-l}=3n+1$), all possible syndrome outcomes are observed.  In either case, given the observable syndrome outcomes, we can define $3n+1$ syndrome space projectors and $3n$ feedback parameters needed for recovery.  Degenerate codes require fewer than $3n$ recovery operations, as distinct actions of the noise channel give rise to identical errors and recovery operations.  The degeneracy depends greatly on the particular code, so we merely note that degenerate codes will require \emph{fewer} syndrome space projectors and feedback parameters than their non-degenerate relatives. 

Once we determine the syndrome space projectors and feedback parameters for the code, we can introduce feedback coefficient operators of the
form of (\ref{error_correction:eq:FeedbackCoefficient}) but over $n$ qubits.  A truncated filter is constructed as follows.
\begin{enumerate}
\item Close the dynamics of the syndrome space projectors by introducing first-level feedback terms.
\item Close the dynamics of the first-level feedback terms by truncating to a basis of syndrome space and first-level feedback terms, i.e. throw out potential second-level feedback terms.
\item Each of the $3n+1$ syndrome space projectors in this truncated form have $3n$ feedback coefficients, with pairs of these terms comprising each feedback Hamiltonian commutator.  Moreover, there is a factor of degeneracy between syndrome space projectors and feedback coefficients which involve the same Pauli matrix [c.f., Eq.\ (\ref{error_correction:eq:PauliRelation})].  A similarity transform is used to combine these pairs leaving $(3n+1)+(3n+1)3n/2=\frac{1}{2}\left(2+9n(n+1)\right)$ basis elements in the fully truncated filter.
\end{enumerate}
The truncated filter requires only $\mathcal{O}(n^2)$ basis elements, as compared to the $4^n$ parameters for the full density matrix.  Additionally, the feedback strengths in Eq.\ (\ref{error_correction:eq:FeedbackPolicy}) are readily calculated from the combined first-level feedback coefficients.  The truncation process is depicted schematically in the left half of Fig.\ \ref{error_correction:fig:hierarchy}.  The right half of the figure gives examples of a few of the 1024 terms involved in the truncation procedure for the five-qubit code.
\begin{figure}[t!]
\vspace{2mm}
\begin{center} 
\includegraphics{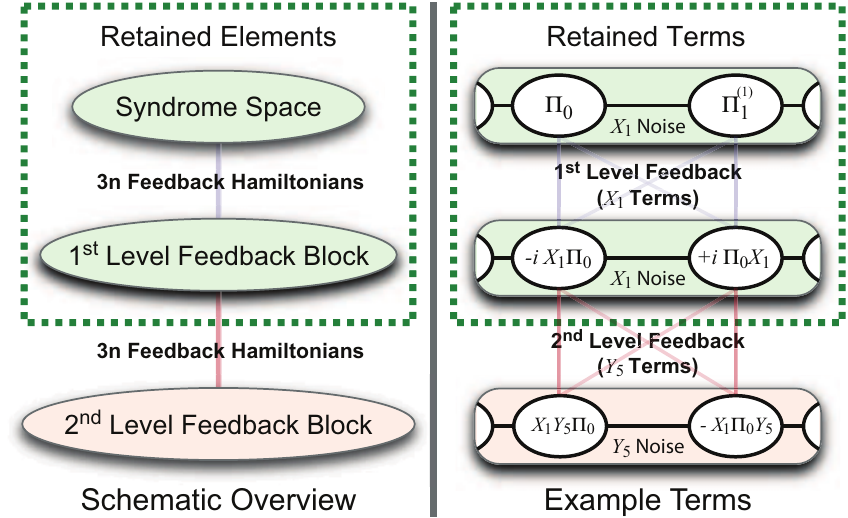}
\end{center}
\vspace{-3mm}
\caption[Schematic of filter truncation.]{On the left, a schematic diagram of truncating the filter to only syndrome space and first level feedback blocks.  On the right, just a few of the 1024 feedback coefficients of the five-qubit code representing the different feedback block levels.} \label{error_correction:fig:hierarchy}
\end{figure}

\begin{figure}[t!]
\begin{center} 
\includegraphics[scale=0.9]{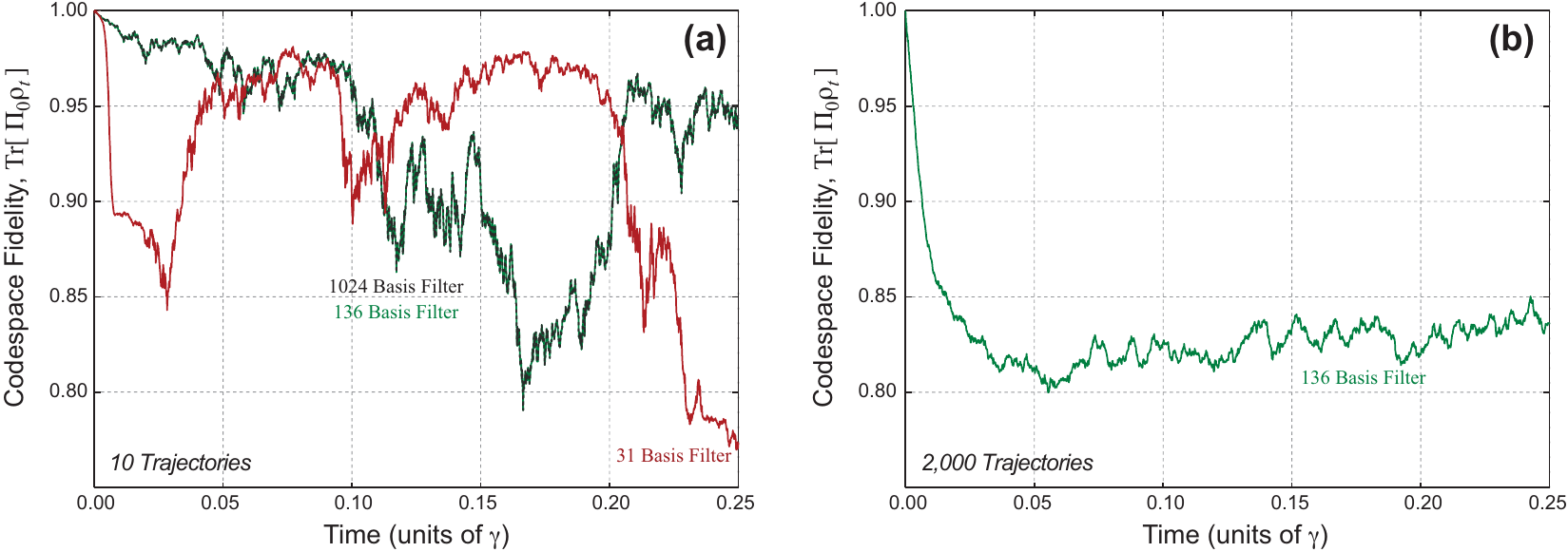}
\end{center}
\vspace{-5mm}
\caption[Numerical simulations of the five qubit code to assess the average
code space fidelity]{Numerical simulations of the five qubit code to assess the average
code space fidelity.  Plot (a) compares the codespace fidelity (averaged
over 10 trajectories) for filters with different levels of truncation: the
full (1024-dimensional) and first-level truncated (136-dimensional) filters
are essentially identical.  Plot (b) shows the codespace fidelity averaged
over 2,000 trajectories using the truncated 136-dimensional filter for error
correction.  (Simulation parameters: $\lambda_{\mathrm{max}} = 200 \gamma$ and
$\kappa = 100 \gamma$.)  (Color online.) \label{error_correction:fig:codewordPlots}}
\end{figure}

\subsection{Numerical Simulation}

Since the truncated filter is also nonlinear, it is difficult to provide analytic bounds on possible degradation in performance. However, we can easily compare numerical simulation between feedback controllers which use the full or truncated filter.  In fact, the dynamics should be close for the same noise realizations, indicating that they should be close per trajectory.  

In order to analyze the feedback controller's performance, the full filter Eq.\ (\ref{error_correction:eq:BelavkinFilter}) is used to represent the underlying physical system.  The feedback controller was modeled by simultaneously integrating the truncated filter, driven by the measurement current from the full filter.  The feedback controller then calculated the feedback strengths which were fed back into the full filter.  The dynamics described by the full filter were then used to compute the codespace fidelity.  Using a predictor-corrector SDE integrator discussed in Appendix \ref{appendix:numerical_methods_for_stochastic_differential_equations} and varying $\kappa$ and $\lambda_{\text{max}}$ over a wide range, we find essentially indistinguishable performance between the full and truncated filters.  Using $\kappa=100\gamma$ and $\lambda_{\text{max}}=200\gamma$ as representative parameters, Figure \ref{error_correction:fig:codewordPlots}(a) demonstrates this general behavior by comparing the average codespace fidelity of a handful of trajectories using the different filters. Integrating an individual trajectory takes approximately 39.5 seconds using a 2.1 GHz PowerPC processor.  Integrating the full filter alone takes approximately 36 seconds, while integrating the  truncated filter alone takes approximately 3.5 seconds.

In addition to showing the identical performance of the full and truncated filters, Fig.\ \ref{error_correction:fig:codewordPlots}(a) also shows the loss in performance if one were to truncate further.  The 31 dimensional filter is comprised of the 16 syndrome projectors and the 15 feedback coefficients which have non-zero feedback matrix elements with the codespace $\Pi_0$.  These are the only elements explicitly needed to calculate the feedback strengths in Eq. \ref{error_correction:eq:FeedbackPolicy}.  This filter fails because it tacitly assumes the action of feedback on the codespace is more ``important'' than on the other 15 syndrome spaces.  Since feedback impacts all syndrome spaces equally, we need to retain those terms in order to properly maintain syndrome space probabilities. Intuitively, this suggests that the 136 dimensional filter is the best we can do using this heuristic truncation strategy.  For reference, Fig.\ \ref{error_correction:fig:codewordPlots}(b) shows the average codespace fidelity of 2000 trajectories when using the truncated filter. 
\subsubsection{Comparison with Discrete Error Correction}
Given the success of the truncation scheme, we now compare the performance
of feedback-assisted error correction to that of discrete-time error correction for the five-qubit code.  The discrete model considers qubits exposed to the depolarizing channel 
\begin{equation}
d\rho_{\text{discrete}} = \gamma\sum_{j=x,y,z}\sum_{m=1}^{n=5}\mathcal{D}[\sigma_j^{(m)}]\rho_{\text{discrete}}dt
\end{equation}
up to a time $t$, after which discrete-time error correction is performed.  The
solution of this master equation can be explicitly calculated using the ansatz
\begin{equation}
\rho_{\text{discrete}}(t) = \sum_{e=0}^5\sum_{P; pw(P) = e}a_e(t)P\rho_0P ,
\end{equation}
where $P$ is a tensor product of Pauli matrices and the identity.  The
function $pw(P)$ gives the Pauli weight of a matrix, defined as the number
of $\sigma_x, \sigma_y,$ and $\sigma_z$ terms in the tensor representation.
Thus, $a_0(t)$ is the coefficient of $\rho_0$ and similarly $a_1(t)$ is the
coefficient of all single qubit errors from the initial state, e.g.,
$XIIII(\rho_0)XIIII, IIZII(\rho_0)IIZII$.  

The codespace fidelity considered earlier is not a useful metric for
comparison, as discrete-time error correction is guaranteed to restore the
state to the codespace. Following Ahn, Doherty and Landahl, we instead use the \emph{codeword
fidelity} $F_{cw}(t) := \Tr{\rho_0\rho(t)}$, which is a measure relevant for
a quantum memory.  Since error correction is independent of the encoded
state, we choose the encoded $\ket{0}$ state as a fiducial initial state.
Given that the five-qubit code protects against only single qubit errors, we
find that after error correction at time $t$, the codeword fidelity for
discrete-time error correction is
\begin{multline}
F_{cw}^{\text{discrete}} = a_0(t) + a_1(t) = \frac{1}{256} e^{-20 t \gamma } \left(3+e^{4 t \gamma }\right)^4 \left(-3+4 e^{4 t \gamma }\right)\, ,
\end{multline}
which asymptotes to $1/64$.  This limit arises because prior to the stabilizer generator measurements, the noise pushes the state to the maximally mixed state, which is predominately composed of the $a_2(t)$ through $a_5(t)$ terms. 

The feedback codeword fidelity $F_{cw}^{\text{feedback}}$ was calculated
by integrating both the full quantum filter (\ref{error_correction:eq:BelavkinFilter}),
representing the underlying system of qubits, and the truncated filter,
representing the feedback controller.  Again, we chose $\kappa = 100\gamma$,
$\lambda_{\text{max}}=200\gamma$ and $dt = 10^{-5}\gamma$ and used the same
SDE integrator described above. Figure \ref{error_correction:fig:fiveQubitPerformance} shows
the average of $F_{cw}^{\text{feedback}}$ over 2000 trajectories,
demonstrating that there are regimes where feedback-assisted error
correction can significantly outperform discrete-time error correction.  Feedback-assisted error correction appears to approach an asymptotic codeword fidelity greater than what would be obtained by decoherence followed by discrete-time error correction.  Due to the nonlinear feedback, it is difficult to calculate an analytic asymptotic expression for the continuous-time strategy.
\begin{figure}[t]
\begin{center} 
\includegraphics{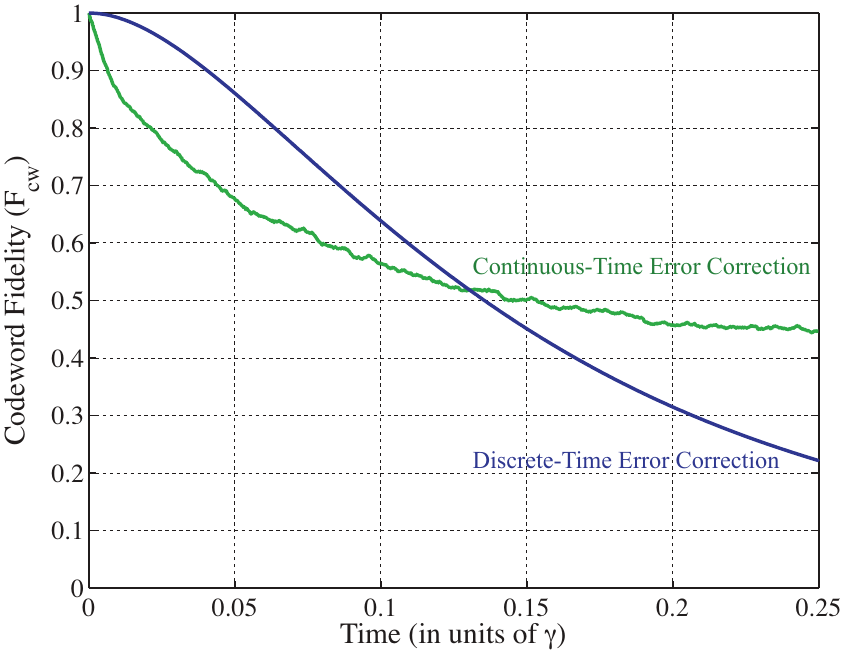}
\end{center}
\vspace{-5mm}
\caption[Comparison between continuous-time and discrete-time error correction
for the five-qubit code.]{Comparison between continuous-time and discrete-time error correction
for the five-qubit code.  For the  continuous-time error correction
simulations, the codeword fidelity was averaged over 2,000 trajectories with
$\kappa = 100 \gamma$ and $\lambda_{\text{max}}=200 \gamma$. (Color online.) \label{error_correction:fig:fiveQubitPerformance}}
\end{figure}
Nonetheless, the improved performance for the timespan considered suggests that better quantum memory is possible using the feedback scheme.  

\section{Summary}

Extending control theory techniques introduced by van Handel and Mabuchi \citep{vanHandel:2005c}, I have developed a computationally efficient feedback controller for continuous-time quantum error correction.  For the truncation scheme, the dimension of the filtering equations grows as $\mathcal{O}(n^2)$ in the number of physical qubits $n$, rather than $\mathcal{O}(4^n)$ for the original Ahn, Doherty and Landahl procedure \citep{Ahn:2002a}.  By numerical simulation of the five-qubit code, we have seen the viability of such a filter for a quantum memory protecting against a depolarizing noise channel.  Moreover, in all simulations, this performance is indistinguishable from that of the computationally more demanding filter of the ADL style.  

In systems where recovery operations are not instantaneous relative to decoherence, consideration suggests that it is desirable to perform syndrome measurement, recovery, and logic gates simultaneously.  However, it is not immediately clear how gates impact the feedback controller.  Indeed, if a Hamiltonian is in the code's normalizer, the continuous-time
feedback protocol and its performance are unchanged.  Though a universal set
of such Hamiltonians can be found, it might be desirable to find universal
gates which have physically simple interactions.  Future work involves
finding such gate sets and developing a framework for universal quantum
computing.  Additional issues of fault-tolerance and robustness could then
be explored within such a universal setup.  Exploring feedback error
correction in the context of specific physical models will provide
opportunities to tailor feedback strategies to available control parameters
and salient noise channels.  Such systems might allow the calculation of
globally optimal feedback control strategies.

\chapter{Model Reduction of Collective Qubit Dynamics}
\label{chapter:collective}
We saw in the previous chapter that the ability to find a low-dimensional model of a collective quantum systems allows one to efficiently simulate complex dynamics and in turn, design a practical feedback controller.  In this chapter, I focus on a problem outside the quantum feedback and control realm and present an exact, but nonetheless computationally appealing, description of arbitrary collective processes on open qubit systems.  The work presented here was published in \citep{Chase:2008c}.
\section{Introduction}
The ability to model the open system dynamics of large spin ensembles is crucial to experiments that make use of many atoms, as is often the case in precision metrology \citep{Wineland:1993a,Romalis:2003a}, quantum information science \citep{Polzik:2001a,Kuzmich:2003a,Jessen:2007a} and quantum optical simulations of condensed matter phenomena \citep{Bloch:2002a,StamperKurn:2006a,Parkins:2008a}.  Unfortunately, the mathematical description of large atomic spin systems is complicated by the fact that the dimension of the Hilbert space $\mathscr{H}_N$ grows exponentially in the number of atoms $N$.  Realistic simulations of experiments quickly become intractable even for atom numbers smaller than $N \sim 10$.  Current experiments, however, often work with atom numbers of more than $N \sim 10^{10}$, meaning that direct simulation of these systems is well beyond feasible.   Moreover, simulations over a range $N \sim 1-10$ are far from adequate to discern even the qualitative behavior that would be expected in the $N \gg 1$ limit.  Fortunately, it is often the case that experiments involving large spin ensembles respect one or more dynamical symmetries that can be exploited to reduce the effective dimension of the ensemble's Hilbert space. One can then hope to achieve a sufficiently realistic model of experiments without an exponentially large description of the system.   

In particular, previous work has focused on the \emph{symmetric collective states} $\ket{\psi_S}$, which are invariant under the permutation of particle labels: $\hat{\Pi}_{ij}\ket{\psi_S} = \ket{\psi_S}$.  These states span the subspace $\mathscr{H}_S \subset\mathscr{H}_N$, which grows linearly with the number of particles, $\operatorname{dim}(\mathscr{H}_S) = Nj + 1$.  However, in order for $\mathscr{H}_S$ to be an invariant subspace, the dynamics of the system must be expressible solely in terms of \emph{symmetric processes}, which are particle permutation invariant, and \emph{collective operators}, which respect the irreducible representation structure of rotations on the spin ensemble.  Fortunately, even within this restrictive class, a wide variety of phenomenon may be observed, including spin-squeezing \citep{Kitagawa:1993a,Polzik:1999a} and zero-temperature phase transitions \citep{Parkins:2008a}.

In practice, symmetric atomic dynamics are achieved by ensuring that there is identical coupling between all the atoms in the ensemble and the electromagnetic fields (optical, magnetic, microwave, etc.) used to both drive and observe the system \citep{Stockton:2003a}.  This approximation can be quite good for all of the coherent dynamics, because with sufficient laboratory effort, electromagnetic intensities can be made homogeneous, ensuring that interactions do not distinguish between different atoms in the ensemble.  However, incoherent dynamics are often beyond the experimenter's control.  Although most types of decoherence are symmetric, they are not generally written using collective operators.  Instead they are expressed as identical Lindblad operators for each spin, i.e.
\begin{equation} \label{collective:Intro::SymmetricLindblad}
		\mathcal{L}[\hat{s}]\hat{\rho} = \sum_{n=1}^{N}\biggr[ \hat{s}^{(n)}\hat{\rho}(\hat{s}^{(n)})^{\dagger} - 
					\frac{1}{2} (\hat{s}^{(n)})^{\dagger}\hat{s}^{(n)} \hat{\rho} -
					\frac{1}{2} \hat{\rho} (\hat{s}^{(n)})^{\dagger}\hat{s}^{(n)} \biggr] \ .
\end{equation}

The fact that decoherence does not preserve $\mathscr{H}_S$ has been well appreciated and the standard practice in experiments that address the collective state of atomic ensembles has been either: (i) to model such experiments only in a very short-time limit where decoherence can be approximately ignored; or (ii) to use decoherence models that do respect the particle symmetry, but which are written using only collective operators, even when doing so is not necessarily  physically justified.  In atomic spin ensembles, for example, a typical source of decoherence comes from spontaneous emission, yet collective radiative processes only occur under specific conditions such as superradiance from highly confined atoms \citep{Dicke:1954a} and some cavity-QED or spin-grating settings \citep{Black:2005a}.

In this chapter, I generalize the collective states of an ensemble of spin-1/2 particles (qubits) to include states that are preserved under symmetric--- but not necessarily collective--- transformations.      Specifically, I generalize from the strict condition of complete permutation invariance to the broader class of states that are indistinguishable across degenerate irreducible representations (irreps) of the rotation group. While the representation theory of the rotation group has been utilized in a wide variety of contexts, such as to protect quantum information from decoherence by encoding it into degenerate irreps with the same total angular momentum \citep{Bacon:2001a, Lidar:1998a}, I utilize relevant aspects of the representation theory to obtain a reduced-dimensional description of quantum maps that act locally but identically on every member of an ensemble of qubits.   

The main result, presented in Eq.\ (\ref{collective:Equation::generalFormula}), enables  us to represent arbitrary symmetric Lindblad operators in the collective state basis.  We find that the dimension of the Hilbert space $\mathscr{H}_{\mathrm{C}}$ spanned by these generalized collective states scales favorably, $\dim(\mathscr{H}_{\mathrm{C}}) \sim N^2$.  This allows for efficient simulation of a broader class of collective spin dynamics and in particular, allows one to consider the effects of decoherence on previous simulations of symmetric collective spin states.  We note that dynamical symmetries for spin-1/2 particles have been studied in the context of decoherence-free quantum information processing \citep{Lidar:1998a, Bacon:2001a}.  Unlike the work in this chapter, which uses symmetries to find a reduced description of a quantum system, these works seek to protect quantum information from decoherence by encoding within the degeneracies introduced by dynamical symmetries.

The remainder of this chapter is organized as follows.  Section \ref{collective:Section::GeneralStates} reviews the representation theory of the rotation group, which plays an important role in defining the symmetries related to $\mathscr{H}_S$ and $\mathscr{H}_C$.  Section \ref{collective:Section::CollectiveStates} introduces collective states and Section \ref{collective:Section::CollectiveProcesses} defines collective processes over these states. Section \ref{collective:Section::MainResult} gives an identity for expressing arbitrary symmetric superoperators, e.g. Eq. \ref{collective:Intro::SymmetricLindblad}, over the collective states.  Section \ref{collective:Section::Examples} leverages this formalism to compare the effect of different decoherence models in non-classical atomic ensemble states.  Section \ref{collective:Section::Conclusion} concludes.
\section{General states of the ensemble} \label{collective:Section::GeneralStates}
Consider an ensemble of $N$ spin-1/2 particles, with the $n^{th}$ spin characterized by its angular momentum $\hat{\mathbf{j}}^{(n)} = \{ \hat{j}_x^{(n)}, \hat{j}_y^{(n)}, \hat{j}_z^{(n)} \}$.  States of the spin ensemble are elements of the composite Hilbert space
\begin{equation}
	\mathscr{H}_N = \mathscr{H}^{(1)} \otimes \mathscr{H}^{(2)} \otimes \cdots \otimes \mathscr{H}^{(N)} 
\end{equation}
with $\operatorname{dim}(\mathscr{H}_N) = 2^N$.  Pure states of the ensemble, $\ket{\psi} \in \mathscr{H}_N$, are written as
\begin{equation}
	\ket{\psi} = \sum_{m_1,m_2,\ldots, m_N} c_{m_1,m_2,\ldots, m_N} \ket{m_1, m_2, \ldots, m_N}
\end{equation}
with $m_n = \pm \frac{1}{2}$ and where
\begin{equation}
	\ket{m_1,m_2,\ldots, m_N} = \ket{\frac{1}{2}, m_1}_1 \otimes \ket{\frac{1}{2}, m_2}_2 \otimes \cdots \otimes \ket{\frac{1}{2}, m_N}_N
\end{equation}
satisfies
\begin{equation} \label{collective:Equation::ProductStateKet}
	\hat{j}_z^{(n)}\ket{m_1,m_2,\ldots,m_N} = \hbar m_n\ket{m_1,m_2,\ldots,m_N} .
\end{equation}
When studying the open-system dynamics of the spin ensemble, one must generally consider the density operator
\begin{equation} \label{collective:Equation::ProductStateRho}
	\hat{\rho}  =  \sum_{\substack{m_1,m_2,\ldots, m_N\\ m'_1,m'_2,\ldots,m'_N}} 
		 \rho_{m_1,m_2, \cdots, m_N ;  m_1^\prime, m_2^\prime \cdots ,m_N^\prime}
		  \times \ket{m_1,m_2,\ldots,m_N}
		 	\dual{m_1^\prime,m_2^\prime,\ldots,m_N^\prime}
\end{equation}
States expanded as in Eqs. \ref{collective:Equation::ProductStateKet} and \ref{collective:Equation::ProductStateRho} are said to be written in the \emph{product basis}.
\subsection{Representations of the Rotation Group}
For a single spin-1/2 particle, a spatial rotation through the Euler angles $R=(\alpha,\beta,\gamma)$ is described by the rotation operator
\begin{equation}
	\hat{R}(\alpha,\beta,\gamma) = 
	e^{ - i \alpha \hat{j}_{\mathrm{z}} }
		e^{ - i \beta \hat{j}_{\mathrm{y}}}
	e^{ - i \gamma \hat{j}_{\mathrm{z}} }
\end{equation}
The basis kets $\ket{\frac{1}{2},m}$ for this particle therefore transform under the rotation $R$ according to
\begin{equation}
	\hat{R} \ket{\frac{1}{2},m'} = \sum_m \mathscr{D}^{\frac{1}{2}}_{m',m} (R) \ket{\frac{1}{2},m}
\end{equation}
where the matrices $\mathscr{D}^{\frac{1}{2}}(R)$ have the elements
\begin{equation}
	\mathscr{D}^{\frac{1}{2}}_{m',m} = \dual{\frac{1}{2},m'}  \hat{R}(\alpha,\beta,\gamma) \ket{\frac{1}{2},m} .
\end{equation}
The rotation matrices $\mathscr{D}^{\frac{1}{2}}(R)$ form a $2-$dimensional representation of the rotation group.

For the ensemble of $N$ spin-1/2 particles, each component of the ket $\ket{\psi} = \ket{m_1,m_2,\ldots,m_N}$ transforms separately under a rotation so that an arbitrary state transforms as
\begin{equation}
	\ket{\psi'} =  [\mathscr{D}^{\frac{1}{2}}(R)]^{\otimes N} \ket{\psi}.
\end{equation}
The rotation matrices $\mathscr{D}(R) = [\mathscr{D}^{\frac{1}{2}}(R)]^{\otimes N}$ provide a reducible representation for the rotation group but can be decomposed into irreducible representations (irreps) as  
\begin{equation} \label{collective:Equation::IrreducibleRotations}
	\mathscr{D}(R) = \bigoplus_{J=J_\text{min}}^{J_\text{max}}\bigoplus_{i=1}^{d^J_N}\mathscr{D}^{J,i}(R) \ .
\end{equation}
The quantum number $i(J) = 1, 2, \ldots, d^J_N$ is used to distinguish between the
\begin{equation}
	d^J_N =\frac{N!(2J+1)}{(\frac{N}{2}-J)!(\frac{N}{2}+J+1)!} \ ,\ \  J_\text{min} \leq J \leq J_\text{max}\\
\end{equation}
degenerate irreps with total angular momentum $J$ \citep{Mihailov:1977a}.  That is to say, $d^J_N$ is the number of ways one can combine $N$ spin-1/2 particles to obtain total angular momentum $J$.  The matrix elements of a given irrep $\mathscr{D}^{J,i}(R)$
\begin{equation}
	\mathscr{D}^{J,i}_{M,M'} (R) = \dual{J,M,i}  \mathscr{D}^{\frac{1}{2}}(R)^{\otimes N} \ket{J,M',i} 
\end{equation}
are written in terms of the total angular momentum eigenstates
\begin{eqnarray}
	\hat{\mathbf{J}}^2 \ket{J,M,i} & = & J(J+1) \ket{J,M,i} \\
	\hat{J}_{\mathrm{z}} \ket{J,M,i} & = & M \ket{J,M,i} 
\end{eqnarray}
with $\hat{J}_z = \sum_{n = 1}^N \hat{j}^{(n)}_z$, $J_{\text{max}} = \frac{N}{2}$ and
	\begin{equation}
		J_{\text{min}} = \begin{cases}
						  \frac{1}{2} & \text{$N$ odd}\\
						  0 & \text{$N$ even} \ .					
					  	  \end{cases}
	\end{equation}
It is important to note that degenerate irreps have \emph{identical} matrix elements, i.e. 
\begin{equation}
	\dual{J,M,i}  \mathscr{D}^{\frac{1}{2}}(R)^{\otimes N} \ket{J,M',i}
	  = \dual{J,M,i'}  \mathscr{D}^{\frac{1}{2}}(R)^{\otimes N} \ket{J,M',i'}
\end{equation}
for all $i,i'$.

In this representation, pure states are written as
\begin{equation} \label{collective:Equation::IrrepKet}
	\ket{\psi} = \sum_{J = J_\text{min}}^{J_\text{max}}\sum_{M = -J}^{J}\sum_{i=1}^{d^J_N} c_{J,M,i}\ket{J,M,i}
\end{equation}
and mixed states as
\begin{equation} \label{collective:Equation::IrrepRho}
	\hat{\rho} = \sum_{J,J' = J_\text{min}}^{J_\text{max}}
	\sum_{M,M' = -J,J'}^{J,J'}
	\sum_{i,i'=1}^{d^J_N,d^{J'}_N} 
			\rho_{J,M,i;J',M',i'} \ketbra{J,M,i}{J',M',i'}
\end{equation}
States written in the form of Eqs. \ref{collective:Equation::IrrepKet} or \ref{collective:Equation::IrrepRho} are said to be written in the \emph{irrep basis}.  We stress that both the product and irrep bases can describe any arbitrary state in $\mathscr{H}_N$.
\section{Collective States} \label{collective:Section::CollectiveStates}
While the representations in Section \ref{collective:Section::GeneralStates} allow us to express any state of the ensemble of spin-1/2 particles, the irrep basis suggests a scenario in which we could restrict attention to a much smaller subspace of $\mathscr{H}_N$.  In particular, the irrep structure of the rotation group, as expressed in Eq. \ref{collective:Equation::IrreducibleRotations}, indicates that rotations on the ensemble do not mix irreps and that degenerate irreps transform identically under a rotation.

Following this line of reasoning, we introduce the \emph{collective states}, $\ket{\psi_C}$, which span the sub-Hilbert space $\mathscr{H}_C \subset \mathscr{H}_N$.  Collective states have the property that degenerate irreps are identical; for pure states, $c_{J,M,i} = c_{J,M,i'}$ for all $i$ and $i'$.  We note that the symmetric collective states mentioned in the introduction are the collective states with $c_{J,M,i} = 0 $ unless $J=\frac{N}{2}$ and thus correspond to the largest $J$ value irrep.  We also note that
\begin{align}
	\operatorname{dim}{\mathscr{H}_C} &= 
	   \sum_{J=J_{\text{min}}}^{J_{\text{max}}}(2J+1) \nonumber\\
	    & =
		\begin{cases}
			\frac{1}{4}(N+3)(N+1), &\text{ if $N$ odd}\\
			\frac{1}{4}(N+2)^2, &\text{ if $N$ even}		
		\end{cases} \ .
\end{align}  

Physically, the collective states reflect an inability to address different degenerate irreps of the same total $J$.  This new symmetry allows us to effectively ignore the quantum number $i$ and write
\begin{align}
	\ket{\psi_C} &= \sum_{J = J_{\text{min}}}^{J_{\text{max}}}
					   \sum_{M = -J}^{J}
					   \sum_{i=1}^{d^J_N} c_{J,M,i} \ket{J,M,i}\nonumber\\
			&= \sum_{J = J_{\text{min}}}^{J_{\text{max}}}\sum_{M = -J}^{J}
					\sqrt{d^J_N} c_{J,M} \ket{J,M}
\end{align}
where I have defined effective basis kets 
\begin{equation} \label{collective:Equation::EffectiveKet}
	\ket{J,M} = \frac{1}{\sqrt{d^J_N}}\sum_{i=1}^{d^J_N} \ket{J,M,i}
\end{equation}
with effective amplitude $c_{J,M} = c_{J,M,i}$ for all $i$ (since the $c_{J,M,i}$ are equal for collective states).  

The factor of $\sqrt{d^J_N}$ serves as normalization, so that we can apply standard spin-$J$ operators to the effective kets without explicitly referencing their constituent degenerate irrep kets $\ket{J,M,i}$.  In other words, $\ket{J,M}$ actually represents $d^J_N$ degenerate kets, each with identical probability amplitude coefficients.  But since the matrix elements of a spin-$J$ operator are identical for irreps, we need not evaluate them individually.  

As an example, consider a rotation operator $\hat{R}$ which necessarily respects the irrep structure of the rotation group.  Calculating the expectation value of $\hat{R}$ by expanding the collective state $\ket{\psi_C}$ in the full irrep basis, we have
\begin{align}
	\bra{\psi_C} \hat{R} \ket{\psi_C}  
	&=  \sum_{J,J'}\sum_{M,M'}\sum_{i,i'} 
					c^{*}_{J,M,i}c_{J',M',i'}\bra{J,M,i} \hat{R}\ket{J',M',i'} 
						\label{collective:Equation::CollectiveKetExpecation::First}\\
						              & = \sum_{J}\sum_{M,M'}\sum_{i} c^{*}_{J,M,i}c_{J,M',i}
																\bra{J,M,i} \hat{R}\ket{J,M',i}
						\label{collective:Equation::CollectiveKetExpecation::Second}\\
							& = \sum_{J}\sum_{M,M'} d^J_N c^{*}_{J,M} c_{J,M'}\bra{J,M} \hat{R} \ket{J,M'}
						\label{collective:Equation::CollectiveKetExpecation::Third}
\end{align}
where in going from Eq. \ref{collective:Equation::CollectiveKetExpecation::First} to \ref{collective:Equation::CollectiveKetExpecation::Second}, we set $J=J'$ and $i=i'$ since rotation group elements do not mix irreps.  In reaching Eq. \ref{collective:Equation::CollectiveKetExpecation::Third}, I have further used the collective state property that $c_{J,M,i} = c_{J,M,i'} \forall i,i'$ and the rotation irrep property that $\bra{J,M,i} \hat{R} \ket{J,M',i} = \bra{J,M,i^{\prime}} \hat{R} \ket{J,M',i^{\prime}} \forall i,i'$ to drop the index $i$.

Equivalently we can evaluate the expectation using the effective basis kets $\ket{J,M}$ directly: 
\begin{align}
    \bra{\psi_C} \hat{R} \ket{\psi_C}
	& = \sum_{J,J'}\sum_{M,M'}\sqrt{d^J_N}\sqrt{d^{J'}_N} c^{*}_{J,M}c_{J',M'} 
					\bra{J,M} \hat{R} \ket{J',M'}\\
					& = \sum_{J}\sum_{M,M'} d^J_N 
							c^{*}_{J,M}c_{J,M'} \bra{J,M} \hat{R} \ket{J,M'} \ . \label{collective:Equation::CollectiveKetExpectation:Proper}
\end{align}
Comparing this to Eq. \ref{collective:Equation::CollectiveKetExpecation::Third} and recalling that $c_{J,M} = c_{J,M,i}$ for all $i$, we see that the effective calculation gives the same result.  

We can similarly define collective state density operators, $\hat{\rho}_C$, which have the properties that (i) there are no coherences between different irrep blocks and (ii) degenerate irrep blocks have identical density matrix elements.  The second assumption again means we can effectively drop the index $i$, since $\rho_{J,M,i;J,M',i} = \rho_{J,M,i';J,M',i'}$ for any $i$ and $i'$.  This allows us to write
\begin{equation}
	\hat{\rho}_C = \sum_{J = J_\text{min}}^{J_\text{max}}
					\sum_{M,M'= -J}^J
					\rho_{J,M;J,M'} \Eketbra{J,M}{J,M'}
\end{equation}
where the effective density matrix elements, written using an overlined outer product, are related to the irrep matrix elements via
\begin{equation} \label{collective:Equation::EffectiveDensityMatrixElement}
	\rho_{J,M;J,M'}\Eketbra{J,M}{J,M'} := 
		\frac{1}{d^J_N}
		\sum_{i=1}^{d^J_n} \rho_{J,M,i;J,M',i} \op{J,M,i}{J,M',i} \ .
\end{equation}
Just as for the effective kets, the normalization factor of $d^J_N$ ensures expectations are correctly calculated using the standard spin-$J$ operators.  The density matrix has $\sum_{J=J_{\text{min}}}^{J_{\text{max}}}(2J+1)^2=\frac{1}{6}(N+3)(N+2)(N+1)$ elements.

We stress that the overlined outer product notation is \emph{different} than naively taking the outer product of the effective kets defined in Eq. \ref{collective:Equation::EffectiveKet}.  Such an approach would involve outer products of kets between different, although degenerate, irreps.  Such terms are strictly forbidden by the first property of collective state density operators.  Instead, one should consider the effective density operator as a representation of $d^J_N$ identical copies of a spin-$J$ particle.  The overline notation is meant to remind the reader that the outer product beneath should only be interpreted using Eq. \ref{collective:Equation::EffectiveDensityMatrixElement} to relate back to the irrep basis. 

\section{Collective Processes} \label{collective:Section::CollectiveProcesses}
We are now interested in describing quantum processes, $\mathcal{L}$, which preserve collective states, $\hat{\rho}'_C = \mathcal{L}\hat{\rho}_C$.  Writing this explicitly, we must have
\begin{multline}
	 \sum_{J_1}\sum_{M_1,M'_1} d^{J_1}_N 
				\rho'_{J_1,M_1;J_1,M'_1} \Eketbra{J_1,M_1}{J_1,M'_1} \\
	 =			\sum_{J_2}\sum_{M_2,M'_2} d^{J_2}_N 
					\rho_{J_2,M_2;J_2,M'_2} 
					\mathcal{L}	\Eketbra{J_2,M_2}{J_2,M'_2} \label{collective:Equation::CollectivePreservingDynamics} .
\end{multline}
If we define the action of $\mathcal{L}$ on collective density matrix elements as
\begin{equation}
	f^{J,M,M'} = \mathcal{L}	\Eketbra{J,M}{J,M'}
\end{equation}
we immediately see that this action must be expressible as
\begin{equation} \label{collective:Equation::SymmetricProcess}
	f^{J,M,M'} = \sum_{J_1}\sum_{M_1,M_1'} \lambda^{J,M,M'}_{J_1,M_1,M_1'} \Eketbra{J_1,M_1}{J_1,M_1'}
\end{equation}
in order for the equality in Eq. \ref{collective:Equation::CollectivePreservingDynamics} to be met.  Here $\lambda^{J,M,M'}_{J_1,M_1,M_1'}$ is an arbitrary function of its indices.  Any process which preserves collective states by satisfying Eq. \ref{collective:Equation::SymmetricProcess} is a \emph{collective process}.

Examples of collective processes are those involving \emph{collective angular momentum operators} $\{\hat{J}_x,\hat{J}_y, \ldots\}$ and more generally, arbitrary \emph{collective operators} $\hat{C} = \sum_{n=1}^N \hat{c}^{(n)}$.  Since collective operators correspond to precisely the rotations considered when defining the irrep structure of the rotation group, they can all be written as 
\begin{equation} \label{collective:Equation::CollectiveOperator}
	\hat{C} = \sum_{J}\sum_{M,M'} c_{J,M,M'}\Eketbra{J,M}{J,M'} \ , 
\end{equation}
which cannot couple effective matrix elements with different $J$.  

However, the collective operators define a more restrictive class than an arbitrary collective process, which \emph{can} couple different $J$ blocks, so long as it does not create coherences between them.  In fact, if all operators are collective, then the symmetric collective states ($\ket{\psi_S}$) span an invariant subspace of the map.  This holds even when considering Lindblad operators that are written in terms of collective operators,
\begin{equation} \label{collective:eq:collectiveLindblad}
	\mathcal{L}[\hat{S}]\hat{\rho} = 
				\biggl[\hat{S}\hat{\rho}\hat{S}_{\dagger} -
				\frac{1}{2}\hat{S}^{\dagger}\hat{S}\hat{\rho}
				-\frac{1}{2}\hat{\rho}\hat{S}^{\dagger}\hat{S}\biggr]\ ,
\end{equation} 
where $\hat{S} = \sum_n \hat{s}^{(n)}$.

In the following section, I demonstrate that a process of the form
\begin{equation} \label{collective:Equation::GeneralSymmetric}
	f^{J,M,M'} = \sum_{n = 1}^{N} \hat{s}^{(n)} \Eketbra{J,M}{J,M'} (\hat{t}^{(n)})^{\dagger} \ ,
\end{equation}
which cannot be written solely in terms of collective operators, is nonetheless a collective process.  Moreover, if we expand the operators in the spherical Pauli basis via $\hat{s} = \vec{s}\cdot\vec{\sigma}$ and $\hat{t}^{\dagger} = \vec{t}\cdot\vec{\sigma}^{\dagger}$, we find 
\begin{equation}
	f^{J,M,M'} = \vec{s}\cdot\mathbf{g}(J,M,M',N)\cdot\vec{t}
\end{equation}
with the tensor $\mathbf{g}(J,M,M',N)$ defined as
\begin{equation} \label{collective:Equation::TensorDefinition}
   \mathbf{g}_{qr}(J,M,M',N) = \sum_{n=1}^{N} \hat{\sigma}_{q}^{(n)}\Eketbra{J,M}{J,M'}(\hat{\sigma}_r^{(n)})^{\dagger} .
\end{equation}
The tensor is written as a function of $N$ to coincide with the notation in the following section.

Before deriving a closed form expression for $\mathbf{g}(J,M,M',N)$, we would like to relate it to modeling \emph{symmetric decoherence processes},  which take the form
\begin{equation} \label{collective:Equation::SymmetricLindbladForm}
		\mathcal{L}[\hat{s}]\hat{\rho} = \sum_{n=1}^{N}\biggr[ \hat{s}^{(n)}\hat{\rho}(\hat{s}^{(n)})^{\dagger} - 
					\frac{1}{2} (\hat{s}^{(n)})^{\dagger}\hat{s}^{(n)} \hat{\rho} -
					\frac{1}{2} \hat{\rho} (\hat{s}^{(n)})^{\dagger}\hat{s}^{(n)} \biggr] \,.
\end{equation}
In order to relate $\mathcal{L}[\hat{s}]$ to Eqs. \ref{collective:Equation::GeneralSymmetric} and \ref{collective:Equation::TensorDefinition}, set $\hat{t}=\hat{s}$ and expand the single spin operator $\hat{s}$ in the spherical Pauli basis
\begin{equation}
	\hat{s} = s_I\hat{I} + \sum_{q} s_q\hat{\sigma}_q = s_{I}\hat{I} + s_{+}\hat{\sigma}_{+} + s_{-}\hat{\sigma}_{-} + s_{z}\hat{\sigma}_{z} 
\end{equation}
with the convention $\hbar = 1$, 
$\hat{\sigma}_{+} = \bigl(\begin{smallmatrix}
		0 & 1\\
		0 & 0
	   \end{smallmatrix}\bigr)$,
$\hat{\sigma}_{-} = \bigl(\begin{smallmatrix}
		0 & 0\\
		1 & 0
	   \end{smallmatrix}\bigr)$	and
$\hat{\sigma}_{z} = \bigl(\begin{smallmatrix}
		1 & 0\\
		0 & -1
	   \end{smallmatrix}\bigr)$.   The symmetric Lindblad of Eq. \ref{collective:Equation::SymmetricLindbladForm} can be expanded as
\begin{equation} \label{collective:Equation::SymmetricLindblad}
	\begin{split}
		\mathcal{L}[\hat{s}]\hat{\rho} &= \sum_{n=1}^{N}\biggr[ \hat{s}^{(n)}\hat{\rho}(\hat{s}^{(n)})^{\dagger} - 
					\frac{1}{2} (\hat{s}^{(n)})^{\dagger}\hat{s}^{(n)} \hat{\rho} -
					\frac{1}{2} \hat{\rho} (\hat{s}^{(n)})^{\dagger}\hat{s}^{(n)} \biggr]\\
					& = \sum_{n=1}^N \biggl[\hat{s}^{(n)}\hat{\rho}(\hat{s}^{(n)})^{\dagger}\biggr] -
							  \frac{1}{2} \hat{S}_N \hat{\rho}  - \frac{1}{2} \hat{\rho} \hat{S}_N
	\end{split}
\end{equation}
with the collective operator $\hat{S}_N$ given by
\begin{equation}
	\begin{split}
		\hat{S}_N &= \sum_{n=1}^N (\hat{s}^{(n)})^{\dagger}\hat{s}^{(n)} \\
		          &= \bigl(\frac{1}{2} \modsq{s_{-}} + 
						   \frac{1}{2} \modsq{s_{+}} +
						   \modsq{s_{I}}+ \modsq{s_z}\bigr)N\hat{I}\\
		          &+ \bigl( s_{-}^*s_I - s_{-}^{*}s_z + s_I^{*}s_{+} + s_z^{*}s_{+}\bigr)\hat{J}_{+}\\						
		 		  &+ \bigl( s_I^{*}s_{-} + s_{+}^{*}s_I + s_{+}^{*}s_Z - s_{z}^{*}s_{-} \bigr)\hat{J}_{-} \\
		          &+ \bigl( \frac{1}{2}\modsq{s_{-}} 
							- \frac{1}{2}\modsq{s_{+}} 
							+ s_I^{*}s_z + s_z^{*}s_I\bigr)\hat{J}_{z} \ .
	\end{split}
\end{equation} 
and  $\hat{J}_q = \sum_{n=1}^N \hat{\sigma}^{(n)}_q$ a collective spin operator.

In this form, it is clear that only the first term of the symmetric Lindbladian is not written using collective operators.  In fact, if we again expand $\hat{s}^{(n)}$ in the spherical basis, we observe that the only terms which involve non-collective operators are those which do not involve the identity operator, 
\begin{equation} \label{collective:Equation::NonCollectiveLindbladTerms}
	\begin{split}
		\sum_{n=1}^N \biggl[\hat{s}^{(n)}\hat{\rho}(\hat{s}^{(n)})^{\dagger}\biggr] &= \modsq{s_I}N\hat{\rho} + 
		\sum_{q}\bigl(s_qs_I^{*} \hat{J}_q\hat{\rho} + s_Is_q^{*}\hat{\rho} \hat{J}_q^{\dagger}\bigr) \\
		& + \sum_{n=1}^N\biggl[\sum_{q,r}s_qs_r^{*} \hat{\sigma}^{(n)}_q \hat{\rho} (\hat{\sigma}^{(n)}_q)^{\dagger} \biggr]
	\end{split}
\end{equation}
The last term here is precisely the tensor evaluation of $\vec{s}\cdot\mathbf{g}(J,M,M',N)\cdot\vec{s}^{*}$.  We now proceed to give an identity for the tensor elements.  

\section{Identity} \label{collective:Section::MainResult}
\begin{id} \label{collective:Id::MainResult}
	Given a collective density matrix element for $N$ spin-1/2 particles, $\Eketbra{J,M}{J,M'}$, we have
	\begin{align}
		&\mathbf{g}_{qr}(J,M,M',N)\\
		=&\sum_{n=1}^N \hat{\sigma}_{q}^{(n)}\Eketbra{J,M}{J,M'}(\hat{\sigma}_{r}^{(n)})^{\dagger}
				\label{collective:Equation::BasicSum}\\
		=& \frac{1}{2J}\biggl[1+\frac{\alpha^{J+1}_N}{d^J_N}\frac{2J+1}{J+1}\biggr]
		A_q^{J,M}\Eketbra{J,M_q}{J,M'_r} A_r^{J,M'} \nonumber \\
		+&\frac{\alpha^J_N}{d^J_N2J}B_q^{J,M}\Eketbra{J-1,M_q}{J-1,M'_r} B_r^{J,M'} \nonumber\\
		+&\frac{\alpha^{J+1}_N}{d^{J}_N2(J+1)}D_q^{J,M}
		  \Eketbra{J+1,M_q}{J+1,M'_r}D_r^{J,M'} \label{collective:Equation::generalFormula}
	\end{align}
	where $q,r \in \{+,-,z\}$, $M_+ = M + 1$, $M_- = M - 1$ and $M_z = M$,
	\begin{equation}
		\alpha^J_N = \sum_{J'=J}^{\frac{N}{2}}d^{J'}_N = \frac{N!}{\left(\frac{N}{2}-J\right)! \left(\frac{N}{2}+J\right)!}
	\end{equation}

	and
	\begin{subequations}
		\begin{align}
			A_+^{J,M} &= \sqrt{(J-M)(J+M+1)}\\
			A_-^{J,M} &= \sqrt{(J+M)(J-M+1)}\\
			A_z^{J,M} &= M
		\end{align}
	\end{subequations}
	
	and
	\begin{subequations}
		\begin{align}
			B_+^{J,M} &=\sqrt{(J-M)(J-M-1)}\\
			B_-^{J,M} &=-\sqrt{(J+M)(J+M-1)} \\	
			B_z^{J,M} &= \sqrt{(J+M)(J-M)} 
		\end{align}
	\end{subequations}
	
	and lastly
	\begin{subequations}
		\begin{align}
			D_+^{J,M} &= -\sqrt{(J+M+1)(J+M+2)} \\
			D_-^{J,M} &=\sqrt{(J-M+1)(J-M+2)} \\	
			D_z^{J,M} &= \sqrt{(J+M+1)(J-M+1)} \ .
		\end{align}
	\end{subequations}
	
\end{id}
\noindent Note that $\alpha^J_N$ and $d^J_N$ are zero if $J$ is negative or $J = N/2$, ensuring that only valid density matrix elements are involved. 

In the following subsections, we prove Identity \ref{collective:Id::MainResult} inductively. The motivation for the inductive proof comes from the simple recursive structure of adding spin-1/2 particles.  As seen in Fig. \ref{collective:fig:MomentumTree}, the $d^J_N$ irreps which correspond to a total spin $J$ particle composed of $N$ spin-1/2 particles can be split into two groups, depending on how angular momentum was added to reach them.  By expressing the $N$ particle states in terms of bipartite states of a single spin-1/2 particle and a spin-$(N-1)$ particle, we can then evaluate the dynamics independently on either half by assuming Identity \ref{collective:Id::MainResult} holds.  Returning the resulting state to the $N$ particle basis should then confirm the Identity.  By inspection, the base case of $N=1$ holds, as the $A_q^{J,M}$ terms reduce to the single spin-1/2 matrix elements.  We now proceed to the inductive case.
\begin{figure}[h]
	\centering
	\includegraphics{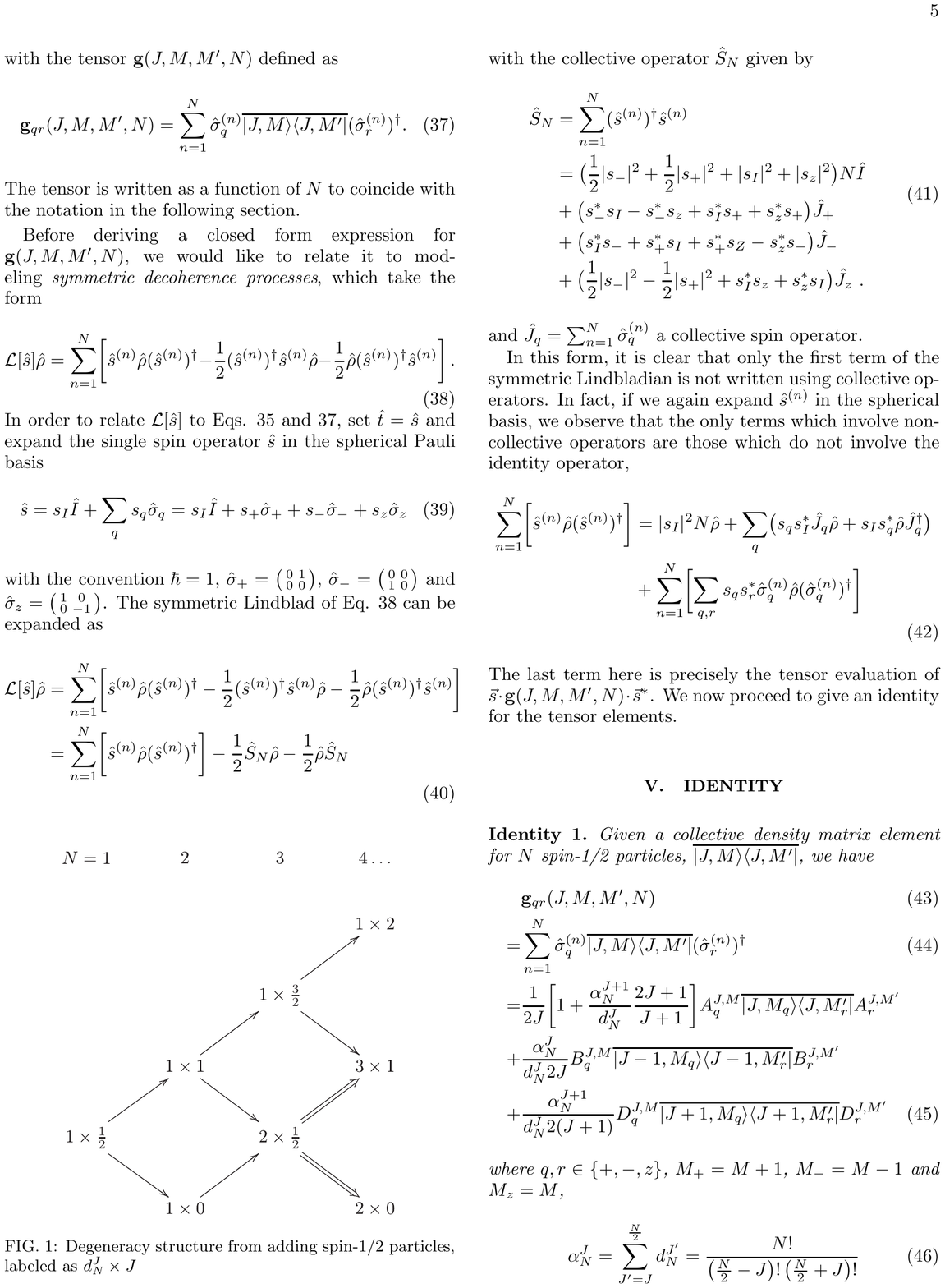}
	\caption[Degeneracy structure from adding spin-1/2 particles]{Degeneracy structure from adding spin-1/2 particles, labeled as $d^J_N \times J$} \label{collective:fig:MomentumTree} 
\end{figure}
\subsection{Recursive state structure}
In order to apply the inductive hypothesis, we need to express an $N$ particle state in terms of $N-1$ particle states.  This recursive structure is best seen by examining Fig. \ref{collective:fig:MomentumTree}, which illustrates the branching structure for adding spin-1/2 particles.  For example, the three-fold degenerate $N=4$ spin-1 irreps arise from two different spin additions---adding a single spin-1/2 particle to the non-degenerate $J=\frac{3}{2}$,$N=3$ irrep and adding to the 2-fold degenerate $J=\frac{1}{2}$, $N=3$ irreps. Since we are always adding a spin-1/2 particle, the tree is at most binary. This allows us to recursively decompose the degenerate irreps for a given $J$ in terms of adding a single spin-1/2 particle to the two related $N-1$ degenerate irreps.  

Recall that for the collective states, we defined effective density matrix elements which group degenerate irreps (Eq. \ref{collective:Equation::EffectiveDensityMatrixElement}).  In order to make the relationship between states of different $N$ clear, in this section we will add the index $N$ to all effective density matrix elements---$\Eketbra{J,M,N}{J,M',N}$.  Similarly, when expressing the collective state in the irrep basis, we will also use kets with the index $N$, i.e. $\ket{J,M,N,i}$.  Here, the $N$ and $i$ indices indicate the state is from $i$-th degenerate total spin-$J$ irrep that comes from adding $N$ spin-1/2 particles. So that we can leverage the binary branching structure seen in Fig. \ref{collective:fig:MomentumTree}, we also need to relate the $N$ particle irrep states to the $N-1$ particle irrep states. Accordingly, we define $\ket{J,M;\frac{1}{2},J\pm\frac{1}{2},N-1,i_1}$, where the last four entries indicate that the overall $N$ spin state can be viewed as combining a single spin-1/2 particle with a spin $J\pm \frac{1}{2}$ particle.  The spin $J\pm \frac{1}{2}$ particle is from the $i_1$-st such irrep for $N-1$ spin-1/2 particles.  With these definitions, we can now relate the $N$ particle states to the $N-1$ particle states by explicitly tensoring out a single spin-1/2 particle:
	\begin{align} 
			 &\Eketbra{J,M,N}{J,M',N} \nonumber\\
			 =& \frac{1}{d^J_N}\sum_{i=1}^{d^J_N}\ketbra{J,M,N,i}{J,M',N,i}\\
			 =&\frac{1}{d^J_N}\sum_{i_1=1}^{d^{J+\frac{1}{2}}_{N-1}}
									\ketbra{J,M;\frac{1}{2},J+\frac{1}{2},N-1,i_1}
										   {J,M';\frac{1}{2},J+\frac{1}{2}, N-1,i_1} \nonumber\\
			+ & \frac{1}{d^J_N}\sum_{i_2=1}^{d^{J-\frac{1}{2}}_{N-1}}
			\ketbra{J,M;\frac{1}{2},J-\frac{1}{2},N-1,i_2}
			    {J,M';\frac{1}{2},J-\frac{1}{2},N-1,i_2}\displaybreak[1]\\
			= & \frac{d^{J+\frac{1}{2}}_{N-1}}{d^J_N}
			 \sum_{m_1,m_1'}
			  \cg{\frac{1}{2},m_1}{J+\frac{1}{2},M-m_1}{J,M}
			\ketbra{\frac{1}{2},m_1}{\frac{1}{2},m_1'} \nonumber\\
			 & \qquad\qquad \otimes
			\Eketbra{J+\frac{1}{2},M-m_1,N-1}{J+\frac{1}{2},M'-m_1',N-1}
			 \cg{\frac{1}{2},m_1'}{J+\frac{1}{2},M'-m_1'}{J,M'} \nonumber \\ 
			+ & \frac{d^{J-\frac{1}{2}}_{n-1}}{d^J_N}
			    \sum_{m_2,m_2'}
			 \cg{\frac{1}{2},m_2}{J-\frac{1}{2},M-m_2}{J,M}
		    \ketbra{\frac{1}{2},m_2}{\frac{1}{2},m_2'} \nonumber\\
		    & \qquad\qquad  \otimes
		    \Eketbra{J-\frac{1}{2},M-m_2,N-1}{J-\frac{1}{2},M'-m_2',N-1}
		     \cg{\frac{1}{2},m_2'}{J-\frac{1}{2},M'-m_2'}{J,M'} \label{collective:eq:recursiveStateDefinition}
	\end{align}

with Clebsch-Gordan coefficients $\cg{j_1,m_1}{j_2,m_2}{J,M} = \braket{J,M; j_1, j_2}{j_1,m_1;j_2,m_2}$ and the $m_i,m_i'$ sums over single spin projection values $\pm \frac{1}{2}$.  In reaching Eq. \ref{collective:eq:recursiveStateDefinition}, we made use of the definition of the effective density matrix element for $N-1$ spins given in Eq. \ref{collective:Equation::EffectiveDensityMatrixElement}.  With this recursive state definition, we can now start the inductive step of the proof.
\subsection{Applying inductive hypothesis}
In order to prove the Identity, we must be able to apply the inductive hypothesis to  Eq. \ref{collective:Equation::BasicSum}.  Ignoring the Clesbsch-Gordan coefficients for the moment, consider an arbitrary term from Eq. \ref{collective:eq:recursiveStateDefinition}.  The dynamics distribute as

		\begin{align}
				& \sum_{n=1}^N  \sigma_{q}^{(n)} 
				\biggl[ 
				\ketbra{\frac{1}{2},m_i}
				       {\frac{1}{2},m_i'} \otimes \Eketbra{J\pm \frac{1}{2},M-m_i,N-1}
				      {J \pm \frac{1}{2},M-m_i',N-1}\biggr]\sigma_{r}^{(n)} \nonumber \displaybreak[1]\\
				& = \mathbf{g}_{qr}(\frac{1}{2}, m_i,m_i',1)  \otimes \Eketbra{J \pm \frac{1}{2},M-m_i,N-1}
				       			{J \pm \frac{1}{2},M-m_i',N-1}\\
				& + \ketbra{\frac{1}{2},m_i}
				   		 {\frac{1}{2},m_i'}\otimes \mathbf{g}_{qr}(J  \pm \frac{1}{2},M-m_i,M'-m_i',N-1) \label{collective:eq:recursiveDynamics} \ .
		\end{align}

By extension, all terms in Eq. \ref{collective:eq:recursiveStateDefinition} split the dynamics in this manner, which allows us to apply the inductive hypothesis to evaluate $\mathbf{g}_{qr}(\frac{1}{2}, m_i,m_i',1)$ and $\mathbf{g}_{qr}(J \pm \frac{1}{2},M-m_i,M'-m_i',N-1)$.  This means evaluating the $\mathbf{g}_{qr}$ terms according to the hypothesis in Eq. \ref{collective:Equation::generalFormula}, after which we rewrite the bipartite states in the $N$ spin basis. 

We have the $\mathbf{g}_{qr}(\frac{1}{2}, m_1,m_1',1)$ terms
\begin{equation}
	\begin{split}
		\frac{1}{d^J_N} \sum_{i_1=1}^{d^{J+\frac{1}{2}}_{N-1}} & \sum_{J_1=J}^{J+1} \sum_{m_1}
			\biggl[ 
			A_q^{\frac{1}{2},m_1}
			\cg{\frac{1}{2}, {m_1}_q}
			   {J + \frac{1}{2}, M-m_1}
			   {J_1, M_q}
			\cg{\frac{1}{2},m_1}
			   {J+\frac{1}{2},M-m_1}
			   {J,M}
			\ket{J_1, M_q; \frac{1}{2}, J + \frac{1}{2}, N-1, i_1}\biggr] \\
			\times &\sum_{J'_1=J}^{J+1}\sum_{m_1'}
			\biggl[ \bra{J'_1, M'_r; \frac{1}{2}, J + \frac{1}{2}, N-1, i_1}
			\cg{\frac{1}{2},m'_1}
			   {J+\frac{1}{2},M'-m'_1}
			   {J,M'}		
			\cg{\frac{1}{2}, {m'_1}_r}
			   {J + \frac{1}{2}, M'-m'_1}
			   {J'_1, M'_r}
			A_r^{\frac{1}{2},m'_1} \biggr]  \label{collective:Equation::Unsimplified::Jplus12::Basic}
   \end{split}
\end{equation}
and the $\mathbf{g}_{qr}(\frac{1}{2}, m_2,m_2',1)$ terms
\begin{equation}
	\begin{split}
		\frac{1}{d^J_N}& \sum_{i_2=1}^{d^{J-\frac{1}{2}}_{N-1}}\sum_{J_2=J-1}^{J} \sum_{m_2}
				\biggl[ 
				A_q^{\frac{1}{2},m_2}
				\cg{\frac{1}{2}, {m_2}_q}
				   {J - \frac{1}{2}, M-m_2}
				   {J_2, M_q}
				\cg{\frac{1}{2},m_2}
				   {J-\frac{1}{2},M-m_2}
				   {J,M}
				\ket{J_2, M_q; \frac{1}{2}, J - \frac{1}{2}, N-1, i_2}\biggr]\\
		\times & \sum_{J'_2=J-1}^{J}\sum_{m_2'}
		 \biggl[\bra{J'_2, M'_r; \frac{1}{2}, J - \frac{1}{2}, N-1, i_2}
			\cg{\frac{1}{2},m'_2}
			   {J-\frac{1}{2},M'-m'_2}
			   {J,M'}
			\cg{\frac{1}{2}, {m'_2}_r}
			   {J - \frac{1}{2}, M'-m'_2}
			   {J'_2, M'_r}
			A_r^{\frac{1}{2},m_2'}
		\biggr] \ . \label{collective:Equation::Unsimplified::Jminus12::Basic}
	\end{split}
\end{equation}
The $\mathbf{g}_{qr}(J + \frac{1}{2},M-m_1,M'-m_1',N-1)$ terms are
   \begin{multline} \label{collective:Equation::Unsimplified::Jplus12::A}
   			\shoveleft{\frac{1}{d^J_N(2J+1)}
   			\biggl[1 + 
			\frac{\alpha^{J+\frac{3}{2}}_{N-1}}
			     {d^{J+\frac{1}{2}}_{N-1}}
			\frac{2J+2}
			     {J+\frac{3}{2}}\biggr] \sum_{i_1 = 1}^{d^{J+\frac{1}{2}}_{N-1}}
			\sum_{J_1=J}^{J+1}
			\sum_{m_1}
					\biggl[
					A_q^{J+\frac{1}{2},M-m_1}
					\cg{\frac{1}{2},  m_1}
					   {J + \frac{1}{2}, M_q-m_1}
					   {J_1, M_q}\biggr.}\\
			\shoveright{\biggl.
					\cg{\frac{1}{2},m_1}
					   {J+\frac{1}{2}, M-m_1}
					   {J,M}
					\ket{J_1, M_q; \frac{1}{2}, J+\frac{1}{2}, N-1,i_1}
					\biggr]}\\
			\shoveleft{\times  \sum_{J'_1=J}^{J+1}\sum_{m_1'}
					\biggl[
					\bra{J'_1, M_r'; \frac{1}{2}, J+\frac{1}{2}, N-1,i_1}
					\cg{\frac{1}{2},m'_1}
					   {J+\frac{1}{2}, M'-m'_1}
					   {J,M'}\biggr.}\\
			\biggl.
					\cg{\frac{1}{2},  m'_1}
					   {J + \frac{1}{2}, M'_r-m'_1}
					   {J'_1, M'_r}
					A_r^{J+\frac{1}{2},M'-m'_1}
					\biggr]  
	\end{multline}
     \begin{multline} \label{collective:Equation::Unsimplified::Jplus12::B}
	       +\frac{\alpha^{J+\frac{1}{2}}_{N-1}}
				{d^J_N d^{J-\frac{1}{2}}_{N-1}2(J+\frac{1}{2})} 
			\sum_{i_1 = 1}^{d^{J-\frac{1}{2}}_{N-1}}
			\sum_{J_1=J-1}^{J}
			\sum_{m_1}
				\biggl[
				B_q^{J+\frac{1}{2},M-m_1}
				\cg{\frac{1}{2},  m_1}
				   {J - \frac{1}{2}, M_q - m_1}
				   {J_1, M_q}\biggr.\\
		    \shoveright{\biggl.
				\cg{\frac{1}{2},m_1}
				   {J+\frac{1}{2}, M-m_1}
				   {J,M}
				\ket{J_1, M_q; \frac{1}{2}, J-\frac{1}{2}, N-1,i_1}
				\biggr] }\\
			\times  \sum_{J'_1=J-1}^{J}\sum_{m_1'}
				     \biggl[
					\ket{J'_1, M'_r; \frac{1}{2}, J-\frac{1}{2}, N-1,i_1}					
					\cg{\frac{1}{2},m'_1}
					   {J+\frac{1}{2}, M'-m'_1}
					   {J,M'}\biggr.\\
					\biggl.\cg{\frac{1}{2},  m'_1}
					   {J - \frac{1}{2}, M'_r - m'_1}
					   {J'_1, M'_r}
					B_r^{J+\frac{1}{2},M'-m'_1}
					\biggr] 
		\end{multline}
	    \begin{multline} \label{collective:Equation::Unsimplified::Jplus12::C}
			+ \frac{\alpha^{J+\frac{3}{2}}_{N-1}}
					{d^J_N d^{J+\frac{3}{2}}_{N-1}2(J+\frac{3}{2})}
		      \sum_{i_1 = 1}^{d^{J+\frac{3}{2}}_{N-1}}
			  \sum_{J_1=J+1}^{J+2}
			  \sum_{m_1}
				\biggl[
				D_q^{J+\frac{1}{2},M-m_1}
				\cg{\frac{1}{2},  m_1}
				   {J + \frac{3}{2}, M_q - m_1}
				   {J_1, M_q}\biggr.\\
				\shoveright{\biggl.
				\cg{\frac{1}{2},m_1}
				   {J+\frac{1}{2}, M-m_1}
				   {J,M}
				\ket{J_1, M_q; \frac{1}{2}, J+\frac{3}{2}, N-1,i_1}
				\biggr]} \\
			\times  \sum_{J'_1=J+1}^{J+2}\sum_{m_1'}
					\biggl[
					\ket{J'_1, M'_r; \frac{1}{2}, J+\frac{3}{2}, N-1,i_1}					
					\cg{\frac{1}{2},m'_1}
					   {J+\frac{1}{2}, M'-m'_1}
					   {J,M'}\biggr.\\
					\biggl.
					\cg{\frac{1}{2},  m'_1}
					   {J + \frac{3}{2}, M'_r - m'_1}
					   {J'_1, M'_r}
					D_r^{J+\frac{1}{2},M'-m'_1}
					\biggr] 
   		\end{multline}
and lastly, the $\mathbf{g}_{qr}(J-\frac{1}{2},M - m_2, M' - m_2',N-1)$ terms are
	\begin{multline}  \label{collective:Equation::Unsimplified::Jminus12::A}
		\frac{1}{d^J_N2(J-\frac{1}{2})}
		\bigl[ 1 +
		       \frac{\alpha^{J + \frac{1}{2}}_{N-1}}
		            {d^{J-\frac{1}{2}}_{N-1}}
		       \frac{2J}
		            { J + \frac{1}{2}}\bigr]
		\sum_{i_2 = 1}^{d^{J-\frac{1}{2}}_{N-1}}
		\sum_{J_2=J-1}^{J}
		\sum_{m_2}
			\biggl[
			A_q^{J-\frac{1}{2},M-m_2}
			\cg{\frac{1}{2},  m_2}
			   {J - \frac{1}{2}, M_q - m_2}
			   {J_2, M_q}\biggr.\\
		\shoveright{	\biggl.
			\cg{\frac{1}{2},m_2}
			   {J-\frac{1}{2}, M-m_2}
			   {J,M}
			\ket{J_2, M_q; \frac{1}{2}, J-\frac{1}{2}, N-1,i_2}
			\biggr]} \\
	     \times \sum_{J_2=J-1}^{J}	\sum_{m'_2}	\biggl[
				\ket{J'_2, M'_r; \frac{1}{2}, J-\frac{1}{2}, N-1,i_2}
				\cg{\frac{1}{2},m'_2}
				   {J-\frac{1}{2}, M'-m'_2}
				   {J,M'}\biggr.\\
				\biggl.\cg{\frac{1}{2},  m'_2}
				   {J - \frac{1}{2}, M'_r - m'_2}
				   {J'_2, M'_r}
				A_r^{J-\frac{1}{2},M'-m'_2}
				\biggr]
	    \end{multline}
	    \begin{multline} \label{collective:Equation::Unsimplified::Jminus12::B}
		+ \frac{\alpha^{J-\frac{1}{2}}_{N-1}}
					{d^J_N d^{J-\frac{3}{2}}_{N-1}2(J - \frac{1}{2})}
		\sum_{i_2=1}^{d^{J-\frac{3}{2}}_{N-1}}
			\sum_{J_2=J-2}^{J-1}
		    \sum_{m_2}
			\biggl[
			B_q^{J-\frac{1}{2},M-m_2}
			\cg{\frac{1}{2},  m_2}
			   {J - \frac{3}{2}, M_q - m_2}
			   {J_2, M_q}\biggr.\\
		   \shoveright{\biggl.
			\cg{\frac{1}{2},m_2}
			   {J-\frac{1}{2}, M-m_2}
			   {J,M}
			\ket{J_2, M_q; \frac{1}{2}, J-\frac{3}{2}, N-1,i_2}
			\biggr]} \\
		  \times \sum_{J_2=J-2}^{J-1}\sum_{m_2'}
			\biggl[
			\ket{J'_2, M'_r; \frac{1}{2}, J-\frac{3}{2}, N-1,i_2}			
			\cg{\frac{1}{2},m'_2}
			   {J-\frac{1}{2}, M'-m'_2}
			   {J,M'}\biggr.\\
			 \biggl.
			\cg{\frac{1}{2},  m'_2}
			   {J - \frac{3}{2}, M'_r - m'_2}
			   {J'_2, M'_r}
			B_r^{J-\frac{1}{2},M'-m'_2}
			\biggr] 
	    \end{multline}
		\begin{multline}\label{collective:Equation::Unsimplified::Jminus12::C}
		+ \frac{\alpha^{J+\frac{1}{2}}_{N-1}}
					{d^J_N d^{J+\frac{1}{2}}_{N-1}(2J+1)}
			\sum_{i_2=1}^{d^{J + \frac{1}{2}}_{N-1}}
			\sum_{J_2=J}^{J+1}
			 \sum_{m_2}
			\biggl[
			D_q^{J-\frac{1}{2},M-m_2}
			\cg{\frac{1}{2},  m_2}
			   {J + \frac{1}{2}, M_q - m_2}
			   {J_2, M_q}\biggr.\\
             \shoveright{\biggl.
			\cg{\frac{1}{2},m_2}
			   {J-\frac{1}{2}, M-m_2}
			   {J,M}
			\ket{J_2, M_q; \frac{1}{2}, J + \frac{1}{2}, N-1,i_2}
			\biggr]}\\
		\times  \sum_{J'_2=J}^{J+1}	\sum_{m'_2}
			\biggl[
			\ket{J'_2, M'_r; \frac{1}{2}, J + \frac{1}{2}, N-1,i_2}
			\cg{\frac{1}{2},m'_2}
			   {J-\frac{1}{2}, M'-m'_2}
			   {J,M'}\biggr.\\
			 \biggl.
			\cg{\frac{1}{2},  m'_2}
			   {J + \frac{1}{2}, M'_r - m'_2}
			   {J'_2, M'_r}
			D_r^{J-\frac{1}{2},M-m'_2} 					
			\biggr] \ . 
	\end{multline}

\subsection{Evaluate sums} \label{collective:SubSubSec::EvaluateSums}

We are now tasked with showing that Eqs. \ref{collective:Equation::Unsimplified::Jplus12::Basic}-\ref{collective:Equation::Unsimplified::Jminus12::C} sum to $\mathbf{g}_{qr}(J,M,M',N)$ as written in Eq. \ref{collective:Equation::generalFormula}.  Before doing so, we observe that the $J_i,m_i$ and $J'_i,m'_i$ sums factor in all the equations above.  Moreover, if one replaces primed quantities with unprimed ones, the Clebsch-Gordan and $A,B,D$ coefficients of the kets in a given $J_i,m_i$ sum are identical to those of the bras in the related $J'_i,m'_i$ sum.  Therefore, we focus on simplifying the unprimed sums and then apply those results to the primed sums in order to simplify Eqs. \ref{collective:Equation::Unsimplified::Jplus12::Basic}-\ref{collective:Equation::Unsimplified::Jminus12::C}.  In Appendix \ref{collective:Appendix::SimplifySums}, we explicitly calculate two representative sums from these equations.  The calculations involve manipulating products of Clebsch-Gordan and $A,B,D$ coefficients.  Although tedious, the interested and pertinacious reader should have no trouble evaluating them for all relevant sums, finding in particular that the $J\pm 2$ terms vanish.  We forego detailing all those manipulations here and simply use the results in both the primed and unprimed terms of the equations above, which then simplify	Eq. \ref{collective:Equation::Unsimplified::Jplus12::Basic} to
	\begin{equation}  \label{collective:Equation::Simplified::JJplus1::First}
		\begin{split}
			&\frac{1}{d^J_N(2J+2)^2}\times \\
			\sum_{i_1 = 1}^{d^{J+\frac{1}{2}}_{N-1}}&
			D_q^{J,M}\ketbra{J+1, M_q; \frac{1}{2}, J + \frac{1}{2},N-1, i_1}
			                {J+1, M'_r;\frac{1}{2}, J + \frac{1}{2}, N-1, i_1}D_r^{J,M'}\\
			- & A_q^{J,M}\ketbra{J, M_q; \frac{1}{2}, J + \frac{1}{2}, N-1, i_1}
			                    {J+1, M'_r; \frac{1}{2}, J + \frac{1}{2}, N-1, i_1}D_r^{J,M'}\\
			- & D_q^{J,M}\ketbra{J+1, M_q; \frac{1}{2}, J + \frac{1}{2}, N-1, i_1}
								{J, M'_r; \frac{1}{2}, J + \frac{1}{2}, N-1, i_1}A_r^{J,M'}\\
			+ & A_q^{J,M}\ketbra{J, M_q; \frac{1}{2}, J + \frac{1}{2}, N-1, i_1}
								{J, M'_r; \frac{1}{2}, J + \frac{1}{2}, N-1, i_1}A_r^{J,M'}\ ,
		\end{split}
	\end{equation}	
	Eq. \ref{collective:Equation::Unsimplified::Jminus12::Basic} to 
	\begin{equation} \label{collective:Equation::Simplified::JJminus1::First}
		\begin{split}
			\frac{1}{d^J_N 4J^2}\sum_{i_1=1}^{d^{J-\frac{1}{2}}_{N-1}} &
			  B_q^{J,M} \ketbra{J-1,M_q; \frac{1}{2}, J - \frac{1}{2}; N-1, i_1}
				              {J-1,M_r'; \frac{1}{2}, J - \frac{1}{2}; N-1, i_1} B_r^{J,M'}\\
			 +&A_q^{J,M} \ketbra{J,M_q; \frac{1}{2}, J - \frac{1}{2}; N-1, i_1}
					              {J-1,M_r'; \frac{1}{2}, J - \frac{1}{2}; N-1, i_1} B_r^{J,M'}\\
			 +&B_q^{J,M} \ketbra{J - 1,M_q; \frac{1}{2}, J - \frac{1}{2}; N-1, i_1}
					              {J,M_r'; \frac{1}{2}, J - \frac{1}{2}; N-1, i_1} A_r^{J,M'}\\					
			 +&A_q^{J,M} \ketbra{J,M_q; \frac{1}{2}, J - \frac{1}{2}; N-1, i_1}
					              {J,M_r'; \frac{1}{2}, J - \frac{1}{2}; N-1, i_1} A_r^{J,M'} \ ,
		\end{split}
	\end{equation}
	Eq. \ref{collective:Equation::Unsimplified::Jplus12::A} to 
	\begin{equation}  \label{collective:Equation::Simplified::JJplus1::Second}
		\begin{split}
			&\frac{1}{d^J_N(2J+1)} 
			\bigl[1 + 
			\frac{\alpha^{J+\frac{3}{2}}_{N-1}}
			     {d^{J+\frac{1}{2}}_{N-1}} 
			\frac{2J+2}
			     {J+\frac{3}{2}} \bigr] \sum_{i_1=1}^{d^{J+\frac{1}{2}}_{N-1}}\\
			&\frac{1}{(2J+2)^2}D_q^{J,M} \ketbra{J+1, M_q;\frac{1}{2}, J + \frac{1}{2}, N-1, i_1}
												{J+1, M'_r;\frac{1}{2}, J + \frac{1}{2}, N-1, i_1}D_r^{J,M'} \\
		   -&\frac{2(J + \frac{3}{2})}{(2J+2)^2} A_q^{J,M} \ketbra{J, M_q;\frac{1}{2}, J + \frac{1}{2}, N-1, i_1}
										{J+1, M'_r;\frac{1}{2}, J + \frac{1}{2}, N-1, i_1}D_r^{J,M'} \\
		   -&\frac{2(J + \frac{3}{2})}{(2J+2)^2} D_q^{J,M}\ketbra{J + 1, M_q;\frac{1}{2}, J + \frac{1}{2}, N-1, i_1}
												{J, M'_r;\frac{1}{2}, J + \frac{1}{2}, N-1, i_1}A_r^{J,M'} \\
		   +&\frac{(J + \frac{3}{2})^2}{(J+1)^2} A_q^{J,M}\ketbra{J, M_q;\frac{1}{2}, J + \frac{1}{2}, N-1, i_1}
													 {J, M'_r;\frac{1}{2}, J + \frac{1}{2}, N-1, i_1}A_r^{J,M'} \ ,
		\end{split}
	\end{equation}
	Eq. \ref{collective:Equation::Unsimplified::Jplus12::B} to
	\begin{equation} \label{collective:Equation::Simplified::JJminus1::Second}
		\begin{split}
			&\frac{\alpha^{J+\frac{1}{2}}_{N-1}}
			{d^J_Nd^{J-\frac{1}{2}}_{N-1} 2J(J+\frac{1}{2})} \times\\
			\sum_{i_1=1}^{d^{J-\frac{1}{2}}_{N-1}} &
			   (J+1) B_q^{J,M} \ketbra{J-1,M_q; \frac{1}{2}, J - \frac{1}{2}; N-1, i_1}
				              {J-1,M_r'; \frac{1}{2}, J - \frac{1}{2}; N-1, i_1} B_r^{J,M'}\\
			 +&A_q^{J,M} \ketbra{J,M_q; \frac{1}{2}, J - \frac{1}{2}; N-1, i_1}
					              {J-1,M_r'; \frac{1}{2}, J - \frac{1}{2}; N-1, i_1} B_r^{J,M'}\\
			 +&B_q^{J,M} \ketbra{J - 1,M_q; \frac{1}{2}, J - \frac{1}{2}; N-1, i_1}
					              {J,M_r'; \frac{1}{2}, J - \frac{1}{2}; N-1, i_1} A_r^{J,M'}\\					
			 +&\frac{1}{ J + 1 }A_q^{J,M} \ketbra{J,M_q; \frac{1}{2}, J - \frac{1}{2}; N-1, i_1}
					              {J,M_r'; \frac{1}{2}, J - \frac{1}{2}; N-1, i_1} A_r^{J,M'} \ , 
		\end{split}
	\end{equation}
	Eq. \ref{collective:Equation::Unsimplified::Jplus12::C} to (since $J + 2$ terms vanish) 
	\begin{multline} \label{collective:Equation::Simplified::Jplus1::First}
		\frac{\alpha^{J+1}_N}{d^J_N2(J+1)}
		\frac{1}{d^{J+1}_N}
		\sum_{i_1=1}^{d^{J+\frac{3}{2}}_{N-1}} 
			D_q^{J,M}
			\ket{J+1,M_q;\frac{1}{2}, J+\frac{3}{2},N-1,i_1}\\
			      \bra{J+1,M'_r;\frac{1}{2}, J+\frac{3}{2},N-1,i_1} 
		    D_r^{J,M'} \ ,
	\end{multline}
	Eq. \ref{collective:Equation::Unsimplified::Jminus12::A} to
	\begin{equation} \label{collective:Equation::Simplified::JJminus1::Third}
		\begin{split}
			&\frac{1}{d^J_N4J^22(J-\frac{1}{2})}
			\bigl[ 1 +
			       \frac{\alpha^{J + \frac{1}{2}}_{N-1}}
			            {d^{J-\frac{1}{2}}_{N-1}}
			       \frac{2J}
			            { J + \frac{1}{2}}\bigr] \times\\
			&\sum_{i_1=1}^{d^{J-\frac{1}{2}}_{N-1}} 
			   B_q^{J,M} \ketbra{J-1,M_q; \frac{1}{2}, J - \frac{1}{2}; N-1, i_1}
				              {J-1,M_r'; \frac{1}{2}, J - \frac{1}{2}; N-1, i_1} B_r^{J,M'}\\
			 -&2(J-\frac{1}{2}) A_q^{J,M} \ketbra{J,M_q; \frac{1}{2}, J - \frac{1}{2}; N-1, i_1}
					              {J-1,M_r'; \frac{1}{2}, J - \frac{1}{2}; N-1, i_1} B_r^{J,M'}\\
			 -&2(J-\frac{1}{2})B_q^{J,M} \ketbra{J - 1,M_q; \frac{1}{2}, J - \frac{1}{2}; N-1, i_1}
					              {J,M_r'; \frac{1}{2}, J - \frac{1}{2}; N-1, i_1} A_r^{J,M'}\\					
			 +&4(J-\frac{1}{2})^2A_q^{J,M} \ketbra{J,M_q; \frac{1}{2}, J - \frac{1}{2}; N-1, i_1}
					              {J,M_r'; \frac{1}{2}, J - \frac{1}{2}; N-1, i_1} A_r^{J,M'}\ ,
		\end{split}
	\end{equation}
	Eq. \ref{collective:Equation::Unsimplified::Jminus12::B} to (since $J - 2$ terms vanish) 
	\begin{equation} \label{collective:Equation::Simplified::Jminus1::First}
		\frac{\alpha^J_N}{d^J_N2J}
		\frac{1}{d^{J-1}_N}
		\sum_{i_1=1}^{d^{J-\frac{3}{2}}_{N-1}}
		B_q^{J,M}
		\ketbra{J-1,M_q;\frac{1}{2},J-\frac{3}{2},N-1,i_1}
		{J-1,M'_r;\frac{1}{2},J-\frac{3}{2},N-1,i_1} 
		B_r^{J,M'} \ ,
	\end{equation}
	and Eq. \ref{collective:Equation::Unsimplified::Jminus12::C} to
	\begin{equation}  \label{collective:Equation::Simplified::JJplus1::Third}
		\begin{split}
			&\frac{\alpha^{J+\frac{1}{2}}_{N-1}}
						{d^J_N d^{J+\frac{1}{2}}_{N-1}(2J+1)} \times\\
		    \sum_{i_1=1}^{d^{j+\frac{1}{2}}_{N-1}} 
			&\frac{J}{J+1}D_q^{J,M}\ketbra{J+1, M_q;\frac{1}{2}, J + \frac{1}{2}, N-1, i_1}
					{J+1, M'_r;\frac{1}{2}, J + \frac{1}{2}, N-1, i_1}D_r^{J,M'} \\
		   +&\frac{1}{J+1}A_q^{J,M}\ketbra{J, M_q;\frac{1}{2}, J + \frac{1}{2}, N-1, i_1}
						{J+1, M'_r;\frac{1}{2}, J + \frac{1}{2}, N-1, i_1}D_r^{J,M'} \\			
		   +&\frac{1}{J+1}D_q^{J,M}\ketbra{J+1, M_q;\frac{1}{2}, J + \frac{1}{2}, N-1, i_1}
									{J, M'_r;\frac{1}{2}, J + \frac{1}{2}, N-1, i_1}A_r^{J,M'} \\					
		   +&\frac{1}{J(J+1)}A_q^{J,M}\ketbra{J, M_q;\frac{1}{2}, J + \frac{1}{2}, N-1, i_1}
						{J, M'_r;\frac{1}{2}, J + \frac{1}{2}, N-1, i_1}A_r^{J,M'} \ .
		\end{split}
	\end{equation}

\subsection{Recover $\mathbf{g}_{qr}(J,M,M',N)$}
We now combine the equations from the previous subsection to recover the Identity in Eq. \ref{collective:Equation::generalFormula}.  Given that density operators in the collective state representation lack coherences between different $J$ irreps, we expect $\ketbra{J\pm 1}{J}$ and $\ketbra{J}{J \pm 1}$ terms to vanish.  Since both the $\ketbra{J}{J \pm 1}$ and $\ketbra{J \pm 1}{J}$ terms have the same coefficients, we need only explicitly deal with one of the two.  Starting with $\ketbra{J+1}{J}$ coefficients from Eqs. \ref{collective:Equation::Simplified::JJplus1::First}, \ref{collective:Equation::Simplified::JJplus1::Second} and \ref{collective:Equation::Simplified::JJplus1::Third}, we find
		\begin{align} 
				\frac{1}{d^J_N}&\biggl(
				-\frac{1}{(2J+2)^2} -\frac{(2J+3)}{(2J+1)(2J+2)^2}\bigl[1 + 
				           \frac{\alpha^{J+\frac{3}{2}}_{N-1}}
				                {d^{J+\frac{1}{2}}_{N-1}}
				    \frac{2J+2}{J+\frac{3}{2}}\bigr] \nonumber\\
				&+\frac{1}{(J+1)(2J+1)}
				 \frac{\alpha^{J+\frac{1}{2}}_{N-1}}
				      {d^{J+\frac{1}{2}}_{N-1}}\biggr)\nonumber\displaybreak[1]\\
				= &\frac{1}{d^J_N(2J+2)^2}\biggl(-1-\frac{(N+1)}{2J+1}+\frac{N+2J+2}{2J+1}\biggr) \\
			    = &\ 0
		\end{align}
Similarly, for $\ketbra{J-1}{J}$ coefficients in Eqs. \ref{collective:Equation::Simplified::JJminus1::First}, \ref{collective:Equation::Simplified::JJminus1::Second} and \ref{collective:Equation::Simplified::JJminus1::Third}, we have
\begin{equation}
   \frac{1}{d^J_N4J^2}\biggl[ 1 + 
					\frac{\alpha^{J+\frac{1}{2}}_{N-1} 2J}
					{d^{J-\frac{1}{2}}_{N-1}(J+\frac{1}{2})}
					- \bigl[ 1 +
					       \frac{\alpha^{J + \frac{1}{2}}_{N-1}}
					            {d^{J-\frac{1}{2}}_{N-1}}
					       \frac{2J}
					            { J + \frac{1}{2}}\bigr]\biggr] = 0
\end{equation}

Turning to $J+1$ terms from Eqs. \ref{collective:Equation::Simplified::JJplus1::First}, \ref{collective:Equation::Simplified::JJplus1::Second} and \ref{collective:Equation::Simplified::JJplus1::Third}, the coefficients sum to
\begin{align}
		\frac{1}{d^J_N}  &
		\biggl(\frac{1}{(2J+2)^2} + 
		      \frac{1}{(2J+1)(2J+2)^2} 
		         \bigl[1 + 
		           \frac{\alpha^{J+\frac{3}{2}}_{N-1}}
		                {d^{J+\frac{1}{2}}_{N-1}}
		           \frac{2J+2}{J+\frac{3}{2}}\bigr] \\
		       & + \frac{J}{(J+1)(2J+1)}
		         \frac{\alpha^{J+\frac{1}{2}}_{N-1}}
		          {d^{J+\frac{1}{2}}_{N-1}}\biggr)\nonumber\displaybreak[1]\\
		=&     \frac{1}{d^J_N(2J+2)^2} 
		     \biggl(1+\frac{N+1}{(2J+3)(2J+1)} 
		         + \frac{J(N+2J+2)}{2J+1}\biggr)\nonumber\displaybreak[1]\\
		=&\frac{1}{d^J_N} 
		 \frac{2 J+N+4}{8 J^2+20 J+12}\nonumber\\
		=& \frac{1}{d^J_N2(J+1)}
		    \frac{\alpha^{J+1}_N}{d^{J+1}_N}
\end{align}
which gives overall
	\begin{multline}  \label{collective:Equation::Simplified::Jplus1::Second}
			\frac{\alpha^{J+1}_N}{d^J_N2(J+1)}
			\frac{1}{d^{J+1}_N}
			\sum_{i_1=1}^{d^{J+\frac{1}{2}}_N}
			D_q^{J,M}
			\ket{J+1,M_q;\frac{1}{2},J+\frac{1}{2},N-1,i_1}\\
			       \times\bra{J+1,M'_r;\frac{1}{2},J+\frac{1}{2},N-1,i_1}
			D_r^{J,M'} \ .
	\end{multline}
	
The $J$ terms from Eqs. \ref{collective:Equation::Simplified::JJplus1::First}, \ref{collective:Equation::Simplified::JJplus1::Second} and \ref{collective:Equation::Simplified::JJplus1::Third} have coefficients
	\begin{align}
			\frac{1}{d^J_N} &
			\biggl(\frac{1}{(2J+2)^2} 
			 + \frac{(2J+3)^2}{(2J+1)(2J+2)^2}
			    \bigl[1 + 
			          \frac{\alpha^{J+\frac{3}{2}}_{N-1}}
			               {d^{J+\frac{1}{2}}_{N-1}}
			          \frac{2J+2}{J+\frac{3}{2}}\bigr] \nonumber\\
			 & + \frac{1}{J(J+1)(2J+1)}
			    \frac{\alpha^{J+\frac{1}{2}}_{N-1}}
			      {d^{J+\frac{1}{2}}_{N-1}}\biggr) \nonumber \displaybreak[1]\\
			=&\frac{1}{d^J_N(2J+2)^2}
			    \biggl(1+\frac{(N+1)(2J+3)}{2J+1}
			   +\frac{N+2J+2}{J(2J+1)}\biggr) \\ 
		   =& \frac{1}{d^J_N2J} \biggl[ 1 + 
			    \frac{\alpha^{J+1}_N}{d^J_N}
			     \frac{2J+1}{J+1}\biggr]
	\end{align}
which gives overall
	\begin{multline} \label{collective:Equation::Simplified::J::First}
	  \frac{1}{2J}
	  \biggl[1+
	         \frac{\alpha^{J+1}_N}{d^J_N}
	         \frac{2J+1}{J+1}\biggr] \times\\
	         \frac{1}{d^J_N}\sum_{i_2=1}^{d^{J+\frac{1}{2}}_{N-1}}
			 A_q^{J,M}
	         \ket{J,M_q;\frac{1}{2},J+\frac{1}{2},N-1,i_1}\\
	            \times  \bra{J,M'_r;\frac{1}{2},J+\frac{1}{2},N-1,i_1}
	         A_r^{J,M'} \ .
	\end{multline}

Similarly, the $J$ terms from Eqs. \ref{collective:Equation::Simplified::JJminus1::First}, \ref{collective:Equation::Simplified::JJminus1::Second} and \ref{collective:Equation::Simplified::JJminus1::Third} have coefficients
\begin{align}
		\frac{1}{d^J_N4J^2}&\biggl[ 1 + 
							\frac{\alpha^{J+\frac{1}{2}}_{N-1} 2J}
							{d^{J-\frac{1}{2}}_{N-1}(J+\frac{1}{2})(J+1)} \nonumber\\
							&+ 2(J-\frac{1}{2})\bigl[ 1 +
							       \frac{\alpha^{J + \frac{1}{2}}_{N-1}}
							            {d^{J-\frac{1}{2}}_{N-1}}
							       \frac{2J}
							            { J + \frac{1}{2}}\bigr]\biggr] \nonumber\displaybreak[1]\\
		= & \frac{1}{d^J_N4J^2}\biggl[ 2J + \frac{N - 2J}{2J+1}\bigl(\frac{1}{J+1} + 2J -1\bigr)\biggr]\nonumber\\
		=& \frac{1}{d^J_N2J}\biggl[ 1 + \frac{\alpha^{J+1}_{N}}
											 {d^J_{N}}
										\frac{2J+1}
										     {J+1}\biggr]
\end{align}
which gives 
\begin{multline} \label{collective:Equation::Simplified::J::Second}
	\frac{1}{2J}\biggl[ 1 + \frac{\alpha^{J+1}_{N}}
										 {d^J_{N}}
									\frac{2J+1}
									     {J+1}\biggr] \times\\
	\frac{1}{d^J_N}\sum_{i_1 = 1}^{d^{J - \frac{1}{2}}_{N-1}}
	A_q^{J,M}
	\ket{J,M_q;\frac{1}{2},J-\frac{1}{2},N-1,i_1}\\
	\times	 \bra{J,M'_r;\frac{1}{2},J-\frac{1}{2},N-1,i_1}
	A_r^{J,M'} \ .
\end{multline}

And finally, the $J-1$ sums from Eqs. \ref{collective:Equation::Simplified::JJminus1::First}, \ref{collective:Equation::Simplified::JJminus1::Second} and \ref{collective:Equation::Simplified::JJminus1::Third} have coefficients
	\begin{align} 
			\frac{1}{d^J_N4J^2}&\biggl[ 1 + 
								\frac{\alpha^{J+\frac{1}{2}}_{N-1} 2J(J+1)}
								{d^{J-\frac{1}{2}}_{N-1}(J+\frac{1}{2})} \nonumber\\
								&+ \frac{1}{2(J-\frac{1}{2})}\bigl[ 1 +
								       \frac{\alpha^{J + \frac{1}{2}}_{N-1}}
								            {d^{J-\frac{1}{2}}_{N-1}}
								       \frac{2J}
								            { J + \frac{1}{2}}\bigr]\biggr] \nonumber\displaybreak[1]\\
		  &=\frac{1}{d^J_N4J^2}\biggl[ 1 + \frac{1}{2J-1}  + 
		                               \frac{N-2J}{2J+1}\bigl( J + 1 + \frac{1}{2J-1}\bigr) \biggr] \nonumber\\
		  &=\frac{1}{d^J_N2J}\frac{\alpha^J_N}{d^{J-1}_N}		
	\end{align}
which gives
	\begin{multline} \label{collective:Equation::Simplified::Jminus1::Second}
		\frac{\alpha^J_N}{d^J_N2J}\frac{1}{d^{J-1}_N}
			\sum_{i_1=1}^{d^{J-\frac{1}{2}}_{N-1}}
			B_q^{J,M}
			\ket{J - 1,M_q;\frac{1}{2},J-\frac{1}{2},N-1,i_1}\\
			\times	   \bra{J - 1,M'_r;\frac{1}{2},J-\frac{1}{2},N-1,i_1}
			B_r^{J,M'} \ .
	\end{multline}
	
From the definition of $\ketbra{J,M,N}{J,M',N}$ given in Eq. \ref{collective:Equation::EffectiveDensityMatrixElement}, we see that Eqs. \ref{collective:Equation::Simplified::J::First} and \ref{collective:Equation::Simplified::J::Second} correspond to the $\Eketbra{J,M,N}{J,M',N}$ terms in Eq. \ref{collective:Equation::generalFormula}.  A similar combination of Eqs. \ref{collective:Equation::Simplified::Jminus1::First} and \ref{collective:Equation::Simplified::Jminus1::Second} corresponds to the $J-1$ term and the combination of Eqs. \ref{collective:Equation::Simplified::Jplus1::First} and \ref{collective:Equation::Simplified::Jplus1::Second} corresponds to the $J$ term.  We have thus shown inductively that Identity \ref{collective:Id::MainResult} holds. \qed
\section{Examples} \label{collective:Section::Examples}
As discussed in the introduction, realistic decoherence models for an ensemble of spin particles are often described most aptly by a symmetric sum over local channels.  Consider, for example, the open system dynamics governed by the master equation
\begin{equation} \label{collective:Equation::master}
	\frac{d \hat{\rho}(t)}{dt}  = - i [ \hat{H}, \hat{\rho}(t) ] + \Gamma \mathcal{L}[ \hat{s} ]
		\hat{\rho}(t),
\end{equation}
where $\hat{H}$ (and any measurements performed) are described by collective operators, but the decoherence involves the symmetric Linblad superoperator
$\mathcal{L}[\hat{s}]$ of the form in Eq.\ (\ref{collective:Equation::SymmetricLindbladForm}).  As this decoherence model does not preserve symmetric states, it has been common practice to consider instead the associated collective process $\mathcal{L}[\hat{S}]$ given in Eq.\ (\ref{collective:eq:collectiveLindblad}) with $\hat{S} = \sum_n \hat{s}^{(n)}$.

To illustrate the difference between symmetric and collective decoherence models, consider the open system dynamics of two representative problems.  First, compare the dynamics generated by the symmetric-local $\mathcal{L}[\hat{s}]$ versus collective $\mathcal{L}[\hat{S}]$ Linblad master equations applied to an initial superposition (cat) state $\ket{\psi(0)} = \left( \ket{\frac{N}{2},+ \frac{N}{2}} 
	                    + \ket{\frac{N}{2},-\frac{N}{2}} \right) / \sqrt{2}$. 
Figure \ref{collective:Figure::catStateDecoherence}(a-b) depicts the fidelity $\mathcal{F}(t) =  \dual{\psi(0)} \hat{\rho}(t)\ket{\psi(0}$ evolved under Eq.\ (\ref{collective:Equation::master}) (with $\hat{H} = 0$) for two different types of decoherence channels: Fig.\ \ref{collective:Figure::catStateDecoherence}(a1-a2) compares the collective versus symmetric master equations with $\hat{s} = \hat{\sigma_{-}}$ for $N=10$ and $N=100$ particles, respectively; and Fig.\ \ref{collective:Figure::catStateDecoherence}(b1-b2) makes a similar comparison for $\hat{s} = \hat{\sigma}_{\mathrm{z}}$.  The examples considered (including some not reported here) suggest symmetric local decoherence models can generate dynamics that are appreciably different from their collective analogs.  This is perhaps not too surprising: for an initially symmetric state, collective decoherence models $\mathcal{L}[ \hat{S}]$ confine the dynamics to only maximum-$J$ irrep; symmetric local models $\mathcal{L}[\hat{s}]$ do not necessary preserve the irrep decomposition of the initial state.  Fig. \ref{collective:Figure::catStateDecoherence}(c) depicts the norm of each total-$J$ irrep block of the density operator $N_J = \mathrm{tr}[ \hat{P}_J \hat{\rho}(t)]$ as a function of time for $\mathcal{L}[ \hat{\sigma}_-]$ ($\hat{P}_{\mathrm{J}} = \sum_M \Eketbra{J,M}{J,M}$).   The observation that small-$J$ irreps are only minimally populated suggests that further model reduction by truncating the Hilbert space to only the largest $J$ blocks could be beneficial.

\begin{figure}[tb]
\begin{center} \includegraphics[scale=0.9]{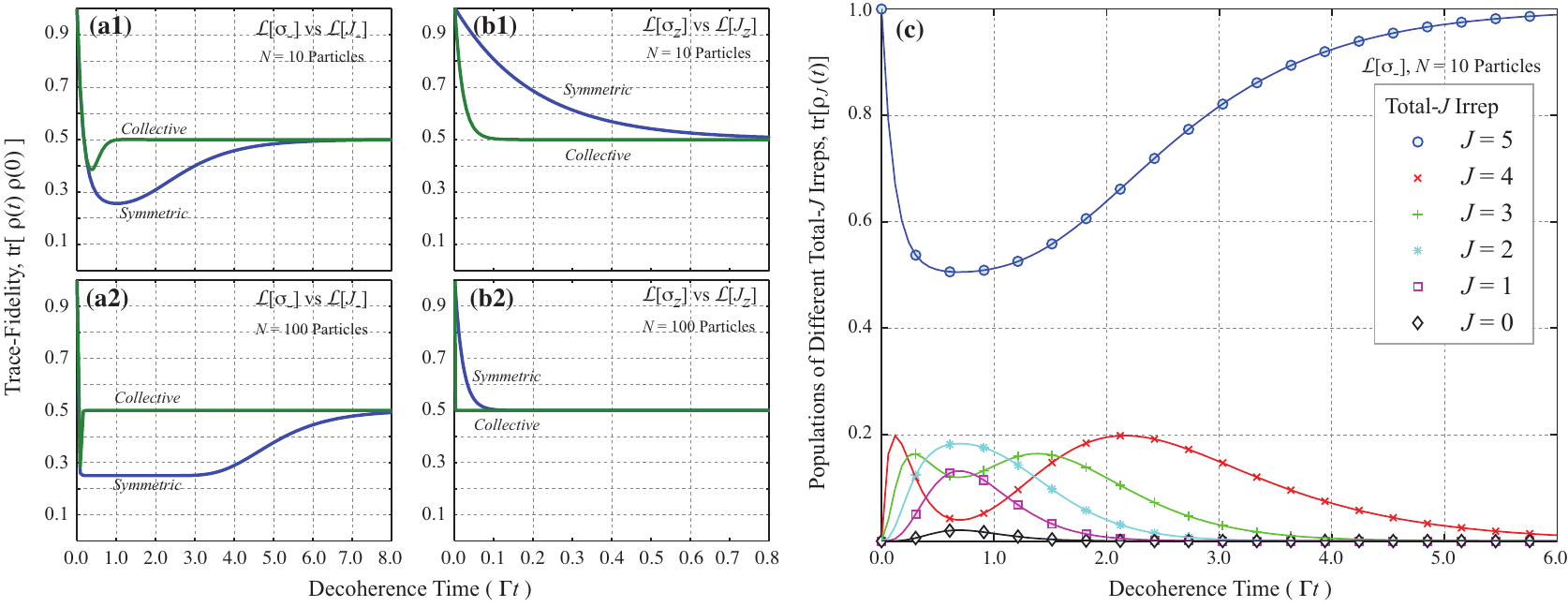} \end{center}
\vspace{-5mm}
\caption[Cat state undergoing collective decoherence]{Decoherence of the initial superposition state $\ket{\psi(0)} = ( \ket{+\frac{N}{2}} + \ket{-\frac{N}{2}}) / \sqrt{2}$: (a1-a2) time-dependent fidelity with the initial state for both symmetric local $\mathcal{L}[\hat{\sigma}_-]$ and collective $\mathcal{L}[\hat{J}_-]$ decoherence for different numbers of particles; (b1-b2) similar comparison for $\mathcal{L}[\hat{\sigma}_{\mathrm{z}}]$ versus $\mathcal{L}[\hat{J}_{\mathrm{z}}]$; (c) time-dependent populations of different total-$J$ irreps for $\mathcal{L}[\hat{\sigma}_-]$. \label{collective:Figure::catStateDecoherence}}
\end{figure}

\begin{figure}[th]
\begin{center}	\includegraphics{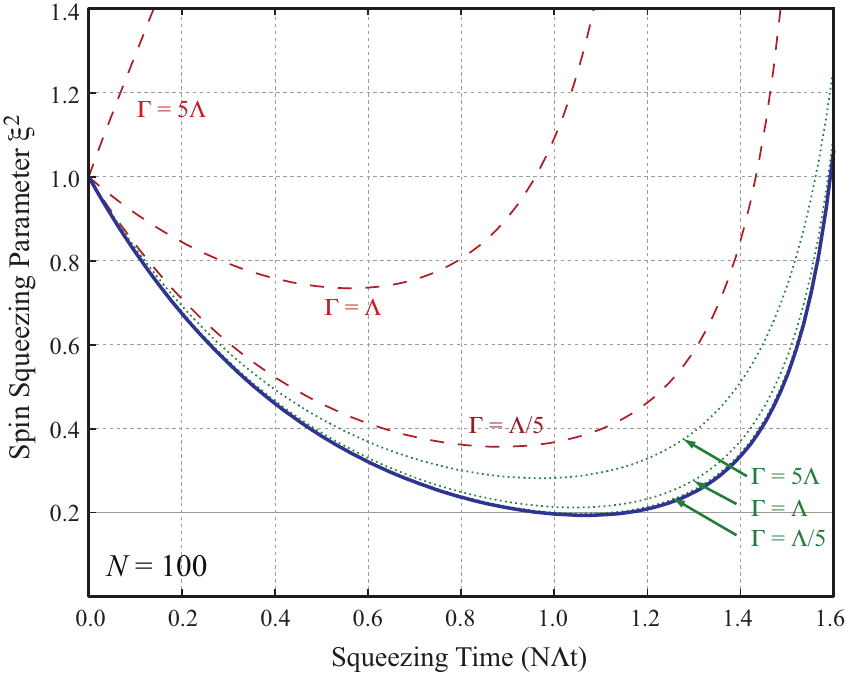} \end{center}
\vspace{-5mm}
\caption[Collective spin squeezing with collective decoherence]{Time-evolution of the squeezing parameter $\xi^2$ for a spin ensemble driven by $\hat{H} = -i  \Lambda ( \hat{J}_+^2 - \hat{J}_-^2)$ subject to $\mathcal{L}[ \hat{\sigma}_-]$ (dotted lines) and $\mathcal{L}[ \hat{J}_-]$ (dashed lines) with relative decoherence rates $\Gamma = \Lambda/ 5, \Lambda, 5 \Lambda$.  For comparison, the solid line denotes decoherence-free squeezing.
	\label{collective:Figure::Squeezing}
}
\end{figure}

As a second example, consider comparing symmetric-local versus collective decoherence models applied to dynamically-generated spin squeezing under the counter-twisting Hamiltonian $\hat{H} = -i \Lambda (\hat{J}_{+}^2 - \hat{J}_{-}^2)$ \citep{Kitagawa:1993a}.  I performed simulations by time-evolving Eq.\ (\ref{collective:Equation::master}) from the initial spin-coherent state $\ket{\frac{N}{2}, \frac{N}{2}}$ for $N=100$ with $\mathcal{L}[\hat{\sigma}_-]$ and $\mathcal{L}[\hat{J}_-]$.  Figure \ref{collective:Figure::Squeezing} depicts the time-dependent squeezing parameter $\xi^2 = N \langle \Delta \hat{J}_y^2 \rangle / \langle \hat{J}_z \rangle ^2$, each for $\Gamma = \Lambda/5, \Lambda, 5 \Lambda$.   Under the conditions considered, symmetric local decoherence wave evidently less destructive to the squeezing dynamics than collective models.  As observed for the cat-state dynamics, to large extent the main effect of symmetric-local decoherence is leakage from the maximum $J$ irrep.  But since the driving Hamiltonian $\hat{H}$ involves only collective spin operators, the coherent dynamics decouple for different total $J$: the population in each irrep block then undergoes its own squeezing, evidently making the dynamics more resistant to symmetric local decoherence than collective processes.
\section{Summary} \label{collective:Section::Conclusion}
I have presented an exact formula for efficiently expressing symmetric processes of an ensemble of spin-1/2 particles.  The efficiency is achieved by generalizing the notion of collective spin states to be any such state which does not distinguish degenerate irreps.  For a collection of $N$ spin-1/2 particles, the effective Hilbert space dimension grows as $N^2$, a drastic reduction from the full Hilbert space scaling of $2^N$.  The collective representation is used in Identity \ref{collective:Id::MainResult}, which gives a closed-form expression for evaluating non-collective terms from symmetric Lindblad operators.  Simulations confirm that symmetric local decoherence models can be drastically different than collective decoherence models.  Unfortunately, due to the complicated structure of adding spin-$J>\frac{1}{2}$ particles \citep{Mihailov:1977a}, these results do not appear to generalize.  Nonetheless, I believe that this approach will become a useful tool in analyzing collective spin phenomenon and in particular, accurately considering the role of decoherence in collective spin experiments.

\begin{subappendices}
\section{Explicit Simplification of Typical Sums} \label{collective:Appendix::SimplifySums}

	In \ref{collective:SubSubSec::EvaluateSums}, we simplify the sums in Eqs. \ref{collective:Equation::Unsimplified::Jplus12::Basic}-\ref{collective:Equation::Unsimplified::Jminus12::C} but do not go through the detailed algebra.  The work involves manipulating products of Clebsch-Gordan and $A,B,D$ coefficients.  In this appendix, I explicitly calculate two representative sums from this set and invite the reader to calculate the remainder in a similar fashion.  
	
	First, consider the sums over $J_1$ and $m_1$ in Eq. \ref{collective:Equation::Unsimplified::Jplus12::Basic}, which is representative of sums in  Eqs. \ref{collective:Equation::Unsimplified::Jplus12::Basic} and \ref{collective:Equation::Unsimplified::Jminus12::Basic}.  For $J_1 = J+1$
	\begin{align}
		&A_q^{\frac{1}{2},\frac{1}{2}}
		\cg{\frac{1}{2}, {\frac{1}{2}}_q}
		   {J + \frac{1}{2}, M-\frac{1}{2}}
		   {J+1, M_q}
		\cg{\frac{1}{2},\frac{1}{2}}
		   {J+\frac{1}{2},M-\frac{1}{2}}
		   {J,M}\nonumber\\
	    &+
		A_q^{\frac{1}{2},-\frac{1}{2}}
		\cg{\frac{1}{2}, {-\frac{1}{2}}_q}
		   {J + \frac{1}{2}, M + \frac{1}{2}}
		   {J+1, M_q}
		\cg{\frac{1}{2},-\frac{1}{2}}
		   {J+\frac{1}{2},M + \frac{1}{2}}
		   {J,M}\nonumber \displaybreak[1]\\
		=&\frac{1}{2(J+1)} \begin{cases}
			    -\sqrt{(J+M+2)(J+M+1)} & q = +\\
			     \sqrt{(J-M+2)(J-M+1)} & q = -\\
			     \sqrt{(J-M+1)(J+M+1)} & q = z
	   	    \end{cases}\\
		=&  \frac{1}{2J+2} D_q^{J,M}
	\end{align}
and for $J_1 = J$
	\begin{align}
		&A_q^{\frac{1}{2},\frac{1}{2}}
		\cg{\frac{1}{2}, {\frac{1}{2}}_q}
		   {J + \frac{1}{2}, M-\frac{1}{2}}
		   {J, M_q}
		\cg{\frac{1}{2},\frac{1}{2}}
		   {J+\frac{1}{2},M-\frac{1}{2}}
		   {J,M}\nonumber\\
	    &+ 
		A_q^{\frac{1}{2},-\frac{1}{2}}
		\cg{\frac{1}{2}, {-\frac{1}{2}}_q}
		   {J + \frac{1}{2}, M + \frac{1}{2}}
		   {J, M_q}
		\cg{\frac{1}{2},-\frac{1}{2}}
		   {J+\frac{1}{2},M + \frac{1}{2}}
		   {J,M}\nonumber \displaybreak[1]\\
		=&-\frac{1}{2(J+1)}
			\begin{cases}
				\sqrt{(J-M)(J+M+1)} & q = +\\
				\sqrt{(J+M)(J-M+1)} & q = -\\
				M & q = z
			\end{cases} \\
		= & -\frac{1}{2J+2} A_q^{J,M}
	\end{align}
	where $A_{+}^{\frac{1}{2},\frac{1}{2}} = A_{-}^{\frac{1}{2},-\frac{1}{2}} = 0$, $A_{+}^{\frac{1}{2},-\frac{1}{2}} = A_{-}^{\frac{1}{2},\frac{1}{2}} = 1$ and $A_{z}^{\frac{1}{2},\pm \frac{1}{2}} = \pm \frac{1}{2}$. 
	
Similarly, consider the sums over $J_1$ and $m_1$ in Eq. \ref{collective:Equation::Unsimplified::Jplus12::B}, which is representative of Eqs. \ref{collective:Equation::Unsimplified::Jplus12::A}-\ref{collective:Equation::Unsimplified::Jminus12::C}.  For $J_1 = J-1$, we have
	\begin{align}
		B_q^{J+\frac{1}{2},M-\frac{1}{2}}&
		\cg{\frac{1}{2},  \frac{1}{2}}
		   {J - \frac{1}{2}, M_q - \frac{1}{2}}
		   {J-1, M_q}
		\cg{\frac{1}{2},\frac{1}{2}}
		   {J+\frac{1}{2}, M-\frac{1}{2}}
		   {J,M} \nonumber\\
	   &+B_q^{J+\frac{1}{2},M + \frac{1}{2}}
		\cg{\frac{1}{2},  -\frac{1}{2}}
		   {J - \frac{1}{2}, M_q + \frac{1}{2}}
		   {J-1, M_q}
		\cg{\frac{1}{2},-\frac{1}{2}}
		   {J+\frac{1}{2}, M + \frac{1}{2}}
		   {J,M} \nonumber \displaybreak[1]\\
	 = & \sqrt{\frac{(J-M_q)(J-M+1)}
			        {4J(J+1)}}
			B_q^{J+\frac{1}{2},M-\frac{1}{2}}\nonumber\\
		&\times \biggl[ 1 + 	\sqrt{\frac{J+M_q}
					   {J-M_q}}
			\sqrt{\frac{J+M+1}
					   {J-M+1}}
			\frac{B_q^{J+\frac{1}{2},M + \frac{1}{2}}}
			     {B_q^{J+\frac{1}{2},M-\frac{1}{2}}}\biggr]\nonumber\displaybreak[1]\\
	 = & \sqrt{\frac{(J-M_q)(J-M+1)}
			        {4J(J+1)}}
			B_q^{J+\frac{1}{2},M-\frac{1}{2}}\frac{2(J+1)}{J-M+1}\nonumber\displaybreak[1]\\
	 = & \sqrt{\frac{J+1}{J}}
		\begin{cases}
			\sqrt{(J-M)(J-M-1)} & q = +\\
		   -\sqrt{(J+M)(J+M-1)} & q = -\\
		    \sqrt{(J+M)(J-M)} &   q = z
	 	 \end{cases} \nonumber\\
	 = & \sqrt{\frac{J+1}{J}} B_q^{J,M} \ .
	\end{align}
Similarly, for $J_1 = J$, we have
	\begin{align}
		B_q^{J+\frac{1}{2},M-\frac{1}{2}}&
		\cg{\frac{1}{2},  \frac{1}{2}}
		   {J - \frac{1}{2}, M_q - \frac{1}{2}}
		   {J, M_q}
		\cg{\frac{1}{2},\frac{1}{2}}
		   {J+\frac{1}{2}, M-\frac{1}{2}}
		   {J,M} \nonumber \\
	   &+ B_q{J+\frac{1}{2},M + \frac{1}{2}}
		\cg{\frac{1}{2},  -\frac{1}{2}}
		   {J - \frac{1}{2}, M_q + \frac{1}{2}}
		   {J, M_q}
		\cg{\frac{1}{2},-\frac{1}{2}}
		   {J+\frac{1}{2}, M + \frac{1}{2}}
		   {J,M} \nonumber\displaybreak[1]\\
		= & \sqrt{\frac{(J + M_q)(J-M+1)}
				        {4J(J+1)}}
				B_q^{J+\frac{1}{2},M-\frac{1}{2}} \nonumber\\
		  &\times		\biggl[ 1 -
				\sqrt{\frac{J-M_q}
						   {J+M_q}}
				\sqrt{\frac{J+M+1}
						   {J-M+1}}
				\frac{B_q^{J+\frac{1}{2},M + \frac{1}{2}}}
				     {B_q^{J+\frac{1}{2},M-\frac{1}{2}}}\biggr]\nonumber\displaybreak[1]\\
	    = & \sqrt{\frac{(J + M_q)(J-M+1)}
				        {4J(J+1)}}
				B_q^{J+\frac{1}{2},M-\frac{1}{2}} \nonumber\\
		  & \times 
			2\begin{cases}
				\frac{1}{J-M+1} & q = +\\
			   -\frac{1}{J + M + 1} & q = -\\
			    \frac{M}{(J+M)(J-M+1)} & q = z
			\end{cases} \nonumber \displaybreak[1]\\
		= & \sqrt{\frac{1}{J(J+1)}}
			\begin{cases}
				\sqrt{(J-M)(J+M+1)} & q = +\\
				\sqrt{(J+M)(J-M+1)} & q = -\\				
				M & q = z				
			\end{cases} \nonumber \displaybreak[1]\\
		= & \sqrt{\frac{1}{J(J+1)}} A_q^{J,M} \ .
	\end{align}
\end{subappendices}
\appendix
\chapter{Riccati Equations}
\label{appendix:riccati}
The ability to solve matrix Riccati equations is an important tool when using the Kalman filter given in Theorem \ref{thm:kalman_bucy}.  Following \citep{Stockton:2004a,Reid:1972a}, I review a technique for reducing the nonlinear system into a set of linear differential equations.  Consider the matrix $Z(t)$ which satisfies the Riccati equation
\begin{equation}
	\frac{dZ}{dt} = A(t)Z - ZD(t) - ZC(t)Z + B(t) .
\end{equation}
Instead of solving this directly introduce the decomposition $Z(t) = X(t)Y^{-1}(t)$ and solve the equivalent linear system
\begin{equation}
	\begin{bmatrix}
		\frac{dX(t)}{dt}\\
		\frac{dY(t)}{dt}\\		
	\end{bmatrix}
	= 
	\begin{bmatrix}
		A(t) & B(t)\\
		C(t) & D(t)
	\end{bmatrix}
	\begin{bmatrix}
		X(t)\\
		Y(t)
	\end{bmatrix}
\end{equation}
To check that this decomposition satisfies the original Riccati equation, we simply calculate
\begin{align}
	\frac{dZ}{dt} &= X\frac{dY^{-1}}{dt} + \frac{dX}{dt}Y^{-1}\\
	              &= X(-Y^{-1}\frac{dY}{dt}Y^{-1}) + \frac{dX}{dt}Y^{-1}\\
	              &= -XY^{-1}\bigl(C(t)X + D(t)Y\bigr)Y^{-1} + (A(t)X + B(t)Y)Y^{-1}\\
				  &=  -ZC(t)Z -ZD(t) + A(t)Z + B(t)
\end{align}
\chapter{Numerical Methods for Stochastic Differential Equations} 
\label{appendix:numerical_methods_for_stochastic_differential_equations}
Many of the filters and SDEs in this thesis do not admit an analytic solution.  As such, it is useful to have methods for numerically simulating or integrating a stochastic system.  An excellent resource for such methods is the text by \citet{Kloeden:1992a}, in which the following two integrators are discussed in more detail.  In the following, I consider the $n$-dimensional stochastic process $X_t$ and the $m$-dimensional Wiener process $dW_t$ related via the SDE
\begin{equation}
    dX_t = a(t,X_t)dt + \sum_{j=1}^{m} b^{j}(t,X_t)dW_t^j.
\end{equation}

The first integrator is the \emph{Euler} or \emph{Euler-Maruyama} scheme and is the trivial extension of the standard Euler method for integrating ordinary differential equations.  We begin by discretizing the time-domain in terms of a step-size $\Delta t$.  The integrator then estimates the state at times $t_n = t_0 + n\Delta_t$ by stepping the state forward via the SDE.  The Euler approximation for the $k$-th entry of $X_t$ at time-step $t_{n+1}$, given the state at timestep $t_n$, is then given by
\begin{equation}
    \tilde{X}_{t_{n+1}}^k = \tilde{X}_{t_n}^k + a^k(t_n,\tilde{X}_{t_n})\Delta t 
                  + \sum_{j=1}^{m} b^{k,j}(t_n,\tilde{X}_{t_{n+1}})\Delta W^j
\end{equation}
where $\Delta W^j$ is a pseudo-random number with mean zero and variance $\Delta t$.  This simplicity of this approach is a clear advantage, but its order of convergence is 0.5, meaning $\norm{X_{t_n} - \tilde{X}_{t_n}}_{2,\mathbb{P}} \leq \alpha (\Delta t)^{1/2}$ where $\alpha$ is some constant independent of $\Delta t$.  Note that this is worse than the Euler method for ODEs, which is order 1.0.

The other method used significantly in simulations for this thesis is an order 2.0 weak predictor-corrector method, which offers improved stability and convergence at the cost of more computational complexity.  
For simplicity, restrict consideration to a single noise term $m = 1$ and time-independent $a$ and $b$; see \citet[Chapt. 15]{Kloeden:1992a} for the multiple noise version.  The estimate is then given by
\begin{equation}
    \tilde{X}_{t_{n+1}} = \tilde{X}_{t_n} + \frac{1}{2}( a(\bar{X}_{t_{n+1}}) + a( \tilde{X}_{t_n}))\Delta t
                 + \phi_{t_n}
\end{equation}
where
\begin{multline}
    \phi_{t_n} = \frac{1}{4}\left[
                        b(\bar{\Upsilon}^+) + b(\bar{\Upsilon}^-) + 2b(\tilde{X}_{t_n})
                        \right]\Delta W \\
               + \frac{1}{4}\left[b(\bar{\Upsilon}^+) - b(\bar{\Upsilon}^-)\right]
                            \left[ (\Delta W)^2 - \Delta t\right] (\Delta t)^{-1/2}
\end{multline}
where the supporting values are given by
\begin{equation}
    \bar{\Upsilon}^{\pm} = \tilde{X}_{t_n} + a(\tilde{X}_{t_n})\Delta t 
                           \pm b(\tilde{X}_{t_n}) \sqrt{\Delta t}
\end{equation}
and with predictor
\begin{equation}
    \bar{X}_{t_{n+1}} = \tilde{X}_{t_n} + \frac{1}{2}( a(\bar{\Upsilon}) + a( \tilde{X}_{t_n}))\Delta t
                 + \phi_{t_n}
\end{equation}
with supporting value
\begin{equation}
    \bar{\Upsilon} = \tilde{X}_{t_{n}} + a(\tilde{X}_{t_n})\Delta t + b(\tilde{X}_{t_n}) \Delta W .
\end{equation}
\chapter{Stochastic Schr\"{o}dinger Equation} \label{magnetometry:app:SSE}
Lacking any extra sources of decoherence, pure states remain pure under the dynamics described by the quantum filtering equation.  As such, it is often convenient for analysis and simulation to have a pure state description of the dynamics in terms of a \emph{stochastic Schr\"{o}dinger equation} (SSE).  In this appendix, I briefly derive the SSE for the general adjoint filter
\begin{align}
    d\rho_t &= -i[H,\rho_t]dt + \left(\hat{L}\rho_t\hat{L}^{\dag} - \frac{1}{2} \hat{L}^{\dag}\hat{L}\rho_t - \frac{1}{2}\rho_t\hat{L}^{\dag}\hat{L}\right)dt \nonumber\\
           & + \left( \hat{L}\rho_t + \rho_t \hat{L}^{\dag}
            - \Tr{(\hat{L} + \hat{L}^{\dag})\rho_t}\rho_t\right)dW_t .
\end{align}
We begin by writing
\begin{align}
	d\ket{\psi}_t &= A\ket{\psi}_tdt + B\ket{\psi}_tdW_t\\
	d\bra{\psi}_t &= \bra{\psi}_tA^{\dagger}dt + \bra{\psi}_tB^{\dagger}dW_t
\end{align}
From the It\^{o} rules, we have
\begin{eqnarray}
	d(\rho_t) &= & d(\ketbra{\psi}{\psi}_t) \nonumber \\
			 &= & \ket{\psi}_td(\bra{\psi}_t) + d(\ket{\psi}_t)\bra{\psi}_t + d(\ket{\psi_t})d(\bra{\psi}_t) \\
			  &= & (A\rho_t + \rho_tA^{\dagger})dt + (B\rho_t + \rho_tB^{\dagger})dW_t + B\rho_t B^{\dagger}dt \nonumber
\end{eqnarray}
Comparing the coefficients to the quantum filtering equation, we read off
\begin{align}
	B & = L - \expect{L}\\
	B^{\dagger} & = L^{\dagger} - \expect{L^{\dagger}}	
\end{align}
so that
\begin{equation}
	B\rho_tB^{\dagger} = L\rho_tL^{\dagger} - \expect{L^{\dagger}}L\rho_t
						 - \expect{L}\rho_tL^{\dagger}
						 + \expect{L}\expect{L^{\dagger}}\rho_t
\end{equation}
We try setting
\begin{equation}
	A = -\frac{1}{2}\left(L^{\dagger}L - 2\expect{L^{\dagger}}L 
				+ \expect{L}\expect{L^{\dagger}}\right)
\end{equation}
which means that
\begin{align}
	A \rho_t + \rho_t A^{\dagger} &=
					-\frac{1}{2}\left(L^{\dagger}L - 2\expect{L^{\dagger}}L
								+ \expect{L}\expect{L^{\dagger}}\right)\rho_t \nonumber\\
			&	  - \rho_t	\frac{1}{2}\left(L^{\dagger}L - 2\expect{L}L^{\dagger}
								+ \expect{L}\expect{L^{\dagger}}\right)\\
		&= -\frac{1}{2}L^{\dagger}L\rho_t -\frac{1}{2}\rho_t L^{\dagger}L
			+\expect{L^{\dagger}}L\rho_t \nonumber\\
		& + \rho_tL^{\dagger}\expect{L}
			- \expect{L}\expect{L^{\dagger}}\rho_t
\end{align}
and therefore
\begin{align}
	A \rho_t + \rho_t A^{\dagger} + B\rho_tB^{\dagger} 
			&= -\frac{1}{2}L^{\dagger}L\rho_t -\frac{1}{2}\rho_t L^{\dagger}L \nonumber\\
			&	+\expect{L^{\dagger}}L\rho_t + \rho_tL^{\dagger}\expect{L}\nonumber\\
			&	- \expect{L}\expect{L^{\dagger}}\rho_t
				+	L\rho_tL^{\dagger} \nonumber\\
			& - \expect{L^{\dagger}}L\rho_t
									 - \expect{L}\rho_tL^{\dagger}
									 + \expect{L}\expect{L^{\dagger}}\rho_t\\
			&= L \rho_t L^{\dagger} -\frac{1}{2}L^{\dagger}L\rho_t -\frac{1}{2}\rho_t L^{\dagger}L
\end{align}
which is the deterministic part of the quantum filtering equation as desired.

\end{document}